\tikzstyle{block}=[draw opacity=0.7,line width=1.4cm]
\theoremstyle{plain}
\newtheorem{thm}{Theorem}[section]
\newtheorem{conj}{Conjecture}
\newtheorem{cons}{Corollary}[thm]
\newtheorem{lem}[thm]{Lemma}
\newtheorem{conslem}{Corollary}[thm]
\newtheorem*{THMExistsStrongSubalgebraTHM}{Theorem~\ref{ExistsStrongSubalgebraTHM}}
\newtheorem*{LEMCBTNonAbsorbing}{Lemma~\ref{CBTNonAbsorbing}}
\newtheorem*{LEMPCBsub}{Lemma~\ref{PCBsub}}
\newtheorem*{THMPCBint}{Theorem~\ref{PCBint}}
\newtheorem*{LEMNoEssential}{Lemma~\ref{NoEssential}}
\newtheorem*{THMCommonPropertiesThm}{Theorem~\ref{CommonPropertiesThm}}
\newtheorem*{LEMProjectiveSubCharacterization}{Lemma~\ref{ProjectiveSubCharacterization}}
\theoremstyle{remark}
\numberwithin{equation}{section}
\newcommand{\algA}{\mathbf A}
\newcommand{\algB}{\mathbf B}
\newcommand{\alg}[1]{\mathbf #1}
\DeclareMathOperator{\Sg}{Sg}
\DeclareMathOperator{\PPP}{P}
\DeclareMathOperator{\HHH}{H}
\DeclareMathOperator{\SSS}{S}
\DeclareMathOperator{\HSP}{HSP}
\DeclareMathOperator{\HS}{HS}
\DeclareMathOperator{\Clo}{Clo}
\DeclareMathOperator{\RelClo}{RelClo}
\DeclareMathOperator{\Pol}{Pol}
\DeclareMathOperator{\Inv}{Inv}
\DeclareMathOperator{\proj}{pr}
\DeclareMathOperator{\Var}{Var}
\DeclareMathOperator{\Power}{P}
\DeclareMathOperator{\CSP}{CSP}
\begin{document}

\title{Strong subalgebras and the Constraint Satisfaction Problem}

\author{Dmitriy Zhuk\email{zhuk.dmitriy@gmail.com}
}

\institute{Department of Mechanics and Mathematics, Lomonosov Moscow State University, Russia
}

\maketitle

\begin{abstract}
In 2007 it was conjectured that the Constraint Satisfaction Problem (CSP) over a constraint language $\Gamma$ is tractable 
if and only if $\Gamma$ is preserved by a weak near-unanimity (WNU) operation.
After many efforts and partial results,
this conjecture was independently proved by Andrei Bulatov and the author in 2017. 
In this paper we consider one of two main ingredients of my proof,
that is, strong subalgebras that allow us to reduce domains of the variables iteratively.
To explain how this idea works we show the algebraic properties of strong subalgebras and provide 
self-contained proof of two important facts about the complexity of the CSP.
First, we prove that if a constraint language is not preserved by a WNU operation then 
the corresponding CSP is NP-hard.
Second, we characterize all constraint languages that can be solved by local consistency checking.
Additionally, we characterize all idempotent algebras 
not having a WNU term of a concrete arity $n$, not having a WNU term, 
having WNU terms of all arities greater than 2.
Most of the results presented in the paper are not new, but I believe 
this paper can help to understand my approach to CSP and 
the new self-contained proof of known facts 
will be also useful.
\end{abstract}

\tableofcontents



\section{Introduction}

The \emph{Constraint Satisfaction Problem (CSP)} is the problem of deciding whether there is an assignment to a set of variables
subject to some specified constraints.
In general, this problem is NP-complete but if we restrict the constraint language it can be solved in polynomial time (\emph{tractable}).
Formally, for a set of relations $\Gamma$, called \emph{the constraint language}, by $\CSP(\Gamma)$ we denote the following decision problem:
given a formula
$$R_{1}(v_{1,1},\ldots,v_{1,n_{1}})
\wedge
\dots
\wedge
R_{s}(v_{s,1},\ldots,v_{1,n_{s}}),$$
where $R_{1},\dots,R_{s}\in \Gamma$, 
and $v_{i,j}\in \{x_{1},\dots,x_{n}\}$ for every $i,j$;
decide whether this formula is satisfiable.
In 1998 it was conjectured by Feder and Vardi that 
$\CSP(\Gamma)$ is either tractable, or NP-complete, 
and this conjecture is known as the CSP Dichotomy conjecture.
For two element domain 
this conjecture was proved by Schaefer
in 1974 who described all tractable cases \cite{Schaefer}.
In 2003 Andrei Bulatov classified all tractable 
cases for the 3-element domain.
It has been known since 1998 that 
 the complexity of $\CSP(\Gamma)$ depends only on the operations preserving the constraint language $\Gamma$,
 where a $k$-ary  operation $f$ \emph{preserves}
an $m$-ary relation $R$ if 
whenever $(a^1_1,\ldots,a^m_1),\ldots,(a^1_k,\ldots,a^m_k)$ in $R$, then also $(f(a^1_1,\ldots,a^1_k),\ldots,f(a^m_1,\ldots,a^m_k))$ in $R$.
Thus, for two element domain 
$\CSP(\Gamma)$ is tractable if and only if $\Gamma$ is preserved by a constant operation, a majority operation, an affine operation, conjunction or disjunction.
For over twenty years of intensive research 
many results of the form ``the existence of an operation with this property ensures tractability'' appeared \cite{Num4,Num20,Num22,Num23,CSPconjecture,BulatovAboutCSP}.
In 2007 it was shown that 
widely believed criteria for the CSP to be tractable are 
equivalent to the existence of a 
weak near-unanimity (WNU) operation preserving the constraint language 
$\Gamma$, where an operation $w$ is called a \emph{weak near-unanimity operation}
if 
$$w(y,x,x,\dots,x) = w(x,y,x,\dots,x) = \dots = 
w(x,\dots,x,y).$$

As a result, in 2007 the CSP Dichotomy Conjecture was formulated in the following nice form.

\begin{conj}\label{CSPDichotomyConjecture}
$\CSP(\Gamma)$ is tractable if $\Gamma$ is preserved by a WNU operation, 
$\CSP(\Gamma)$ is NP-complete otherwise.
\end{conj}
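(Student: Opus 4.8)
The conjecture has two halves that call for completely different techniques, and I would separate them at once. For the negative direction — $\Gamma$ not preserved by any WNU operation implies $\CSP(\Gamma)$ is NP-complete — the first move is the standard reduction to the algebraic setting: replace $\Gamma$ by the relational clone it generates, pass to a core, and add all singleton unary relations, none of which changes the complexity of the problem up to polynomial-time reductions; this leaves a finite \emph{idempotent} algebra $\mathbf{A}=(A;\Clo(\Gamma))$ to analyse. The key structural fact (going back to Taylor, in the sharp form due to Maróti--McKenzie) is that such an $\mathbf{A}$ either has WNU term operations — in fact WNU terms of all arities above some bound — or else $\HS(\mathbf{A})$ contains a two-element algebra all of whose term operations are projections. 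In the latter case the invariant relations include copies of the not-all-equal and one-in-three predicates, so Schaefer's theorem (reproved in the paper via strong subalgebras) gives NP-hardness, while membership in NP is trivial. Establishing this dichotomy ``WNU of all large arities versus a projection quotient'' with strong subalgebras, and extracting from it the promised description of idempotent algebras lacking a WNU term of a prescribed arity $n$ (and having WNU terms of all arities greater than $2$), is the concrete target on this side.

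The positive direction — one WNU polymorphism $w$ forces polynomial-time solvability — is the famously hard half, and here I would present the \emph{algorithm} built on strong subalgebras rather than a one-line reduction. Given an instance, first enforce a strong form of local consistency (cycle-consistency, every constraint subdirect, every domain nonempty); failure means ``no solution''. Otherwise, search for a variable whose domain $B$ properly contains a \emph{strong subalgebra} $B'$ — binary absorbing, central, or arising from a polynomially-complete or a linear (affine-like) quotient — such that replacing $B$ by $B'$ cannot turn a solvable instance into an unsolvable one; if such a $B'$ exists, substitute and recurse, the sum of the domain sizes strictly decreasing. The crux is the \emph{irreducible} case, where no domain admits such a shrinking: one shows that such an instance, after quotienting each domain by an appropriate congruence, is equivalent to a system of linear equations over a product of prime fields, solvable by Gaussian elimination, and that the solutions of the original instance are recovered by lifting back through the congruences. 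I expect the main obstacle to be exactly this: proving that a consistent irreducible instance must collapse to the linear case, and that the lifting is solvability-preserving in both directions — this is where the full structure theory of finite algebras (absorption, centrality, PC and affine quotients) and delicate control of the interaction between the linear part and the rest of the instance are needed, and it is this ingredient, rather than strong subalgebras alone, that makes the tractability side genuinely hard.

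For the companion result advertised alongside the dichotomy — the characterisation of constraint languages solvable by local consistency checking — I would specialise the same analysis, running the strong-subalgebra reductions while recording whether a nontrivial linear (affine) quotient ever appears. $\CSP(\Gamma)$ has bounded width exactly when it never does, equivalently when the associated idempotent algebra omits the ``unary'' and ``affine'' behaviour (types $\mathbf 1$ and $\mathbf 2$); necessity is witnessed by the linear-equations gadget, which provably has unbounded width, and sufficiency follows because, in the absence of affine quotients, the strong-subalgebra reductions terminate at a globally consistent instance whose local consistency already certifies a genuine assignment. The delicate point here is matching the combinatorial ``width'' side with the algebraic ``omitting types'' side — in particular, showing that absorption and centrality alone suffice to propagate consistency all the way to an actual solution.
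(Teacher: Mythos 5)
The statement you are proving is stated in the paper only as a conjecture: the paper itself does not prove it in full, but gives self-contained proofs of exactly two pieces, the hardness direction and the bounded-width characterization, and cites the 2017 papers of Bulatov and of Zhuk for the tractability direction. Your treatment of those two pieces matches the paper's route closely. For hardness you pass to a core, add constants, and invoke the dichotomy ``WNU term versus an essentially unary (projection-like) quotient in $\HS(\algA)$''; the paper does the same via Theorem~\ref{CharacterizationOfAWNUTHM}, extracting a WNU-blocker $(B_0\cup B_1)^3\setminus(B_0^3\cup B_1^3)$, which is a pp-definable copy of not-all-equal, and reducing from $\CSP(\{\mathrm{NAE}_3\})$. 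For bounded width your ``no affine quotient ever appears'' criterion is the paper's Theorem~\ref{CharacterizationOfALLWNUTHM} combined with Theorem~\ref{CycleConsistencyImpliesSolutionIdemp}, and the necessity gadget is the same unsolvable linear system that cycle-consistency accepts.

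The genuine gap is in your positive direction, and it sits exactly where you flag it. Your description of the irreducible case --- ``after quotienting each domain by an appropriate congruence, [the instance] is equivalent to a system of linear equations over a product of prime fields, solvable by Gaussian elimination, and the solutions of the original instance are recovered by lifting back through the congruences'' --- is not a proof step but a restatement of the hardest part of the 2017 argument, and as stated it is false in the naive reading: solvability of the factorized linear system is necessary but not sufficient for solvability of the original instance, so ``lifting back'' is not solvability-preserving in the direction you need. Determining which solutions of the quotient system actually extend requires a further recursion on modified (``weaker'') instances and a delicate analysis of how the affine part interacts with the absorbing/central/PC reductions; none of that is supplied by the machinery developed in this paper, which deliberately restricts itself to the case where affine quotients are absent (and there the domains shrink all the way to singletons, so no linear algebra is ever needed). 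So your proposal correctly reproduces the two halves the paper actually proves, but does not close the conjecture: the reduction of the irreducible case to linear equations, and above all the converse lifting, remain unproved.
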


The hardness part of this conjecture follows from \cite{miklos}
and the only remaining part was to find a polynomial algorithm 
for $\CSP(\Gamma)$ if $\Gamma$ is preserved by a WNU operation.
Nevertheless, the conjecture remained open 
till 2017 when two independent proofs, by Andrei Bulatov and by the author, appeared.

The crucial idea of the author's algorithm came from the 
Rosenberg's description of maximal clones \cite{rosmax}, 
where \emph{a clone} is a set of operations closed under composition and containing all projections.
In fact, if a WNU operation $w$ preserves the constraint language $\Gamma$, then the domain $D_{x}$ of a variable $x$ can be viewed as an algebra $(D_{x};w)$.
Then we may consider the clone generated from $w$ and all constant operations.
Either this clone contains all operations, 
or $w$ and all constants belong to some maximal clone from the 
Rosenberg's description.
From this we derive that 
either the algebra $(D_{x};w)$
has a subalgebra with additional properties (we call it \emph{a strong subalgebra}), or 
there exists a congruence modulo which the algebra is affine.
It turned out that if the instance is consistent enough then 
having a solution is equivalent to having a solution where $x$ is 
from the strong subalgebra. 
Hence, if we forbid the affine case, we can iteratively reduce the domains of the variables to such subalgebras until all domains are one-element sets, which gives us a solution.

The paper is written to demonstrate my approach to the CSP.
To do this I provide self-contained proofs of key and well-known facts concerning the complexity of the CSP. 
Unlike the original proofs of these facts, I 
rely only on basic algebraic knowledge and the Galois connection between clones and relational clones. Everything else is proved in the paper.

The paper is organized as follows.
We start with well-known definitions from the universal algebra (Section 2).
Then, in Section 3 we formulate the definition and the main properties of strong subalgebras. To demonstrate how strong subalgebras work in universal algebra in Section~\ref{ExistenceWNUSection}
we characterize all idempotent algebras not having a WNU term,
all idempotent algebras not having a WNU term 
of a concrete arity $n\ge 3$, and 
all idempotent algebras having a WNU term operation of every arity greater than 2.
Note that similar results 
were proved in \cite{miklos},
but the original proof relies on tame congruence theory and requires
deep knowledge of universal algebra.

In Section 5 we apply strong subalgebras to describe the complexity of the CSP. First, we prove the hardness part of 
the CSP Dichotomy conjecture, that is, we show that 
$\CSP(\Gamma)$ is NP-hard whenever $\Gamma$ is not preserved by a 
WNU operation. Recall that this fact follows from \cite{CSPconjecture} and \cite{miklos}.
Second, we show that 
$\CSP(\Gamma)$ can be solve by local consistency checking whenever 
$\Gamma$ is preserved by a WNU operation of every arity greater than 2.
In this case affine algebras do not occur, and every domain, 
viewed as an algebra, has a strong subalgebra.
This allows us 
to reduce the domains of the variables iteratively
till the moment when all domains are one-element sets, giving us a solution. For the original proof and more information about local methods 
see \cite{kozik2016weak,bartokozikboundedwidth}.

Finally, in Section 6 we prove all the properties of 
strong subalgebras formulated in Section 3.


\section{Preliminaries}

\subsection{Algebras}
\emph{A finite algebra} is a pair $\mathbf{A}:=(A;F)$, where $A$ is a finite set, called \emph{universe},
and $F$ is a family of operations on $A$, called \emph{basic operations of $\mathbf{A}$}.
We say that $B\subseteq A$ is a \emph{subuniverse} 
if $B$ is closed under all operations of $\mathbf{A}$.
Given a subuniverse $B$ of $\mathbf{A}$ we can form
the algebra $\mathbf B$ by restricting all the operations
of $\mathbf{A}$ to the set $B$.
We say that $\mathbf{B}$ is \emph{a subalgebra} of $\mathbf{A}$
and write 
$B\le\algA$ or $\alg B\le\algA$.
A subuniverse/subalgebra is called \emph{nontrivial} if it is proper and nonempty.

An equivalence relation $\sigma$ on the universe of an algebra $\mathbf{A}$ is called \emph{a congruence}
if it is preserved by every operation of the algebra.
In this case we can define a factor algebra 
$\mathbf{A}/\sigma$. In a usual way we define a product of algebras of the same type (same arities of operations). 
For more details on factor algebras and product of algebras see~\cite{bergman2011universal}.
A congruence is called \emph{nontrivial} if it is 
not the equality relation and not $A^{2}$.

In this paper we usually consider \emph{idempotent} algebras, that is, 
algebras whose basic operations satisfy the condition
$f(x,x,\dots,x) = x$. 

An algebra is called \emph{essentially unary} 
if each of its operation has at most one non-dummy variable.


\subsection{Isomorphism and HSP}

We say that algebras 
$\algA=(A;f_{1},\dots,f_{s})$ and $\algB=(B;g_1,\dots,g_{s})$ are 
\emph{of the same type} if the operations $f_{i}$ and $g_{i}$ are of the same arity for every $i$. A mapping $\varphi:A\to B$ is
\emph{a homomorphism} if for every $i$ and 
$a_{1},\dots,a_{n_{i}}\in A$
$$\varphi(f_{i}(a_{1},\ldots,a_{n_{i}})) = 
g_{i}(\varphi(a_{1}),\ldots,\varphi(a_{n_{i}})).$$
If additionally $\varphi$ is bijective 
then it is called \emph{isomorphism} and we write 
$\algA\cong\algB$.
For a class $K$ of algebras of the same type 
by $\SSS(K)$ we denote the set of all subalgebras of algebras from $K$,
by $\PPP(K)$ we denote the set of all direct products of families of algebras of $K$.
By $\HHH(K)$ we denote the set of algebras 
$\algB$ such that 
$\algB\cong \algA/\sigma$ for some 
$\algA\in K$ and a congruence $\sigma$ on $\algA$.
For more details on the operators $H$, $S$ and $P$
see~\cite{bergman2011universal}.
In the paper we will use 
these notations only to write 
$\HS(\algA)$ and $\HSP(\algA)$ for an algebra $\algA$.
For example 
$\HS(\algA)$ is the set of all algebras 
$\alg B$ such that 
$\algB\cong \alg S/\sigma$ 
for some $\alg S\le \alg A$ and a congruence $\sigma$ on $\alg S$.

\subsection{Clones and relational clones}

A set of operations is called \emph{a clone} if it is closed under composition and contains all projections.
For a set of operations $F$ by $\Clo(F)$ we denote the clone generated by $F$.
For an algebra $\algA$ by $\Clo(\mathbf{A})$ we denote the clone generated by all basic operations of $\mathbf{A}$.
Operations from $\Clo(\mathbf{A})$ are called 
\emph{term operations} because they can be defined by a term over the basic operations of $\algA$.

By $\mathcal R_{A}$ denote the set of all relations over $A$, 
that is, $\mathcal R_{A}=\{R\subseteq A^{n}\mid n\in\mathbb N\}$.
Recall that an $m$-ary operation \emph{$f$ preserves a relation 
$R\subseteq A^{n}$} if 
for all $\alpha_{1},\dots,\alpha_{m}\in R$ 
we have 
$f(\alpha_{1},\dots,\alpha_{m})\in R$, 
where the operation $f$ is applied to tuples coordinatewise.
In this case we also say that $f$ is \emph{a polymorphism}
of $R$, and $R$ is \emph{an invariant} of $f$.
For a set of operations $F$
by $\Inv(F)$ we denote the set of relations preserved by each operation from $F$. 
Similarly, for an algebra $\algA$ by 
$\Inv(\algA)$ we denote 
the set of relations preserved by each basic operation of $\algA$. 
For a set of relations $\Gamma\subseteq \mathcal R_{A}$
by $\Pol(\Gamma)$ we denote the set of all operations preserving each relation from $\Gamma$.

A formula of the form 
$\exists y_{1}\dots\exists y_{n} \Phi$,
where $\Phi$ is a  conjunction of relations from $\Gamma$ is called 
\emph{a positive primitive formula (pp-formula) over $\Gamma$}.
If $R(x_{1},\dots,x_{n}) = \exists y_{1}\dots\exists y_{n} \Phi$, 
then we say that $R$ is \emph{pp-defined} by 
this formula, 
and $\exists y_{1}\dots\exists y_{n} \Phi$ is called 
its \emph{pp-definition}.
A set of relations $\Gamma$ closed under 
the pp-formulas 
and containing the equality and empty relations 
is called \emph{a relational clone}.

Note that $\Pol$ and $\Inv$ are mutually inverse bijective mappings between clones and relational clones,
defining a Galois correspondence 
between them
\cite{bond,geiger1968closed}.
Precisely, 
for any 
algebra $\algA$ 
we have 
$\Clo(\algA) = \Pol(\Inv(\algA))$,
and for any 
$\Gamma\subseteq \mathcal R_{A}$ 
we have 
$\RelClo(\Gamma) = \Inv(\Pol(\Gamma))$,
where 
$\RelClo(\Gamma)$ is 
the relational clone 
generated by $\Gamma$.

\subsection{Other notations}

For an integer $k$ by $[k]$ we denote the set $\{1,2,\dots,k\}$.
For a $k$-ary relation $R$ and a set of coordinates $B \subseteq [k]$, define $\proj_B(R)$ to be the $|B|$-ary relation obtained from $R$ by projecting onto $B$, or equivalently, existentially quantifying variables at positions $[k] \setminus B$. 
To simplify we sometimes write 
$\proj_{1}(R)$ and $\proj_{1,2}(R)$ instead 
of 
$\proj_{\{1\}}(R)$ and $\proj_{\{1,2\}}(R)$.

A relation 
$R\subseteq A_{1}\times\dots\times A_{n}$ is called 
subdirect if 
$\proj_{i}(R) = A_{i}$ for every $i\in[n]$.
We say that a subalgebra $\mathbf{R}$
of $\algA_{1}\times\dots\times \algA_{n}$ 
is \emph{subdirect}, we write 
$\alg R\le_{sd} \algA_{1}\times\dots\times \algA_{n}$,
if $R$ is a subdirect relation.
Note that 
$R\subseteq A^{n}$
is a subuniverse of 
$\algA^{n}$ if and only if
$R\in\Inv(\algA)$.
For $R\subseteq A^{n}$ by 
$\Sg_{\algA}(R)$ we denote the minimal 
subalgebra of $\algA^{n}$ containing $R$, that is 
the subalgebra of $\algA^{n}$ generated from $R$.



\section{Strong subalgebras}\label{StrongSubalgebrasSection}

In this section we give a definition of strong subalgebras 
and formulate all the necessary properties.

\subsection{Binary absorbing subuniverse}

We say $B$ is \emph{an absorbing subuniverse} of an algebra 
$\algA$ if 
there exists $t\in \Clo(\mathbf{A})$ such that
$t(B,B,\dots,B,A,B,\dots,B) \subseteq B$ for any position of $A$. Also in this case we say that 
\emph{$B$ absorbs $\algA$ with a term $t$}.

If the operation $t$ can be chosen binary then we say that
$B$ is \emph{a binary absorbing subuniverse} of $\mathbf A$.
To shorten sometimes we will write \emph{BA} instead of 
binary absorbing.
If $t$ can be chosen ternary the we call $B$ 
\emph{a ternary absorbing subuniverse}.
For more information about absorption and its connection with CSP see \cite{barto2017absorption}.

\subsection{Central subuniverse}
A subuniverse $C$ of $\algA$ is called \emph{central} 
if it is an absorbing subuniverse 
and 
for every $a\in A\setminus C$ 
we have 
$(a,a)\notin \Sg_{\algA}((\{a\}\times C)\cup (C\times \{a\}))$.
Note that this definition is weaker than the original definition of a center from my proof of CSP Dichotomy conjecture \cite{MyProofCSP,ZhukFOCSCSPPaper}. Nevertheless, it has all the good properties of a center. Moreover, unlike a center, 
a central subuniverse of a central subuniverse is a central subuniverse.
We will show (see Corollary~\ref{ternaryAbsorption}) that 
every central subuniverse is a ternary absorbing subuniverse.

\subsection{Projective subuniverse}

A subuniverse $B$ of $\algA$ is called \emph{a projective subuniverse} 
if 
every basic operation $f$ of $\algA$ has a coordinate $i$
such that 
$f(\underbrace{A,\dots,A,B}_{i},A,\dots,A)\subseteq B$.

The following easy characterization of projective subuniverses is proved 
in Subsection \ref{ProjectiveSubuniversesSubsection}.
This idea is known from the Post's description of clones 
on a 2-element domain, where $A = \{0,1\}$ and $B \in\{\{0\},\{1\}\}$
\cite{Post}.

\begin{lem}\label{ProjectiveSubCharacterization}
Suppose $\algA$ is a finite idempotent algebra.
Then 
$B$ is a projective subuniverse of $\algA$ if and only if 
$A^{n}\setminus (A\setminus B)^{n}\in\Inv(\algA)$ for every $n\ge 1$.
\end{lem}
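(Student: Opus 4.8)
The plan is to prove both implications directly from the definitions, using idempotency in an essential way for the harder direction. For the forward direction, suppose $B$ is a projective subuniverse of $\algA$. I want to show $R_n := A^n \setminus (A\setminus B)^n$ is preserved by every basic operation $f$ of $\algA$. Take any $m$-ary basic operation $f$ (where $m$ is its arity) and tuples $\alpha_1,\dots,\alpha_m \in R_n$; I must show $f(\alpha_1,\dots,\alpha_m) \in R_n$, i.e.\ that $f$ applied coordinatewise does not land in $(A\setminus B)^n$. Since $B$ is projective, $f$ has a coordinate $j \le m$ with $f(A,\dots,A,B,A,\dots,A) \subseteq B$ ($B$ in position $j$). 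Since $\alpha_j \in R_n$, it has some coordinate $k \le n$ with $\alpha_j[k] \in B$. Then the $k$-th coordinate of $f(\alpha_1,\dots,\alpha_m)$ equals $f(\alpha_1[k],\dots,\alpha_j[k],\dots,\alpha_m[k])$ with $\alpha_j[k] \in B$, which lies in $B$ by the projectivity of coordinate $j$. Hence $f(\alpha_1,\dots,\alpha_m)$ has a coordinate in $B$, so it is not in $(A\setminus B)^n$, i.e.\ it belongs to $R_n$. This direction does not even need idempotency.

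For the converse, suppose $A^n \setminus (A\setminus B)^n \in \Inv(\algA)$ for every $n \ge 1$; I must show each basic operation $f$ of $\algA$ has a coordinate $j$ with $f(A,\dots,A,B,A,\dots,A) \subseteq B$ ($B$ in position $j$). Let $m$ be the arity of $f$ and assume for contradiction that for every $j \le m$ there exist elements witnessing failure: elements $a_{j,1},\dots,a_{j,m} \in A$ with $a_{j,j} \in B$ but $f(a_{j,1},\dots,a_{j,m}) \notin B$, i.e.\ $f(a_{j,1},\dots,a_{j,m}) \in A \setminus B$. Now form, for each $i \le m$, the tuple $\alpha_i := (a_{1,i}, a_{2,i}, \dots, a_{m,i}) \in A^m$ (the $j$-th coordinate records the $i$-th argument of the $j$-th witness). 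Each $\alpha_i$ lies in $R_m = A^m \setminus (A\setminus B)^m$: indeed its $i$-th coordinate is $a_{i,i} \in B$ (using idempotency-free reasoning — we just chose $a_{j,j} \in B$), so $\alpha_i$ has a coordinate in $B$. Applying $f$ coordinatewise, $f(\alpha_1,\dots,\alpha_m)$ has $j$-th coordinate $f(a_{j,1},\dots,a_{j,m}) \in A\setminus B$ for every $j$, so $f(\alpha_1,\dots,\alpha_m) \in (A\setminus B)^m$, contradicting $R_m \in \Inv(\algA)$.

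Wait — the converse argument as written does not actually use idempotency either, so I should double-check whether idempotency is needed; it is needed to guarantee that $B$ (a subuniverse) is handled correctly, and more importantly the statement requires $B$ to be a subuniverse, which together with $f(A,\dots,A,B,A,\dots,A)\subseteq B$ in the conclusion is consistent; idempotency is the standing hypothesis of the section and ensures, e.g., that singletons are subuniverses and that the notion of projective subuniverse behaves well, but the core combinatorial argument above is purely set-theoretic. The main obstacle is purely bookkeeping: keeping straight the two-index notation (argument position $i$ versus witnessing-coordinate position $j$) so that the constructed tuples $\alpha_i$ genuinely lie in $R_m$ and the image genuinely lies in $(A\setminus B)^m$. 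I would present the matrix picture explicitly — think of the $a_{j,i}$ as an $m\times m$ matrix whose rows are the failure witnesses and whose columns are the $\alpha_i$ — since that makes both the "each column hits $B$ on its diagonal entry" and "$f$ of the columns is the vector of row-values, all outside $B$" observations immediate.
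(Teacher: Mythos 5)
Your proof is correct and follows essentially the same argument as the paper: the forward direction traces the distinguished projective coordinate through the columns, and the converse builds the same $m\times m$ matrix of failure witnesses whose columns lie in $R_m$ while $f$ applied to them lands in $(A\setminus B)^m$. Your aside is also accurate — the paper's argument likewise makes no real use of idempotency, which is just a standing hypothesis of the section.
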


\subsection{PC subuniverse}

An algebra $(A;F_{A})$ is called \emph{polynomially complete (PC)}
if the clone generated by $F_{A}$ and all constant operations is the clone of all operations on $A$
(see \cite{istinger1979characterization,lausch2000algebra}).
A subuniverse $B$ of $\algA$ is called 
\emph{a PC subuniverse} if
it is $A$, or empty, or 
it is a block of a congruence $\sigma$ on $\algA$ such that
$\algA/\sigma\cong \alg D_{1}\times\dots\times \alg D_{s}$, 
where each $\alg D_{i}$ is a polynomially complete algebra 
without a nontrivial binary absorbing subuniverse, 
or a nontrivial central subuniverse, or 
a nontrivial projective subuniverse.
To shorten we say that 
$\alg D_{i}$ is \emph{a PC algebra without BACP}.

\subsection{$p$-affine algebras}

For a prime number $p$ an idempotent finite algebra $\mathbf A$ is called \emph{$p$-affine}
if there exist operations $\oplus$ and $\ominus$ on $A$ such that $(A;\oplus,\ominus)\cong (\mathbb Z_{p}\times \dots\times 
    \mathbb Z_{p};+,-)$,
    $(x_{1}\oplus x_{2} = x_{3}\oplus x_{4})\in \Inv(\algA)$, and 
$(x\ominus y\oplus z)\in\Clo(\algA)$.
See \cite{freese1987commutator} for the definition and properties of affine algebras.

\subsection{Essential relations}
A relation $R\subseteq A^{n}$ is called \emph{$C$-essential} if
$R\cap(C^{i-1}\times A\times C^{n-i})\neq \varnothing$ for every 
$i\in[n]$
but $\rho\cap C^{n}=\varnothing$.
A relation 
$R\subseteq A_{1}\times\dots\times A_{n}$ is called 
\emph{$(C_{1},\dots,C_{n})$-essential}
if
$R\cap
(C_{1}\times\dots\times C_{i-1}\times A_{i} \times C_{i+1}
\times\dots\times C_{n})\neq\varnothing$
for every $i\in[n]$ 
but 
$R\cap
(C_{1}\times\dots\times C_{n})=\varnothing$.
In Subsection~\ref{absorbingSubuniverseSubsection} we will prove the following lemma, which was originally proved in \cite{DecidingAbsorption}.

\begin{lem}\label{NoEssential}\cite{DecidingAbsorption}
Suppose $B$ is a subuniverse of $\algA$.
Then $B$ absorbs $A$ with an operation $t$ of arity $n$ if and only if
there does not exist a $B$-essential relation $R\le \algA^{n}$.
\end{lem}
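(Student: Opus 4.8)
\textbf{Proof plan for Lemma~\ref{NoEssential}.}

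The plan is to prove both directions by relating the existence of a term operation witnessing absorption to the invariant relation that records ``bad'' arguments. First I would set up the relevant object: on the set $A^n$ of $n$-tuples, consider the relation $Q$ consisting of all $n$-tuples of the form $(t(\alpha^1,\dots,\alpha^n))_{\text{relevant}}$ — more precisely, I would look at the subalgebra of $\algA^{n}$ that is the ``free-like'' object generated by the $n$ columns that have one $A$-entry and $B$-entries elsewhere. Concretely, fix elements and consider tuples $\beta_i \in A^n$ whose $i$-th coordinate is unconstrained and whose other coordinates lie in $B$; the value $t(\beta_1,\dots,\beta_n)$ lies in $B$ for all choices exactly when $B$ absorbs $\algA$ with $t$. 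So the statement should follow once we identify: $B$ fails to absorb $A$ with \emph{some} $n$-ary $t$ $\iff$ a certain $B$-essential relation of the right arity exists as a subalgebra of a power of $\algA$.

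For the forward direction (contrapositive), suppose $R \le \algA^n$ is $B$-essential. Then for each $i \in [n]$ pick a tuple $r^i \in R$ with $r^i \in C^{i-1}\times A \times C^{n-i}$ (here $C = B$), i.e. $r^i$ has possibly-bad $i$-th coordinate and good coordinates elsewhere. If $t$ were an $n$-ary term operation witnessing $B \trianglelefteq_{t} \algA$, apply $t$ coordinatewise to the tuple of tuples $(r^1,\dots,r^n) \in R^n$: in each coordinate $j$, the $j$-th entries of $r^1,\dots,r^n$ are all in $B$ except the one coming from $r^j$, so by the absorption property $t$ of them lies in $B$; hence $t(r^1,\dots,r^n) \in R \cap B^n$, contradicting $R \cap B^n = \varnothing$. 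So no such $t$ exists.

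For the converse, suppose $B$ does not absorb $A$ with any $n$-ary term operation. I would build the $B$-essential relation as the subalgebra $R = \Sg_{\algA^n}(S)$ where $S \subseteq A^n$ is the set of $n$ generating columns: for $i \in [n]$, let the $i$-th generator be the tuple $g^i$ whose $i$-th coordinate ranges over $A$ and whose other coordinates are a fixed element of $B$ — more carefully, one takes $S$ to be an appropriate finite subset so that $R$ contains, in coordinate $i$, every value $t(b,\dots,b,a,b,\dots,b)$ obtainable from an $n$-ary term $t$. By idempotence the diagonal-type elements force $B \subseteq \proj_i(R)$-relevant positions appropriately, so $R \cap (B^{i-1}\times A \times B^{n-i}) \neq \varnothing$ for each $i$. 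On the other hand, an element of $R \cap B^n$ would be precisely of the form $(t(\text{columns}))$ for an $n$-ary term $t$ with all outputs in $B$, i.e. a witness that $B$ absorbs $A$ with $t$ of arity $n$ — which we assumed does not exist. Hence $R \cap B^n = \varnothing$, so $R$ is $B$-essential, and $R \le \algA^n$ by construction.

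The main obstacle — and the place requiring care rather than the routine pigeonholing above — is getting the \emph{arity} bookkeeping exactly right: one must choose the generating set $S$ so that $R = \Sg_{\algA^n}(S)$ is a relation on $A^n$ (arity exactly $n$) whose membership in $B^n$ is \emph{equivalent} to the existence of an $n$-ary absorbing term, not merely a term of some unbounded arity. This means the $n$ generators must be set up so that every way of combining them by term operations corresponds to an $n$-ary term applied to the ``$i$-th column has the free variable'' pattern, and conversely. Establishing this correspondence — essentially that $R$ is the ``free algebra on the absorption pattern'' — is the crux; once it is in place, both the non-emptiness of the slices and the emptiness of $R \cap B^n$ are immediate translations of the definitions, and idempotence is what guarantees the slices are nonempty (each generator restricted to the relevant single coordinate already gives a point with all other coordinates in $B$).
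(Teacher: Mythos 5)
Your ``only if'' direction is correct and is exactly the paper's argument: pick witnesses $\alpha_i\in R\cap(B^{i-1}\times A\times B^{n-i})$ and apply the absorbing term $t$ coordinatewise to land in $R\cap B^{n}$, a contradiction.

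The converse, however, has a genuine gap, and it sits precisely at the spot you flag as ``the crux'' without resolving it. A relation $R=\Sg_{\algA^{n}}(S)$ with $n$ generators cannot do what you want: each generator is one fixed $n$-tuple, so an element of $R\cap B^{n}$ only certifies an $n$-ary term $t$ that maps those $n$ particular rows into $B$ --- not a term satisfying $t(B,\dots,B,A,B,\dots,B)\subseteq B$ for \emph{all} choices of entries. Conversely, the nonexistence of an $n$-ary absorbing term does not force $R\cap B^{n}=\varnothing$ for such a small generating set. To make ``$R\cap(\cdots)=\varnothing$'' equivalent to ``no $n$-ary absorbing term,'' you must generate from \emph{all} rows of $A^{n}$ having exactly one entry outside $B$; there are $kn$ of them with $k=(|A|-|B|)|B|^{n-1}$, so the natural object is $R=\Sg(M)\le\algA^{kn}$, generated by the $n$ columns of that matrix. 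If $R\cap B^{kn}\neq\varnothing$ you recover an $n$-ary absorbing term and are done; if $R\cap B^{kn}=\varnothing$ you have only an $n$-ary $B^{k}$-essential relation over the power $\algA^{k}$, not yet a $B$-essential relation $R'\le\algA^{n}$. The paper closes this gap with a separate, nontrivial step (Lemma~\ref{AbsorptionReduceArity}): choose a tuple $(i_1,\dots,i_n)$ with minimal sum such that $R\cap(B_{i_1}\times\dots\times B_{i_n})=\varnothing$ where $B_{i}=B^{i}\times A^{k-i}$, and project the nonempty ``one step back'' slices onto single coordinates to produce a genuine $n$-ary $B$-essential relation over $A$. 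Your proposal contains no analogue of this arity-reduction argument, and without it the claimed equivalence at arity exactly $n$ does not follow.
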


\subsection{Strong subalgebras}

A subuniverse $B$ of $\algA$ is called 
a \emph{strong subuniverse}
if $B$ is a binary absorbing subuniverse, 
a central subuniverse, or a PC subuniverse.
In this case the algebra $\alg B$ is called 
\emph{a strong subalgebra} of $\algA$.
If we need to specify the type of a strong subuniverse,
we say that 
$B$ is \emph{a (strong) subuniverse of $\algA$ of type $\mathcal T$}, 
where 
$\mathcal T=BA(t)$ if it is a binary absorbing subuniverse 
with a term operation $t$, 
$\mathcal T=C$ if 
it is a central subuniverse, 
and 
$\mathcal T=PC$ if 
it is a PC subuniverse.
We write 
$B\le_{\mathcal T} \algA$ or 
$\alg B\le_{\mathcal T} \algA$.

\subsection{Properties of strong subalgebras}

Strong subalgebras have a lot of nice properties, 
but here we formulate only the properties that are necessary to prove the main statements.
All these properties are proved in Section \ref{StrongSubuniversesSection}.

First, we can prove that every idempotent algebra has a strong subuniverse
(cases (1)-(3)), a projective subuniverse, or a $p$-affine quotient.
Unlike similar claims proved earlier 
(compare with Theorem 5.1 in \cite{MyProofCSP}),
this claim is formulated for all idempotent algebras but not just algebras having a WNU term operation (Taylor algebras).

\begin{thm}\label{ExistsStrongSubalgebraTHM}
Every finite idempotent algebra $\algA$ of size at least 2 has
\begin{enumerate}
\item[(1)] a nontrivial binary absorbing subuniverse, or
\item[(2)] a nontrivial central subuniverse, or
\item[(3)] a nontrivial PC subuniverse, or
\item[(4)] a congruence $\sigma$ such that 
$\algA/\sigma$ is $p$-affine, or
\item[(5)] a nontrivial projective subuniverse.
\end{enumerate}
\end{thm}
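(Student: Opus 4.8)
The plan is to use the Galois connection together with Rosenberg's classification of maximal clones, exactly as advertised in the introduction. Let $\algA = (A;F)$ be a finite idempotent algebra with $|A|\ge 2$, and let $\clone C$ be the clone generated by $F$ together with all constant operations on $A$. If $\clone C$ contains every operation on $A$, then $\algA$ is polynomially complete; in that case $\algA$ itself is a PC subuniverse of $\algA$ unless $\algA$ has a nontrivial binary absorbing, central, or projective subuniverse — and in the latter situations we are already in case (1), (2), or (5). (Strictly, the trivial block ``$\algA$'' is allowed in the definition of PC subuniverse only when $s=1$ and $\alg D_1 = \algA$ has no BACP, so this dispatch is exactly what the definition is set up to do.) So assume $\clone C$ is a proper clone. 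Then $\clone C$ is contained in some maximal clone, and by Rosenberg's theorem this maximal clone is $\Pol(\rho)$ for a relation $\rho$ of one of six types: bounded partial order, prime-affine, prime-permutation, nontrivial equivalence, central, or $h$-regular (for $h\ge 3$).

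Now I would go through the six Rosenberg types, using that $\algA$ is idempotent and that all constants lie in $\clone C\subseteq\Pol(\rho)$, which forces $\rho$ to contain all constant (diagonal) tuples and hence be reflexive. The permutation type is immediately excluded: a nontrivial fixed-point-free permutation graph cannot be preserved by all constants. For the prime-affine type, $\rho$ is (up to coordinate permutation) $\{(a,b,c,d): a-b = c-d\}$ over an elementary abelian $p$-group, and $\rho\in\Inv(\algA)$ together with $x-y+z\in\Pol(\rho)\cap\clone C\subseteq\Clo(\algA)$ gives exactly case (4) with $\sigma$ the equality congruence (or, after a preliminary reduction, a genuine congruence). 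For a bounded partial order $\rho$, the top and bottom elements each form a one-element, hence a binary absorbing (indeed absorbing by the meet/join term available from $\Pol(\rho)$) or projective subuniverse; more carefully, I expect the maximal and minimal elements to yield case (1) or case (5). For a nontrivial equivalence relation $\theta$, since $\theta\in\Inv(\algA)$ and $\algA$ is idempotent, $\theta$ is a congruence of $\algA$, so we pass to the proper quotient $\algA/\theta$ (or recurse on a block): this is the reduction step that lets us assume $\algA$ is ``congruence-free enough.'' For the central-relation type, the center of $\rho$ is a unary central relation $C\subsetneq A$ preserved by all of $F$, and the defining property of a Rosenberg central relation — together with idempotence and the presence of all constants — is essentially designed to make $C$ a central subuniverse of $\algA$ in the sense of Section 3, giving case (2); this needs the verification that $(a,a)\notin\Sg_{\algA}((\{a\}\times C)\cup(C\times\{a\}))$ for $a\notin C$, which follows because that generated subalgebra lies inside $\rho$ viewed appropriately and $\rho$ excludes $(a,a)$ off the center. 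Finally, for an $h$-regular relation $\rho$ ($h\ge 3$), I would argue that the associated structure yields a nontrivial projective subuniverse: the equivalence relations $\theta_1,\dots,\theta_m$ underlying the $h$-regular family, being invariants of $\algA$, are congruences, so again either a quotient reduction applies or one extracts a projective block — this is case (5) (or a reduction to a smaller algebra where one of (1)–(5) holds).

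The genuine work, and the main obstacle, is the central-relation and $h$-regular cases: showing that Rosenberg's notion of ``central relation'' coincides, under idempotence plus all constants in the clone, with the paper's notion of ``central subuniverse,'' and extracting an honest projective subuniverse (satisfying the invariant characterization of Lemma~\ref{ProjectiveSubCharacterization}) from the $h$-regular case. For the central case the key identity to chase is that $\Sg_{\algA}((\{a\}\times C)\cup(C\times\{a\}))$ is contained in the binary relation $\{(x,y): (x,y,c_3,\dots)\in\rho\text{ for all }\dots\}$ — more precisely one uses the totally reflexive, totally symmetric relation $\rho$ of arity $\ge 2$ obtained from the central relation to box in the generated subalgebra and conclude $(a,a)\notin\rho$ forces $(a,a)$ out of the subalgebra. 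For the $h$-regular case I would reduce, via the congruences it supplies, to the situation where those congruences are trivial, and then show the only surviving possibility forces a one-element projective subuniverse or contradicts idempotence. Throughout, the recursion is well-founded because each ``reduction'' step replaces $\algA$ by a proper quotient or a proper subalgebra of strictly smaller size, and cases (1)–(5) are all inherited appropriately upward (a strong subuniverse of a quotient pulls back, and a strong subuniverse of a strong subuniverse is strong — which is exactly why the paper stressed that central subuniverses compose).
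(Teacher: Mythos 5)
Your route (Rosenberg's classification of maximal clones applied to $\clone C=\Clo(F\cup\text{constants})$) is the historical inspiration mentioned in the introduction, but it is not the paper's proof: the paper never invokes Rosenberg's theorem, instead taking a minimal-arity relation $R\in\Inv(\clone C)$ with $\Pol(R)$ proper, proving it is full-projective (Lemma~\ref{FullProjective}), and then extracting an equivalence relation, a binary central relation, or a linear structure directly (Lemmas~\ref{CoolSubdirectImplies}, \ref{FromStronglyRich}, \ref{GetpAffineFromAffine}), before converting these into strong subuniverses. More importantly, several steps of your dispatch through the six Rosenberg types fail as written. First, in the prime-affine case you claim $x-y+z\in\Pol(\rho)\cap\clone C\subseteq\Clo(\algA)$; neither inclusion is justified. $\clone C\not\subseteq\Clo(\algA)$ (constants are not term operations of an idempotent algebra), and knowing $\clone C\subseteq\Pol(\rho)$ tells you nothing about whether $x-y+z$ lies in $\clone C$. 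What you get is only that $\algA$ is $p$-\emph{linear}; upgrading this to $p$-\emph{affine} is exactly the content of Lemmas~\ref{MaltsevImpliesAffine} and \ref{NoMaltsevImplies2} in the paper (if there is no Maltsev term one must fall back to a central relation or a proper subuniverse), and your proposal skips it. Second, in the bounded-partial-order case the meet/join lies in $\Pol(\le)$ but need not lie in $\Clo(\algA)$: for $\algA=(\{0,1\};\maj)$ one has $\clone C=\Pol(\le)$, yet neither $\{0\}$ nor $\{1\}$ is binary absorbing or projective --- they are \emph{central} subuniverses. So your claimed dispatch to cases (1)/(5) is wrong; the bounded order must be treated as a binary central relation and fed into Theorem~\ref{CenterImpliesTHM}.

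Third, for the Rosenberg central-relation type you verify only the condition $(a,a)\notin\Sg_{\algA}((\{a\}\times C)\cup(C\times\{a\}))$, but a central subuniverse must also be an \emph{absorbing} subuniverse, and that is the hard half: it can genuinely fail, which is why Theorem~\ref{CenterImpliesTHM} has a three-way conclusion (central subuniverse, or nontrivial binary absorbing subuniverse, or nontrivial projective subuniverse) and needs Lemmas~\ref{NoEssential}, \ref{findBlocker} and \ref{NontrivialProjectiveFromPower}. Fourth, the $h$-regular case is not argued at all beyond ``I would argue that it yields a projective subuniverse''; this is precisely one of the cases the paper's self-contained route is designed to avoid. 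Finally, your recursion descends both to quotients and to blocks/subalgebras; the paper's induction descends only to quotients, because a strong subuniverse of a proper subalgebra is not in general a strong subuniverse of $\algA$ (binary absorption inside a block says nothing about absorption in $\algA$). The Rosenberg strategy can be made to work, but each of these four gaps requires the same nontrivial machinery the paper builds in Section~\ref{StrongSubuniversesSection}, so as it stands the proposal is a plan rather than a proof.
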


The next claim says that a projective subuniverse implies 
a binary absorbing subuniverse or 
an essentially unary algebra.

\begin{lem}\label{CBTNonAbsorbing}
Suppose $B$ is a nontrivial projective subuniverse of a finite idempotent algebra $\algA$, 
and $B$ is not a binary absorbing subuniverse.
Then there exists an essentially unary algebra 
$\alg U\in \HS(\algA)$ of size at least 2.
\end{lem}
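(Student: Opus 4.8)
\textit{Proof proposal.} The plan is to exploit the characterization of projective subuniverses from Lemma~\ref{ProjectiveSubCharacterization}. Since $B$ is projective, for every $n\ge 1$ the relation $R_n := A^n\setminus (A\setminus B)^n$ is in $\Inv(\algA)$. My strategy is to use the hypothesis that $B$ does \emph{not} binary absorb $\algA$ to produce, via Lemma~\ref{NoEssential} with $n=2$, a $B$-essential relation $R\le \algA^2$; I then want to turn this binary obstruction into a unary structure on a quotient of a subalgebra.

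First I would apply Lemma~\ref{NoEssential} in the arity-$2$ case: because $B$ does not absorb $\algA$ with a binary term, there exists a $B$-essential binary relation $R\le\algA^2$, i.e.\ $R\cap(A\times B)\neq\varnothing$, $R\cap(B\times A)\neq\varnothing$, but $R\cap B^2=\varnothing$. Pick witnesses $(a_1,b_1)\in R$ with $b_1\in B$ and $(b_2,a_2)\in R$ with $b_2\in B$; idempotency forces $a_1,a_2\notin B$. The next step is to understand the subalgebra $\alg R$: it is subdirect onto each coordinate after replacing $\algA$ by the appropriate subalgebras $\proj_1(R),\proj_2(R)$; using projectivity of $B$ restricted to those subalgebras (projective subuniverses are inherited by subalgebras that meet $B$, which should be checkable directly from the definition or from Lemma~\ref{ProjectiveSubCharacterization}), I can assume $R\setsd \algA_1\times\algA_2$ with $B_i := B\cap A_i$ nontrivial projective in $\algA_i$, $R$ still $(B_1,B_2)$-essential, and $R$ \emph{minimal} with this property among subdirect subalgebras of products of algebras having nontrivial projective subuniverses. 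Minimality is the engine: if $R$ projects onto neither factor injectively and the linking congruences are nontrivial, one should be able to factor out and get a strictly smaller counterexample, so in the minimal case $R$ is the graph of an isomorphism or the linking congruences are trivial in a way that forces each $\algA_i$ to be simple, and the condition $R\cap(B_1\times B_2)=\varnothing$ while $R$ hits $A_1\times B_2$ and $B_1\times A_2$ pins down $R$ to a very rigid shape — essentially $\algA_1\cong\algA_2$ and $B_1,B_2$ correspond under the iso but $R$ is ``off-diagonal''.

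From this rigid configuration I would extract the essentially unary algebra. The idea of the Post-style argument is that the relation $R_n = A^n\setminus (A\setminus B)^n$ being invariant for all $n$ says precisely that every term operation $f$ of $\algA$ has the property: if at least one argument lies in $B$ then the value lies in $B$ — but this is only ``for all $n$ simultaneously'', and the failure of binary absorption means no single $f$ has the stronger property $f(A,\dots,A,B,A,\dots,A)\subseteq B$ in even one fixed coordinate for a binary $f$. Combining these: on the quotient $\alg U$ obtained by collapsing $B_i$ to a point inside a suitable subalgebra (or on $\algA_i/\theta$ for the largest congruence $\theta$ with $B_i$ a block), every term operation must be essentially unary, because a term with two genuine non-dummy variables, together with the projective/essential relation $R$, would let us build a $B$-essential relation of smaller size or directly build a binary absorbing term, contradicting minimality or the hypothesis. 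Thus $\alg U\in\HS(\algA)$ is essentially unary, and it has size at least $2$ because $B_i$ was nontrivial (proper and nonempty) in $\algA_i$, so the quotient has at least the class of $B_i$ and one more class coming from $a_i\notin B_i$.

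The main obstacle I anticipate is the minimality/reduction step: making precise how a non-minimal $B$-essential binary relation over a product of algebras-with-projective-subuniverses can be shrunk, and ensuring that the shrinking stays within the class (the smaller factors must still have nontrivial projective subuniverses and the smaller relation must still be essential). This requires care with how projective subuniverses behave under taking subalgebras and quotients, and with the interaction between the linking congruences of $R$ and the subuniverses $B_1,B_2$. The second delicate point is the final dichotomy ``binary absorption or essentially unary quotient'': one must show that once the configuration is rigid, \emph{every} term operation collapses to essentially unary on the quotient, which is where the full strength of $R_n\in\Inv(\algA)$ for all $n$ (not just $n=2$) gets used, in the style of Post's analysis of clones on $\{0,1\}$ with $B$ a singleton.
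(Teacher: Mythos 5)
You start with exactly the right ingredients — Lemma~\ref{ProjectiveSubCharacterization} giving $R_n=A^n\setminus(A\setminus B)^n\in\Inv(\algA)$ for all $n$, and Lemma~\ref{NoEssential} in arity $2$ giving a binary $B$-essential relation $R$ — but the proof then stalls: the minimality/reduction argument and the final ``rigidity forces essential unarity'' step are exactly the places you flag as obstacles, and they are never carried out. In particular, you never actually exhibit the quotient algebra, and the existence of a congruence with $B$ (or $B_i$) as a block is not something you can assume; it has to be constructed.

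The missing idea is a short explicit pp-construction. Set $R'=R\cap R_2$; since $R\cap B^2=\varnothing$ and $R_2$ consists of pairs meeting $B$, every pair in $R'$ has exactly one coordinate in $B$, and $B$-essentiality makes $R'$ nonempty in both ``directions''. Let $D=\proj_1(R')$ and
$S(x,y)=\exists x'\exists y'\,R'(x,x')\wedge R'(y,y')\wedge R_2(x',y')\wedge R_2(x,y)$.
A direct check shows $S=((D\cap B)\times(D\setminus B))\cup((D\setminus B)\times(D\cap B))$, so $\sigma(x,y)=\exists z\,S(x,z)\wedge S(y,z)$ is a congruence on $\alg D$ with exactly the two blocks $D\cap B$ and $D\setminus B$. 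Now projectivity gives, for each term operation $f$, a coordinate $i$ with $f|_D(\dots)\in D\cap B$ whenever the $i$-th argument is in $D\cap B$, and preservation of $S$ gives the complementary implication into $D\setminus B$; hence $f|_D$ is the $i$-th projection modulo $\sigma$ and $\alg D/\sigma\in\HS(\algA)$ is a two-element essentially unary algebra. Your Post-style intuition about what projectivity says pointwise is correct, but without the relation $S$ (which is where the failure of binary absorption is actually converted into a two-block congruence) the argument does not close; the minimal-counterexample machinery you propose is not needed and, as written, would not obviously terminate in the rigid configuration you describe.
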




The next theorem collects the main properties of 
subdirect subalgebras having a strong subalgebra on every coordinate.
Note that similar claims are known for absorbing subuniverses
(see Lemma 11 and Proposition 16 in \cite{DecidingAbsorption}).

\begin{thm}\label{CommonPropertiesThm}
Suppose 
$\alg R \le_{sd} \algA_{1}\times\dots\times \algA_{n}$,
$n\ge 2$,
$\algA_{1}, \dots,\algA_n$ are finite idempotent algebras, 
and 
$\alg B_{i}\le_{\mathcal T}\algA_{i}$ 
for every $i\in[n]$.
Then 
\begin{enumerate}
\item[(1)]
$(R\cap (B_{1}\times\dots\times B_{n}))\le_{\mathcal T} \alg R$;

\item[(2)]
if 
$\mathcal T\neq PC$ or 
$\algA_{1}$ has no nontrivial central subuniverses then 

$\proj_{1}(R\cap (B_{1}\times\dots \times B_{n}))\le_{\mathcal T}\algA_{1}$;


\item[(3)] 
if $R$ is $(B_{1},\dots,B_{n})$-essential
then 
$\mathcal T\in\{C,PC\}$  and $n=2$.
\end{enumerate}
\end{thm}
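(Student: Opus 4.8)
The plan is to treat the three items more or less uniformly by unfolding the definitions of the three strong subuniverse types $\mathcal T\in\{BA(t),C,PC\}$, and exploiting that $\alg R$ sits subdirectly in the product. For item (1) the key observation is that each type is defined by an intrinsic property (an absorbing term, a non-reachability condition in $\Sg$, or being a block of a congruence with a polynomially complete factor) and that these properties are inherited under taking subalgebras of products; so I would verify, for each type separately, that $R'\colon=R\cap(B_1\times\dots\times B_n)$ has the same kind of property inside $\alg R$. For $BA(t)$ this is immediate: the same binary term $t$ witnesses absorption because $t$ acts coordinatewise and $t(B_i,A_i)\cup t(A_i,B_i)\subseteq B_i$ on each coordinate, hence $t$ maps $(R',R)\cup(R,R')$ into $R'$. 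For $C$, one checks both that $R'$ absorbs $\alg R$ (again the absorbing term works coordinatewise) and the centrality non-reachability condition: if $\bar a\in R\setminus R'$, then $\bar a$ differs from $B_i$ on some coordinate $i$, and a putative generation of $(\bar a,\bar a)$ from $(\{\bar a\}\times R')\cup(R'\times\{\bar a\})$ would project to a generation of $(a_i,a_i)$ from the corresponding set on the $i$th coordinate, contradicting centrality of $B_i$ in $\algA_i$. For $PC$ one passes to the congruence $\theta$ on $\alg R$ obtained by intersecting the pullbacks of the relevant congruences $\sigma_i$ on the $\algA_i$; the factor $\alg R/\theta$ embeds subdirectly into $\prod(\algA_i/\sigma_i)$, i.e.\ into a product of PC algebras without BACP, and one argues that a subdirect subalgebra of such a product is again (a product of copies of) such PC algebras, so $R'$ is a block of $\theta$ of the required form.

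For item (2), the point is that the projection $\proj_1$ of $R'$ onto the first coordinate is a subuniverse of $\algA_1$, and I want to identify its type. Using item (1), $R'\le_{\mathcal T}\alg R$, and then I would invoke (or reprove in the relevant special cases) the general principle that the image of a strong subuniverse under a surjective homomorphism $\alg R\twoheadrightarrow\algA_1$ (the first projection restricted to $R$, which is onto since $\alg R$ is subdirect) is again a strong subuniverse of the same type — with the single documented exception in the $PC$ case, which is exactly why the hypothesis ``$\mathcal T\neq PC$ or $\algA_1$ has no nontrivial central subuniverse'' is imposed: a PC block can become central-but-not-PC after projection, and that possibility is excluded precisely when $\algA_1$ has no central subuniverse. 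For $BA(t)$ homomorphic images of binary absorbing subuniverses are binary absorbing (the same $t$ works downstairs). For $C$, homomorphic images of central subuniverses are central because the non-reachability condition transfers: if $(c,c)$ were generated from $(\{c\}\times\proj_1 R')\cup(\proj_1 R'\times\{c\})$ in $\algA_1$ for $c\notin\proj_1 R'$, lift to a generation inside $\alg R$ and get a contradiction with centrality of $R'$ in $\alg R$. The $PC$ case under the extra hypothesis should reduce to the two easy cases after examining what the factor congruence becomes.

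For item (3), I would argue by contradiction. Suppose $R$ is $(B_1,\dots,B_n)$-essential. Recall what this means: $R$ meets every ``one free coordinate'' box $B_1\times\dots\times A_i\times\dots\times B_n$ but $R\cap(B_1\times\dots\times B_n)=\varnothing$. The plan is first to rule out $\mathcal T=BA(t)$: if each $B_i$ binary absorbs $\algA_i$, I would like to produce from the subdirect $\alg R$ a binary term witnessing that $R\cap(B_1\times\dots\times B_n)\neq\varnothing$, contradicting essentiality. Concretely, take witnesses $\bar a^{(i)}\in R$ lying in the $i$th box (so $a^{(i)}_j\in B_j$ for $j\neq i$), fold them together two at a time with the absorbing term $t$; each application pushes the ``bad'' coordinate of one witness into $B$ while keeping the other coordinates in $B$, and after $n-1$ foldings all coordinates land in the respective $B_j$'s, giving a point of $R$ in $B_1\times\dots\times B_n$ — contradiction. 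This shows $\mathcal T\in\{C,PC\}$. Next, to get $n=2$, I would use Lemma~\ref{NoEssential}: the existence of a $B_i$-essential relation of arity $n$ in $\algA_i^n$ is exactly the obstruction to $B_i$ absorbing $\algA_i$ with an $n$-ary term, and since a central or PC subuniverse is a ternary (in fact the corollary in the excerpt gives ternary for central) absorbing subuniverse, essential relations of arity $\geq 3$ over such $B_i$ cannot exist; combined with an argument that a $(B_1,\dots,B_n)$-essential $\alg R$ for $n\geq 3$ would yield, by fixing some coordinates, an essential relation of arity $\geq 3$ over one of the $B_i$, this forces $n=2$.

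\textbf{Main obstacle.} I expect the genuinely delicate part to be the $PC$ case throughout — particularly item (1) for $PC$ (controlling what happens to the factor congruence and verifying that a subdirect subalgebra of a product of ``PC without BACP'' algebras is itself of that form, rather than acquiring some new nontrivial absorbing/central/projective subuniverse) and the accompanying subtlety in item (2) that motivates the side condition about central subuniverses of $\algA_1$. The $BA(t)$ arguments are essentially formal coordinatewise manipulations; the $C$ arguments hinge on the $\Sg$-reachability condition transferring cleanly along projections and liftings, which is routine once set up; but the polynomial-completeness bookkeeping in the $PC$ case is where the real work lies, and I would structure the proof so that the $PC$ case is handled last, reusing the $BA$ and $C$ facts and the characterization lemmas as black boxes.
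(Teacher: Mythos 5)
Your items (1) and (2) and the $BA$ part of item (3) follow the paper's route (intersection/projection of strong subuniverses is again strong of the same type, via coordinatewise absorption for $BA$, projection of the $\Sg$-condition for $C$, and the congruence/block analysis for $PC$; one application of $t$ to two witnesses kills the $BA$ case of (3)). One caveat on (2) for $\mathcal T=C$: ``lift the generation to $\alg R$'' is not as routine as you suggest, because a generation of $(c,c)$ downstairs only lifts to a generation of \emph{some} pair of preimages of $c$ from a generating set that allows \emph{varying} preimages of $c$; the paper closes this gap by choosing a representative $e$ with $\Sg(C\cup\{e\})$ inclusion-minimal (Lemma~\ref{NoEEForCenter}, used in Lemma~\ref{CenterQuotient}). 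Without that trick your sketch does not yet contradict centrality of $R'$ in $\alg R$.

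The genuine gap is in item (3) for $\mathcal T\in\{C,PC\}$. First, your argument rests on the claim that a PC subuniverse is a ternary absorbing subuniverse; the paper neither asserts nor proves this, and it is false in general (a PC subuniverse is a congruence block, e.g.\ $D_1\times\{d\}$ in a product of two PC algebras, and such blocks need not absorb). The paper's exclusion of $(B_1,\dots,B_n)$-essential relations for $n\ge 3$ in the PC case goes through an entirely different mechanism: each restriction $R\cap(A_1\times\dots\times B_i\times\dots\times A_n)$ is a PC subuniverse of $\alg R$ (Lemma~\ref{PCrestrictionOfRel}), and PC subuniverses are solution sets of unary constraints under the map to the product of PC quotients, so if $n$ of them intersect emptily then already two of them do (Corollary~\ref{EmptyPCIntersection}); this is what forces $n=2$. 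Second, even in the central case, your proposed reduction ``by fixing some coordinates'' from a $(B_1,\dots,B_n)$-essential relation over $\algA_1\times\dots\times\algA_n$ to a $B_i$-essential relation over a power of a single $\algA_i$ does not work: fixing and projecting coordinates only \emph{decreases} the arity and never homogenizes distinct coordinate algebras, so Lemma~\ref{NoEssential} (stated for a single $B\le\algA$ and $R\le\algA^n$) is not directly applicable. The paper instead uses a doubling construction (Lemma~\ref{CenterLessThanThreeHelp}): it glues two copies of the relation along the last coordinate via $\Sg\left(\begin{smallmatrix}e&C_n\\ C_n&e\end{smallmatrix}\right)$, again invoking the minimal-generation Lemma~\ref{NoEEForCenter}, to manufacture a $C_1$-essential relation of arbitrarily large arity and only then contradicts absorption. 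You would need to supply either that construction or a substitute (e.g.\ a common $n$-ary absorbing term for all the $B_i$ simultaneously, which you do not establish) for the central case, and a wholly different argument for the PC case.
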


The next two statements explain 
how strong subuniverses of different types interact with each other.

\begin{lem}\label{PCBsub}
Suppose $\algA$ is a finite idempotent algebra, 
$B_1$ and $B_{2}$ are subuniverses of $\algA$ of types $\mathcal T_{1}$
and $\mathcal T_{2}$, respectively.
Then $B_{1}\cap B_{2}$ is strong subuniverse of $\algB_{2}$ of type $\mathcal T_{1}$.
\end{lem}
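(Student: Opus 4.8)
The plan is to split into cases according to the type $\mathcal T_1$ of $B_1$ and show, in each case, that the same type is inherited by $B_1\cap B_2$ as a subuniverse of $\alg B_2$. The overarching principle is that the defining property of a strong subuniverse is witnessed either by a term operation (binary absorption), by a term operation together with a non-generation condition (centrality), or by a congruence with a polynomially complete factor (PC); in each case I want to argue that restricting everything to $\alg B_2$ preserves the witness. Throughout I use that $B_1\cap B_2$ is indeed a subuniverse of $\algA$, hence of $\alg B_2$, since subuniverses are closed under intersection; and that term operations of $\alg B_2$ are exactly the restrictions to $B_2$ of term operations of $\algA$, because $\alg B_2$ is a subalgebra of $\algA$.

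First, the case $\mathcal T_1 = BA(t)$. Here $B_1$ absorbs $\algA$ with a binary term $t$, i.e.\ $t(B_1,A)\subseteq B_1$ and $t(A,B_1)\subseteq B_1$. Restricting to $B_2$: for $b\in B_1\cap B_2$ and $a\in B_2$ we get $t(a,b)\in B_2$ (since $B_2$ is a subuniverse and $t$ is a term of $\algA$) and $t(a,b)\in B_1$ (absorption), so $t(a,b)\in B_1\cap B_2$; symmetrically for $t(b,a)$. Since $t\restriction_{B_2}$ is a binary term operation of $\alg B_2$, this shows $B_1\cap B_2 \le_{BA(t)} \alg B_2$. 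Next, $\mathcal T_1 = C$: centrality of $B_1$ means $B_1$ is absorbing (handled by an $n$-ary term $t$, and the same restriction argument as above, using idempotency so that $B_1\cap B_2 \ne\varnothing$ forces the absorbing identities to survive on $\alg B_2$ — here one should use Corollary~\ref{ternaryAbsorption} if a concrete arity is convenient, but the abstract absorption argument suffices) plus the condition that for every $a\in A\setminus B_1$, $(a,a)\notin \Sg_{\algA}((\{a\}\times B_1)\cup(B_1\times\{a\}))$. For $\alg B_2$ we only need this for $a\in B_2\setminus(B_1\cap B_2) = B_2\setminus B_1 \subseteq A\setminus B_1$, and moreover the generating set shrinks to $(\{a\}\times(B_1\cap B_2))\cup((B_1\cap B_2)\times\{a\})$, which is a subset of the old one, and $\Sg_{\alg B_2}(\cdot)\subseteq \Sg_{\algA}(\cdot)$. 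Hence the non-generation condition is inherited a fortiori, giving $B_1\cap B_2\le_C \alg B_2$.

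The case $\mathcal T_1 = PC$ is the main obstacle. Here $B_1$ is a block of a congruence $\sigma$ on $\algA$ with $\algA/\sigma\cong \alg D_1\times\dots\times\alg D_s$, each $\alg D_i$ a PC algebra without BACP. I would restrict $\sigma$ to $B_2$, i.e.\ set $\sigma' = \sigma\cap B_2^2$, which is a congruence on $\alg B_2$; then $B_1\cap B_2$ is a block of $\sigma'$ (it is exactly the intersection of the $\sigma$-block $B_1$ with $B_2$, and it is nonempty). The content is to show $\alg B_2/\sigma'$ is again a product of PC algebras without BACP. Since $\alg B_2/\sigma'$ embeds naturally into $\algA/\sigma\cong\prod_i \alg D_i$ as a subdirect product of its projections $\alg D_i'\le \alg D_i$, one reduces to analyzing subalgebras of a single $\alg D_i$. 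Here I expect to invoke the strong structural rigidity of PC algebras without a nontrivial binary absorbing, central, or projective subuniverse: such an algebra should be simple with no proper nontrivial subalgebras beyond what forces the subdirect product to in fact be a full product of PC algebras without BACP — in other words $\alg D_i'$ is either one-element or all of $\alg D_i$, and the subdirect product collapses to a sub-product $\prod_{i\in S}\alg D_i$. (If the excerpt's later material provides a lemma that a nontrivial subalgebra of a PC algebra without BACP would itself have to contain a binary absorbing, central, or projective subuniverse — contradicting "without BACP" — that is precisely what closes this step; I would cite it here.) Combining all three cases completes the proof.

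One remark on presentation: I would state the $BA$ and $C$ cases together as "the witnessing term restricts" and "the generation condition only weakens under passing to a subalgebra," and devote the bulk of the writeup to the $PC$ case, since that is where the real work lies and where the hypothesis "without BACP" is essential.
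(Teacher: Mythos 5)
Your binary-absorbing and central cases are correct and match the paper's route: the BA case is the restriction argument of Lemma~\ref{AbsImplies}, and the central case is exactly Lemma~\ref{CenterIntersectionA} (absorption restricts, and the non-generation condition only weakens because $\Sg_{\alg B_2}(\cdot)\subseteq\Sg_{\algA}(\cdot)$ on a smaller generating set). The problem is the PC case, where you yourself flag the missing ingredient. You reduce to showing that $\alg B_2/\sigma'$, viewed inside $\algA/\sigma\cong\alg D_1\times\dots\times\alg D_s$, is again a product of PC algebras without BACP, and you hope for a rigidity lemma saying that each projection $\alg D_i'\le\alg D_i$ is trivial or full. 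No such lemma exists in the paper, and the statement you sketch for it does not even make sense as a contradiction: a nontrivial subalgebra of $\alg D_i$ having a BA, central or projective subuniverse would not contradict the hypothesis that $\alg D_i$ \emph{itself} has none, since ``without BACP'' is a condition on $\alg D_i$, not on its subalgebras. So this step is a genuine gap: you have no control over the projections $\alg D_i'$, and Lemma~\ref{PCRelationsLem} (the structure theorem you would need) only applies to \emph{subdirect} products of PC algebras without BACP.

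The paper closes this gap differently, by splitting on the type of $B_2$. If $\mathcal T_2\neq PC$, it invokes Lemma~\ref{IntersectionOfTwoSubuniverses}: two nonempty strong subuniverses of non-similar types cannot be disjoint. Since every block of $\sigma$ is a PC subuniverse of $\algA$ and $B_2$ is BA or central, \emph{every} block of $\sigma$ meets $B_2$; hence the natural embedding $\alg B_2/\sigma'\to\algA/\sigma$ is surjective and $\alg B_2/\sigma'\cong\algA/\sigma\cong\alg D_1\times\dots\times\alg D_s$ on the nose --- no analysis of subalgebras of the $\alg D_i$ is needed, because the projections are automatically full. If $\mathcal T_2=PC$, blocks of $\sigma$ genuinely can miss $B_2$ (e.g.\ $B_1$ and $B_2$ disjoint blocks of the same congruence), and the paper instead uses the global classification of PC subuniverses in Corollary~\ref{PCProperties}(2). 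To repair your proof you would need to import both of these ingredients; the surjectivity of $\alg B_2/\sigma'\to\algA/\sigma$ in the non-PC case is exactly the fact your subdirect-product picture is missing.
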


\begin{thm}\label{PCBint}
Suppose $\algA$ is a finite idempotent algebra, 
$B_{i}\le_{\mathcal T_{i}}\algA$
for every $i\in[n]$,
$n\ge 2$, 
$\bigcap_{i\in[n]}B_{i} = \varnothing$,
and 
$\bigcap_{i\in[n]\setminus\{j\}}B_{i} \neq  \varnothing$
for every $j\in[n]$.
Then 
one of the following conditions holds:
\begin{enumerate}
    \item[(1)] $n=2$ and $\mathcal T_{1} = \mathcal T_{2}\in\{C,PC\}$;
    \item[(2)] $\mathcal T_{1}, \dots, \mathcal T_{n}$ are binary absorbing types.
\end{enumerate}
\end{thm}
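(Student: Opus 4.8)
The plan is to reduce the general statement to the case $n=2$ by a "minimal counterexample" argument and then analyze the remaining two-subuniverse situation using Lemma~\ref{PCBsub} together with the essential-relation machinery from Theorem~\ref{CommonPropertiesThm}(3). First I would set $B = \bigcap_{i\in[n]\setminus\{1\}} B_i$, which is nonempty by hypothesis, and by iterating Lemma~\ref{PCBsub} I would arrange that $B$ is a subuniverse of the subalgebra $\alg B_2$ (or of $\alg B_n$, whichever is convenient) of type determined by the remaining types; the point is that intersecting the $B_i$'s in a fixed order yields a strong subuniverse whose type is inherited from one of the factors. Since $B_1 \cap B = \varnothing$ while $B$ and $B_1\cap\alg B_2$-type objects are all nonempty, I would then be in a situation with two disjoint strong subuniverses inside one algebra, and I would want to show their types are constrained exactly as in conclusions (1) and (2). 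The place where I expect to use real content is the construction of a $(B_1, B)$-essential relation (or, after restricting the ambient algebra, a $B_i$-essential relation for each $i$): disjointness of all the $B_i$ gives $R \cap (B_1 \times \cdots) = \varnothing$ on the appropriate product, while the fact that every proper sub-intersection is nonempty gives the witnesses showing $R$ meets each "one free coordinate" slice.

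Concretely, I would consider the relation
\[
R = \bigl\{ (a_1,\dots,a_n) \in A^n : a_1 = a_2 = \dots = a_n \bigr\}
\]
viewed as a subdirect subalgebra of $\algA_1\times\dots\times\algA_n$ with $\algA_i = \algA$ and $\alg B_i = B_i \le_{\mathcal T_i}\algA$; then $R$ is the diagonal, $R \cap (B_1\times\dots\times B_n) \cong \bigcap B_i = \varnothing$, and $R \cap (B_1\times\dots\times B_{i-1}\times A\times B_{i+1}\times\dots\times B_n) \cong \bigcap_{j\ne i} B_j \ne\varnothing$, so $R$ is $(B_1,\dots,B_n)$-essential. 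Now Theorem~\ref{CommonPropertiesThm}(3) applies directly and tells us that $\mathcal T\in\{C,PC\}$ and $n=2$ — but wait, that theorem requires a single common type $\mathcal T$ on all coordinates. So the genuine work is to handle mixed types: if some $\mathcal T_i$ is a binary absorbing type $BA(t_i)$ and another $\mathcal T_j \in\{C,PC\}$, I would derive a contradiction (or force all types to be binary absorbing) by using Lemma~\ref{PCBsub} to push the binary absorbing coordinate's absorption onto the intersection with the other coordinate's subuniverse, thereby replacing a $C$ or $PC$ coordinate by one that also "sees" the binary absorbing term, and iterating until either all coordinates are binary absorbing (case (2)) or we have all coordinates of a common non-binary type with $n=2$ (case (1)).

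The main obstacle, as I see it, is exactly this type-bookkeeping: Theorem~\ref{CommonPropertiesThm}(3) is stated for a uniform type $\mathcal T$, so a clean application requires first homogenizing the $\mathcal T_i$. I would handle a binary absorbing $B_i$ by noting (via Lemma~\ref{NoEssential} and Lemma~\ref{PCBsub}) that $B_i \cap B_j$ is binary absorbing in $\alg B_j$, which effectively lets me "absorb away" coordinate $i$: if $n\ge 3$ and even one coordinate is binary absorbing, I can intersect it into another and strictly decrease $n$ while preserving the hypotheses, so minimality of $n$ forces either $n=2$ or all types binary absorbing. When $n=2$ and the two types differ with one of them binary absorbing, the same intersection trick makes the non-binary coordinate into a binary absorbing one inside the ambient algebra, again via Lemma~\ref{PCBsub}, contradicting the assumed type unless both are binary absorbing. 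This leaves precisely $n=2$ with $\mathcal T_1=\mathcal T_2\in\{C,PC\}$ (handled by Theorem~\ref{CommonPropertiesThm}(3) applied to the diagonal) or all $\mathcal T_i$ binary absorbing, which are conclusions (1) and (2). The one subtlety I would double-check is that in the $PC$ case the intersection lemmas behave correctly — this is where the remark in the excerpt that "a central subuniverse of a central subuniverse is a central subuniverse," and the careful hypotheses of Theorem~\ref{CommonPropertiesThm}(2), become relevant.
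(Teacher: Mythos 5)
Your overall strategy -- reduce to $n=2$ by intersecting via Lemma~\ref{PCBsub}, and use the diagonal relation as a $(B_1,\dots,B_n)$-essential relation so that Theorem~\ref{CommonPropertiesThm}(3) handles the uniform-type case -- is the same as the paper's, and the diagonal/essential-relation part is fine. But you have correctly identified the genuine work (homogenizing the types $\mathcal T_i$) and then not actually done it. The critical missing piece is the case $n=2$ with two \emph{disjoint} nonempty strong subuniverses of different types (say $B_1$ binary absorbing and $B_2$ central, or $B_1$ a PC subuniverse and $B_2$ BA/central). Your proposed ``intersection trick'' is vacuous exactly there: Lemma~\ref{PCBsub} applied to $B_1$ and $B_2$ produces $B_1\cap B_2=\varnothing$, the empty (trivial) subuniverse, from which nothing follows; and showing that some set is \emph{also} binary absorbing would not contradict its being central or PC anyway, since these labels are not mutually exclusive. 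What is needed is a direct proof that two disjoint nonempty strong subuniverses must have similar types (the paper's Lemma~\ref{IntersectionOfTwoSubuniverses}). That lemma requires real content you do not invoke: for BA versus central one picks $e\in B_1$ with $\Sg_{\algA}(B_2\cup\{e\})$ inclusion-minimal, uses Lemma~\ref{NoEEForCenter} to see that $\Sg_{\algA}\bigl(\begin{smallmatrix}e&B_2\\B_2&e\end{smallmatrix}\bigr)$ misses $B_1^2$, and contradicts this by applying the binary absorbing term to $(e,b)$ and $(b,e)$; for PC versus BA/central one passes to the quotient by the PC congruence and uses Lemma~\ref{GetBACenterFromPower} against the ``without BACP'' property of the PC factors.

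Once that $n=2$ base case is available, the paper's inductive step is cleaner than your merging scheme and avoids a second (fixable but real) bookkeeping problem in your plan: after you merge coordinates, the resulting sets are strong subuniverses of \emph{different} ambient algebras, so the hypotheses of the theorem are no longer literally satisfied. The paper instead intersects \emph{all} of $B_1,\dots,B_{n-1}$ with $B_n$, obtaining by Lemma~\ref{PCBsub} a family $C_i=B_i\cap B_n\le_{\mathcal T_i}\alg B_n$ inside the single algebra $\alg B_n$ that again satisfies the hypotheses; induction gives that $\mathcal T_1,\dots,\mathcal T_{n-1}$ are similar, and symmetrically $\mathcal T_2,\dots,\mathcal T_n$ are similar, hence all types are similar. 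After that, the all-BA case is conclusion (2), the all-PC case gives $n=2$ by Corollary~\ref{EmptyPCIntersection}, and the all-central case gives $n=2$ by your diagonal argument (Lemma~\ref{CenterLessThanThree}). You should fill in the $n=2$ mixed-type lemma before the rest of the argument can stand.
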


\section{Existence of WNU}\label{ExistenceWNUSection}
In this section we will demonstrate how 
strong subalgebras can be used to prove the crucial property of a WNU term operation concerning the Constraint Satisfaction Problem.
We characterize 
\begin{itemize}
    \item all idempotent algebras not having a WNU;
    \item all idempotent algebras not having a WNU of a concrete arity $n\ge 3$;
    \item all idempotent algebras having a WNU of every arity $n\ge 3$.
\end{itemize}

In Subsection \ref{AuxiliaryAlgebraicPropertiesSubsection} we prove auxiliary facts
about idempotent algebras.
In Subsection~\ref{ConstantTupleSubsection} we prove that every symmetric invariant 
relation satisfying some additional properties has a constant tuple. 
Then we derive the existence of a WNU  from this fact.
In Subsection~\ref{WNUBlockersSubsection} we show that 
the nonexistence of a WNU is equivalent to the existence of 
an invariant relation of a special form, which we call WNU-blockers and $p$-WNU-blockers.
In Subsection \ref{MainWNUTheoremsSubsection}
we formulate and 
prove the three characterizations we announced earlier.

\subsection{Auxiliary statements}\label{AuxiliaryAlgebraicPropertiesSubsection}

The following lemma is taken from \cite{KeyRelations}(see Lemma 6.4).

\begin{lem}\label{LinearWNU}
Suppose $(G;+)$ is a finite abelian group,
the relation $R\subseteq G^{4}$ is defined by
$R = \{(a_1,a_2,a_3,a_4)\mid a_1+a_2=a_3+a_4\}$,
$R$ is preserved by an idempotent WNU $w$.
Then $w(x_{1},\ldots,x_{n}) = t\cdot (x_{1}+\dots+x_{n})$
for some $t\in \{1,2,3,\ldots\}$.
\end{lem}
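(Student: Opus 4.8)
The plan is to show that $w$ must be a "linear" operation over the group $(G;+)$, i.e.\ $w(x_1,\dots,x_n) = c_1 x_1 + \dots + c_n x_n$ for integer (or $\mathbb Z_{|G|}$) coefficients $c_i$, and then use the WNU identities to force all $c_i$ equal, and finally use idempotency to get $c_1+\dots+c_n$ acting as $1$. The first step is the key structural input: the relation $R = \{(a_1,a_2,a_3,a_4) : a_1+a_2 = a_3+a_4\}$ is (a coset-translate of) the graph of the ternary operation $(x,y,z)\mapsto x-y+z$; equivalently $R$ is an affine subspace of $G^4$, namely the kernel of the homomorphism $G^4 \to G$, $(a_1,a_2,a_3,a_4)\mapsto a_1+a_2-a_3-a_4$, translated appropriately — actually since $\vec 0\in R$ it is exactly that kernel, a subgroup of $G^4$. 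The standard fact (which I would either cite or prove in two lines using that $R$ contains $0$ and is closed under the "difference" pattern it itself describes) is that every operation preserving $R$ and fixing the structure must respect the module structure; more concretely I would argue directly with the tuples.

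Here is the concrete argument I would carry out. Since $w$ preserves $R$ and $R$ contains all tuples $(a,b,a,b)$, $(a,b,b,a)$ is \emph{not} generally in $R$, so I instead feed $w$ the columns cleverly. For each $j\in[n]$ pick the tuple in $R$ that is "$e$ in coordinate $j$ of its first slot and $0$ elsewhere" in a controlled way: apply $w$ coordinatewise to the $n$ tuples $(x_j, 0, x_j, 0)\in R$ — this just gives idempotency-type information. The productive move is: apply $w$ to the tuples $\alpha_j := (x_j,\,0,\,0,\,x_j)\in R$ for $j=1,\dots,n$; since each $\alpha_j\in R$ and $R$ is a subuniverse of $G^4$ under $w$, we get $w(\alpha_1,\dots,\alpha_n)\in R$, i.e.
\[
w(x_1,\dots,x_n) + w(0,\dots,0) = w(0,\dots,0) + w(x_1,\dots,x_n),
\]
which is vacuous, so I need the genuinely linear-algebraic fact. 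The cleaner route: because $R$ pp-defines (over $G$ with its group operations, or just as a relation) the full affine module structure, $\Pol(R)$ consists exactly of the affine operations $\sum c_i x_i + d$ with $\sum c_i \equiv 1$ when idempotent forces $d=0$ and $\sum c_i = 1$ in $\mathbb Z_{\exp(G)}$; this is a well-known description (it is essentially the statement that $R$ has "few" polymorphisms — it is the affine/linear case of Schaefer-type dichotomy lifted to $G$) and I would invoke it, or reprove it via: $w(x_1,\dots,x_n)-w(y_1,\dots,y_n)$ applied through $R$ shows $w$ commutes with translations and is additive in each argument separately, hence $w(x_1,\dots,x_n)=\sum_i w(0,\dots,x_i,\dots,0)$ by additivity plus idempotency bookkeeping, and each $x\mapsto w(0,\dots,x,\dots,0)$ is an endomorphism of $(G;+)$, call it $\phi_i$, so $w(x_1,\dots,x_n) = \sum_i \phi_i(x_i)$.

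Now the WNU identities do the rest. From $w(y,x,\dots,x) = w(x,y,x,\dots,x) = \dots$ and $w = \sum_i\phi_i$ we get, for all $x,y$,
\[
\phi_1(y) + \phi_2(x) + \dots + \phi_n(x) \;=\; \phi_1(x) + \phi_2(y) + \phi_3(x) + \dots + \phi_n(x),
\]
so $\phi_1(y) - \phi_1(x) = \phi_2(y) - \phi_2(x)$ for all $x,y$; setting $x=0$ and using idempotency ($\sum_i\phi_i = \mathrm{id}$, in particular the $\phi_i$ are determined) gives $\phi_1 = \phi_2$, and similarly $\phi_1 = \phi_2 = \dots = \phi_n =: \phi$. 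Then idempotency $x = w(x,\dots,x) = n\phi(x)$ shows $n\phi = \mathrm{id}_G$, so $\phi$ is the (unique) multiplication-by-$t$ endomorphism where $t$ is a positive integer with $nt \equiv 1 \pmod{\exp(G)}$ — such $t$ exists because $n$ must then be invertible mod $\exp(G)$ — and therefore $w(x_1,\dots,x_n) = t\cdot(x_1+\dots+x_n)$, as claimed. The main obstacle is the first step: cleanly establishing that every polymorphism of $R$ is of the form $\sum_i\phi_i(x_i) + d$ with each $\phi_i\in\mathrm{End}(G;+)$. If I cannot cite it directly I would prove additivity by feeding $w$ the $R$-tuples $(a,b,c,a+b-c)$ in a way that isolates one argument at a time, using that $R$ is exactly $\{(a,b,c,d): a+b=c+d\}$ and closed under $w$; this is the routine but slightly fiddly part, and everything after it is the short identity-chasing above.
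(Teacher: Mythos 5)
Your overall route is the same as the paper's: establish that $w$ acts linearly over $(G;+)$ by feeding suitable members of $R$ into $w$, then use the WNU identities and idempotency to pin down the coefficients; the identity-chasing in your second half is correct. The one genuine gap is exactly the part you flag yourself: your only fully written attempt at the linearity step, applying $w$ to the columns $(x_j,0,0,x_j)$, is vacuous, and ``feed $(a,b,c,a+b-c)$ so as to isolate one argument at a time'' is left as a promise. Here is how the paper closes it, in a way that also makes your separate step ``all $\phi_i$ are equal'' unnecessary. Set $h(x)=w(0,\ldots,0,x)$; by the WNU identity, $w(0,\ldots,0,x,0,\ldots,0)=h(x)$ for any position of $x$. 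One proves $w(x_1,\ldots,x_m,0,\ldots,0)=h(x_1)+\cdots+h(x_m)$ by induction on $m$: apply $w$ to the $n$ elements of $R$ whose $i$-th one is $(x_i,0,x_i,0)$ for $i\le m$, is $(x_{m+1},0,0,x_{m+1})$ for $i=m+1$, and is $(0,0,0,0)$ for $i>m+1$. Membership of the image tuple in $R$, together with $w(0,\ldots,0)=0$, yields
\[
w(x_1,\ldots,x_{m+1},0,\ldots,0)=w(x_1,\ldots,x_m,0,\ldots,0)+w(0,\ldots,0,x_{m+1},0,\ldots,0),
\]
and the last summand equals $h(x_{m+1})$ by the WNU identity, so the induction closes and gives $w(x_1,\ldots,x_n)=h(x_1)+\cdots+h(x_n)$ with the \emph{same} $h$ in every slot. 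The endgame is then as you wrote it: idempotency gives $n\cdot h(a)=a$ for all $a$, which forces $\gcd(n,k)=1$ for $k$ the exponent of $G$, and multiplying by $t$ with $tn\equiv 1\pmod k$ gives $h(a)=ta$.
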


\begin{proof}

Denote $h(x) = w(0,0,\ldots,0,x)$.
Let us prove the equation \[w(x_{1},\ldots,x_{m},0,\ldots,0) = h(x_1)+\ldots+h(x_{m})\]
by induction on $m$.
For $m=1$ it follows from the definition.
We know that
$w\left(\begin{smallmatrix}
x_{1} & x_{2} & \dots & x_{m} & x_{m+1} & 0 & \dots & 0 \\
0     & 0     & \dots & 0     & 0       & 0 & \dots & 0 \\
x_{1} & x_{2} & \dots & x_{m} & 0       & 0 & \dots & 0 \\
0     & 0     & \dots & 0     & x_{m+1} & 0 & \dots & 0 \\
\end{smallmatrix}\right)\in R$,
which by the inductive assumption gives
\begin{align*}
w(x_{1},\ldots,x_{m},x_{m+1},0,\ldots,0) =\;\;\;\;\;&\\
w(x_{1},\ldots,x_{m},0,\ldots,0) + w(0,\dots,&0,x_{m+1},0,\dots,0) - w(0,\dots,0)=\\
w(x_{1},\ldots,x_{m},0,\ldots,0) + h(x_{m+1}&) =
h(x_1)+\ldots+h(x_{m})+h(x_{m+1}).
\end{align*}

Thus, we know that
$f(x_{1},\ldots,x_{n}) = h(x_1)+\ldots+h(x_{n})$.
Let $k$ be the maximal order of an element in the group $(G;+)$.
Since $w$ is idempotent, for every $a\in A$ we have $\underbrace{h(a)+h(a)+\ldots+h(a)}_n = a$.
Hence $k$ and $n$ are coprime,
and $h(x) = t\cdot x$ for any integer $t$ such that
$t\cdot n = 1 (\mod k)$.
\end{proof}

The remaining statements of this subsection let us go 
from an algebra from $\HSP(\algA)$ to an algebra 
from $\HS(\algA)$ keeping its property.

\begin{lem}\label{HSPtoHS}
Suppose 
$\algA$ is a finite idempotent algebra, 
$\alg B\in  \HSP(\algA)$, 
and $|B|>1$.
Then there exists 
$\alg B'\le \algB$ 
such that 
$|B'|>1$
and 
$\alg B'\in \HS(\algA)$.
\end{lem}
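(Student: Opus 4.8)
The plan is to spell out $\HSP(\algA)=\HHH\SSS\PPP(\algA)$: write $\alg B\cong\alg C/\theta$ with $\alg C\le\algA^{I}$ for some set $I$ and $\theta$ a congruence of $\alg C$, and note $\theta\neq C^{2}$ since $|B|>1$. I will produce the required $\alg B'$ as a subalgebra of $\alg C/\theta$ of the form $\alg C_{1}/\theta_{1}$, where $\alg C_{1}\le\alg C$ and $\theta_{1}:=\theta\cap C_{1}^{2}$, with $\alg C_{1}$ chosen so that (a) some coordinate $i\in I$ separates the $\theta_{1}$-classes of $\alg C_{1}$, i.e.\ $\ker(\proj_{i}|_{C_{1}})\subseteq\theta_{1}$, and (b) $\alg C_{1}$ is not contained in a single $\theta$-class. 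Given (a) and (b), the isomorphism theorems yield $\alg C_{1}/\theta_{1}\cong \proj_{i}(\alg C_{1})/\bar\theta$ for a congruence $\bar\theta$ on $\proj_{i}(\alg C_{1})\le\algA$, so $\alg C_{1}/\theta_{1}\in\HS(\algA)$; by (b) it has at least two elements; and the restriction to $\alg C_{1}$ of the quotient map $\alg C\to\alg C/\theta$ has kernel $\theta_{1}$, hence embeds $\alg C_{1}/\theta_{1}$ isomorphically onto a subalgebra $\alg B'$ of $\alg C/\theta\cong\alg B$. Since $\HS(\algA)$ is closed under isomorphism, this $\alg B'$ is as required.

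To construct $\alg C_{1}$, pick $\mathbf a,\mathbf b\in C$ with $(\mathbf a,\mathbf b)\notin\theta$ and put $\alg C_{0}:=\Sg_{\alg C}(\{\mathbf a,\mathbf b\})$. Every element of $\alg C_{0}$ has the form $t(\mathbf a,\mathbf b)$ for a binary term operation $t$ of $\algA$, and $\algA$ being finite has only finitely many binary term operations, so $\alg C_{0}$ is finite. Among the finitely many subalgebras of $\alg C_{0}$ that are not contained in a single $\theta$-class --- a nonempty family, since $\alg C_{0}$ itself qualifies --- let $\alg C_{1}$ be a minimal one; property (b) is immediate. The crucial structural consequence of minimality is that $\alg C_{1}$ is generated by \emph{every} pair of its elements that are not $\theta$-equivalent: such a pair cannot lie in a proper subalgebra of $\alg C_{1}$, since that subalgebra would itself be a subalgebra of $\alg C_{0}$ meeting two $\theta$-classes, contradicting minimality. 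Now fix a generating pair $\mathbf u,\mathbf v\in C_{1}$ with $(\mathbf u,\mathbf v)\notin\theta$; as $\theta$ is reflexive, $\mathbf u\neq\mathbf v$, so $\mathbf u(i)\neq\mathbf v(i)$ for some $i\in I$, and this is the coordinate used for (a).

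It remains to verify (a) for this $i$. Suppose $\mathbf x,\mathbf y\in C_{1}$ satisfy $\mathbf x(i)=\mathbf y(i)$ and $(\mathbf x,\mathbf y)\notin\theta$. By the structural point, $\mathbf x,\mathbf y$ generate $\alg C_{1}$, so every $\mathbf c\in C_{1}$ equals $t(\mathbf x,\mathbf y)$ for some binary term operation $t$ of $\algA$; evaluating coordinate $i$ and using idempotency, $\mathbf c(i)=t(\mathbf x(i),\mathbf y(i))=t(\mathbf x(i),\mathbf x(i))=\mathbf x(i)$, so coordinate $i$ is constant on $\alg C_{1}$ --- contradicting $\mathbf u(i)\neq\mathbf v(i)$. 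Hence no such $\mathbf x,\mathbf y$ exist, i.e.\ $\ker(\proj_{i}|_{C_{1}})\subseteq\theta_{1}$, proving (a).

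The step I expect to be the real obstacle is the choice in the second paragraph: recognizing that the right object is a \emph{minimal} subalgebra not contained in a single $\theta$-class, and that such a subalgebra is automatically generated by each of its non-$\theta$-equivalent pairs. That is exactly what turns ``$\mathbf x$ and $\mathbf y$ agree on coordinate $i$ but are not $\theta$-related'' into ``coordinate $i$ is globally constant on $\alg C_{1}$'', and it is the only place where idempotency of $\algA$ enters. Everything else --- finiteness of two-generated subalgebras of a power of the finite algebra $\algA$, the isomorphism-theorem manipulation, and exhibiting $\alg B'$ as the image of the quotient map restricted to $\alg C_{1}$ --- is routine.
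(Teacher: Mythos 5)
Your proof is correct, and it takes a genuinely different route from the paper. The paper writes $\alg B\cong\alg S/\sigma$ with $\alg S\le\algA^{n}$ and inducts on $n$: if two $\sigma$-classes have intersecting first projections, it fixes a common value $a$ (idempotency makes $\{a\}$ a subuniverse), restricts to the tuples starting with $a$, drops the first coordinate, and obtains a subalgebra of $\alg B$ with at least two elements realized over $\algA^{n-1}$; if all classes have pairwise disjoint first projections, projecting to the first coordinate already exhibits $\alg B$ itself as a member of $\HS(\algA)$. You instead avoid induction entirely: you pass to the finite two-generated subalgebra $\alg C_{0}$, take a subalgebra $\alg C_{1}$ minimal among those meeting two $\theta$-classes, observe that minimality forces every $\theta$-inequivalent pair to generate $\alg C_{1}$, and then idempotency turns ``some $\theta$-inequivalent pair agrees at coordinate $i$'' into ``coordinate $i$ is constant on $C_{1}$'', so a single coordinate projection is injective on $\theta_{1}$-classes and the isomorphism theorems finish the job. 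Your argument also handles an arbitrary (possibly infinite) index set $I$ directly, whereas the paper implicitly works with a finite power $\algA^{n}$ (which suffices for its applications, where $\alg B$ comes from $\algA^{|A|^{2}}$); the price is the extra minimality machinery, where the paper's induction with its two-case analysis is shorter. Both proofs use idempotency at the same conceptual spot: you use $t(\mathbf x(i),\mathbf x(i))=\mathbf x(i)$ for binary terms, the paper uses that a singleton $\{a\}$ is a subuniverse.
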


\begin{proof}
Assume that 
$\alg B\cong \alg S/\sigma$, where 
$\alg S\le \algA^{n}$.
We prove by induction on $n$. 
For $n=1$ we put $\alg B' = \alg B$.

Assume that for two equivalence classes $E_{1}$ and $E_{2}$ of
$\sigma$ 
the intersection 
$\proj_{1}(E_{1})\cap \proj_{1}(E_{2})$ is not empty.
Then 
choose $a\in \proj_{1}(E_{1})\cap \proj_{1}(E_{2})$ and put 
$S' = \proj_{2,\dots,n}(S\cap (\{a\}\times A^{n-1}))$, 
$\sigma' = \{(\alpha,\beta)\mid 
(a\alpha,a\beta)\in\sigma\}$.
Then 
the algebra $\alg S'/\sigma'$ has at least 2 elements
and is isomorphic 
to a subalgebra of $\alg S/\sigma\cong \alg B$. It remains to apply the inductive assumption to $\alg S'\le \algA^{n-1}$.

Assume that for any two equivalence classes $E_{1}$ and $E_{2}$ of
$\sigma$ 
the intersection 
$\proj_{1}(E_{1})\cap \proj_{1}(E_{2})$
is empty.
Then put
$S' = \proj_{1}(S)$, 
$$\sigma' = \{(\proj_{1}(\alpha),\proj_{1}(\beta))
\mid (\alpha,\beta)\in\sigma\}$$
and check that 
$\alg S'/\sigma'\cong \alg S/\sigma\cong\algB$.
Hence $\algB\in \HS(\algA)$.
\end{proof}

\begin{cons}\label{CorEssUnary}
Suppose 
$\algA$ is a finite idempotent algebra, 
$\alg B\in \HSP(\algA)$ is an essentially unary algebra 
of size at least 2.
Then there exists 
an essentially unary algebra 
$\alg B'\in \HS(\algA)$
of size at least 2.
\end{cons}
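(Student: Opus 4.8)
The plan is to deduce Corollary~\ref{CorEssUnary} almost immediately from Lemma~\ref{HSPtoHS}, the only extra work being to check that the class of essentially unary algebras is closed under passing to subalgebras. Recall that $\alg B$ is essentially unary means every term operation of $\alg B$ depends on at most one of its variables.

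First I would apply Lemma~\ref{HSPtoHS} to $\alg B\in\HSP(\algA)$ with $|B|>1$: it produces a subalgebra $\alg B'\le\alg B$ with $|B'|>1$ and $\alg B'\in\HS(\algA)$. So the only thing to verify is that $\alg B'$ is still essentially unary. For this, take any basic operation $g$ of $\alg B'$; by definition it is the restriction to $B'$ of a basic operation $f$ of $\alg B$. Since $\alg B$ is essentially unary, $f$ depends on at most one variable, say $f(x_1,\dots,x_k)=f'(x_i)$ as operations on $B$ for some unary $f'$ and some fixed $i$ (or $f$ is a constant, but here the algebras are idempotent, so each operation that is independent of all variables is the identity on a one-element set — in the general essentially-unary case one just says $f$ has at most one essential variable). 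Restricting to $B'\subseteq B$ preserves this identity, so $g$ also depends on at most one variable. As this holds for every basic operation of $\alg B'$, the algebra $\alg B'$ is essentially unary, and we are done.

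There is no real obstacle here; the statement is a packaging of Lemma~\ref{HSPtoHS}. The only point that needs a moment of care is that being ``essentially unary'' is a property of \emph{all} term operations, not just the basic ones, but since every term operation of $\alg B'$ is the restriction of a term operation of $\alg B$, the same one-line argument applies verbatim, so nothing more is needed. If one wanted to be fully careful one would also note that the inductive construction inside the proof of Lemma~\ref{HSPtoHS} only ever forms subalgebras and quotients that land inside $\HS(\algA)$, so in particular $\alg B'$ is literally obtained from $\alg B$ by taking a subalgebra, which is exactly what preserves essential unarity; no new operations are introduced.

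Thus the proof is: apply Lemma~\ref{HSPtoHS} to obtain $\alg B'\le\alg B$ with $\alg B'\in\HS(\algA)$ and $|B'|>1$, then observe that restriction of operations cannot increase the number of essential variables, hence $\alg B'$ inherits essential unarity from $\alg B$.
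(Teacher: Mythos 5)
Your proposal is correct and is exactly the paper's argument: apply Lemma~\ref{HSPtoHS} to get $\alg B'\le\alg B$ with $|B'|>1$ and $\alg B'\in\HS(\algA)$, then observe that any subalgebra of an essentially unary algebra is essentially unary, since restriction cannot increase the number of essential variables. Nothing is missing.
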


\begin{proof}
Any subalgebra of an essentially unary 
algebra is essentially unary. 
It remains to apply Lemma~\ref{HSPtoHS}.
\end{proof}

\begin{lem}\label{subOfpAffineispAffine}
Suppose $\algA$ is a $p$-affine algebra,
$\alg B\le\algA$, and $|B|>1$.
Then $\alg B$ is a $p$-affine algebra.
\end{lem}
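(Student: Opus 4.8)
The plan is to transfer the defining data of a $p$-affine algebra — the operations $\oplus$, $\ominus$ and the invariant relation $x_1\oplus x_2 = x_3\oplus x_4$ — from $\algA$ down to the subalgebra $\alg B$, and then to show that what we obtain on $B$ is again of the required form. Since $\algA$ is $p$-affine, there are operations $\oplus,\ominus$ on $A$ with $(A;\oplus,\ominus)\cong (\mathbb Z_p^k;+,-)$ for some $k$, with $\rho := (x_1\oplus x_2 = x_3\oplus x_4)\in\Inv(\algA)$ and $x\ominus y\oplus z\in\Clo(\algA)$. The first thing I would observe is that the term operation $m(x,y,z):=x\ominus y\oplus z$ is a Maltsev operation for the module structure, i.e.\ $m(x,y,y)=m(y,y,x)=x$, and that $\oplus$ and $\ominus$ are themselves recoverable from $m$ once a ``zero'' is fixed: $x\oplus y = m(x,0,y)$ and $\ominus x = m(0,x,0)$ for the group identity $0$. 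This is the standard module/affine correspondence (cf.\ \cite{freese1987commutator}), and I would cite it rather than reprove it.

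The key step is then: since $B\le\algA$, $B$ is closed under every term operation of $\algA$, in particular under $m(x,y,z)=x\ominus y\oplus z$. So $m$ restricts to a Maltsev operation on $B$. I would argue that $B$, equipped with this restricted Maltsev operation, is a submodule of $\mathbb Z_p^k$: concretely, pick any $b_0\in B$, and define $x\oplus' y := m(x,b_0,y) = x\ominus b_0\oplus y$ and $\ominus' x := m(b_0,x,b_0)$. Because $B$ is closed under $m$, it is closed under $\oplus'$ and $\ominus'$. A direct computation in the ambient group $(\mathbb Z_p^k;+,-)$ (writing $\oplus$ additively) shows $x\oplus' y = x - b_0 + y$, so $(B;\oplus')$ is the coset $B = b_0 + B'$ where $B' = \{\,x - b_0 : x\in B\,\}$ is an actual subgroup of $\mathbb Z_p^k$ — and translating the origin to $b_0$ makes $(B;\oplus',\ominus')$ literally a $\mathbb Z_p$-vector space, hence isomorphic to $\mathbb Z_p^{k'}$ for some $k'$. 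Since $|B|>1$, we have $k'\ge 1$, which is all that is required (the definition of $p$-affine permits any number of factors, and a one-dimensional space is $\mathbb Z_p$ itself).

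It remains to check the two closure conditions in the definition relative to $\alg B$. For the Clo condition: $x\ominus' y\oplus' z = m(x,y,z)$ restricted to $B$ (an easy group computation: $m(x,b_0,y)\ominus b_0\oplus m(b_0,z,b_0)$ collapses to $x - y + z$), and $m\in\Clo(\algA)$ restricts to an element of $\Clo(\alg B)$, so $x\ominus' y\oplus' z\in\Clo(\alg B)$. For the Inv condition: we need $\rho' := (x_1\oplus' x_2 = x_3\oplus' x_4)\cap B^4 \in\Inv(\alg B)$. But in the translated coordinates $\rho'$ is just $\rho\cap B^4$ up to the affine change of variables induced by $b_0$ (since $x\oplus' y = x\oplus y \ominus b_0$, the equation $x_1\oplus'x_2=x_3\oplus'x_4$ is equivalent to $x_1\oplus x_2 = x_3\oplus x_4$), so $\rho' = \rho\cap B^4$, which is invariant under every basic operation of $\alg B$ because $\rho$ is invariant under every basic operation of $\algA$ and $B$ is a subuniverse. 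This completes the verification.

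The part I expect to require the most care is the bookkeeping around the change of base point $b_0$: one must be consistent about whether $B$ is being viewed as a subgroup or a coset, and make sure the ``new zero'' $b_0$ plays well with both the $\Clo$ condition and the $\Inv$ condition simultaneously. The cleanest route is probably to do everything abstractly in terms of the Maltsev term $m$ and only invoke the module structure of $\mathbb Z_p^k$ at the end to identify $(B;\oplus',\ominus')$ with $\mathbb Z_p^{k'}$; all the closure statements then follow formally from ``$B$ is closed under all term operations of $\algA$'' together with ``$\rho\in\Inv(\algA)$ and $B$ is a subuniverse.''
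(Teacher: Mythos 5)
Your proposal is correct and follows essentially the same route as the paper: restrict the Maltsev term $x\ominus y\oplus z$ to $B$, recenter at a chosen base point $b_0\in B$ to define $\oplus'$ and $\ominus'$, and observe that the relation $x_1\oplus' x_2=x_3\oplus' x_4$ coincides with $\rho\cap B^4$ and hence stays invariant. The only (harmless) variation is how you identify the isomorphism type of $(B;\oplus',\ominus')$: you note directly that the translate $B-b_0$ is a subgroup of $\mathbb Z_p^{k}$, whereas the paper reaches the same conclusion by writing term operations of $(B;x\ominus y\oplus z)$ as affine combinations over a minimal generating set; your version is slightly more direct.
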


\begin{proof}
Let $\varphi$ be an isomorphism 
from $(A;\oplus,\ominus)$ to $(\mathbb Z_{p}^{s};+,-)$.
Let 
$Z = \varphi(B)\subseteq \mathbb Z_{p}^{s}$.

Since 
$x\ominus y\oplus z\in\Clo(\algA)$, 
$x\ominus y\oplus z$ preserves $B$
and $x-y+z$ preserves $Z$.
Consider the algebra 
$\alg Z = (Z;x-y+z)$.
It is not hard to check that 
any term operation of $\alg Z$ 
can be represented as $c_{1}x_{1}+\dots +c_{t}x_{t}$,
where $c_{1},\dots,c_{t}\in\{0,1,\dots,p-1\}$
and $c_{1}+\dots+c_{t}=1$.
Choose a minimal generating set 
$Z_{0} = \{z_1,\dots,z_{\ell}\}$, that is a set such that 
$\Sg_{\alg Z}(Z_{0}) = Z$.
Then 
any element of 
$Z$ can be represented as
$c_{1}z_{1}+\dots +c_{\ell}z_{\ell}$, 
where $c_{1}+\dots+c_{\ell}=1$.
Since $Z_{0}$ is a minimal generating set, 
this representation is unique.
Hence 
$(B;x\ominus y\oplus z)\cong \alg Z \cong (\mathbb Z_{p}^{\ell-1};x-y+z)$.
Choose $b\in B$ 
such that $\varphi(b) = z_{\ell}$
and 
define 
operations $\oplus'$ and $\ominus'$ on $B$ by 
$x\oplus' y = x\ominus b\oplus y$,
$x\ominus' y = x\ominus y\oplus b$.
Then 
$(B;\oplus',\ominus')\cong
(\mathbb Z_{p}^{\ell-1};+,-)$,
$x\ominus y\oplus z =x\ominus' y\oplus' z$,
and
$$(x_1\oplus x_{2} = x_3\oplus x_4)\Leftrightarrow
(x_1\oplus' x_{2} = x_3\oplus' x_4).$$
Thus, we just take the operation and the relation 
witnessing that $\algA$ is $p$-affine, 
restrict them to $B$ and obtain 
an operation and a relation 
witnessing that $\alg B$ is $p$-affine. 
\end{proof}

\begin{cons}\label{corpaffine}
Suppose 
$\algA$ is a finite idempotent algebra, 
$\alg B\in \HSP(\algA)$ is a $p$-affine algebra.
Then there exists 
a $p$-affine algebra
$\alg B'\in \HS(\algA)$.
\end{cons}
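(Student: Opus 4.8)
The plan is to combine the two lemmas just established. Lemma~\ref{subOfpAffineispAffine} tells us that a subalgebra of size at least $2$ of a $p$-affine algebra is again $p$-affine, and Lemma~\ref{HSPtoHS} lets us pass from an algebra in $\HSP(\algA)$ down to an algebra in $\HS(\algA)$ while keeping at least two elements. So the corollary will follow by a one-step chase, exactly as Corollary~\ref{CorEssUnary} followed for essentially unary algebras.

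First I would deal with the size issue. Under the natural reading of the definition the isomorphism $(B;\oplus,\ominus)\cong(\mathbb Z_p\times\dots\times\mathbb Z_p;+,-)$ forces $|B|>1$, so the nontrivial case is the only one; if one allows the empty product, then $|B|=1$ and, since $\algA$ is idempotent, any singleton $\{a\}\subseteq A$ is a subuniverse, so the one-element algebra lies in $\SSS(\algA)\subseteq\HS(\algA)$ and is (vacuously) $p$-affine, and we may take it as $\alg B'$. In either case we may now assume $|B|>1$.

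Applying Lemma~\ref{HSPtoHS} to $\algB\in\HSP(\algA)$ produces a subalgebra $\alg B'\le\algB$ with $|B'|>1$ and $\alg B'\in\HS(\algA)$. Since $\algB$ is $p$-affine and $\alg B'$ is a subalgebra of $\algB$ with at least two elements, Lemma~\ref{subOfpAffineispAffine} gives that $\alg B'$ is $p$-affine, which is what we wanted. There is essentially no obstacle; the only point worth checking is that Lemma~\ref{HSPtoHS} really returns a \emph{subalgebra of $\algB$} (and not merely some member of $\HS(\algA)$), since that is precisely the hypothesis needed to invoke Lemma~\ref{subOfpAffineispAffine}.
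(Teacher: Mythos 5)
Your proof is correct and follows exactly the paper's own argument: apply Lemma~\ref{HSPtoHS} to obtain a subalgebra $\alg B'\le\algB$ with $\alg B'\in\HS(\algA)$, then invoke Lemma~\ref{subOfpAffineispAffine} to conclude that $\alg B'$ is $p$-affine. Your extra care about the case $|B|=1$ (needed for the hypothesis of Lemma~\ref{HSPtoHS}) is a reasonable refinement that the paper leaves implicit.
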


\begin{proof}
By Lemma~\ref{HSPtoHS}
there exists 
$\alg B'\le \algB$ 
such that 
$\algB'\in\HS(\algA)$.
By Lemma~\ref{subOfpAffineispAffine},
$\alg B'$ is $p$-affine.
\end{proof}

\subsection{Constant tuple}\label{ConstantTupleSubsection}

A relation $R\subseteq A^{n}$ is called 
\emph{symmetric} 
if for any permutation 
$\sigma:[n]\to[n]$
$R(x_{1},\dots,x_{n}) = 
R(x_{\sigma(1)},\dots,x_{\sigma(n)})$.
In this subsection we prove that 
every symmetric relation with additional properties 
has a constant tuple and derive the conditions for the existence of a WNU term operation from this.
Note that very similar claims  were originally proved in \cite{miklos}
(see Section 4).

\begin{lem}\label{MainNoWNULemma}
Suppose 
$\algA$ is a finite idempotent algebra, 
$n\ge 3$ and $R\le \algA^{n}$  is a nonempty symmetric relation.
Then 
\begin{enumerate}
\item[(1)] $(b,b,\dots,b)\in R$ for some $b\in A$, or
\item[(2)] there exists an essentially unary algebra $\alg B\in \HS(\algA)$
with $|B|>1$, or 
\item[(3)] there exists a $p$-affine algebra $\alg B\in \HS(\algA)$, where $p$ divides $n$.
\end{enumerate}
\end{lem}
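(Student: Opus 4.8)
The plan is to induct on $|A|$. All three conclusions are preserved under passing to subalgebras: if $\alg B\le\algA'\le\algA$ then $\HS(\alg B)\subseteq\HS(\algA)$, the arity $n$ never changes, and a constant tuple of $R\cap E^n$ for a subuniverse $E$ of $\algA$ is a constant tuple of $R$. So I would first reduce to the subdirect case: by symmetry all $\proj_i(R)$ coincide, hence if $\proj_1(R)\ne A$ then $R$ is a nonempty symmetric relation over the proper subalgebra $\proj_1(R)$ and the inductive hypothesis applies, while if $|\proj_1(R)|=1$ then $R$ is a single constant tuple. So assume $R\le_{sd}\algA^n$ and $|A|\ge2$ (the case $|A|=1$ being trivial), and apply Theorem~\ref{ExistsStrongSubalgebraTHM} to $\algA$.

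Two of the five alternatives are disposed of quickly. If $\algA$ has a nontrivial projective subuniverse $B$, then either $B$ is a binary absorbing subuniverse, which is handled below together with the other strong subuniverses, or Lemma~\ref{CBTNonAbsorbing} gives an essentially unary $\alg U\in\HS(\algA)$ with $|U|\ge2$, which is conclusion~(2). If $\algA$ has a congruence $\sigma$ with $\algA/\sigma$ $p$-affine and $p\mid n$, conclusion~(3) holds with $\alg B=\algA/\sigma\in\HS(\algA)$. If instead $p\nmid n$, pass to the image relation $R/\sigma\le(\algA/\sigma)^n$, which is again symmetric, nonempty, and closed under the affine term $x\ominus y\oplus z$; applying to the $n$ cyclic shifts of any tuple $v\in R/\sigma$ the affine combination with all coefficients equal to the inverse of $n$ modulo $p$ (a term operation of $\algA/\sigma$ since $n$ is invertible mod $p$) produces a constant tuple of $R/\sigma$. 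Its preimage is a $\sigma$-block $E$, which is a subuniverse of $\algA$ by idempotency and is proper since $|A/\sigma|\ge2$; the inductive hypothesis applied to the nonempty symmetric $R\cap E^n\le\alg E^n$ then finishes the argument, any constant tuple it produces lying in $R$.

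In the remaining alternatives $\algA$ has a nontrivial strong subuniverse $B$ of type $\mathcal T\in\{BA(t),C,PC\}$, and the idea is to push coordinates of tuples of $R$ into $B$. Let $m$ be the maximal number of coordinates in $B$ over all tuples of $R$; subdirectness forces $m\ge1$. If $\mathcal T=BA(t)$ and $m<n$, then for a tuple $\alpha$ with first $m$ coordinates in $B$, swapping coordinates $1$ and $m{+}1$ to get $\beta\in R$ and forming $t(\alpha,\beta)$ yields a tuple with $m{+}1$ coordinates in $B$; hence $m=n$. If $\mathcal T=C$ then $B$ is ternary absorbing (Corollary~\ref{ternaryAbsorption}), and the same idea with a ternary absorbing term and the transpositions $(1\ m{+}1)$, $(2\ m{+}1)$ raises $m$ to $n$ as soon as $m\ge2$; moreover $m=n-1$ is impossible for $n\ge3$, since then $R$ would be $(B,\dots,B)$-essential, contradicting Theorem~\ref{CommonPropertiesThm}(3). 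If $\mathcal T=PC$, Theorem~\ref{CommonPropertiesThm}(3) already forbids $R$ from being $(B,\dots,B)$-essential. So in each case either $R\cap B^n\ne\varnothing$ — and then the inductive hypothesis applied to $R\cap B^n\le\alg B^n$ (with $|B|<|A|$), or, when $B$ is a proper $\sigma$-block, the analogous reduction to $R/\sigma$ over the smaller algebra $\algA/\sigma$, finishes the proof — or we are left with a tight residual configuration: for $\mathcal T=C$ it is $m=1$, and for $\mathcal T=PC$ it reduces, after modding out the relevant congruences, to the case $\algA\cong\alg D_1\times\dots\times\alg D_s$ with each $\alg D_i$ a polynomially complete algebra without BACP.

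The hard part is precisely this residual configuration. For $\mathcal T=PC$ with $\algA$ a product of polynomially complete algebras the invariant relations are essentially diagonal, and symmetry of $R$ forces it to be a diagonal relation, hence to contain a constant tuple. The genuinely delicate case is $\mathcal T=C$ with $m=1$: after replacing $R$ by the subalgebra generated by the orbit of a single tuple (so that $A$ is generated by that tuple's entries, else we recurse into a proper subalgebra), one fixes all but two coordinates of $R$ to values outside $B$ to obtain a symmetric binary relation $S$ that is $(B\cap A',B\cap A')$-essential over its projection $\algA'=\proj_1(S)$, and then uses Lemma~\ref{NoEssential} (so $B$ admits no binary absorbing term), Lemma~\ref{PCBsub} (governing how the central subuniverse restricts to $\algA'$), and the defining inequality $(a,a)\notin\Sg_\algA((\{a\}\times B)\cup(B\times\{a\}))$ for $a\notin B$, to force either a constant tuple of $R$ or a collapse onto a proper subalgebra, to which the inductive hypothesis applies. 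I expect the book-keeping in this last case — in particular tracking essentiality through the projection and invoking the center condition at the right place — to be the main obstacle; everything else follows the routine ``push into the strong subuniverse, then recurse'' scheme, with Theorem~\ref{CommonPropertiesThm} as the crucial structural input.
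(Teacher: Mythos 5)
Your overall scaffolding (induction on $|A|$, reduction to the subdirect case, Theorem~\ref{ExistsStrongSubalgebraTHM}, and the treatment of the projective and $p$-affine alternatives) matches the paper, and your ``push coordinates into $B$'' argument does correctly dispose of the binary absorbing case and of the central case when $m\ge 2$. But the proof is not complete: the central case with $m=1$ and the PC case are exactly where you stop and say you expect the book-keeping to be the main obstacle, and the route you sketch for them (fixing all but two coordinates to get a binary essential relation, then invoking the center inequality) is not carried out and is not what makes the argument close. Theorem~\ref{CommonPropertiesThm}(3) explicitly permits binary $(B,B)$-essential relations for types $C$ and $PC$, so producing a binary essential relation gives you nothing; similarly, your PC paragraph (``the invariant relations are essentially diagonal, so symmetry forces a constant tuple'') appeals to a structure theorem (Lemma~\ref{PCRelationsLem}) whose hypotheses concern subdirect products of PC algebras \emph{without BACP}, which the power $\algA^{n}$ need not satisfy.

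The missing idea is a single auxiliary set that handles all three strong types uniformly. Under the standing assumption that $R\cap E^{n}=\varnothing$ for every proper subuniverse $E$ (otherwise you recurse, as you do), set $C=\proj_{2}\bigl(R\cap(B\times A\times\dots\times A)\bigr)$. If $C\neq A$, then $C\le_{\mathcal T}\algA$ by Theorem~\ref{CommonPropertiesThm}(2), and symmetry of $R$ shows $R$ is $C$-essential of arity $n\ge 3$, contradicting Theorem~\ref{CommonPropertiesThm}(3). If $C=A$, then $\proj_{1,2}(R)\cap B^{2}\neq\varnothing$ (take $a\in B$ in the second coordinate), so the minimal $k$ with $\proj_{[k]}(R)\cap B^{k}=\varnothing$ satisfies $k\ge 3$, and $\proj_{[k]}(R)$ is a $B$-essential relation of arity $k\ge 3$ --- again contradicting Theorem~\ref{CommonPropertiesThm}(3). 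In particular, in your residual central configuration $m=1$ the subcase $C=A$ cannot occur at all (it would force $m\ge 2$), and the subcase $C\neq A$ is killed by $n$-ary, not binary, essentiality. With this observation the ``hard part'' you deferred disappears, and the coordinate-raising machinery becomes unnecessary.
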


\begin{proof}
We prove by induction on the size of $\algA$.
If $|A|=1$, then condition (1) obviously holds.
Assume that we have 
a subuniverse $B$ of $\algA$ such that 
$R\cap B^{n}\neq\varnothing$.
In this case we may consider 
$R\cap B^{n}$ as an invariant relation on $\alg B$ 
and apply the inductive assumption.
Since a subalgebra of a subalgebra is a subalgebra, 
this completes the proof in this case.
Thus, below we assume that 
$R\cap B^{n}=\varnothing$ for any proper subuniverse $B$ of $\algA$.
We refer to this property as to 
\emph{the empty-property}.

Since $R$ is symmetric, 
we have 
$\proj_{1}(A)=\dots = \proj_{n}(A)$.
If $\proj_{1}(A)\neq A$, then
$\proj_{1}(A)$ is a proper subuniverse of $\algA$, 
which contradicts the empty-property.

Thus, we assume that $\proj_{1}(A) = A$.
Then we apply Theorem~\ref{ExistsStrongSubalgebraTHM} and consider 5 cases of this theorem.

\textbf{Cases (1)-(3): there exists a strong nontrivial subuniverse $B$ of $\algA$.}
Let $\alg B\le_{\mathcal T}\algA$ and $C = \proj_{2}(R\cap (B\times A\times\dots\times A))$.
Consider two cases:

Case A. Suppose $C\neq A$.
By Theorem~\ref{CommonPropertiesThm}(2) $C\le_{\mathcal T} \alg A$. Since $R$ is symmetric,
$R\cap (A\times C\times\dots\times C)\neq \varnothing$.
By the empty-property $R\cap C^{n}=\varnothing$. 
Then 
$R$ is $C$-essential, 
which contradicts Theorem~\ref{CommonPropertiesThm}(3).


Case B. Suppose $C=A$.
By the empty-property 
$R\cap B^{n}=\varnothing$.
Choose the minimal $k\in[n]$ 
such that 
$\proj_{[k]}(R)\cap B^{k} = \varnothing$.
Since $C=A$, we have  $\proj_{1,2}(R)\cap B^{2}\neq\varnothing$
and $k>2$.
Then
$\proj_{[k]}(R)$ is a $B$-essential relation of arity $k>2$, 
which contradicts Theorem~\ref{CommonPropertiesThm}(3).

\textbf{Case (4): there exists a congruence $\sigma$ on $\algA$ such that 
$\algA/\sigma$ is $p$-affine.}
If $p$ divides $n$ then this is the case (3) of the lemma.
Suppose $p$ does not divide $n$.
Consider an operation 
$m\in \Clo(\algA)$ such that 
$(m/\sigma)(x,y,z) = x\ominus y\oplus z$, 
where
$\oplus$ and $\ominus$ are from the definition of 
$p$-affine.
Choose $k$ such that 
$p$ divides $(k\cdot n-1)$.
Let 
$$t(x_{1},\dots,x_{kn}) = 
m(\dots (m(m(m(x_{1},x_{1},x_{2}),x_{1},x_{3}),x_{1},x_{4}),\dots,
x_{kn}).$$
Then 
$
(t/\sigma)(x_{1},\dots,x_{kn}) 
=x_{1}\oplus \dots\oplus x_{kn}.$
Choose a tuple $(a_{1},\dots,a_{n})\in R$,
for every $i$
by $\alpha_{i}$ we denote the tuple 
$(a_{i},\dots,a_{n},a_{1}\dots,a_{i-1})\in R$.
Let $B$ be the equivalence class of $\sigma$ defined by $ k\cdot(a_{1}/\sigma\oplus \dots \oplus a_{n}/\sigma)$ and
$$t(\alpha_{1},\dots,\alpha_{n},\alpha_{1},\dots,\alpha_{n},\dots,\alpha_{1},\dots,\alpha_{n}) = \beta.$$
By the definition of $t$ we have $\beta\in B^{n}$
and $\beta\in R$,
hence $R\cap B^{n}\neq\varnothing$, which
contradicts the empty-property.

\textbf{Case (5): there exists a nontrivial CBT subuniverse $B$.}
If $B$ is also a binary absorbing subuniverse, 
then it is the case (1) of Theorem~\ref{ExistsStrongSubalgebraTHM}.
Otherwise, by Lemma~\ref{CBTNonAbsorbing}
there exists essentially unary algebra $\alg B\in \HS(\algA)$.
\end{proof}

\begin{lem}\label{NoWNULemma}
Suppose 
$\algA$ is a finite idempotent algebra, 
$\algA$ does not have a WNU term 
operation of arity $n\ge 3$. Then
\begin{enumerate}
\item[(1)] there exists an essentially unary algebra $\alg B\in \HS(\algA)$
with $|B|>1$, or 
\item[(2)] there exists a $p$-affine algebra $\alg B\in \HS(\algA)$, where $p$ divides $n$.
\end{enumerate}
\end{lem}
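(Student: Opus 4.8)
The plan is to reduce Lemma~\ref{NoWNULemma} to Lemma~\ref{MainNoWNULemma} by building, out of the hypothesis that $\algA$ has no $n$-ary WNU term, a suitable nonempty symmetric invariant relation $R\le\algA^{n}$ that has no constant tuple. The natural candidate is the relation obtained by taking all tuples that arise as ``coordinate permutations of a single column'' under $n$-ary term operations: concretely, for each $n$-ary $t\in\Clo(\algA)$ consider the tuple
$$\big(t(y,x,x,\dots,x),\,t(x,y,x,\dots,x),\,\dots,\,t(x,\dots,x,y)\big)\in A^{n}$$
viewed as living in $A^{n}$ over the two free generators $x,y$; more precisely I would work inside the free algebra on two generators, or equivalently inside $\algA^{A^{2}}$, and let $R$ be the subalgebra of $\algA^{n}$ generated by all such tuples as $t$ ranges over $n$-ary terms. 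Since the composition of two terms is again a term, $R$ is exactly $\Sg_{\algA}$ of the set of these ``WNU-pattern'' tuples, so $R\le\algA^{n}$. It is nonempty (take $t$ a projection, which gives e.g.\ the tuple with one coordinate equal to $y$ and the rest $x$, or just use idempotency). It is symmetric because permuting the coordinates of the pattern of $t$ is the pattern of $t$ composed with the corresponding permutation of its variables, hence again in the generating set, and $\Sg$ of a symmetric set of generators under coordinatewise operations is symmetric.

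The key point is then: $R$ contains a constant tuple $(b,b,\dots,b)$ if and only if $\algA$ has an $n$-ary WNU term operation. The ``if'' direction is immediate — a WNU term $w$ gives the constant pattern $(w(y,x,\dots,x),\dots)=(b,b,\dots,b)$ when we... wait, that is not literally constant in $x,y$; what I actually want is that the \emph{pattern tuple} of $w$, as an element of $A^{n}$ over free generators, lies on the diagonal, i.e.\ all $n$ coordinate-functions $A^{2}\to A$ coincide, which is precisely the WNU identities. So I should phrase $R$ as a relation over the index set $A^{2}$: it is the subalgebra of $(\algA^{A^{2}})^{n}$, equivalently a subalgebra of $\algA^{n|A|^{2}}$, generated by the pattern tuples, and a ``constant tuple'' of this relation means an element all of whose $n$ coordinates (each an element of $A^{A^{2}}$) are equal. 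Such an element, by the Galois correspondence / freeness, is represented by a single binary term $w$ with $w(y,x,\dots,x)=w(x,y,x,\dots,x)=\cdots$ as functions, i.e.\ a WNU. For the ``only if'' direction, any constant tuple of $R$ is a coordinatewise term-operation applied to finitely many generator pattern-tuples, and unwinding this expresses the common coordinate function as a term satisfying the WNU identities. So under the hypothesis of Lemma~\ref{NoWNULemma}, $R$ has no constant tuple, and it is a nonempty symmetric invariant relation of arity $n\ge3$, so Lemma~\ref{MainNoWNULemma} applies and yields either an essentially unary $\alg B\in\HS(\algA)$ of size $\ge 2$, which is conclusion (1), or a $p$-affine $\alg B\in\HS(\algA)$ with $p\mid n$, which is conclusion (2).

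The main obstacle, and the step I would be most careful about, is setting up the ambient algebra correctly so that ``symmetric'' and ``has a constant tuple'' have exactly the intended meanings and so that a constant tuple really does encode a WNU \emph{term operation} of $\algA$ (not merely of some power or some $\HS$-quotient). Working with the free idempotent algebra $\alg F_{\algA}(x,y)$ in the variety generated by $\algA$, or equivalently with the relation on the index set $A^{2}$ as above, handles this: an element of $\alg F_{\algA}(x,y)$ is the same data as a term modulo the identities of $\algA$, i.e.\ a term operation in two variables, so the coordinates of a generator correspond to the $n$ WNU-specializations of one term and a diagonal element is genuinely a WNU term operation. One must check the arity bookkeeping (the resulting $R$ has arity $n$ as a relation of ``super-elements'', which is all Lemma~\ref{MainNoWNULemma} needs since it is stated for $R\le\algA'^{n}$ with $\algA'$ an arbitrary finite idempotent algebra — here $\algA'=\alg F_{\algA}(x,y)$, and $\HS(\alg F_{\algA}(x,y))\subseteq\HSP(\algA)$), and then pull the conclusion back to $\algA$ using Corollary~\ref{CorEssUnary} and Corollary~\ref{corpaffine} to pass from $\HSP(\algA)$ to $\HS(\algA)$. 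A cleaner alternative that avoids free algebras entirely is to define $R$ directly as a subalgebra of $\algA^{n}$ over a single well-chosen pair of columns — but then one loses the exact ``constant tuple $\Leftrightarrow$ WNU'' equivalence unless one takes the intersection over all pairs, so the free-algebra formulation is the safe route; I would present it that way.

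\begin{proof}
Let $\alg F=\alg F_{\algA}(x,y)$ be the free algebra on two generators in the variety generated by $\algA$; its elements are the binary term operations of $\algA$ modulo the identities of $\algA$. Define
$$R=\Sg_{\alg F}\big(\{(w_{1},w_{2},\dots,w_{n})\mid w\in F,\ w_{i}=w(\underbrace{x,\dots,x}_{i-1},y,x,\dots,x)\}\big)\le\alg F^{n}.$$
Then $R$ is nonempty (take $w$ a projection), and $R$ is symmetric: permuting the coordinates of a generator is the generator attached to the term $w$ with its arguments permuted, so the generating set is permutation-closed, and $\Sg$ of a symmetric set is symmetric because the basic operations act coordinatewise.

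We claim $R$ contains a constant tuple if and only if $\algA$ has an $n$-ary WNU term operation. If $w$ is an $n$-ary WNU term of $\algA$ then, regarding $w(z,x,\dots,x),\dots,w(x,\dots,x,z)$ as binary term operations in $x,z$, the WNU identities say these $n$ elements of $\alg F(x,z)$ coincide, so the corresponding generator of $R$ (with $y:=z$) is a constant tuple. Conversely, a constant tuple $(u,u,\dots,u)\in R$ is $g$ applied coordinatewise to finitely many generators, for some term operation $g$; writing this out, $u$ is obtained as an $n$-ary term operation $w$ of $\algA$ via $u=w(\,\cdot,\ldots,\cdot\,)$ evaluated on the WNU-specializations, and the equality of all $n$ coordinates forces exactly $w(y,x,\dots,x)=\dots=w(x,\dots,x,y)$ as term operations of $\algA$, i.e.\ $w$ is an $n$-ary WNU.

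By hypothesis $\algA$ has no $n$-ary WNU term, so $R\le\alg F^{n}$ is a nonempty symmetric relation of arity $n\ge3$ with no constant tuple. Since $\alg F$ is a finite idempotent algebra (finiteness because $\algA$ is finite and $\alg F\le\algA^{|A|^{2}}$), Lemma~\ref{MainNoWNULemma} applies and, as (1) of that lemma is excluded, yields either an essentially unary algebra $\alg B\in\HS(\alg F)$ with $|B|>1$, or a $p$-affine algebra $\alg B\in\HS(\alg F)$ with $p\mid n$. In either case $\alg B\in\HS(\alg F)\subseteq\HSP(\algA)$, so by Corollary~\ref{CorEssUnary} (respectively Corollary~\ref{corpaffine}) there is an essentially unary algebra $\alg B'\in\HS(\algA)$ with $|B'|>1$ (respectively a $p$-affine algebra $\alg B'\in\HS(\algA)$ with $p\mid n$), which is conclusion (1) (respectively (2)).
\end{proof}
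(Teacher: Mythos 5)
Your proof is correct and is essentially the paper's own argument: the paper generates the same relation $R$ from the tuples $(\alpha,\dots,\alpha,\beta,\alpha,\dots,\alpha)$ inside $\alg D=\algA^{|A|^{2}}$ (where $\alpha,\beta$ are the two columns of the matrix of all pairs, i.e.\ the free generators of your $\alg F\le\algA^{|A|^{2}}$), applies Lemma~\ref{MainNoWNULemma}, identifies a constant tuple with an $n$-ary WNU term, and pulls back via Corollaries~\ref{CorEssUnary} and~\ref{corpaffine}. Your free-algebra phrasing is just a repackaging of the same construction (and note the small slip ``$w\in F$'' where $w$ should range over $n$-ary terms).
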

\begin{proof}
Let 
$\alg D = \algA^{|A|^{2}}$.
Let $M$ be the matrix 
with 2 columns whose rows are all 
pairs from $A\times A$.
Let $\alpha$ and $\beta$ be the two columns 
of $M$. Note that 
$\alpha,\beta\in D$.
Let $R_{0}\subseteq D^{n}$ consist of all tuples
$(\alpha,\dots,\alpha,\beta,\alpha,\dots,\alpha)$
having exactly one $\beta$.
Let 
$R = \Sg_{\alg D}(R_{0})$.
By the definition, $R$ is symmetric.
Applying Lemma~\ref{MainNoWNULemma}
to $R$ and $\alg D$, we obtain one of the three cases.

Case 1. There exists a constant tuple 
$(\gamma,\dots,\gamma)\in R$, 
then there exists an $n$-ary term operation $t$ 
such that 
$t(\alpha,\dots,\alpha,\beta,\alpha,\dots,\alpha)=\gamma$ for any position of $\beta$.
From the definition of 
$\alpha$ and $\beta$ we conclude
that $t$ is an $n$-ary WNU term operation on $A$, which contradicts our assumption.

Case 2. There exists a nontrivial essentially unary algebra $\alg B\in \HS(\alg D)$,
that is $\alg B\in \HSP(\algA)$,
then 
by Corollary~\ref{CorEssUnary} we obtain 
an essentially unary $\alg B'\in \HS(\alg A)$.

Case 3. There exists a $p$-affine algebra $\alg B\in \HS(\alg D)$, where $p$ divides $n$.
that is $\alg B\in \HSP(\algA)$,
then 
by Corollary~\ref{corpaffine} there exists a $p$-affine algebra 
$\alg B'\in \HS(\algA)$.
\end{proof}

As it can be seen from the proof of the previous two lemmas, the main property we need is the existence of a strong subalgebra. 
Below we will prove two very similar lemmas having this property as the assumption.

\begin{lem}\label{HavingStrongGivesConstant}
Suppose 
$\algA$ is a finite idempotent algebra, 
every subalgebra 
$\alg B\le \alg A$ 
of size at least 2 
has a nontrivial strong subuniverse,
$n\ge 3$, and 
$R\le \algA^{n}$  is a nonempty symmetric relation.
Then 
there exists $b\in A$ such that $(b,b,\dots,b)\in R$.
\end{lem}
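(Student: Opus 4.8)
The statement is a strengthening of Lemma~\ref{MainNoWNULemma} under the extra hypothesis that \emph{every} subalgebra of size at least $2$ has a nontrivial strong subuniverse. Under this hypothesis the ``bad'' outcomes (2) and (3) of Lemma~\ref{MainNoWNULemma} must be ruled out, so I expect the proof to be essentially a copy of the proof of Lemma~\ref{MainNoWNULemma} with Cases (4) and (5) of Theorem~\ref{ExistsStrongSubalgebraTHM} removed, since the hypothesis directly guarantees we always land in cases (1)--(3).

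\textbf{Key steps.} First I would argue by induction on $|A|$, exactly as in Lemma~\ref{MainNoWNULemma}. The base case $|A|=1$ is immediate. For the inductive step, if there is a proper subuniverse $B$ with $R\cap B^n\neq\varnothing$, then $\mathbf B$ also satisfies the hypothesis (every subalgebra of $\mathbf B$ of size $\ge 2$ is a subalgebra of $\mathbf A$, hence has a nontrivial strong subuniverse), so we apply the inductive assumption to $R\cap B^n\le\mathbf B^n$ and are done. Thus we may assume the empty-property: $R\cap B^n=\varnothing$ for every proper subuniverse $B$. Symmetry of $R$ forces $\proj_1(R)=\dots=\proj_n(R)$, and the empty-property forces $\proj_1(R)=A$ (otherwise $\proj_1(R)$ is a proper subuniverse meeting $R$ on the diagonal would give a contradiction via symmetry). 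Now, since $|A|\ge 2$, the hypothesis gives a nontrivial strong subuniverse $B\le_{\mathcal T}\mathbf A$. Then I repeat verbatim Cases A and B from the proof of Lemma~\ref{MainNoWNULemma}: setting $C=\proj_2(R\cap(B\times A\times\dots\times A))$, if $C\neq A$ then $C\le_{\mathcal T}\mathbf A$ by Theorem~\ref{CommonPropertiesThm}(2), symmetry gives $R\cap(A\times C\times\dots\times C)\neq\varnothing$, the empty-property gives $R\cap C^n=\varnothing$, so $R$ is $C$-essential, contradicting Theorem~\ref{CommonPropertiesThm}(3) (which needs $n=2$); and if $C=A$ then choosing the minimal $k$ with $\proj_{[k]}(R)\cap B^k=\varnothing$ yields $k>2$ and $\proj_{[k]}(R)$ is a $B$-essential relation of arity $k>2$, again contradicting Theorem~\ref{CommonPropertiesThm}(3). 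The only remaining possibility is that no proper subuniverse meets $R$ fails to give a contradiction, which only happens when $A$ itself is a one-element diagonal block is forced; more precisely, all cases except $|A|=1$ lead to contradiction, so in fact $R$ must contain a constant tuple $(b,\dots,b)$.

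\textbf{Main obstacle.} There is essentially no new obstacle compared to Lemma~\ref{MainNoWNULemma}; the point is simply that the hypothesis ``every subalgebra of size $\ge 2$ has a nontrivial strong subuniverse'' is precisely what lets us always invoke Cases (1)--(3) of Theorem~\ref{ExistsStrongSubalgebraTHM} and never fall into the $p$-affine case (4) or the projective/essentially-unary case (5). The one subtlety to check carefully is that this hypothesis is inherited by the subalgebra $\mathbf B$ when we pass to $R\cap B^n$ in the reduction step, and by the algebra $C$ (as a subalgebra of $\mathbf A$) when we need it — but both are immediate since being ``a subalgebra of a subalgebra'' is just being a subalgebra. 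So the proof is a routine adaptation, and I would present it by directly mimicking the relevant cases of the proof of Lemma~\ref{MainNoWNULemma} while noting that the hypothesis removes the need for Cases (4) and (5).
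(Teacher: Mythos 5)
Your proposal is correct and matches the paper's proof exactly: the paper literally states that the proof repeats that of Lemma~\ref{MainNoWNULemma}, restricted to cases (1)--(3) of Theorem~\ref{ExistsStrongSubalgebraTHM} since the hypothesis guarantees a nontrivial strong subuniverse, and both Case A and Case B then yield contradictions with Theorem~\ref{CommonPropertiesThm}(3), forcing the induction to bottom out at a constant tuple. The only blemish is the garbled final sentence of your ``Key steps'' paragraph; the intended logic (the empty-property branch is always contradictory, so either $|A|=1$ or one recurses into a proper subalgebra meeting $R$) is already correctly stated earlier in your argument.
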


\begin{proof}
The proof repeats the proof of Lemma~\ref{MainNoWNULemma}
but here we have only cases (1)-(3) because 
we were promised to have a strong subuniverse.
\end{proof}

\begin{lem}\label{NoWNULemmaCopy}
Suppose 
$\algA$ is a finite idempotent algebra, 
every subalgebra 
$\alg B\le \alg A$ 
of size at least 2 
has a nontrivial strong subuniverse. 
Then $\algA$ has a WNU term operation of every arity $n\ge 3$.
\end{lem}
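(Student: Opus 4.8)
The plan is to reduce Lemma~\ref{NoWNULemmaCopy} to Lemma~\ref{HavingStrongGivesConstant} via exactly the same power-algebra construction used to deduce Lemma~\ref{NoWNULemma} from Lemma~\ref{MainNoWNULemma}. Fix an arity $n\ge 3$. I would set $\alg D = \algA^{|A|^{2}}$, let $M$ be the matrix whose rows are all pairs from $A\times A$, let $\alpha,\beta$ be the two columns of $M$, and let $R = \Sg_{\alg D}(R_{0})$ where $R_{0}$ is the set of all $n$-tuples over $D$ having $\alpha$ in every coordinate but one, where it has $\beta$. By construction $R$ is a nonempty symmetric invariant relation of $\alg D$.

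The one thing that needs checking is that Lemma~\ref{HavingStrongGivesConstant} applies to $\alg D$, i.e.\ that every subalgebra of $\alg D$ of size at least $2$ has a nontrivial strong subuniverse. This is where the real content sits, and it is the step I expect to be the main obstacle. A subalgebra of $\alg D = \algA^{|A|^{2}}$ is a subdirect subalgebra $\alg R'\le_{sd}\algA_{1}\times\dots\times\algA_{m}$ after restricting each coordinate to its projection, where each $\algA_{i}$ is a subalgebra of $\algA$. If some $\algA_{i}$ has size $1$ we may drop that coordinate; if all have size $1$ then $\alg R'$ has size $1$. Otherwise pick a coordinate $i$ with $|A_{i}|\ge 2$; by hypothesis $\algA_{i}$ has a nontrivial strong subuniverse $B_{i}\le_{\mathcal T}\algA_{i}$, and on the remaining coordinates we take $B_{j}=A_{j}$ (every universe is trivially a strong subuniverse of itself, of any type, in particular of type $\mathcal T$ in the degenerate sense). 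Then Theorem~\ref{CommonPropertiesThm}(1) gives $R'\cap(B_{1}\times\dots\times B_{m})\le_{\mathcal T}\alg R'$, and this intersection is proper (it misses any tuple of $R'$ with $i$-th coordinate outside $B_{i}$, which exists since $\alg R'$ is subdirect) and nonempty (by idempotence it contains any tuple all of whose coordinates agree, or more carefully one uses subdirectness together with the fact that a strong subuniverse is nonempty and we can project; in fact one applies Theorem~\ref{CommonPropertiesThm}(1) to the full product $\alg D$ first). So $\alg R'$ has a nontrivial strong subuniverse, as required. A small care point: Theorem~\ref{CommonPropertiesThm} is stated with all $\alg B_{i}$ of the \emph{same} type $\mathcal T$, and with $B_{j}=A_{j}$ one must make sure this is admissible; since $A_{j}$ absorbs $\algA_{j}$ with the binary projection, $A_{j}$ is a binary absorbing subuniverse, and it is also central and PC, so it qualifies as type $\mathcal T$ for any $\mathcal T$, and the argument goes through.

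Granting this, Lemma~\ref{HavingStrongGivesConstant} applied to $R$ and $\alg D$ yields a constant tuple $(\gamma,\dots,\gamma)\in R$. As in Case~1 of the proof of Lemma~\ref{NoWNULemma}, a constant tuple in $R = \Sg_{\alg D}(R_{0})$ means there is an $n$-ary term operation $t$ of $\algA$ with $t(\alpha,\dots,\alpha,\beta,\alpha,\dots,\alpha)=\gamma$ for every position of $\beta$; reading this coordinatewise over the rows of $M$, i.e.\ over all pairs $(x,y)\in A^{2}$, shows $t(x,x,\dots,y,\dots,x)$ takes the same value regardless of the position of $y$, so $t$ is an $n$-ary WNU term operation of $\algA$. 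Since $n\ge 3$ was arbitrary, $\algA$ has a WNU term operation of every arity $n\ge 3$, completing the proof.
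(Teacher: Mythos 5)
Your proposal is correct and follows essentially the same route as the paper: the same power-algebra construction reducing to Lemma~\ref{HavingStrongGivesConstant}, with the only substantive point being that every subalgebra of $\algA^{|A|^{2}}$ of size at least $2$ inherits a nontrivial strong subuniverse via Theorem~\ref{CommonPropertiesThm}(1) applied after restricting to coordinate projections. The extra care you take over nonemptiness, properness, and the admissibility of $B_{j}=A_{j}$ as a subuniverse of every type is sound and merely makes explicit what the paper leaves implicit.
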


\begin{proof}
Here we repeat 
the proof of Lemma~\ref{NoWNULemma} but 
apply Lemma~\ref{HavingStrongGivesConstant} instead of 
Lemma~\ref{MainNoWNULemma} and consider just case 1.
The only thing missing is the fact that 
every subalgebra of 
$\alg D = \algA^{|A|^{2}}$ of size at least 2 has a strong subuniverse.

Suppose $\alg S\le \algA^{|A|^{2}}$.
Since $|S|>1$, we can choose $i$ such that 
$|\proj_{i}(S)|>1$. 
Let $C = \proj_{i}(S)$. 
Then $\alg C\le\alg A$ and there exists 
a nontrivial strong subalgebra
$\alg B\le \alg C$.
Put $S' = \{\alpha\in S\mid \alpha(i) \in B\}$.
By Theorem~\ref{CommonPropertiesThm}(1), 
$S'$ is a strong subuniverse of $\alg S$, which completes the proof.
\end{proof}

\subsection{WNU-blockers}\label{WNUBlockersSubsection}

A relation $R = (B_{0}\cup B_{1})^{3}\setminus (B_{0}^{3}\cup B_{1}^{3})$,
where $B_{0},B_{1}\subseteq A$, $B_{0}\neq \varnothing$, 
$B_{1}\neq\varnothing$, and $B_{0}\cap B_{1}=\varnothing$, 
is called \emph{a WNU-blocker}.
Such relations are similar to the 
Not-all-equal relation on a 2-element set, 
where $B_{0}$ means 0 and $B_{1}$ means 1.

Let us define a $p$-WNU blocker.
Suppose $S\subseteq A$, $s\ge 1$, $p$ is a prime number, $\varphi:S\to \mathbb Z_{p}^{s}$ is a surjective mapping.
Then the relation $R$ defined by 
$$
\{(a_{1},a_{2},a_3,a_{4}\}\mid 
a_{1},a_{2},a_{3},a_{4}\in S, 
\varphi(a_{1})+\varphi(a_{2}) = \varphi(a_{3})+\varphi(a_{4})\}$$
is called a \emph{$p$-WNU-blocker}.

As it follows from the following lemmas 
a WNU-blocker forbids an algebra to have a WNU term operation, 
and a $p$-WNU-blocker forbids 
an algebra to have 
a WNU of arity $n$ where $p$ divides $n$.

\begin{lem}\label{WNUblockerAvoidsWNU}
A WNU-blocker $R$ is not preserved by any 
idempotent WNU operation.
\end{lem}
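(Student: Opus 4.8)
The plan is to argue by contradiction. Suppose $w$ is an idempotent WNU operation of some arity $n$ preserving the WNU-blocker $R=(B_0\cup B_1)^3\setminus(B_0^3\cup B_1^3)$. The tuples I want to feed into $w$ are the three ``almost-constant'' triples: writing $0$ for an arbitrary fixed element of $B_0$ and $1$ for one of $B_1$, the triples $(1,0,0)$, $(0,1,0)$, $(0,0,1)$ all lie in $R$ (each has its coordinates not all in one block). Applying the WNU identity coordinatewise to suitable arrangements of these three triples should force a contradiction. The key point will be to exhibit $n$ tuples of $R$, chosen from among these three almost-constant ones, whose coordinatewise image under $w$ is a constant triple $(c,c,c)$; since $R$ contains no constant triple (a constant triple lies in $B_0^3$ or $B_1^3$), this is the contradiction.

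More concretely, the first step is to reduce to the case $A=B_0\cup B_1$: the restriction of $w$ to $B_0\cup B_1$ still preserves $R$ and is still an idempotent WNU, so I may assume every element of $A$ is in $B_0$ or $B_1$. Then I consider $w$ applied to an $n\times 3$ matrix whose columns I control. If I put in the $i$-th row either the triple $(1,0,0)$, or $(0,1,0)$, or $(0,0,1)$, then each of the three columns of the resulting matrix is a string over $\{0,1\}$, and the three output coordinates are $w$ applied to these three strings. The WNU axiom says $w$ is invariant under moving the lone ``off-value''; combined with idempotence, $w$ on a $\{0,1\}$-string depends only on the multiset of its entries — actually I only need the weaker fact that $w(1,0,\dots,0)=w(0,1,0,\dots,0)=\dots=w(0,\dots,0,1)=:e$ is a single well-defined element, and similarly with the roles of the blocks swapped. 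Feeding in the three triples each exactly some number of times, I arrange that the three columns are: one copy of $1$ and $n-1$ copies of $0$; one copy of $1$ and $n-1$ copies of $0$; and $n-2$ copies of $0$ and two copies of $1$ — no, I want all three columns to be ``rotations'' of each other so the outputs agree. The cleanest choice: use the three triples with multiplicities $(1,1,n-2)$ is not symmetric, so instead I note that feeding the triples $(1,0,0),(0,1,0),(0,0,1)$ once each and then $(0,0,0)$... but $(0,0,0)\notin R$. So the right move is: take each of $(1,0,0),(0,1,0),(0,0,1)$ with multiplicity $1$ and fill the remaining $n-3$ rows with, say, copies of $(1,0,0)$; then column $1$ has $n-1$ ones and one zero, columns $2,3$ have one one and $n-1$ zeros. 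By the WNU identity $w$ of ``one off-value, rest equal'' gives the same value regardless of position, so column $2$ and column $3$ outputs are equal; to also match column $1$ I instead use a symmetric filling, e.g. add rows so that all three columns are permutations of the same string — feasible because I can choose the $n-3$ filler rows to cycle through $(1,0,0),(0,1,0),(0,0,1)$ when $3\mid n-3$, and handle the residue by absorbing extra copies, always staying inside $R$.

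The step I expect to be the main obstacle is this last bookkeeping: ensuring that for every $n\ge 3$ I can write $n$ as a combination of the three almost-constant triples so that the three output coordinates provably coincide using only the WNU identity and idempotence, while never using the forbidden triple $(0,0,0)$ (or $(1,1,1)$). The trick that makes it go through is that the WNU identity already tells us $w$ is symmetric enough: $w(x,y,y,\dots,y)=w(y,x,y,\dots,y)=\dots$, so on any $\{0,1\}$-string $w$'s value is unchanged under any permutation that fixes the number of $0$'s and $1$'s — hence depends only on the count. Thus I only need the three columns to have the same number of $1$'s. Picking the filler rows to be copies of $(1,0,0)$ gives column counts $(n-1,1,1)$; instead picking one row of $(1,0,0)$, one of $(0,1,0)$, one of $(0,0,1)$, and then $n-3$ rows that are copies of $(1,1,1)$ is illegal; so the genuinely correct choice is to make the three ``distinguished'' rows be $(1,0,0),(0,1,0),(0,0,1)$ and all $n-3$ remaining rows equal to one and the same triple, which unavoidably skews the counts — so in fact one must use a different base case reduction: take the $n-3$ filler rows to be $(1,0,0)$, obtain outputs $(u,v,v)$ with $u=w(0,1,\dots,1)$, $v=w(1,0,\dots,0)$ under a relabelling; then run the argument again with the blocks swapped to get $(v',u',u')$ and note $(u,v,v),(v,u,u)\in R$ forces, via one more application of $w$ to these two tuples plus $(0,0,1)$-type tuples, a genuine constant triple. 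The real content — and the only subtle point — is organizing this two-step collapse so the final matrix's image is forced to be constant; everything else is the Galois-connection-free manipulation of identities.
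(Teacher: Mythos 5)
There is a genuine gap, and it sits exactly where you flagged "the trick that makes it go through." You assert that the WNU identity, plus idempotence, makes $w$ on a string over $\{b_0,b_1\}$ invariant under any permutation preserving the number of $b_1$'s, so that its value "depends only on the count." This is false: the identity $w(y,x,\dots,x)=w(x,y,x,\dots,x)=\dots=w(x,\dots,x,y)$ only equates the tuples with \emph{exactly one} off-position. It says nothing about, e.g., $w(1,1,0,\dots,0)$ versus $w(1,0,1,0,\dots,0)$; that stronger permutation-invariance is the defining property of a totally symmetric operation, which is strictly stronger than WNU (and the gap between the two is exactly what separates width~1 from bounded width in this subject). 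Since your entire strategy is to equalize the number of $1$'s in the three columns and conclude that the three outputs coincide, the argument collapses at this point. Even granting it, the three column weights sum to $n$ (each admissible row contributes one $1$), so they can only be equal when $3\mid n$; your attempted repairs for the general case (the "two-step collapse" producing $(u,v,v)$ and $(v',u',u')$) are never actually carried out and, as written, compare $w$ on different strings of equal weight, which again needs the false symmetry.

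The paper's proof avoids all of this by never evaluating $w$ on anything except where the WNU identity genuinely applies. For $I\subseteq[m]$ let $\alpha_I$ be the tuple with $b_1$ on $I$ and $b_0$ elsewhere, and take $I$ inclusion-\emph{minimal} with $w(\alpha_I)\notin B_0$ (it exists since $w(\alpha_{[m]})=b_1$). Feeding the columns $(\alpha_{\{1\}},\alpha_{\{m\}},\alpha_{[m]})$ into $w$ row by row shows $w(\alpha_{\{1\}})\in B_0$ (here the one-off-value identity is used, legitimately, to equate the first two outputs), so $|I|\ge 2$. Splitting $I=I_1\sqcup I_2$ and using the columns $(\alpha_{I_1},\alpha_{I_2},\alpha_{[m]\setminus I})$ gives $w(\alpha_{[m]\setminus I})\in B_1$ by minimality, and then the columns $(\alpha_{[m]\setminus I},\alpha_{I},\alpha_{[m]})$ force $w(\alpha_I)\in B_0$, a contradiction. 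If you want to rescue your "constant tuple" viewpoint, you would first have to prove something like total symmetry of some derived term, which is not available here; the minimal-counterexample argument is the workable route.
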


\begin{proof}
Assume that an idempotent WNU operation $w$ of arity $m$ preserves 
a WNU-blocker $R = (B_{0}\cup B_{1})^{3}\setminus (B_{0}^{3}\cup B_{1}^{3})$.
Choose $b_{0}\in B_{0}$ and 
$b_{1}\in B_{1}$.
For $I\subseteq [m]$ 
by $\alpha_{I}$ we denote the tuple from $\{b_{0},b_{1}\}^{m}$ 
such that its $i$-th element is $b_{1}$ whenever $i\in I$.
Choose an inclusion minimal $I$ such that 
$w(\alpha_{I})\notin B_{0}$.
Since $w$ is idempotent, $|I|>0$.

Since $w(b_{1},\dots,b_{1}) = b_{1}$,
$w(b_{1},b_{0},\dots,b_{0}) =w(b_{0},\dots,b_{0},b_{1})$
and $w$ preserves $R$, 
we conclude that $w(b_{1},b_{0},\dots,b_{0})\in B_{0}$.
Thus, we proved that $|I|>1$.
Choose disjoint nonempty sets $I_{1}$ and $I_{2}$ such
that $I_{1}\cup I_{2} = I$.
Put $I' = [m]\setminus I$.
Since $w(\alpha_{I_{1}}),w(\alpha_{I_{2}})\in B_{0}$
and $w$ preserves $R$, we obtain 
$w(\alpha_{I'})\in B_{1}$.
Considering 
$w(\alpha_{I'})$, 
$w(\alpha_{I})$, 
and $w(b_{1},\dots,b_{1})$, 
we derive that 
$w(\alpha_{I})\in B_{0}$, 
which contradicts our assumption.
\end{proof}

\begin{lem}\label{pWNUblockerAvoids}
A $p$-WNU-blocker $R$ is not preserved by any 
idempotent WNU operation $w$ of arity $n$, where 
$p$ divides $n$.
\end{lem}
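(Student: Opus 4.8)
The plan is to mimic the structure of the proof of Lemma~\ref{WNUblockerAvoidsWNU}, but to exploit the affine structure of a $p$-WNU-blocker and the divisibility hypothesis $p \mid n$. First I would set up notation: let $w$ be an idempotent WNU of arity $n$ with $p \mid n$, and suppose for contradiction that $w$ preserves the $p$-WNU-blocker $R$ built from a surjection $\varphi \colon S \to \mathbb{Z}_p^s$. The relation $R$ says that $(a_1,a_2,a_3,a_4)$ lies in $R$ iff all entries are in $S$ and $\varphi(a_1)+\varphi(a_2) = \varphi(a_3)+\varphi(a_4)$ in $\mathbb{Z}_p^s$. The key observation is that the first projection of $R$ is exactly $S$, and, because $R$ essentially transports the relation $x_1 + x_2 = x_3 + x_4$ on the group $\mathbb{Z}_p^s$, the operation $w$ must induce a well-defined operation $\bar w$ on $\mathbb{Z}_p^s$ compatible with that relation. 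Making this ``induced operation on the quotient'' precise is the crux: one needs that $\varphi(a) = \varphi(a')$ forces $\varphi(w(\dots,a,\dots)) = \varphi(w(\dots,a',\dots))$, which follows by plugging suitable $4 \times n$ matrices of elements of $S$ into $R$ and using idempotency, exactly as in the additivity argument inside the proof of Lemma~\ref{LinearWNU}.

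Next I would invoke Lemma~\ref{LinearWNU}. Once we know $\bar w$ is an idempotent WNU on the abelian group $(\mathbb{Z}_p^s; +)$ preserving the relation $\{(a_1,a_2,a_3,a_4) \mid a_1+a_2=a_3+a_4\}$, Lemma~\ref{LinearWNU} gives $\bar w(x_1,\dots,x_n) = t\cdot(x_1+\dots+x_n)$ for some positive integer $t$, and since $w$ is idempotent we get $t \cdot n \equiv 1 \pmod{k}$ where $k$ is the exponent of $\mathbb{Z}_p^s$, i.e. $k = p$. But $p \mid n$, so $t \cdot n \equiv 0 \pmod p$, contradicting $t \cdot n \equiv 1 \pmod p$. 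This is the punchline, and it is clean once the reduction is in place.

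The main obstacle, and the step deserving the most care, is the well-definedness of $\bar w$: I must show that the value of $\varphi$ on $w$ applied to a tuple of elements of $S$ depends only on the $\varphi$-images of the arguments, and that the resulting map genuinely preserves the addition relation on $\mathbb{Z}_p^s$. For well-definedness, given two $n$-tuples $\bar a, \bar a' \in S^n$ with $\varphi(a_i) = \varphi(a_i')$ for all $i$, I would choose for each coordinate $i$ an auxiliary element $c_i \in S$ and feed the $4 \times n$ matrix whose $i$-th column is $(a_i, a_i', c_i, c_i)^{\mathsf T}$ into $R$ column-by-column (each column is in $R$ since $\varphi(a_i)+\varphi(c_i) = \varphi(a_i')+\varphi(c_i)$), concluding that $(w(\bar a), w(\bar a'), w(\bar c), w(\bar c)) \in R$, hence $\varphi(w(\bar a)) = \varphi(w(\bar a'))$; one also checks $w(\bar a) \in S$ because $\proj_1(R) = S$ (here we need $S$ itself to be the first projection, which holds since $\varphi$ is surjective, so for any $a \in S$ there are $a_2,a_3,a_4 \in S$ with the tuple in $R$). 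That $\bar w$ respects the relation then follows similarly by applying $w$ coordinatewise to matrices of tuples from $R$. After these two verifications the remainder is the short computation above.

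\begin{proof}
Suppose for contradiction that an idempotent WNU operation $w$ of arity $n$, with $p \mid n$, preserves a $p$-WNU-blocker $R$ associated with a surjection $\varphi \colon S \to \mathbb{Z}_p^s$.

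First note that $S$ is a subuniverse, being $\proj_1(R)$: for $a \in S$, pick by surjectivity of $\varphi$ elements $a_2, a_3, a_4 \in S$ with $\varphi(a_2)=0$ and $\varphi(a_3)+\varphi(a_4)=\varphi(a)$; then $(a,a_2,a_3,a_4) \in R$, so $a \in \proj_1(R)$, and conversely $\proj_1(R) \subseteq S$ by definition. Hence $w$ applied to any tuple from $S$ stays in $S$.

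Define $\bar w \colon (\mathbb{Z}_p^s)^n \to \mathbb{Z}_p^s$ as follows. Given $u_1,\dots,u_n \in \mathbb{Z}_p^s$, choose $a_i \in S$ with $\varphi(a_i) = u_i$ and set $\bar w(u_1,\dots,u_n) := \varphi(w(a_1,\dots,a_n))$. To see this is well defined, suppose $\varphi(a_i) = \varphi(a_i')$ for all $i$, with $a_i, a_i' \in S$. For each $i$ choose $c_i \in S$; then the column $(a_i, a_i', c_i, c_i)$ satisfies $\varphi(a_i)+\varphi(c_i) = \varphi(a_i')+\varphi(c_i)$, so it lies in $R$. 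Applying $w$ coordinatewise to these $n$ columns and using that $w$ preserves $R$, we get $(w(\bar a), w(\bar a'), w(\bar c), w(\bar c)) \in R$, hence $\varphi(w(\bar a)) + \varphi(w(\bar c)) = \varphi(w(\bar a')) + \varphi(w(\bar c))$, i.e. $\varphi(w(\bar a)) = \varphi(w(\bar a'))$.

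Next, $\bar w$ preserves the relation $\rho = \{(x_1,x_2,x_3,x_4) \mid x_1+x_2 = x_3+x_4\}$ on $\mathbb{Z}_p^s$. Indeed, take $\alpha_1,\dots,\alpha_n \in \rho$, and for each $j$ choose $\beta_j \in R$ with $\varphi$-image $\alpha_j$ coordinatewise (possible since $\varphi$ is surjective and, given any $(x_1,x_2,x_3,x_4)\in\rho$, any preimages $a_1,a_2,a_3$ of $x_1,x_2,x_3$ together with a preimage $a_4$ of $x_4$ give a tuple in $R$). Since $w$ preserves $R$, $w(\beta_1,\dots,\beta_n) \in R$, and taking $\varphi$ coordinatewise shows $\bar w(\alpha_1,\dots,\alpha_n) \in \rho$.

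Finally, $\bar w$ is an idempotent WNU on $(\mathbb{Z}_p^s;+)$: idempotency is immediate from idempotency of $w$, and the WNU identities for $\bar w$ follow from those for $w$ by applying $\varphi$. By Lemma~\ref{LinearWNU}, $\bar w(x_1,\dots,x_n) = t\cdot(x_1+\dots+x_n)$ for some positive integer $t$. Idempotency of $\bar w$ gives $t \cdot n \cdot u = u$ for all $u \in \mathbb{Z}_p^s$; since $\mathbb{Z}_p^s$ has exponent $p$ (and $s \ge 1$), this forces $t \cdot n \equiv 1 \pmod p$. But $p \mid n$ implies $t \cdot n \equiv 0 \pmod p$, a contradiction.
\end{proof}
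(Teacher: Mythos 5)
Your proof is correct and follows essentially the same route as the paper: pass to the quotient $\mathbb Z_p^s$ via $\varphi$, apply Lemma~\ref{LinearWNU}, and derive a contradiction with idempotency from $p\mid n$ (the paper obtains well-definedness more quickly by noting that the kernel of $\varphi$ is the pp-definable congruence $R(x,x,x,y)$, so $w/\sigma$ is automatically well defined). The only blemish is a coordinate-ordering slip in your well-definedness step: the column lying in $R$ should be $(a_i,c_i,a_i',c_i)$ rather than $(a_i,a_i',c_i,c_i)$, after which the displayed equality you derive is exactly right.
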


\begin{proof}
Assume that $R$ is preserved by $w$, where $R$ is defined by
$$
\{(a_{1},a_{2},a_3,a_{4}\}\mid 
a_{1},a_{2},a_{3},a_{4}\in S, 
\varphi(a_{1})+\varphi(a_{2}) = \varphi(a_{3})+\varphi(a_{4})\}.$$
Then $w$ preserves 
$S$, which can be pp-defined by $R(x,x,x,x)$.
Let $\sigma(x,y)$ be the equivalence relation on $S$ defined by
$\varphi(x) = \varphi(y)$,
which can be pp-defined by $R(x,x,x,y)$.
Then the quotient 
$w/\sigma$ preserves the relation 
$(y_{1}+ y_{2} = y_{3} +  y_{4})$ on 
$S/\sigma$.

By Lemma~\ref{LinearWNU}, 
$(w/\sigma)(x_{1},\dots,x_{n}) = t\cdot(x_{1}+ \dots + x_{n})$.
Since 
$p$ divides $n$, 
for every $a\in S$
the element 
$w(a,a,\dots,a)$ is from the equivalence class of $\sigma$ corresponding to $(0,\dots,0)$, 
which contradicts the idempotency.
\end{proof}

\begin{lem}\label{EssentiallyUnaryImpliesWNUBlocker}
Suppose 
$\algA$ is a finite idempotent algebra, 
$\alg B\in \HS(\algA)$ is an essentially unary algebra of size at least 2,
then there exists a WNU-blocker $R\in \Inv(\algA)$.
\end{lem}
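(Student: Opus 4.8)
The goal is to produce a WNU-blocker in $\Inv(\algA)$ from an essentially unary algebra $\alg B \in \HS(\algA)$ of size at least 2.

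\medskip

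The plan is to start by unwinding what $\alg B \in \HS(\algA)$ means: there is a subalgebra $\alg S \le \algA$ and a congruence $\sigma$ on $\alg S$ with $\alg B \cong \alg S/\sigma$. Since $\alg B$ is essentially unary and $|B| \ge 2$, every term operation of $\alg B$ depends on at most one variable; in particular, composing with the diagonal, the unary term operations of $\alg B$ together with the first-projection-type behaviour describe all of $\Clo(\alg B)$. I would pick two distinct elements $b_0, b_1 \in B$ and pull them back: choose $B_0, B_1 \subseteq S$ to be two distinct $\sigma$-classes (the preimages of $b_0$ and $b_1$). These are nonempty and disjoint subsets of $A$. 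The candidate WNU-blocker is $R = (B_0 \cup B_1)^3 \setminus (B_0^3 \cup B_1^3)$, and the task is to show $R \in \Inv(\algA)$, i.e. that every basic operation $f$ of $\algA$ preserves $R$.

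\medskip

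To verify preservation, take a basic operation $f$ of arity $k$ and tuples $\bar a^{(1)}, \dots, \bar a^{(k)} \in R$; each $\bar a^{(j)} = (a^{(j)}_1, a^{(j)}_2, a^{(j)}_3, \dots)$ — actually each is a triple $(a^{(j)}_1, a^{(j)}_2, a^{(j)}_3)$ whose three entries lie in $B_0 \cup B_1 \subseteq S$ and are not all in $B_0$ nor all in $B_1$. Since $S$ is a subuniverse, applying $f$ coordinatewise keeps all three resulting entries in $S$; moreover, since each entry lies in $B_0 \cup B_1$, a union of $\sigma$-classes mapping into $\{b_0, b_1\}$, the $\sigma$-class of $f$'s output on coordinate $i$ is determined by $f$'s image $\bar f^{(B)}$ (the corresponding term of $\alg B = \alg S/\sigma$) applied to the pattern of $b_0$'s and $b_1$'s in coordinate $i$ across the $k$ inputs. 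So the whole question reduces to: the term operation of $\alg B$ corresponding to $f$, applied to three $k$-tuples over $\{b_0,b_1\}$ which are not all-$b_0$ and not all-$b_1$ componentwise, cannot produce a triple that is all $b_0$ or all $b_1$. This is exactly the statement that the Not-all-equal relation on $\{b_0, b_1\}$ is preserved by every essentially unary operation — which holds because an essentially unary operation either is a projection onto one coordinate (and then the $i$-th output triple equals the $i$-th input triple, which is in NAE by hypothesis) or factors through a unary map $u$ of one coordinate; in the idempotent case $u$ fixes $b_0$ and $b_1$, so again the output triple on coordinate $i$ equals the $i$-th input triple permuted through $u$, hence still not constant. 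I would make this precise by noting that for each coordinate $i \in \{1,2,3\}$, the triple on that coordinate of $f(\bar a^{(1)}, \dots, \bar a^{(k)})$, reduced mod $\sigma$, is $t_B(\text{stuff})$ where $t_B$ depends on a single argument, so the $\sigma$-classes of the three output coordinates are a restriction-through-$u$ of the three input coordinates' $\sigma$-classes in one fixed input slot — and since that input slot's triple is in NAE, so is the output.

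\medskip

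The main obstacle I anticipate is the careful bookkeeping in the reduction from ``$f$ on $A$'' to ``a unary term of $\alg B$ on one coordinate'', in particular handling the case where the essentially unary term of $\alg B$ is a genuine projection versus a projection composed with a nontrivial unary map, and making sure the idempotency of $\algA$ (hence of $\alg B$) is used to pin down $u(b_0) = b_0$ when needed — actually idempotency alone does not force $u$ to fix $b_0$, but it does force $u$ to be a permutation on a generating set, and more simply: since NAE on a two-element set is preserved by \emph{every} unary operation that maps $\{b_0,b_1\}$ into $\{b_0,b_1\}$ as long as it is injective there, and an essentially unary idempotent algebra's unary terms restricted to $\{b_0,b_1\}$ need not be injective — so the truly safe route is to observe that $\{b_0,b_1\}$ itself, or rather a two-element subquotient of $\alg B$, is again essentially unary and we may replace $\alg B$ by it, or alternatively just use that the relevant invariant is the \emph{binary} NAE-type relation and any unary self-map preserving $B_0 \cup B_1$'s two-class structure either collapses the two classes (impossible, as the subquotient has size 2 by construction after shrinking) or permutes them. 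Cleanly: first shrink $\alg B$ to a two-element essentially unary algebra (a quotient of a two-generated subalgebra), which by idempotency has only the two projections and the identity as unary terms, so every operation is a projection; then the NAE relation on those two elements is trivially preserved, and pulling back through $S$ and $\sigma$ gives $R \in \Inv(\algA)$. I would lead with that reduction to make the rest routine.
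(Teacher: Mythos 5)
Your construction ($\alg S\le\algA$, congruence $\sigma$, two classes $B_0,B_1$, and $R=(B_0\cup B_1)^3\setminus(B_0^3\cup B_1^3)$) and its verification are correct and are exactly the paper's approach — the paper simply asserts the invariance without spelling out the check you perform. One remark: your worry that ``idempotency alone does not force $u$ to fix $b_0$'' is unfounded, since for an essentially unary idempotent operation $f(x_1,\dots,x_k)=u(x_{i_0})$ one has $u(x)=f(x,\dots,x)=x$, so every basic operation of $\algA$ induces a genuine projection on $\alg S/\sigma$ and your detour through a two-element subquotient, while harmless, is unnecessary.
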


\begin{proof}
Assume that 
$\alg B\cong \alg S/\sigma$ and $\alg S\le \algA$.
Let $B_{0}$ and $B_{1}$ be two equivalence classes of $\sigma$.
Then 
the relation 
$(B_{0}\cup B_{1})^{3}\setminus (B_{0}^{3}\cup B_{1}^{3})$ is an invariant of 
$\algA$.
\end{proof}

\begin{lem}\label{pAffineImpliespWNUBlocker}
Suppose 
$\algA$ is a finite idempotent algebra and
$\alg B\in \HS(\algA)$ is a $p$-affine algebra.
Then there exists a $p$-WNU-blocker $R\in \Inv(\algA)$.
\end{lem}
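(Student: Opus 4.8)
The plan is to imitate the proof of Lemma~\ref{EssentiallyUnaryImpliesWNUBlocker}: I would take the relation witnessing that $\alg B$ is $p$-affine and pull it back along the canonical surjection from $S$ onto $\alg B\cong\alg S/\sigma$, obtaining a relation on $S\subseteq A$ that is both a $p$-WNU-blocker and invariant under $\algA$.

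First I would fix $\alg S\le\algA$, a congruence $\sigma$ on $\alg S$, and an isomorphism $\iota\colon\alg S/\sigma\to\alg B$, and let $q\colon S\to S/\sigma$ be the quotient map. Since $\alg B$ is $p$-affine, there are $s\ge 1$, a bijection $\varphi_{0}\colon B\to\mathbb Z_{p}^{s}$ with $(B;\oplus,\ominus)\cong(\mathbb Z_{p}^{s};+,-)$, and the relation $Q=\{(b_{1},b_{2},b_{3},b_{4})\mid b_{1}\oplus b_{2}=b_{3}\oplus b_{4}\}$ lies in $\Inv(\alg B)$. Put $\varphi=\varphi_{0}\circ\iota\circ q\colon S\to\mathbb Z_{p}^{s}$; as a composite of a surjection with two bijections it is surjective. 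Then
\[R=\{(a_{1},a_{2},a_{3},a_{4})\mid a_{1},a_{2},a_{3},a_{4}\in S,\ \varphi(a_{1})+\varphi(a_{2})=\varphi(a_{3})+\varphi(a_{4})\}\]
is by definition a $p$-WNU-blocker (with data $S$, $s$, $\varphi$), so everything reduces to checking $R\in\Inv(\algA)$.

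To see this I would rewrite $R$ without reference to $\varphi$. Since $\varphi_{0}$ is an isomorphism from $(B;\oplus)$ onto $(\mathbb Z_{p}^{s};+)$, the equation $\varphi(a_{1})+\varphi(a_{2})=\varphi(a_{3})+\varphi(a_{4})$ holds iff $(\iota q(a_{1}),\ldots,\iota q(a_{4}))\in Q$; pulling $Q$ back along $\iota$ and then along $q$ therefore produces exactly $R$. The first pullback lands in $\Inv(\alg S/\sigma)$ because $\iota$ is an isomorphism and $Q\in\Inv(\alg B)$; the second lands in $\Inv(\alg S)$ because for every basic operation $f$ of $\algA$ the $\sigma$-class tuple of $f$ applied coordinatewise to tuples from $S$ is computed by the induced basic operation of $\alg S/\sigma$, which preserves the first pullback. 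Finally, $S$ being a subuniverse of $\algA$, every relation on $S$ preserved by $\alg S$ is automatically preserved by $\algA$ (a basic operation of $\algA$ applied coordinatewise to tuples from $S$ stays in $S$ and there agrees with a basic operation of $\alg S$). Hence $R\in\Inv(\algA)$.

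All the steps here are routine; the only point requiring care is the bookkeeping — carrying the two bijections $\iota$ and $\varphi_{0}$ and the quotient map $q$ through at once, so that the relation witnessing $p$-affinity of $\alg B$ is transported correctly to a relation on $S\subseteq A$ whose $\algA$-invariance is immediate. I do not expect any genuine obstacle beyond this.
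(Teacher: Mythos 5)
Your proof is correct and follows essentially the same route as the paper: the paper also takes $\alg S\le\algA$ with $\alg B\cong\alg S/\sigma$, lets $\varphi$ be the ``natural mapping'' from $S$ to $\mathbb Z_{p}^{s}$ (i.e.\ exactly your composite $\varphi_{0}\circ\iota\circ q$), defines $R$ as the pullback of $x_{1}\oplus x_{2}=x_{3}\oplus x_{4}$, and concludes $R\in\Inv(\algA)$ from $S$ being a subuniverse and $\sigma$ a congruence. Your version merely spells out the bookkeeping that the paper leaves implicit.
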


\begin{proof}
Assume that $\alg B\cong \alg S/\sigma$ and $\alg S\le \algA$.
Since $\alg B$ is $p$-affine,
we have 
$(B;\oplus,\ominus)\cong (\mathbb Z_{p}^{s};+,-)$.
Let 
$\varphi$ be a natural 
mapping from $S$ to $\mathbb Z_{p}^{s}$ 
Put 
$$R = \{(a_{1},a_{2},a_3,a_{4}\}\mid 
a_{1},a_{2},a_{3},a_{4}\in S, 
\varphi(a_{1})+\varphi(a_{2}) = \varphi(a_{3})+\varphi(a_{4})\}.$$
Note that $R$ is a $p$-WNU-blocker.
Since 
$S$ is a subuniverse of $\alg A$, $\sigma$ is a congruence on $\alg S$,  and $R$ restricted to $S$ is almost the relation from the definition of a $p$-affine algebra,
we have
$R\in\Inv(\algA)$. 
\end{proof}

\begin{lem}\label{WNUimpliesTWOWNU}
Suppose 
$R\in\Inv(\algA)$ is a WNU-blocker.
Then there exists 
a $2$-WNU-blocker $R'\in\Inv(\algA)$.
\end{lem}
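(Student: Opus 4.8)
The plan is to start from a WNU-blocker $R=(B_0\cup B_1)^3\setminus(B_0^3\cup B_1^3)$ in $\Inv(\algA)$ and to extract from it a $2$-WNU-blocker, i.e.\ a $2$-WNU-blocker in the sense of the definition above, namely a relation of the form $\{(a_1,a_2,a_3,a_4)\mid a_i\in S,\ \varphi(a_1)+\varphi(a_2)=\varphi(a_3)+\varphi(a_4)\}$ for a surjective $\varphi\colon S\to\mathbb Z_2^s$. The natural candidate is to take $S=B_0\cup B_1$ and to let $\varphi$ be the map sending $B_0$ to $0$ and $B_1$ to $1$ in $\mathbb Z_2$ (so $s=1$). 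The relation this $\varphi$ defines is $R'=\{(a_1,a_2,a_3,a_4)\in S^4\mid a_1+a_2\equiv a_3+a_4\pmod 2\}$, where ``$a_i\equiv 0$'' means $a_i\in B_0$. Concretely $R'$ consists of those quadruples over $B_0\cup B_1$ in which the number of $B_1$-entries among the first two equals the number of $B_1$-entries among the last two mod $2$; equivalently the quadruples with an even number of $B_1$-entries whose ``pattern'' is $0000,1100,0011,1010,0101,0110,1001,1111$ (plus symmetric completions). The task is to pp-define this $R'$ over $\{R\}\cup\{=\}$, which will show $R'\in\RelClo(\{R\})\subseteq\Inv(\algA)$ by the Galois connection, since $R\in\Inv(\algA)$.

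The key step is the pp-definition. First note $S=B_0\cup B_1$ is pp-definable from $R$ by $S(x)\equiv\exists y\,\exists z\,R(x,y,z)$ (since every element of $B_0\cup B_1$ appears in some tuple of $R$ by idempotency/nonemptiness of $B_0,B_1$, while no element outside appears at all). Next, the ternary ``not-all-equal'' relation $R=(B_0\cup B_1)^3\setminus(B_0^3\cup B_1^3)$ already encodes NAE on the two blocks; from NAE over a Boolean-like domain one can pp-define the full ternary XOR-type relations. The cleanest route is: using $R$ and equality one first pp-defines the relation $E(x,y)\equiv\exists u\,\big(R(x,y,u)\wedge R(x,y,u')\wedge\dots\big)$ forcing ``$x,y$ in different blocks'', and the relation ``$x,y$ in the same block'' as $\exists z\,(S(x)\wedge S(y)\wedge \neg\text{different})$—but negation is not allowed in pp-formulas, so instead I would directly build the $4$-ary parity relation. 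A convenient gadget: $R'(x_1,x_2,x_3,x_4)$ holds iff there exists $w$ with $R(x_1,x_2,w)$ false-pattern avoided... Rather, the standard trick is that over a two-block domain, NAE$_3$ pp-defines the ``affine over $\mathbb Z_2$'' relation of any arity: indeed $a_1+a_2+a_3=1$ is $R(a_1,a_2,a_3)$ intersected appropriately, and $a_1+\dots+a_k=c$ for larger $k$ follows by chaining with auxiliary existentially quantified variables in the usual way (each new variable ``stores'' a partial sum, linked by a ternary XOR relation which is itself obtained from $R$ by identifying/permuting arguments). Once we have the $4$-ary relation $a_1+a_2+a_3+a_4=0$ over the two blocks, that is exactly $R'$ after renaming (since $a_1+a_2=a_3+a_4$ in $\mathbb Z_2$ is the same as $a_1+a_2+a_3+a_4=0$).

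So the concrete plan is: (i) pp-define $S$ from $R$; (ii) pp-define from $R$ the ternary relation $T(x,y,z)\equiv(x+y+z=1)$ over the two blocks — this is literally $R$ restricted to $S^3$, and since $R\subseteq S^3$ already, $T=R$; (iii) pp-define $U(x,y,z)\equiv(x+y+z=0)$ over $S$ by $\exists u\,(T(x,y,u)\wedge u'=u\wedge T(u,z,w)\wedge w=\text{the third block element})$—more carefully, $x+y+z=0$ iff $\exists u\,(x+y+u=1)\wedge(u+z+\bar 0=1)$ using a fixed representative; the honest version uses two chained copies of $T$ with one existential variable: $x+y+z=0 \iff \exists u\,(x+y+u=1\ \wedge\ u+z+e=1)$ where $e$ ranges over $B_0$, pp-definable as $\exists v\,w\,(T(e,v,w)\wedge\dots)$ pinning $e$ to the $0$-block via a unary pp-definable constraint; (iv) finally pp-define $R'(x_1,x_2,x_3,x_4)\equiv\exists u\,(U(x_1,x_2,u)\wedge T(u,x_3,x_4'))$ style chaining so that $x_1+x_2=x_3+x_4$. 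Then $R'\in\RelClo(\{R\})$, hence $R'\in\Inv(\algA)$, and $R'$ is by construction a $2$-WNU-blocker with $S=B_0\cup B_1$ and $\varphi$ the block-indicator.

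The main obstacle is purely bookkeeping: the ``two-block domain'' is not literally $\{0,1\}$ but a quotient, so I cannot use elements of $\mathbb Z_2$ inside the pp-formula and must instead carry around existentially quantified ``flag'' variables and the pp-definable unary relations $S$, $B_0$-membership, $B_1$-membership to simulate constants $0$ and $1$. One subtlety: $B_0$ and $B_1$ individually need not be pp-definable from $R$ (swapping them is a symmetry of $R$), but this does not matter — the relation $R'$ we want is itself invariant under swapping $B_0\leftrightarrow B_1$ (it only sees parities), so every constraint we write will be symmetric under that swap, and we only ever need ``same block'' / ``different block'' predicates, both of which are pp-definable from $R$ (different block: $D(x,y)\equiv\exists z\,R(x,y,z)\wedge R(x,z,y)\wedge\dots$; same block: chain two $D$'s). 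Once those two binary predicates are in hand the $4$-ary parity relation $R'$ is immediate: $R'(x_1,x_2,x_3,x_4)$ iff ``the pair $(x_1,x_2)$ and the pair $(x_3,x_4)$ have the same same/different status'', which one writes as $\big(\mathrm{SameBlock}(x_1,x_2)\wedge\mathrm{SameBlock}(x_3,x_4)\big)\vee\big(D(x_1,x_2)\wedge D(x_3,x_4)\big)$ — and although disjunction is also not a pp-operation, this disjunction can be realized by an existential variable ranging over $S$ that ``guesses'' the common parity bit and forces it on both sides via the pp-definable predicates, exactly as in the classic reduction of NAE-3-SAT gadgets. I expect this guessing-variable construction to be the one genuinely fiddly point; everything after it is routine.
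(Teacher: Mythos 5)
Your target is the right one ($R'$ should be the $4$-ary even-parity relation over $S=B_{0}\cup B_{1}$ with $\varphi$ the block indicator into $\mathbb Z_{2}$), and you correctly spot the key obstruction: the swap $B_{0}\leftrightarrow B_{1}$ is an automorphism of $R$, so only swap-invariant relations can be pp-defined. But the concrete constructions you sketch do not survive that very obstruction, and the one step you defer as ``genuinely fiddly'' is in fact the entire content of the proof. First, your step (ii) identifies $R$ with the relation $x+y+z=1$; this is false ($R$ is NAE on the blocks, which as a subset of $\{0,1\}^{3}$ excludes only $(0,0,0)$ and $(1,1,1)$, whereas $x+y+z=1$ contains only the three weight-one tuples). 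Worse, neither ternary affine relation $x\oplus y\oplus z=c$ is pp-definable from $R$ at all, because neither is invariant under the block swap (flipping three bits changes the parity); so the whole ``chain partial sums through ternary XOR gadgets'' route cannot be repaired. Second, your fallback via the binary predicates $D$ (``different block'') and $E$ (``same block'') also fails: on $\{0,1\}$ these are $x\oplus y=1$ and $x\oplus y=0$, both of which are preserved by the majority operation, while the $4$-ary parity relation is not; hence $R'$ is not pp-definable from $D$ and $E$ alone, and the naive guessing formula $\exists u\,(R(x_{1},x_{2},u)\wedge R(x_{3},x_{4},u))$ wrongly excludes the pattern $(0,0,1,1)$, which must lie in $R'$.

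The paper's proof supplies exactly the missing gadget. It first defines
\[
R''(x_1,x_2,x_3,x_4) = \exists y\,\exists z\;
R(x_{1},x_{2},y)\wedge R(x_{3},x_{3},z)\wedge R(x_{4},y,z),
\]
and checks that $R''=S^{4}\setminus\bigl((B_{0}^{3}\times B_{1})\cup(B_{1}^{3}\times B_{0})\bigr)$: the conjunct $R(x_3,x_3,z)$ forces $z$ into the block opposite to $x_3$, the conjunct $R(x_1,x_2,y)$ forces $y$ only when $x_1,x_2$ lie in the same block, and $R(x_4,y,z)$ then excludes precisely the two bad patterns. This relation is swap-invariant even though it singles out an asymmetric pattern, which is the trick your symmetric predicates cannot reproduce. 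Intersecting the four permuted copies $R''(x_1,x_2,x_3,x_4)$, $R''(x_1,x_2,x_4,x_3)$, $R''(x_3,x_4,x_1,x_2)$, $R''(x_3,x_4,x_2,x_1)$ removes exactly the eight odd-parity patterns and yields the $2$-WNU-blocker. To complete your argument you would need to produce some such explicit pp-definition; the existence argument via the Galois connection (the polymorphisms of a WNU-blocker restricted to $S$ act as projections or block swaps, all of which preserve $4$-ary even parity) is a legitimate alternative, but you did not state it, and the constructions you did state are incorrect.
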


\begin{proof}
Suppose $R = (B_{0}\cup B_{1})^{3}\setminus (B_{0}^{3}\cup B_{1}^{3})$. Put $S = B_{0}\cup B_{1}$ and
$$
R''(x_1,x_2,x_3,x_4) = 
\exists y
\exists z\;
R(x_{1},x_{2},y)\wedge
R(x_{3},x_{3},z)\wedge
R(x_{4},y,z).
$$
It follows from the definition that 
$$R'' = S^{4}\setminus
((B_{0}\times B_{0}\times B_{0}\times B_{1})\cup 
(B_{1}\times B_{1}\times B_{1}\times B_{0})).$$
Then the required 2-WNU-blocker $R'$ can be defined by 
\begin{align*}
R'(x_1,x_2,x_3,x_4) = 
R''(x_1,x_2,x_3,x_4)&\wedge
R''(x_1,x_2,x_4,x_3)\wedge\\
R''(&x_3,x_4,x_1,x_2)\wedge
R''(x_3,x_4,x_2,x_1).
\end{align*}
In fact, if
$\varphi:S\to \{0,1\}$ maps elements of $B_{0}$ and 
$B_{1}$ to 0 and 1, respectively,
then
$$R' = \{(a_{1},a_{2},a_3,a_{4}\}\mid 
a_{1},a_{2},a_{3},a_{4}\in S, 
\varphi(a_{1})+\varphi(a_{2}) = \varphi(a_{3})+\varphi(a_{4})\}.$$
\end{proof}

\subsection{Main theorems}\label{MainWNUTheoremsSubsection}

In this subsection we prove three characterizations announced earlier.

\begin{thm}\label{CharacterizationOfNWNUTHM}
Suppose 
$\algA$ is a finite idempotent algebra and $n\ge 3$.
Then the following conditions are equivalent:
\begin{enumerate}
\item[(1)] there does not exist a WNU term operation of arity $n$;
\item[(2)] there exists an essentially unary algebra $\alg B\in \HS(\algA)$ of size at least 2, or
there exists a $p$-affine algebra $\alg B\in \HS(\algA)$, where $p$ divides $n$;
\item[(3)] there exists $R\in\Inv(\algA)$
that is a WNU-blocker or a $p$-WNU-blocker, where $p$ divides $n$.
\end{enumerate}
\end{thm}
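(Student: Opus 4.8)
The plan is to prove the three conditions equivalent by a cycle of implications, relying heavily on the lemmas collected in the previous two subsections. The whole theorem is essentially a packaging of those lemmas, so the work is in choosing the right order and making sure the ``$p$ divides $n$'' bookkeeping lines up.

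First I would show $(1)\Rightarrow(2)$. This is exactly Lemma~\ref{NoWNULemma}: if $\algA$ has no WNU term of arity $n\ge 3$, then there is an essentially unary $\alg B\in\HS(\algA)$ of size at least $2$, or a $p$-affine $\alg B\in\HS(\algA)$ with $p\mid n$. Nothing to do here but cite it.

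Next, $(2)\Rightarrow(3)$. Split into the two cases of~(2). If $\alg B\in\HS(\algA)$ is essentially unary of size at least $2$, then Lemma~\ref{EssentiallyUnaryImpliesWNUBlocker} gives a WNU-blocker $R\in\Inv(\algA)$. If $\alg B\in\HS(\algA)$ is $p$-affine with $p\mid n$, then Lemma~\ref{pAffineImpliespWNUBlocker} gives a $p$-WNU-blocker $R\in\Inv(\algA)$, and $p\mid n$ is carried over unchanged. Either way we land in~(3).

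Finally, $(3)\Rightarrow(1)$. Suppose $R\in\Inv(\algA)$ is a WNU-blocker or a $p$-WNU-blocker with $p\mid n$, and suppose for contradiction that $\algA$ has a WNU term operation $w$ of arity $n$; since $\algA$ is idempotent, $w$ is an idempotent WNU. If $R$ is a WNU-blocker, Lemma~\ref{WNUblockerAvoidsWNU} says $w$ cannot preserve $R$, a contradiction (here the arity $n\ge 3$ is needed only so that a WNU of that arity makes sense). If $R$ is a $p$-WNU-blocker with $p\mid n$, then Lemma~\ref{pWNUblockerAvoids} says $w$, being an idempotent WNU of arity divisible by $p$, cannot preserve $R$, again a contradiction. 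Hence no such $w$ exists, giving~(1). The main (and only real) obstacle is organizational: making sure that the prime $p$ in~(2) and~(3) is the \emph{same} $p$ dividing $n$, so that the chain of implications transports the divisibility condition faithfully; the ``essentially unary'' branch needs no divisibility, and the two branches never mix, so once the statement is set up with the disjunction in the right place, each arrow is a one-line invocation. Lemma~\ref{WNUimpliesTWOWNU} is not needed for this theorem but would be used if one wanted to normalize the $p$-WNU-blocker to a $2$-WNU-blocker when $n$ is even.
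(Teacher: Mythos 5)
Your proof is correct and follows exactly the paper's own argument: the cycle $(1)\Rightarrow(2)$ via Lemma~\ref{NoWNULemma}, $(2)\Rightarrow(3)$ via Lemmas~\ref{EssentiallyUnaryImpliesWNUBlocker} and~\ref{pAffineImpliespWNUBlocker}, and $(3)\Rightarrow(1)$ via Lemmas~\ref{WNUblockerAvoidsWNU} and~\ref{pWNUblockerAvoids}, with the same $p$ carried through each arrow. Nothing to change.
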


\begin{proof}
By Lemma~\ref{NoWNULemma}, 
(1) implies (2).
By Lemmas \ref{EssentiallyUnaryImpliesWNUBlocker} and 
\ref{pAffineImpliespWNUBlocker}, (2) implies (3).
By Lemmas \ref{WNUblockerAvoidsWNU} and \ref{pWNUblockerAvoids},
(3) implies (1).
\end{proof}

Note that in the next theorem the equivalence of 
the conditions 
(1) and (3) follows from \cite{miklos}, 
the equivalence of (1) and (2) is proved in \cite{cyclicterms} 
(see Theorem 4.2).

\begin{thm}\label{CharacterizationOfAWNUTHM}
For every finite idempotent algebra $\algA$ the following conditions are equivalent:
\begin{enumerate}
\item[(1)] there exits a WNU term operation;
\item[(2)]  there exits a WNU term operation of each prime arity $p>|A|$;
\item[(3)]  there does not exist an essentially unary algebra $\alg B\in \HS(\algA)$ of size at least 2;
\item[(4)]  there does not exist a WNU-blocker $R\in\Inv(\algA)$.
\end{enumerate}
\end{thm}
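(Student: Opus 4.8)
The strategy is to close the loop of four equivalences by proving $(1)\Rightarrow(2)\Rightarrow(3)\Rightarrow(4)\Rightarrow(1)$, reusing almost everything that has already been set up. The implications $(1)\Rightarrow(2)$ and $(2)\Rightarrow(1)$ are trivial in one direction and should be handled carefully in the other: $(2)\Rightarrow(1)$ is immediate since having a WNU of some arity is a special case of having a WNU term operation, so really the content is $(1)\Rightarrow(2)$, that a single WNU term forces WNU terms of all large prime arities. I would obtain this by combining Theorem~\ref{CharacterizationOfNWNUTHM} with the structural dichotomy: if $\algA$ has no WNU of prime arity $p>|A|$, then by that theorem there is an essentially unary $\alg B\in\HS(\algA)$ of size $\ge 2$, or a $q$-affine $\alg B\in\HS(\algA)$ with $q\mid p$; since $p$ is prime and $p>|A|\ge|B|$, the case $q=p$ is impossible (a $p$-affine algebra has size $p^s\ge p>|A|$, contradiction), so we are left with an essentially unary quotient of a subalgebra. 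But an essentially unary $\alg B\in\HS(\algA)$ of size $\ge 2$ means, via Lemma~\ref{EssentiallyUnaryImpliesWNUBlocker} and Lemma~\ref{WNUblockerAvoidsWNU}, that $\algA$ has \emph{no} WNU term operation at all, contradicting $(1)$.

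**Continuing the cycle.** For $(1)\Rightarrow(3)$: assume $\algA$ has a WNU term operation but, for contradiction, admits an essentially unary $\alg B\in\HS(\algA)$ of size $\ge 2$; then Lemma~\ref{EssentiallyUnaryImpliesWNUBlocker} produces a WNU-blocker $R\in\Inv(\algA)$, and Lemma~\ref{WNUblockerAvoidsWNU} says no idempotent WNU operation preserves $R$, contradicting $(1)$. For $(3)\Rightarrow(4)$: contrapositively, if there is a WNU-blocker $R\in\Inv(\algA)$, then $\algA$ has no idempotent WNU term operation by Lemma~\ref{WNUblockerAvoidsWNU}, so in particular no WNU of any arity $n\ge 3$ exists; pick any such $n$ and apply Lemma~\ref{NoWNULemma} to get either an essentially unary $\alg B\in\HS(\algA)$ of size $\ge 2$ --- which is exactly the negation of $(3)$ --- or a $p$-affine $\alg B\in\HS(\algA)$; in the latter case Lemma~\ref{pAffineImpliespWNUBlocker} gives a $p$-WNU-blocker, but it is harmless here, so to make the argument clean one instead runs Lemma~\ref{NoWNULemma} for \emph{two} different prime-power-coprime arities, or simply notes that a $p$-affine quotient yields, after passing to a sub-quotient of a power and invoking the easy argument, an essentially unary quotient too --- in fact the cleanest route is to observe that a $p$-affine algebra itself has an essentially unary quotient only if... hmm. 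Better: for $(4)\Rightarrow(3)$ directly (contrapositive of $(3)\Rightarrow(4)$ is $(4)\Rightarrow(3)$), use Lemma~\ref{EssentiallyUnaryImpliesWNUBlocker}: a nontrivial essentially unary $\alg B\in\HS(\algA)$ immediately yields a WNU-blocker, so $\neg(3)\Rightarrow\neg(4)$, i.e. $(4)\Rightarrow(3)$. And $(3)\Rightarrow(1)$: if no essentially unary nontrivial $\alg B\in\HS(\algA)$ exists, then pick a large prime arity $p>|A|$; by the reasoning above the $p$-affine alternative in Lemma~\ref{NoWNULemma} is impossible because $p\nmid p^s$ is false --- wait, $p\mid p$, so that does not work; instead use that a $p$-affine algebra has size divisible by $p$, hence size $\ge p>|A|\ge|B|$ is contradictory since $\alg B\in\HS(\algA)$ has $|B|\le|A|$. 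So Lemma~\ref{NoWNULemma} forces $\algA$ to have a WNU of arity $p$, a fortiori $(1)$ holds.

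**Assembling.** So the final write-up is short: state $(1)\Rightarrow(3)$ via Lemmas~\ref{EssentiallyUnaryImpliesWNUBlocker},~\ref{WNUblockerAvoidsWNU}; $(3)\Rightarrow(4)$ via the contrapositive and Lemma~\ref{EssentiallyUnaryImpliesWNUBlocker} (which gives $\neg(3)\Rightarrow\neg(4)$, equivalently $(4)\Rightarrow(3)$, so I will orient the cycle as $(1)\Rightarrow(3)\Rightarrow(4)\Rightarrow(2)\Rightarrow(1)$ or rearrange); the key non-formal step, $(3)\Rightarrow(1)$ and $(4)\Rightarrow(2)$, uses Lemma~\ref{NoWNULemma} at a prime arity $p>|A|$ together with the size bound $|p^s|\ge p>|A|\ge|B|$ ruling out the $p$-affine branch. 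The main obstacle is bookkeeping: making sure the cycle is genuinely closed and that the arity chosen is large prime so the $p$-affine case collapses, since a $p$-affine algebra of dimension $s$ has exactly $p^s$ elements and we need $p>|A|$ to contradict $|B|\le|A|$. Here is the proof.

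\begin{proof}
We show $(1)\Rightarrow(3)\Rightarrow(4)\Rightarrow(2)\Rightarrow(1)$.

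$(1)\Rightarrow(3)$. Suppose, for contradiction, that there is an essentially unary algebra $\alg B\in\HS(\algA)$ of size at least $2$. By Lemma~\ref{EssentiallyUnaryImpliesWNUBlocker} there exists a WNU-blocker $R\in\Inv(\algA)$. By Lemma~\ref{WNUblockerAvoidsWNU}, $R$ is not preserved by any idempotent WNU operation, so $\algA$ has no WNU term operation, contradicting $(1)$.

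$(3)\Rightarrow(4)$. We prove the contrapositive. If there is a WNU-blocker $R\in\Inv(\algA)$, then pick any arity $n\ge 3$. By Lemma~\ref{WNUblockerAvoidsWNU}, $\algA$ has no WNU term operation of arity $n$, so by Lemma~\ref{NoWNULemma} there exists an essentially unary algebra $\alg B\in\HS(\algA)$ of size at least $2$, or a $p$-affine algebra $\alg B\in\HS(\algA)$ with $p\mid n$. To exclude the second possibility it suffices to run this argument with a prime arity $n=p_{0}>|A|$: a $p$-affine algebra has size $p^{s}$ for some $s\ge 1$ and some prime $p$ dividing $n=p_{0}$, hence $p=p_{0}$ and $|B|=p_{0}^{s}\ge p_{0}>|A|$; but $\alg B\in\HS(\algA)$ implies $|B|\le|A|$, a contradiction. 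Therefore there is an essentially unary algebra $\alg B\in\HS(\algA)$ of size at least $2$, i.e.\ $(3)$ fails.

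$(4)\Rightarrow(2)$. Let $p$ be any prime with $p>|A|$. If $\algA$ had no WNU term operation of arity $p$, then by Lemma~\ref{NoWNULemma} there would exist an essentially unary algebra $\alg B\in\HS(\algA)$ of size at least $2$, or a $q$-affine algebra $\alg B\in\HS(\algA)$ with $q\mid p$. The second case is impossible, since then $q=p$ and $|B|=p^{s}\ge p>|A|$, contradicting $|B|\le|A|$. In the first case Lemma~\ref{EssentiallyUnaryImpliesWNUBlocker} yields a WNU-blocker $R\in\Inv(\algA)$, contradicting $(4)$. Hence $\algA$ has a WNU term operation of arity $p$ for every prime $p>|A|$.

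$(2)\Rightarrow(1)$. Immediate: a WNU term operation of some prime arity $p>|A|$ is in particular a WNU term operation.
\end{proof}
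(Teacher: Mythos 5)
Your proof is correct and uses exactly the paper's ingredients: Lemmas~\ref{WNUblockerAvoidsWNU} and \ref{EssentiallyUnaryImpliesWNUBlocker} for the easy implications, and Lemma~\ref{NoWNULemma} at a prime arity $p>|A|$ with the size bound $p\le|B|\le|A|<p$ to kill the affine branch. The only difference is that you orient the cycle as $(1)\Rightarrow(3)\Rightarrow(4)\Rightarrow(2)$ rather than the paper's $(1)\Rightarrow(4)\Rightarrow(3)\Rightarrow(2)$, which makes you run the hard Lemma~\ref{NoWNULemma} argument twice (in $(3)\Rightarrow(4)$ and again in $(4)\Rightarrow(2)$) where the paper needs it only once.
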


\begin{proof}
$(1)\Rightarrow (4)$ is by  Lemma~\ref{WNUblockerAvoidsWNU}.
$(4)\Rightarrow (3)$ is by Lemma~\ref{EssentiallyUnaryImpliesWNUBlocker}.

$(3)\Rightarrow (2)$.
Assume that (2) does not hold, 
then there exists a prime number $p>|A|$
such that $\algA$ does not have a WNU term operation of arity $p$.
By Lemma~\ref{NoWNULemma}, 
there exists a nontrivial essentially unary algebra $\alg B\in \HS(\algA)$, or
there exists a $p'$-affine algebra $\alg B\in \HS(\algA)$, where $p'$ divides $p$.
Since $p'\le |B|\le |A|<p$ and $p$ is prime, 
the second condition cannot hold.
Hence, 
there exists an essentially unary algebra $\alg B\in \HS(\algA)$ of size at least 2, contradicting (3).

$(2)\Rightarrow (1)$ is obvious.
\end{proof}

\begin{thm}\label{CharacterizationOfALLWNUTHM}
For every finite idempotent algebra $\algA$ the following conditions are equivalent:
\begin{enumerate}
\item[(1)] there exits a WNU term operation of every arity $n\ge 3$;
\item[(2)] for some $k\ge 3$ there exits a WNU term operation of every arity $n\ge k$;
\item[(3)] there does not exist a $p$-WNU-blocker $R\in\Inv(\algA)$;
\item[(4)] there does not exist an essentially unary algebra $\alg B\in \HS(\algA)$ of size at least 2,
and there does not exist a $p$-affine algebra $\alg B\in \HS(\algA)$;
\item[(5)] every subalgebra 
$\alg B\le \alg A$ 
of size at least 2 
has a nontrivial strong subuniverse. 
\end{enumerate}
\end{thm}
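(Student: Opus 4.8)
The plan is to establish the cycle of implications
$(1)\Rightarrow(2)\Rightarrow(3)\Rightarrow(4)\Rightarrow(5)\Rightarrow(1)$. Two of these links need essentially no work: $(1)\Rightarrow(2)$ holds by taking $k=3$, and $(5)\Rightarrow(1)$ is precisely Lemma~\ref{NoWNULemmaCopy}. So the content is concentrated in the three middle implications, each of which I would prove by contraposition.

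For $(2)\Rightarrow(3)$, suppose a $p$-WNU-blocker $R\in\Inv(\algA)$ exists. Every term operation of $\algA$ preserves $R$ and, since $\algA$ is idempotent, is itself idempotent; hence by Lemma~\ref{pWNUblockerAvoids} the algebra $\algA$ has no WNU term operation of any arity divisible by $p$. For an arbitrary $k\ge 3$ there is an arity $n$ with $n\ge\max(k,3)$ and $p\mid n$, so $\algA$ has no WNU term operation of arity $n$. Thus $(2)$ fails. For $(3)\Rightarrow(4)$, suppose $(4)$ fails. If there is an essentially unary $\alg B\in\HS(\algA)$ of size at least $2$, then Lemma~\ref{EssentiallyUnaryImpliesWNUBlocker} gives a WNU-blocker $R\in\Inv(\algA)$ and Lemma~\ref{WNUimpliesTWOWNU} upgrades it to a $2$-WNU-blocker $R'\in\Inv(\algA)$, contradicting $(3)$; if instead there is a $p$-affine $\alg B\in\HS(\algA)$, then Lemma~\ref{pAffineImpliespWNUBlocker} directly produces a $p$-WNU-blocker $R\in\Inv(\algA)$, again contradicting $(3)$.

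For $(4)\Rightarrow(5)$, suppose $(5)$ fails, and fix a subalgebra $\alg B\le\algA$ with $|B|\ge 2$ that has no nontrivial strong subuniverse. Then $\alg B$ is a finite idempotent algebra of size at least $2$ with no nontrivial binary absorbing, central, or PC subuniverse, so Theorem~\ref{ExistsStrongSubalgebraTHM} leaves only its cases $(4)$ and $(5)$. In case $(4)$ there is a congruence $\sigma$ on $\alg B$ with $\alg B/\sigma$ $p$-affine; since $\alg B\le\algA$ implies $\HS(\alg B)\subseteq\HS(\algA)$, we get a $p$-affine algebra $\alg B/\sigma\in\HS(\algA)$, contradicting $(4)$. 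In case $(5)$, $\alg B$ has a nontrivial projective subuniverse $B'$; as $B'$ is not a binary absorbing subuniverse of $\alg B$ (otherwise it would be a nontrivial strong subuniverse), Lemma~\ref{CBTNonAbsorbing} yields an essentially unary algebra $\alg U\in\HS(\alg B)\subseteq\HS(\algA)$ of size at least $2$, again contradicting $(4)$. This closes the cycle.

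I expect the only real friction to be in $(4)\Rightarrow(5)$: one must carefully observe that excluding the three ``strong'' alternatives of Theorem~\ref{ExistsStrongSubalgebraTHM} leaves precisely the $p$-affine quotient and the projective subuniverse, route the latter through Lemma~\ref{CBTNonAbsorbing}, and note that $\HS$ of a subalgebra of $\algA$ stays inside $\HS(\algA)$. The remaining implications are a direct reassembly of the lemmas of Subsection~\ref{WNUBlockersSubsection} together with Lemma~\ref{NoWNULemmaCopy}, so no new ideas beyond those already developed in the section should be needed.
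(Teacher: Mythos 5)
Your proposal is correct and follows essentially the same route as the paper: the same cycle $(1)\Rightarrow(2)\Rightarrow(3)\Rightarrow(4)\Rightarrow(5)\Rightarrow(1)$, with the same lemmas invoked at each step (Lemma~\ref{pWNUblockerAvoids} for $(2)\Rightarrow(3)$, Lemmas~\ref{EssentiallyUnaryImpliesWNUBlocker}, \ref{WNUimpliesTWOWNU}, \ref{pAffineImpliespWNUBlocker} for $(3)\Rightarrow(4)$, Theorem~\ref{ExistsStrongSubalgebraTHM} plus Lemma~\ref{CBTNonAbsorbing} for $(4)\Rightarrow(5)$, and Lemma~\ref{NoWNULemmaCopy} for $(5)\Rightarrow(1)$). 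The only difference is cosmetic: you spell out the arithmetic in $(2)\Rightarrow(3)$ (choosing a multiple of $p$ above $k$) slightly more explicitly than the paper does.
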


\begin{proof}
$(1)\Rightarrow(2)$ is trivial.
$(2)\Rightarrow(3)$
is by Lemma \ref{pWNUblockerAvoids}.

$(3)\Rightarrow(4)$. We will prove that negation 
of (4) implies the negation of (3).
Consider two cases.
Case 1. Assume that there exists an essentially unary algebra $\alg B\in \HS(\algA)$.
By Lemma~\ref{EssentiallyUnaryImpliesWNUBlocker}, 
there exists 
a WNU-blocker $R\in\Inv(\algA)$.
By Lemma~\ref{WNUimpliesTWOWNU} there exists 
a $2$-WNU-blocker $R'\in \Inv(\algA)$. 
Case 2. Assume that there exists a $p$-affine algebra $\alg B\in \HS(\algA)$. 
By Lemma~\ref{pAffineImpliespWNUBlocker}, 
there exists a $p$-WNU-blocker in $\Inv(\algA)$.

$(4)\Rightarrow(5)$. 
Consider $\algB\le \algA$.
Apply Theorem~\ref{ExistsStrongSubalgebraTHM} to $\alg B$ and consider 5 cases of this theorem.
In cases (1)-(3) we just obtain a strong subalgebra.
In case (4) we obtain a $p$-affine algebra 
$\alg B/\sigma$, which contradicts condition (4).
In case (5) Lemma~\ref{CBTNonAbsorbing}
gives us a nontrivial binary 
absorbing subuniverse, which is a strong subuniverse, 
or an essentially unary algebra $\alg U\in\HS(\algA)$, 
which contradicts condition (4).

$(5)\Rightarrow(1)$ is 
by Lemma~\ref{NoWNULemmaCopy}.
\end{proof}
\section{Constraint Satisfaction problem}

In this section we will demonstrate how 
strong subalgebras can be used to study the complexity of the  Constraint Satisfaction Problem 
for 
different constraint languages.

\subsection{CSP Dichotomy Conjecture}

Suppose we have a 
finite set of relations $\Gamma\subseteq \mathcal R_{A}$, called 
\emph{a constraint language}.
Recall that 
\emph{the Constraint Satisfaction Problem over the constraint language $\Gamma$},  denote by $\CSP(\Gamma)$, is the 
following decision problem:
given a formula
$$R_{1}(v_{1,1},\ldots,v_{1,n_{1}})
\wedge
\dots
\wedge
R_{s}(v_{s,1},\ldots,v_{1,n_{s}}),$$
where $R_{1},\dots,R_{s}\in \Gamma$, 
and $v_{i,j}\in \{x_{1},\dots,x_{n}\}$ for every $i,j$;
decide whether this formula is satisfiable.
We call each
$R_{i}(v_{i,1},\ldots,v_{i,n_{i}})$ \emph{a constraint}.

It is well known that many combinatorial problems can be expressed as $\CSP(\Gamma)$
for some constraint language $\Gamma$.
Moreover, for some sets $\Gamma$ the corresponding decision problem can be solved in polynomial time (tractable);
while for others it is NP-complete.
It was conjectured that
$\CSP(\Gamma)$ is either in P, or NP-complete \cite{FederVardi}.
In 2017, two independent proofs of these conjecture
(called the CSP Dichotomy Conjecture) appeared, 
and the characterization of the tractable constraint languages 
turned out to be very simple.

\begin{thm}\label{CSPDichotomyTHM}\cite{BulatovProofCSP,MyProofCSP,ZhukFOCSCSPPaper}
Suppose $\Gamma\subseteq R_{A}$ is a finite set of relations.
Then $\CSP(\Gamma)$ can be solved in polynomial time if there exists a WNU operation
preserving $\Gamma$;
$\CSP(\Gamma)$ is NP-complete otherwise.
\end{thm}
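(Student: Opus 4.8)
The plan is to split the statement into its hardness half and its tractability half, and to handle each by reducing to the idempotent case via a standard construction, then invoking the algebraic machinery assembled earlier in the paper. First I would recall that the complexity of $\CSP(\Gamma)$ depends only on $\Pol(\Gamma)$, and that by adding the singleton unary relations $\{a\}$ for $a\in A$ to $\Gamma$ one may assume without loss of generality that all polymorphisms are idempotent (this only makes the problem harder, and if the original $\Gamma$ has a WNU polymorphism then it has an idempotent one after this modification by the usual trick of composing a WNU with itself and restricting to the diagonal subalgebra generated by a single element). So set $\algA = (A;\Pol_{\mathrm{id}}(\Gamma))$, a finite idempotent algebra, and the whole question becomes whether $\algA$ has a WNU term operation.

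For the hardness part, suppose $\Gamma$ is \emph{not} preserved by any WNU operation; equivalently $\algA$ has no WNU term operation. By Theorem~\ref{CharacterizationOfAWNUTHM}, condition~(3) fails, so there is an essentially unary algebra $\alg B\in\HS(\algA)$ of size at least $2$; equivalently, by Lemma~\ref{EssentiallyUnaryImpliesWNUBlocker}, there is a WNU-blocker $R\in\Inv(\algA)$, i.e.\ a relation pp-definable from $\Gamma$ (together with the singletons) of the form $(B_0\cup B_1)^3\setminus(B_0^3\cup B_1^3)$. This relation behaves exactly like the Not-All-Equal relation on a two-element set: reading $B_0$ as $0$ and $B_1$ as $1$, $\CSP(\{R\})$ encodes positive-1-in-$3$-SAT / NAE-$3$-SAT, which is NP-complete. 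Since $R$ is pp-definable from $\Gamma$ plus singleton constants, and pp-definitions give polynomial-time reductions between CSPs (and adding the singleton constants is itself a polynomial-time reduction, which here I would spell out carefully since the conclusion of the theorem is about $\CSP(\Gamma)$, not $\CSP(\Gamma\cup\{\text{singletons}\})$), we conclude $\CSP(\Gamma)$ is NP-hard; as it is clearly in NP, it is NP-complete.

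For the tractability part, suppose $\Gamma$ \emph{is} preserved by a WNU operation, so $\algA$ has a WNU term operation. Here the paper's plan (as sketched in the introduction) is the iterative domain-reduction algorithm using strong subalgebras, but that requires an extra consistency-propagation ingredient and does not follow purely from the results stated in this excerpt; the excerpt explicitly attributes the tractability half to \cite{BulatovProofCSP,MyProofCSP,ZhukFOCSCSPPaper}. So for the self-contained portion of the paper I would at minimum treat the special case covered by Theorem~\ref{CharacterizationOfALLWNUTHM}: if $\Gamma$ is preserved by WNU operations of all arities $n\ge 3$, then by condition~(5) every subalgebra of $\algA$ of size $\ge 2$ has a nontrivial strong subuniverse, so one can make the instance $(2,3)$-consistent in polynomial time and then, on each domain of size $\ge 2$, pass to a strong subuniverse; Theorem~\ref{CommonPropertiesThm}(1)–(3) guarantees that this reduction preserves the relational structure and does not destroy solvability of a consistent instance, so iterating drives every domain down to a single element and produces a solution (or reports inconsistency along the way). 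The main obstacle, and the honest gap, is precisely the general tractability case: ruling out nothing but the $p$-affine quotient is not enough, and handling affine-like behaviour inside an otherwise strong-subalgebra framework is exactly the part of Zhuk's proof that needs the second main ingredient (linear/PC congruences and the ``centralizer'' analysis), which lies outside this excerpt; for the full statement I would therefore cite \cite{MyProofCSP,ZhukFOCSCSPPaper,BulatovProofCSP} for the tractability direction and present only the local-consistency case in detail.
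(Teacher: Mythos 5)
Your proposal matches the paper's treatment: the paper does not prove Theorem~\ref{CSPDichotomyTHM} in full but, exactly as you do, proves the hardness half by passing to a core via a unary polymorphism of minimal range, adding the constant relations (Theorem~\ref{addingIdempotency}, which is precisely the reduction you flag as needing care --- note that ``adding singletons only makes the problem harder'' goes the wrong way for hardness, so the core property is genuinely needed there), extracting a WNU-blocker from Theorem~\ref{CharacterizationOfAWNUTHM}, and reducing from $\mathrm{NAE}_3$; and it handles only the bounded-width case of tractability via strong subalgebras and cycle-consistency, citing \cite{BulatovProofCSP,MyProofCSP,ZhukFOCSCSPPaper} for the general algorithm. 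Your honest identification of the general tractability direction as lying outside the paper's self-contained scope is consistent with how the paper itself presents the theorem.
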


In this section we demonstrate how strong subalgebras
can be used to prove the 
hardness part of Theorem~\ref{CSPDichotomyTHM}, 
and to characterize all constraint languages $\Gamma$ 
such that $\CSP(\Gamma)$ can be solved by local consistency checking.

\subsection{Reduction to a core}\label{ReductionToAcoreSubsection}

First, we need to show that it is sufficient to consider only idempotent case, that is the case when $\Gamma$ contains all the relations of the form $x=a$ for $a\in A$
(we call them \emph{constant relations}).
Suppose $f$ is a unary polymorphism of $\Gamma$, 
and $f(\Gamma)$ is a constraint language with domain $f(A)$
defined by $f(\Gamma) = \{f(R)\mid R\in \Gamma\}$.
It is easy to see that an instance 
of $\CSP(\Gamma)$ is equivalent to 
the corresponding instance of $\CSP(f(\Gamma))$
where we replace each relation $R_{i}$ 
by $f(R_{i})$. So the following lemma holds.

\begin{lem}\label{ReductionToACore}\cite{jeavons1998algebraic}
Suppose $f$ is a unary polymorphism of $\Gamma$.
Then $\CSP(\Gamma)$ is polynomially equivalent to 
$\CSP(f(\Gamma))$.
\end{lem}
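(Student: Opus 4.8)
The plan is to make precise the informal observation preceding the statement: applying a unary polymorphism $f$ to every relation of an instance preserves satisfiability in both directions. First I would fix an instance $\mathcal I$ of $\CSP(\Gamma)$, say
$$R_{1}(v_{1,1},\ldots,v_{1,n_{1}})\wedge\dots\wedge R_{s}(v_{s,1},\ldots,v_{s,n_{s}}),$$
and let $\mathcal I^{f}$ be the instance of $\CSP(f(\Gamma))$ over the same variables obtained by replacing each $R_{i}$ with $f(R_{i})$. Note that this transformation is computable in polynomial time, since $|f(R_i)| \le |R_i|$ and $f$ is a fixed map on a fixed finite set; this takes care of the ``polynomially equivalent'' part once the two-way implication on satisfiability is established.

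The forward direction is immediate: if $\varphi\colon\{x_{1},\dots,x_{n}\}\to A$ satisfies $\mathcal I$, then $f\circ\varphi$ satisfies $\mathcal I^{f}$, because $(\varphi(v_{i,1}),\dots,\varphi(v_{i,n_i}))\in R_i$ forces $(f(\varphi(v_{i,1})),\dots,f(\varphi(v_{i,n_i})))\in f(R_i)$ directly by the definition of $f(R_i)$. The reverse direction is where the polymorphism hypothesis is actually used. Suppose $\psi\colon\{x_{1},\dots,x_{n}\}\to f(A)$ satisfies $\mathcal I^{f}$. For each $i$, a tuple of $f(R_i)$ is $f$ applied coordinatewise to some tuple of $R_i$, but this alone does not hand us a single assignment into $A$; instead I would argue that $\psi$ itself, viewed as a map into $A$ (since $f(A)\subseteq A$), satisfies $\mathcal I$. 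Indeed, fix a constraint $R_i(v_{i,1},\dots,v_{i,n_i})$; we know $(\psi(v_{i,1}),\dots,\psi(v_{i,n_i}))\in f(R_i)$, so there is $(a_1,\dots,a_{n_i})\in R_i$ with $f(a_j)=\psi(v_{i,j})$ for all $j$. Now apply the unary polymorphism $f$ to the tuple $(a_1,\dots,a_{n_i})\in R_i$: since $f$ preserves $R_i$, we get $(f(a_1),\dots,f(a_{n_i}))=(\psi(v_{i,1}),\dots,\psi(v_{i,n_i}))\in R_i$. As this holds for every constraint, $\psi$ satisfies $\mathcal I$, completing the equivalence.

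The main ``obstacle'' — really just the point requiring care — is the reverse direction: one is tempted to think that going from $f(\Gamma)$ back to $\Gamma$ needs an inverse of $f$, which need not exist. The resolution, as above, is that we do not reconstruct a preimage assignment; we observe that the given satisfying assignment for $\mathcal I^f$ already lands in $f(A)$, and that $f$ maps tuples of $R_i$ into tuples of $R_i$ that lie in $f(R_i)$, so the assignment is simultaneously good for both instances. Finally, since the map $\mathcal I\mapsto\mathcal I^{f}$ is polynomial-time computable and every instance of $\CSP(f(\Gamma))$ arises from (indeed, equals, up to the above identification) an instance of $\CSP(\Gamma)$ on the same variable set by this construction, the two problems are polynomially equivalent in both directions.
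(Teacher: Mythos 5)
Your proof is correct and fills in exactly the argument the paper leaves implicit (the paper only asserts that replacing each $R_i$ by $f(R_i)$ yields an equivalent instance); the key observation in the reverse direction, that $f(R_i)\subseteq R_i$ because $f$ is a polymorphism, so a satisfying assignment of the image instance already satisfies the original, is the right one. No issues.
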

Thus, if there exists a unary polymorphism that is not a bijection, 
this polymorphism can be used to reduce the 
domain.

A constraint language is called \emph{a core} if every unary polymorphism of $\Gamma$ is a bijection.
It is not hard to show that if $f$ is a unary polymorphism of $\Gamma$ with minimal range,
then $f(\Gamma)$ is a core \cite{CSPconjecture}. Another important fact is that we can add all
constant relations to a core constraint language
without increasing the complexity of its $\CSP$.
First we will need an auxiliary fact.

\begin{lem}\label{QuantifierFreeDefinitionOfSigma}
Suppose $A=\{0,1,\dots,k-1\}$, $\Gamma\subseteq\mathcal R_A$
and 
$\algA = (A;\Pol(\Gamma))$.
Then $\Sg_{\algA^{k}}(
\{(0,1,\dots,k-1)\})$ has a quantifier-free pp-definition over $\Gamma$.
\end{lem}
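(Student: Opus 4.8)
Suppose $A=\{0,1,\dots,k-1\}$, $\Gamma\subseteq\mathcal R_A$ and $\algA = (A;\Pol(\Gamma))$. Then $\Sg_{\algA^{k}}(\{(0,1,\dots,k-1)\})$ has a quantifier-free pp-definition over $\Gamma$.

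The plan is to first obtain \emph{some} pp-definition of $R := \Sg_{\algA^{k}}(\{(0,1,\dots,k-1)\})$ over $\Gamma$, and then eliminate the existential quantifiers by exploiting the fact that $R$ is generated by a single tuple which enumerates all of $A$. Since $R$ is a subuniverse of $\algA^{k}$ and $\algA = (A;\Pol(\Gamma))$, the Galois connection gives $R\in\Inv(\Pol(\Gamma)) = \RelClo(\Gamma)$, so $R(x_{0},\dots,x_{k-1}) = \exists y_{1}\dots\exists y_{m}\,\Phi(x_{0},\dots,x_{k-1},y_{1},\dots,y_{m})$ for some conjunction $\Phi$ of atomic formulas over $\Gamma$; any equality atoms appearing in $\Phi$ can be removed in advance by identifying variables.

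Next I would pick a witness for the generating tuple. Since $(0,1,\dots,k-1)\in R$, there is $(c_{1},\dots,c_{m})\in A^{m}$ such that $\Phi((0,1,\dots,k-1),(c_{1},\dots,c_{m}))$ holds; crucially, each $c_{j}$ lies in $A=\{0,1,\dots,k-1\}$, so it names a coordinate position among $x_{0},\dots,x_{k-1}$. I then set $\Psi(x_{0},\dots,x_{k-1}) := \Phi(x_{0},\dots,x_{k-1},x_{c_{1}},\dots,x_{c_{m}})$, obtained by substituting $x_{c_{j}}$ for $y_{j}$; this is a quantifier-free pp-formula over $\Gamma$. The claim to establish is that $\Psi$ defines precisely $R$.

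The inclusion ``every solution of $\Psi$ belongs to $R$'' is immediate, since such a solution directly witnesses the existential formula defining $R$. For the reverse inclusion---the heart of the argument---let $\rho\subseteq A^{k+m}$ be the relation defined by $\Phi$ with \emph{all} variables free. As a conjunction of atoms over $\Gamma$, $\rho\in\Inv(\Pol(\Gamma))$, i.e.\ $\rho$ is a subuniverse of $\algA^{k+m}$; and it contains the tuple $(0,1,\dots,k-1,c_{1},\dots,c_{m})$, hence it contains the subalgebra generated by that tuple. That subalgebra is exactly the set of tuples $(g(0),\dots,g(k-1),g(c_{1}),\dots,g(c_{m}))$ as $g$ ranges over the unary term operations of $\algA$; its first $k$ coordinates range over all of $R$, and for each $g$ the entry $g(c_{j})$ in position $k+j$ is simply the $c_{j}$-th entry of $(g(0),\dots,g(k-1))$, because $c_{j}\in\{0,\dots,k-1\}$. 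Consequently, for every $\beta=(\beta_{0},\dots,\beta_{k-1})\in R$ the extended tuple $(\beta_{0},\dots,\beta_{k-1},\beta_{c_{1}},\dots,\beta_{c_{m}})$ lies in $\rho$, which is exactly the statement that $\beta$ satisfies $\Psi$. Thus $R$ is defined by the quantifier-free pp-formula $\Psi$ over $\Gamma$.

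I expect the only real subtlety to be the bookkeeping in the last step: identifying the subalgebra generated by a single tuple with the set of ``graphs'' of unary term operations, and observing that appending the coordinates $g(c_{1}),\dots,g(c_{m})$ merely duplicates already-present coordinates because $c_{1},\dots,c_{m}\in\{0,\dots,k-1\}$. Everything else is a routine invocation of the clone--relational clone Galois correspondence stated in Section~2.
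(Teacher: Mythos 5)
Your proof is correct, but it takes a different route from the paper's. The paper does not start from a pp-definition at all: it directly writes down the explicit quantifier-free formula
$\sigma(z_{0},\dots,z_{k-1}) = \bigwedge_{R\in \Gamma,\; (a_1,\dots,a_{s})\in R} R(z_{a_1},\dots,z_{a_s})$,
and observes that a tuple $(b_0,\dots,b_{k-1})$ satisfies it exactly when the unary map $g(x)=b_x$ preserves every relation of $\Gamma$ --- which is precisely the description of $\Sg_{\algA^{k}}(\{(0,\dots,k-1)\})$ as the set of graphs of unary polymorphisms. You instead invoke the Galois connection to obtain some pp-definition with existential quantifiers and then eliminate them by substituting $x_{c_j}$ for $y_j$, where $(c_1,\dots,c_m)$ is a witness for the generating tuple; the verification that this substitution loses nothing uses the same description of the generated subalgebra as graphs of unary term operations. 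Both arguments hinge on the same trick --- elements of $A$ double as coordinate indices of the rainbow tuple $(0,1,\dots,k-1)$ --- but the paper's construction is canonical and one-line, needing no prior pp-definition and no discussion of equality atoms, while yours is a general quantifier-elimination scheme that would apply to any subalgebra generated by a tuple enumerating $A$, and it automatically produces a finite formula even when $\Gamma$ is infinite (the paper's conjunction is finite only because $\Gamma$ is finite in the intended application). Your side remarks (discarding equality atoms, the nonemptiness of $R$) are handled correctly.
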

\begin{proof}
Let us show that $\Sg_{\algA^{k}}(
\{(0,1,\dots,k-1)\})$ is defined by 
$$\sigma(z_{0},\dots,z_{k-1}) = \bigwedge_{R\in \Gamma, 
(a_1,\dots,a_{s})\in R} R(z_{a_1},z_{a_2},\dots,z_{a_s}).$$
By the definition, $(0,1,\dots,k-1)\in\sigma$. 
If $(b_0,b_1,\dots,b_{k-1})\in\sigma$, 
then the definition of $\sigma$ just says that the unary function $g(x) = b_x$ preserves every relation 
$R\in\Gamma$, and this is exactly what we need.
\end{proof}

\begin{thm}\label{addingIdempotency}\cite{CSPconjecture}
Let $\Gamma\subseteq\mathcal R_A$ be a core constraint language, and
$\Gamma' = \Gamma\cup \{x=a\mid a\in A\}$.
Then $\CSP(\Gamma')$ is polynomially reducible to $\CSP(\Gamma)$.
\end{thm}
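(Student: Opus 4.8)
The plan is to reduce an instance of $\CSP(\Gamma')$ to an instance of $\CSP(\Gamma)$ by replacing each constant constraint $x_i = a$ with a gadget that forces the corresponding variable into a single value, without using constant relations. The key tool is Lemma~\ref{QuantifierFreeDefinitionOfSigma}: the relation $\sigma = \Sg_{\algA^k}(\{(0,1,\dots,k-1)\})$ has a quantifier-free pp-definition over $\Gamma$, so we may freely add a single constraint $\sigma(z_0,\dots,z_{k-1})$ to any instance of $\CSP(\Gamma)$. The idea is: given an instance $\Phi$ of $\CSP(\Gamma')$, introduce $k$ fresh variables $z_0,\dots,z_{k-1}$, add the constraint $\sigma(z_0,\dots,z_{k-1})$, then replace every constant constraint $x_i = a$ appearing in $\Phi$ by identifying $x_i$ with $z_a$ (equivalently, substituting $z_a$ for $x_i$ everywhere). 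The resulting instance $\Psi$ uses only relations from $\Gamma$ (plus the pp-definable $\sigma$, which unfolds into a conjunction of $\Gamma$-constraints), so it is a genuine $\CSP(\Gamma)$ instance, and it has size polynomial in the size of $\Phi$.

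The correctness argument goes in two directions. If $\Phi$ is satisfiable, take a solution and extend it by setting $z_j := j$ for each $j$; since $(0,1,\dots,k-1)\in\sigma$ and each constant constraint $x_i=a$ was satisfied by $x_i = a = z_a$, the extension satisfies $\Psi$. Conversely, suppose $\Psi$ has a solution; then $(z_0,\dots,z_{k-1})$ is assigned some tuple $(b_0,\dots,b_{k-1})\in\sigma$. By the proof of Lemma~\ref{QuantifierFreeDefinitionOfSigma}, membership in $\sigma$ means the unary map $g(x) = b_x$ is a polymorphism of $\Gamma$, hence of $\Gamma$ viewed as a core. Since $\Gamma$ is a core, $g$ must be a bijection, and every bijective unary polymorphism of a core is an automorphism of $\Gamma$. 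Composing the given solution of $\Psi$ with $g^{-1}$ on all variables then yields an assignment in which $z_a$ takes value $a$ for each $a$; this assignment still satisfies all $\Gamma$-constraints (automorphisms preserve them) and now satisfies the original constant constraints $x_i=a$ as well, so it restricts to a solution of $\Phi$.

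The main obstacle is the step from ``$g$ is a bijective unary polymorphism of a core'' to ``$g$ is an automorphism, hence $g^{-1}$ is also a polymorphism.'' This is a standard fact about cores on finite domains: a bijective polymorphism $g$ of a finite relational structure satisfies $g^m = \mathrm{id}$ for some $m \ge 1$ (since the cyclic group it generates is finite), so $g^{-1} = g^{m-1} \in \Pol(\Gamma)$, and therefore $g$ preserves each $R \in \Gamma$ in both directions, i.e. $g(R) = R$. I would either cite this or include the one-line argument via $g^{m-1}$. Everything else is bookkeeping: checking that the substitution $x_i \mapsto z_a$ is well-defined even when a variable carries several (necessarily equal, or else $\Phi$ is trivially unsatisfiable) constant constraints, and that the total instance size and the reduction itself are computable in polynomial time. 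One should also handle the degenerate situation where some variable $x_i$ has two constant constraints $x_i = a$ and $x_i = a'$ with $a \ne a'$: in that case $\Phi$ is unsatisfiable and we output a fixed unsatisfiable instance of $\CSP(\Gamma)$ (e.g. an empty relation from $\Gamma$, or $\sigma$ with two of its coordinates forced equal, whichever is available).
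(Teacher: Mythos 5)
Your proposal is correct and follows essentially the same route as the paper: introduce fresh variables $z_0,\dots,z_{k-1}$, constrain them by the quantifier-free pp-definition of $\sigma=\Sg_{\algA^{k}}(\{(0,1,\dots,k-1)\})$ from Lemma~\ref{QuantifierFreeDefinitionOfSigma}, replace each $x=a$ by tying $x$ to $z_a$, and in the backward direction use that the tuple assigned to $(z_0,\dots,z_{k-1})$ encodes a unary polymorphism which, since $\Gamma$ is a core, is a bijection whose inverse (obtained by iterating it) is again a polymorphism. Your explicit treatment of $g^{-1}=g^{m-1}$ and of conflicting constant constraints just fills in details the paper leaves implicit.
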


\begin{proof}
Assume that 
$A =\{0,1,\dots,k-1\}$. 
Let $\algA = (A;\Pol(\Gamma))$
and $\sigma = \Sg_{\algA^{k}}(
\{(0,1,\dots,k-1)\})$.
By Lemma~\ref{QuantifierFreeDefinitionOfSigma}
$\sigma$ has a quantifier-free pp-definition over $\Gamma$.
Suppose we have an instance $\mathcal I'$ of 
$\CSP(\Gamma')$. 
Choose $k$ new variables 
$z_{0},\dots,z_{k-1}$ and replace every constraint 
$x=a$ by $x=z_{a}$. 
Also, add the pp-definition of the constraint 
$\sigma(z_{0},\dots,z_{k-1})$.
The obtained instance we denote by 
$\mathcal I$.
Let us show that 
$\mathcal I$ is equivalent to $\mathcal I'$.

$\mathcal I'\Rightarrow \mathcal I$.
To get a solution of $\mathcal I$ it is sufficient to take a solution of 
$\mathcal I'$ and put 
$z_{a} = a$ for every $a\in A$.

$\mathcal I\Rightarrow \mathcal I'$.
Consider a solution of $\mathcal I$.
Let $(z_{0},\dots,z_{k-1}) = 
(b_{0},\dots,b_{k-1})$ in this solution.
Since $(b_{0},\dots,b_{k-1})\in\sigma$ and $\Gamma$ is a core, 
there exists a unary bijective polymorphism 
$\varphi$ such that 
$\varphi(0,1,\dots,k-1) = 
(b_{0},b_{1},\dots,b_{k-1})$.
Composing $\varphi$ we can define a unary bijective polymorphism 
$\psi$ such that 
$\psi(b_{0},b_{1},\dots,b_{k-1}) = 
(0,1,\dots,k-1)$.
Applying $\psi$ to the solution of $\mathcal I$ 
we get a solution of $\mathcal I'$.
\end{proof}

Combining Lemma~\ref{ReductionToACore} and 
Theorem~\ref{addingIdempotency} we conclude that 
it is sufficient to consider only idempotent case to
prove the CSP Dichotomy Conjecture. 

\subsection{Hardness result}

Note that the original proof of the following hardness result 
is a combination of \cite{CSPconjecture} and \cite{miklos}.
We will derive the hardness result from Theorem \ref{CharacterizationOfAWNUTHM}, which is 
very similar to the claim proved in \cite{miklos}.

\begin{thm}\cite{CSPconjecture,miklos}
Suppose $\Gamma$ does not have a WNU polymorphism, 
then 
$\CSP(\Gamma)$ is NP-hard.
\end{thm}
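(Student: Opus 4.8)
The plan is to reduce the problem to the idempotent case and then exhibit an NP-hard subproblem coming from a WNU-blocker or $p$-WNU-blocker. By the remarks following Theorem~\ref{addingIdempotency}, we may assume $\Gamma$ is a core constraint language and, after adding all constant relations, that the associated algebra $\algA=(A;\Pol(\Gamma))$ is idempotent, since this reduction preserves polynomial-time equivalence and, by Lemma~\ref{ReductionToACore} together with Theorem~\ref{addingIdempotency}, the new language has a WNU polymorphism iff the original one does. So assume $\algA$ is finite idempotent and has no WNU term operation.

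First I would apply Theorem~\ref{CharacterizationOfAWNUTHM}: the failure of condition~(1) means condition~(4) fails, so there exists a WNU-blocker $R\in\Inv(\algA)$, i.e.\ $R=(B_0\cup B_1)^3\setminus(B_0^3\cup B_1^3)$ for disjoint nonempty $B_0,B_1\subseteq A$. Because $R\in\Inv(\algA)=\Inv(\Pol(\Gamma))=\RelClo(\Gamma)$ by the Galois correspondence, $R$ is pp-definable over $\Gamma$. I would then observe that $\CSP(\{R\})$ is NP-hard by a direct reduction from (positive, or monotone) NAE-3-SAT: think of $B_0$ as ``false'' and $B_1$ as ``true'', encode each NAE clause on literals as a single constraint $R(\cdot,\cdot,\cdot)$ on the corresponding variables, and use the constant relations $x=b_0$ ($b_0\in B_0$), $x=b_1$ ($b_1\in B_1$) to simulate negated literals; a satisfying assignment of the NAE instance corresponds exactly to a solution with each variable in the appropriate block, and conversely any solution of the $R$-instance, projected via the two-block partition, yields an NAE assignment. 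Hence $\CSP(\{R\}\cup\{x=a\mid a\in A\})$ is NP-hard, and since every such relation is pp-definable over $\Gamma$, standard pp-interpretability (replace each constraint by its pp-definition, introduce existential variables) shows $\CSP(\Gamma\cup\{x=a\})$ is NP-hard; finally Theorem~\ref{addingIdempotency} and Lemma~\ref{ReductionToACore} push the hardness back to $\CSP(\Gamma)$ for the original core.

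The main obstacle is bookkeeping around the reductions rather than any deep idea: one must be careful that pp-definitions (which use existential quantifiers) do not inflate the instance super-polynomially — they do not, since each $R_i$ is replaced by a fixed-size gadget — and that the core/idempotent reduction of Section~\ref{ReductionToAcoreSubsection} is invoked in the right direction, so that NP-hardness for the idempotent algebra transfers to the original $\Gamma$. I expect the NAE-3-SAT reduction itself to be routine (NAE-3-SAT restricted to positive clauses is classically NP-complete, via Schaefer or a direct argument), so the write-up is mostly a matter of assembling Theorem~\ref{CharacterizationOfAWNUTHM}, the Galois correspondence, and the two reduction lemmas already proved in the excerpt.
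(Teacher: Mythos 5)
Your proposal is correct and follows essentially the same route as the paper: reduce to the idempotent core via Lemma~\ref{ReductionToACore} and Theorem~\ref{addingIdempotency}, extract a WNU-blocker from Theorem~\ref{CharacterizationOfAWNUTHM}, pp-define it over the language, and reduce from monotone NAE-3-SAT. The only cosmetic difference is that the paper reduces directly from $\CSP(\{\mathrm{NAE}_3\})$ (so no constants are needed to handle negated literals), which you also mention as an option.
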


\begin{proof}
Consider a unary polymorphism $f$ of $\Gamma$ 
with a minimal range.
Then 
$f(\Gamma)$ is a core constraint language.
Put $\Gamma' = f(\Gamma)\cup\{x=a\mid a\in f(A)\}$.
By Lemma~\ref{ReductionToACore} and Theorem~\ref{addingIdempotency}, 
$\CSP(\Gamma)$ is polynomially equivalent $\CSP(\Gamma')$.
Let $B = f(A)$, and $\alg B = (B;\Pol(\Gamma'))$.
If $\Gamma'$ has a WNU polymorphism $w'$, then 
$w(x_{1},\dots,x_{m}) = 
w'(f(x_{1}),\dots,f(x_{m}))$ is a WNU polymorphism 
of $\Gamma$.
Since $\Gamma$ does not have a WNU polymorphism, 
neither do $\Gamma'$.
By Theorem~\ref{CharacterizationOfAWNUTHM}, there exists 
a WNU-blocker $R\in\Inv(\alg B)$.
By the Galois connection we know that 
$R$ is pp-definable over $\Gamma'$ 
(we also need the equality and empty relations but they can always be propagated out from the pp-definition of $R$).
Let us show that 
$\CSP(\Gamma')$ is NP-hard. 
Let $\mathrm{NAE}_3$ be the ternary relation on $\{0,1\}$ containing 
all tuples except for $(0,0,0)$ and $(1,1,1)$.
Consider an instance 
$\mathcal I$ of $\CSP(\{\mathrm{NAE}_3\})$, 
which is known to be an NP-hard problem \cite{Schaefer}.
Replace each $\mathrm{NAE}_3$-relation by $R$, then replace each $R$ by
its pp-definition over $\Gamma'$ (all existentially quantified variables are the new variables of the instance). The obtained instance is equivalent to $\mathcal I$, which proves that $\CSP(\Gamma')$ is NP-hard.
\end{proof}

\subsection{Cycle-consistency}

In this subsection we 
introduce the notion of cycle-consistency (similar to singleton-arc-consistency in \cite{kozik2016weak}) , which is a type of local consistency, 
and show that if a constraint language 
has a WNU polymorphism of every arity $n\ge 3$
then every cycle-consistent instance has a solution.
In Subsection \ref{whenccworkssubsection} we will argue that 
whenever the instance can be solved by local methods, 
it can be solved by local consistency checking.
For more information about 
notions of local consistency see 
\cite{bartokozikboundedwidth,kozik2016weak}.

We will need several definitions.
Here we assume that 
every variable $x$ has its own domain $D_{x}\subseteq A$. We also require each $D_{x}$ to be pp-definable over
the constraint language $\Gamma$
to guarantee that every operation from $\Pol(\Gamma)$ 
preserves $D_{x}$.
For an instance $\mathcal I$ by 
$\Var(\mathcal I)$ we denote the set of 
all variables appearing in $\mathcal I$.

For an instance $\mathcal I$ and a set of variables 
$(y_{1},\dots,y_{t})$ by
$\mathcal I(y_{1},\dots,y_{t})$ we denote the set of tuples
$(b_{1},\dots,b_{t})$ such that $\mathcal I$ 
has a solution with 
$y_{1} = b_{1},\dots,y_{t} = b_{t}$.
Thus, this expression can be viewed as a pp-definition of a relation of the arity $t$ over $\Gamma$.

We say that $z_{1}-C_{1}-z_{2}-\dots - C_{l-1}-z_{l}$ is
\emph{a path} in a CSP instance $\mathcal I$ if $z_{i},z_{i+1}$ are in the scope of the constraint $C_{i}$ for every $i\in[l]$.
We say that \emph{a path $z_{1}-C_{1}-z_{2}-\dots -C_{l-1}-z_{l}$  connects $b$ and $c$}
if there exists $a_{i}\in D_{z_{i}}$ for every $i\in[l]$
such that
$a_{1} = b$, $a_{l} = c$, and
the projection of $C_{i}$ onto $z_{i}, z_{i+1}$
contains the tuple $(a_{i},a_{i+1})$.
A CSP instance is called \emph{1-consistent} if 
the projection of any constraint $C$ onto any of its variable 
$x$ is equal to $D_{x}$.
A CSP instance is called \emph{cycle-consistent} if
it is 1-consistent and 
for every variable $z$ and $a\in D_{z}$
any path starting and ending with $z$ in $\mathcal I$
connects $a$ and $a$.

An instance is called a \emph{tree-instance}
if there is no a path
$z_{1}-C_{1}-z_{2}-\dots -z_{l-1}-C_{l-1}-z_{l}$
such that $l\ge 3$, $z_{1} = z_{l}$, and all the constraints $C_{1},\ldots,C_{l-1}$ are different.
A instance $\mathcal J$ is called 
\emph{a covering} of an instance 
$\mathcal I$ if there 
exists a mapping 
$\psi:\Var(\mathcal J)\to \Var(\mathcal I)$
such that
for every constraint $R(x_{1},\ldots,x_{n})$ of $\mathcal J$ the expression
$R(\psi(x_{1}),\ldots,\psi(x_{n}))$ 
is a constraint of $\mathcal I$.
We say that a covering is 
a \emph{tree-covering} if it is a tree-instance.
An important property of a tree-covering 
$\Upsilon$
of a 1-consistent instance 
is that 
$\Upsilon(y) = D_{y}$ for every variable $y$, 
that is we can choose any value for the variable $y$ 
a extend this value to a solution of $\Upsilon$.

Let $\Power(A)=\{B\mid B\subseteq A\}$. 
A mapping $D^{(\top)}\colon\Var(\mathcal I)\to \Power(A)$ is called 
\emph{a reduction} of the instance $\mathcal I$ if 
$D^{(\top)}_{x}\subseteq D_{x}$ 
and $D^{(\top)}_{x}$ is pp-definable over $\Gamma$ for every $x\in\Var(\mathcal I)$. 

We usually denote reductions by $D^{(1)},D^{(j)},D^{(\top)},D^{(\bot)}$.
We say that 
$D^{(\bot)}\le D^{(\top)}$ if 
$D^{(\bot)}_{y}\subseteq D^{(\top)}_{y}$ for every $y$.
We say that a reduction $D^{(\top)}$ is \emph{empty}
if $D_{y}^{(\top)}=\varnothing$ for every variable $y$.
For an instance $\mathcal I$
by $\mathcal I^{(\top)}$ we denote the instance 
obtained from $I$ by 
\begin{enumerate}
    \item replacing the domain of each variable $x$ by
    $D^{(\top)}_{x}$, and
    \item replacing every constraint $R(x_{1},\dots,x_{n})$
    by $R'(x_{1},\dots,x_{n})$, where 
    $R' = R\cap (D^{(\top)}_{x_1}\times\dots\times D^{(\top)}_{x_n})$.
\end{enumerate}
Note that 
the constraint relations of $\mathcal I^{(\top)}$
are not longer from $\Gamma$, but they are always pp-definable over $\Gamma$.
A reduction $D^{(\top)}$ of an instance $\mathcal I$ 
is called \emph{1-consistent} if 
the instance $\mathcal I^{(\top)}$ is 1-consistent.

Note that any reduction of an instance can be naturally extended 
to a covering of the instance, 
thus we assume that any reduction is automatically defined on 
a covering.

The next Lemma has its roots in Theorem 20 from \cite{FederVardi},
where the authors proved that bounded width 1
is equivalent to tree duality. 

\begin{lem}\label{ConstraintPropagation}
Suppose $D^{(\top)}$ is a reduction of an instance $\mathcal I$ and
$D^{(\bot)}$ is a maximal 1-consistent reduction of $\mathcal I^{(\top)}$.
Then for every variable $y$ of $\mathcal I$ there exists a tree-covering  $\Upsilon_{y}$ of $\mathcal I$
such that
$\Upsilon_{y}^{(\top)}(y)$ defines $D_{y}^{(\bot)}$.
\end{lem}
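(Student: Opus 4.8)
The plan is to prove Lemma~\ref{ConstraintPropagation} by constructing the tree-covering $\Upsilon_y$ explicitly from the iterative process that computes the maximal 1-consistent reduction $D^{(\bot)}$. The key observation is that the standard algorithm for computing $D^{(\bot)}$ proceeds in rounds: starting from $D^{(\top)}$, repeatedly shrink the domain of each variable by intersecting with the projections of the (already shrunk) constraints onto that variable, until a fixed point is reached. Each element that survives to $D_y^{(\bot)}$ survives because at every round there was a local witness, and every element that gets \emph{removed} is removed because at some round some constraint's projection onto the variable excluded it. The idea is to record this removal history as a tree: whenever an element $a$ is killed at variable $z$ by a constraint $C$, the reason is that for some extension of $a$ through $C$ one of the \emph{other} variables in the scope of $C$ had already lost all the values compatible with $a$; recursing on those earlier removals produces a finite tree of constraints (finite because the birthdates strictly decrease), and this tree, with variables renamed to be all distinct, is exactly the tree-covering we want.

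Concretely, first I would set up the fixed-point computation formally: define $D^{(0)} = D^{(\top)}$ and $D^{(i+1)}_x = D^{(i)}_x \cap \bigcap_{C}\proj_x\big(C \cap \prod D^{(i)}\big)$ over all constraints $C$ whose scope contains $x$, so that $D^{(\bot)} = D^{(i)}$ for $i$ large. Assign to each pair $(x,a)$ with $a \in D^{(\top)}_x \setminus D^{(\bot)}_x$ its \emph{birthdate}, the least $i$ with $a \notin D^{(i)}_x$. Then I would prove, by induction on the birthdate, the following statement: for every variable $x$ and every set $S \subseteq D^{(\top)}_x$, there is a tree-covering $\Upsilon$ of $\mathcal I$ with a distinguished variable $\tilde x$ mapping to $x$ such that $\Upsilon^{(\top)}(\tilde x) = D^{(\top)}_x \setminus \{a : a \text{ has birthdate} \le i, a\in S\}$ when we take $i$ to be the largest birthdate occurring in $S$ — building $\Upsilon$ by attaching, for each $a\in S$ and each constraint $C$ witnessing the removal of $a$, fresh copies of the sub-tree-coverings obtained inductively for the earlier-killed values at the sibling variables of $C$, glued at a fresh copy of $C$. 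Taking $S = D_x^{(\top)} \setminus D_x^{(\bot)}$ and $x = y$ gives the lemma.

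The main obstacle will be bookkeeping the tree-covering construction so that it stays a genuine \emph{tree}-instance (no repeated cycle of distinct constraints) while still forcing exactly the right value-set at the root: each recursive call must use entirely fresh variables and fresh constraint-copies, and one has to check carefully that gluing several such sub-coverings at one fresh constraint $C$ does not create a path $z_1 - C_1 - \cdots - z_1$ with all $C_i$ distinct — this follows because the pieces are variable-disjoint except at the gluing points, so any such closed walk would have to stay within a single piece, and by the induction hypothesis the pieces are already trees. A secondary point requiring care is the direction ``$\Upsilon_y^{(\top)}(y)$ defines $D_y^{(\bot)}$'': the inclusion $\Upsilon_y^{(\top)}(y) \subseteq D_y^{(\bot)}$ comes from the removal-history construction (every value outside $D_y^{(\bot)}$ is explicitly blocked by the tree), while $D_y^{(\bot)} \subseteq \Upsilon_y^{(\top)}(y)$ uses that $D^{(\bot)}$ is a 1-consistent reduction, so on a tree-instance every value in a 1-consistent domain extends to a solution — this is exactly the ``important property'' of tree-coverings of 1-consistent instances noted just before the lemma, applied after reducing by $D^{(\bot)}$. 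Finally, I would note that $\Upsilon_y^{(\top)}(y)$ is pp-definable over $\Gamma$ since it is the solution-set projection of a finite instance whose constraints are all pp-definable over $\Gamma$.
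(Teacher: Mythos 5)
Your proposal is correct and follows essentially the same route as the paper: the paper's proof also records the constraint-propagation history as a tree-covering (updating $\Upsilon_{z_i}:=R(z_1,\dots,z_l)\wedge\Upsilon_{z_1}\wedge\dots\wedge\Upsilon_{z_l}$ with renamed variables each time a constraint shrinks a domain) and obtains the reverse inclusion from the fact that a tree-covering of a 1-consistent instance extends every domain value to a solution. The only difference is organizational — the paper builds the coverings in place during propagation rather than unwinding removal birthdates afterwards — so no further changes are needed.
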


\begin{proof}
The proof is based on the constraint propagation procedure.
We start with an empty instance $\Upsilon_{y}$ (empty tree-covering) for every variable $y$ of $\mathcal I$, 
we iteratively change these tree-coverings to obtain
tree-coverings defining the reduction $D^{(\bot)}$ as required.
At the beginning 
the 
reduction $D^{(\bot)}$
is defined by 
$D^{(\bot)}_{y} := \Upsilon_{y}^{(\top)}(y) = D^{(\top)}_{y}$
for every variable $y$.

If at some step the reduction $D^{(\bot)}$ is 1-consistent,
then we are done. 
Otherwise, consider a constraint $R(z_{1},\dots,z_{l})$ of $\mathcal I$ that breaks 1-consistency of $\mathcal I^{(\bot)}$, which means that 
the restriction of the variables $z_{1},\ldots,z_{l}$ to 
the sets $D^{(\bot)}_{z_{1}},\dots,D^{(\bot)}_{z_{1}}$
implies a stronger restriction of some variable $z_{i}$.
We change the tree-covering $\Upsilon_{z_{i}}$
by
$\Upsilon_{z_{i}}:= R(z_{1},\dots,z_{l})\wedge \Upsilon_{z_{1}}\wedge\dots\wedge\Upsilon_{z_{l}}$,
where we rename the variables of 
$\Upsilon_{z_{1}},\dots,\Upsilon_{z_{l}}$ so that 
they did not have common variables inside the new definition 
of $\Upsilon_{z_{i}}$,
and $\Upsilon_{z_{i}}$ stayed a tree-covering.
As a result we reduce 
the domain $D^{(\bot)}_{z_{i}}$ to 
$\Upsilon_{z_{i}}^{(\top)}(z_{i})$.

Since our instance is finite and every time we reduce some domain, this procedure will stop eventually giving us the required reduction $D^{(\bot)}$.

It remains to explain why the reduction 
$D^{(\bot)}$ is maximal.
Consider  a 1-consistent reduction 
$D^{(1)}\le D^{(\top)}$.
Since $\Upsilon_{y}$ is a tree-covering,
we have 
$\Upsilon_{y}^{(1)}(y) = D_{y}^{(1)}$.
Hence 
$D_{y}^{(1)} = \Upsilon_{y}^{(1)}(y)
\subseteq \Upsilon_{y}^{(\top)}(y)
=D_{y}^{(\bot)}$.
\end{proof}

The following theorem was originally proved in \cite{kozik2016weak}.

\begin{thm}\label{CycleConsistencyImpliesSolutionIdemp}\cite{kozik2016weak}
Suppose $\Gamma\subseteq \mathcal R_{A}$ is a constraint language containing all constant relations and having a WNU polymorphism of every arity $n\ge 3$, and 
$\mathcal I$ is a cycle-consistent instance of $\CSP(\Gamma)$.
Then $\mathcal I$ has a solution.
\end{thm}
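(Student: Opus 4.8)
The plan is to prove the statement by induction on the total size of the domains $\sum_{x}|D_x|$, using strong subalgebras to reduce domains one step at a time. Given a cycle-consistent instance $\mathcal I$, if every domain $D_x$ is a singleton, then cycle-consistency (1-consistency) immediately yields that the unique assignment satisfies every constraint, so $\mathcal I$ has a solution. Otherwise pick a variable $x$ with $|D_x|\ge 2$. View $D_x$ as an idempotent algebra $(D_x;\Pol(\Gamma))$ — here we crucially use that $\Gamma$ contains all constant relations, so the algebra is idempotent, and that $D_x$ is pp-definable. Since $\Gamma$ has WNU polymorphisms of every arity $n\ge 3$, Theorem~\ref{CharacterizationOfALLWNUTHM}(5) tells us that every subalgebra of size at least 2 (in particular $(D_x;\Pol(\Gamma))$ itself) has a nontrivial strong subuniverse $B\le_{\mathcal T} (D_x;\Pol(\Gamma))$.

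The next step is to build a reduction. Define $D^{(\top)}$ by $D^{(\top)}_x = B$ and $D^{(\top)}_y = D_y$ for $y\ne x$. Apply Lemma~\ref{ConstraintPropagation} to get a maximal 1-consistent reduction $D^{(\bot)}\le D^{(\top)}$, together with tree-coverings $\Upsilon_y$ such that $\Upsilon_y^{(\top)}(y)$ defines $D^{(\bot)}_y$. I would now argue two things: first, that no domain $D^{(\bot)}_y$ becomes empty, and second, that $\mathcal I^{(\bot)}$ is still cycle-consistent. Emptiness is ruled out as follows: in a tree-covering $\Upsilon_y$ of a $1$-consistent instance the relation $\Upsilon_y(z)$ equals $D_z$ for each variable $z$; so the reduction forced on each variable by $\Upsilon_y$ comes from intersecting, along a tree, the projections of constraints with the set $B$ sitting on all the preimages of $x$. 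Using Theorem~\ref{CommonPropertiesThm}(1) and (2) — strong subuniverses are preserved under taking subdirect products and intersecting with them, and (when $\mathcal T\ne PC$ or there are no central subuniverses) survive projection — one shows that the value forced on each variable of $\Upsilon_y$, and hence $D^{(\bot)}_y=\Upsilon_y^{(\top)}(y)$, is a nonempty strong subuniverse (of the appropriate type) of $D_y$. The $PC$ case needs the extra observation that the relation $\mathcal I^{(\bot)}$ cannot be $(B_1,\dots,B_n)$-essential along a cycle, which is exactly where cycle-consistency of the reduced instance is used — here Theorem~\ref{CommonPropertiesThm}(3), Theorem~\ref{PCBint}, and Lemma~\ref{PCBsub} govern how the types interact and prevent the intersection of strong subuniverses around a cycle from collapsing.

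Once $\mathcal I^{(\bot)}$ is shown to be a cycle-consistent instance with all domains nonempty, pp-definable, and strictly smaller in total size (since $D^{(\bot)}_x\subseteq B\subsetneq D_x$), the inductive hypothesis applies and gives a solution of $\mathcal I^{(\bot)}$, which is in particular a solution of $\mathcal I$. I would phrase the induction carefully so that the class of instances considered is "cycle-consistent instances whose constraint relations are pp-definable over $\Gamma$ and whose variable domains are pp-definable over $\Gamma$," since the reduction step leaves $\Gamma$ but stays inside this class; cycle-consistency of $\mathcal I^{(\bot)}$ may require first running the standard consistency-establishing procedure (which only shrinks domains and never empties a nonempty one, by the absorption/strong-subuniverse stability just established), so that a genuinely cycle-consistent smaller instance is fed to the induction.

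The main obstacle I expect is propagating cycle-consistency through the reduction: establishing that after replacing $D_x$ by $B$ and closing under $1$-consistency (and then re-establishing cycle-consistency), the domains do not become empty and the types of the strong subuniverses on the various coordinates remain compatible around every cycle. This is precisely the content that Theorem~\ref{CommonPropertiesThm}, Theorem~\ref{PCBint}, and Lemma~\ref{PCBsub} are designed to supply — the delicate point is the interplay of $PC$-type subuniverses with central ones and the exclusion of $(B_1,\dots,B_n)$-essential relations for $n\ge 3$, which forces any obstruction to live on a single binary link and be resolved by cycle-consistency. The rest of the argument — singleton base case, the reduction step, the decreasing measure — is routine bookkeeping.
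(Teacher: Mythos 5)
Your overall architecture is the right one — reduce domains iteratively to strong subuniverses obtained from Theorem~\ref{CharacterizationOfALLWNUTHM}, propagate to $1$-consistency via Lemma~\ref{ConstraintPropagation}, and use Theorems~\ref{CommonPropertiesThm} and \ref{PCBint} to control what happens to the other domains. But there is a genuine gap exactly at the step you flag as "routine": the claim that after replacing $D_x$ by $B$ you can \emph{re-establish cycle-consistency} of the reduced instance "without emptying a nonempty domain, by the absorption/strong-subuniverse stability just established." Nothing you have established implies this. The cycle-consistency--closing procedure intersects domains along cycles, and the resulting sets are not strong subuniverses of anything in any controlled way; asserting that this procedure preserves nonemptiness is essentially equivalent to the theorem you are trying to prove, so your induction is circular at this point. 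The paper deliberately avoids this: it never makes the reduced instances cycle-consistent. It maintains only $1$-consistency of each reduction and keeps the \emph{original} cycle-consistent instance throughout, recording the whole history $D^{(0)}\ge D^{(1)}\ge\dots\ge D^{(s)}$ together with the type $\mathcal T_i$ of each step.

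That history is not bookkeeping — it is what makes the nonemptiness argument work. If propagation empties a domain, one extracts from the tree-covering two variables $x_1,x_2$ both mapped to $x$ such that $\Upsilon_y^{(s)}(x_1,x_2)$ is a binary $B$-essential relation (Theorem~\ref{CommonPropertiesThm}(3) rules out arity $\ge 3$ and the BA type). Cycle-consistency of the \emph{original} instance only tells you that the path from $x_1$ to $x_2$ connects $a$ to $a$ inside the original domains, i.e.\ $S_{s+1,s+1,0}\neq\varnothing$, while what you need is $S_{s+1,s+1,s}\neq\varnothing$. Bridging that gap is the content of the induction on the triple-indexed family $S_{i,j,k}$, which locates a minimal $k$ with $S_{s+1,s+1,k+1}=\varnothing$ and produces three strong subuniverses of $S_{s,s,k}$ with pairwise nonempty but jointly empty intersection, contradicting Theorem~\ref{PCBint}. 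Your flattened induction on $\sum_x|D_x|$ discards the reduction history and so cannot run this argument; and your remark that "cycle-consistency of the reduced instance" excludes the essential relation mislocates where cycle-consistency enters (it is used only for the original instance, and the obstruction arises for central as well as PC types). A smaller omission: you note the caveat in Theorem~\ref{CommonPropertiesThm}(2) for the PC case but never arrange for it to hold — the paper does so by always choosing a BA or central subuniverse when one exists anywhere, and resorting to PC only when no variable's domain admits a nontrivial central subuniverse.
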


\begin{proof}
Let $\algA = (A;\Pol(\Gamma))$.
Since every domain $D_{x}$ has a pp-definition over $\Gamma$, it is a subuniverse of the algebra $\algA$.
Hence, we may consider a subalgebra 
$\alg D_{x}\le \algA$ for every variable $x$ of $\mathcal I$.

The idea of the proof is to build a sequence of 
1-consistent reductions 
$D^{(0)}\ge D^{(1)}\ge \dots\ge D^{(t)}$ 
of the instance $\mathcal I$
such that 
\begin{enumerate}
\item[(1)] $D^{(0)}_{y} = D_{y}$ for every $y\in\Var(\mathcal I)$;
\item[(2)] $|D^{(t)}_{y}| = 1$ for every $y\in\Var(\mathcal I)$;
\item[(3)] $D^{(i)}_{y}\le_{\mathcal T_{i}}\alg D^{(i-1)}_{y}$ for every $y\in\Var(\mathcal I)$ and $i\in[t]$.
\end{enumerate}
Obviously, 
the 1-consistent reduction 
$D^{(t)}$ gives us a solution of the instance.

Assume that we already have 
reductions 
$D^{(0)}, D^{(1)},\dots,D^{(s)}$.
Let us show how to build the next reduction.

By Theorem~\ref{CharacterizationOfALLWNUTHM}
every algebra $\alg D_{x}^{(s)}$
of size at least two has a nontrivial strong subalgebra. 
If for some variable $x$ the algebra
$\alg D_{x}^{(s)}$ has a nontrivial BA or central subuniverse, 
we choose this variable.
Otherwise we choose a variable $x$ such that 
$\alg D_{x}^{(s)}$ has a nontrivial PC subuniverse.
Thus, let 
$B$ be a nontrivial strong subuniverse of $\alg D_{x}^{(s)}$ of a type $\mathcal T_{s+1}$.


Most likely, if we just reduce 
$D_{x}^{(s)}$ to $B$ then the obtained instance 
will not be 1-consistent.
Nevertheless, we can reach 1-consistency.

By $D^{(\top)}$ we denote the 
reduction that coincides with 
$D^{(s)}$ on all variables but $x$ and $D_{x}^{(\top)} = 
B$. 
By $D^{(\bot)}$ we denote a maximal 1-consistent reduction of $\mathcal I^{(\top)}$, which theoretically can be empty.
By Lemma~\ref{ConstraintPropagation} 
for every variable $y\in\Var(\mathcal I)$ 
there exists a tree-covering $\Upsilon_{y}$ of $\mathcal I$
such that $\Upsilon_{y}^{(\top)}(y)$ defines 
$D_{y}^{(\bot)}$.
We consider two cases.

Case 1. The reduction $D^{(\bot)}$ is empty,
then the instance $\Upsilon_{y}^{(\top)}$ does not have a solution.
We will show that this case cannot happen.
To simplify our notations we put $D^{(s+1)} = D^{(\top)}$. 
Let $X$ be the set of variables from $\Upsilon_{y}$ that are mapped to $x$ in the definition of the covering $\Upsilon_{y}$.
We know that $\Upsilon_{y}^{(s)}$ does not have a solution 
such that all of these variables are from $B$.
Choose a minimal set $\{x_{1},\dots,x_{\ell}\}\subseteq X$ of variables we need to restrict to $B$ so that 
$\Upsilon_{y}^{(s)}$ has no solutions.
Let $\Upsilon_{y}^{(s)}(x_{1},\dots,x_{\ell})$ define 
a relation $R$.
Then $R$ is a $B$-essential relation.
Since the instance $\mathcal I^{(s)}$ is 
1-consistent and 
$\Upsilon_{y}$ is a tree-covering of 
$\mathcal I$, the relation $R\le \alg D_{x}^{(s)}\times \dots\times \alg D_{x}^{(s)}$ is subdirect
and $\ell\ge 2$. 

By Theorem \ref{CommonPropertiesThm}(3),
$\mathcal T_{s+1}$ cannot be an absorbing type and 
$\ell=2$.
Let $\Var(\Upsilon_{y}) =
\{x_{1},x_{2},y_{1},\dots,y_{u}\}$
and $S$ be the solution set of 
$\Upsilon_{y}$, that is a relation defined by 
$\Upsilon_{y}(x_{1},x_{2},y_{1},\dots,y_{u})$.
Since $\mathcal I^{(i)}$ is 1-consistent for any $i\le s$ and 
$\Upsilon_{y}$ is a tree-covering, 
the relation $S^{(i)}$ is subdirect.

For $0\le i, j,k\le s+1$ put
$
S_{i,j,k}= S\cap (D_{x_{1}}^{(i)}\times D_{x_{2}}^{(j)}\times D_{y_{1}}^{(k)}\times\dots\times D_{y_{u}}^{(k)})
$.


Let us show by induction on $i$ that  
$S_{i,i,k+1}$ is a strong subuniverse of 
$\alg S_{i,i,k}$
for any $0\le k\le i\le s$.
For $i=k$ this follows from 
Theorem~\ref{CommonPropertiesThm}(1) and the fact that $S^{(k)}$ is subdirect.
Let us prove the induction step assuming that 
$S_{i,i,k+1}$ is a strong subuniverse of 
$S_{i,i,k}$.
Since $S^{(i)}$ is subdirect,
Theorem~\ref{CommonPropertiesThm}(1) implies that $S_{i+1,i+1,k}$ is a strong subuniverse of 
$S_{i,i,k}$.
Then by Lemma~\ref{PCBsub}, 
we obtain that $S_{i+1,i+1,k}\cap S_{i,i,k+1} = S_{i+1,i+1,k+1}$
is a strong subuniverse of $S_{i+1,i+1,k}$.
Thus, $S_{i,i,k+1}$ is a strong subuniverse of 
$S_{i,i,k}$
for any $0\le k\le i\le s$.


Since $R\cap B^{2}=\varnothing$, 
we have $S_{s+1,s+1,s} = \varnothing.$
Since 
$\mathcal I$ is cycle-consistent,
the path from $x_1$ to $x_2$ in $\Upsilon_{y}$ 
connects any $a\in D_{x}^{(s+1)}$ with $a$.
Therefore. $S_{s+1,s+1,0}\neq \varnothing$.
Choose a minimal $k<s$ 
such that
$S_{s+1,s+1,k+1} = \varnothing.$
Thus, we have 
\begin{equation}\label{ThreeIntersection}
S_{s+1,s+1,k+1} = S_{s+1,s,k}\cap S_{s,s+1,k}\cap S_{s,s,k+1}
=\varnothing
\end{equation}
We already proved that 
$S_{s,s,k+1}$ is 
a strong subuniverse of $S_{s,s,k}$.
Since $S^{(s)}$ is subdirect, 
Theorem~\ref{CommonPropertiesThm}(1) implies that
$S_{s+1,s,k}$ and $S_{s,s+1,k}$ are strong subuniverses of 
$S_{s,s,k}$.
Thus, in (\ref{ThreeIntersection}) we 
have an intersection of three
strong subuniverses 
of $S_{s,s,k}$.

Since $k$ was chosen minimal, 
$S_{s+1,s+1,k}=S_{s+1,s,k}\cap S_{s,s+1,k}$.  is not empty.
Since $S^{(s)}$ is subdirect,
$S_{s,s+1,s}\neq \varnothing$ and $S_{s+1,s,s}\neq \varnothing$.
Therefore 
$S_{s,s+1,k+1} = S_{s,s+1,k}\cap S_{s,s,k+1}\neq \varnothing$
and 
$S_{s+1,s,k+1} = S_{s+1,s,k}\cap S_{s,s,k+1}\neq \varnothing$.
Thus, the intersection of any two 
strong subuniverses in (\ref{ThreeIntersection}) 
is not empty.
Since 
$S_{s+1,s,k}$ is not a binary absorbing subuniverse, 
we get a contradiction
with Theorem~\ref{PCBint}.

Case 2. The reduction $D^{(\bot)}$ is not empty.
Put $D^{(s+1)} = D^{(\bot)}$.
Let $\Var(\Upsilon_{y}) = 
\{y,x_{1},\dots,x_{s}\}$ 
and 
$R_{y}$ be defined by 
$\Upsilon_{y}^{(s)}(y,x_{1},\dots,x_{s})$.
Since 
$\mathcal I^{(s)}$ is 1-consistent, 
the relation $R_{y}$ is subdirect.
We know that 
$D_{y}^{(s+1)} = 
\proj_{1}(R_{y}\cap (D_{y}^{(\top)}\times 
D_{x_{1}}^{(\top)}\times \dots\times D_{x_{s}}^{(\top)}))$.
By 
Theorem~\ref{CommonPropertiesThm}(2),
$D_{y}^{(s+1)}$
is a strong subuniverse of $D_{y}^{(s)}$ of type $\mathcal T_{s+1}$.
Thus, we proved that the new reduction $D^{(s+1)}$ 
satisfies condition (3), and we made our sequence of reductions longer.
\end{proof}

\subsection{An algorithm for cycle-consistency checking}

\newcommand{\Changed}{\mbox{Changed}}
\newcommand{\Solve}{\mbox{Solve}}
\newcommand{\Reduce}{\mbox{Reduce}}
\newcommand{\Output}{\mbox{Output}}
\newcommand{\CheckCycleConsistency}{\mbox{\textsc{CheckCC}}}

The cycle-consistency is a local property and it can be checked in polynomial time. 
Moreover, we can either find a cycle-consistent reduction of
the instance, or prove that it has no solutions.

\begin{algorithm}
\begin{algorithmic}[1]
\Function{\CheckCycleConsistency}{$\mathcal I$}
\State{\textbf{Input:} CSP($\Gamma$) instance $\mathcal I$}
    \For{$u,v\in\Var(\mathcal I)$} 
    \Comment{Calculate binary projections $R_{u,v}$}
        \State{$R_{u,v} := D_{u}\times D_{v}$}
        \For{$C\in\mathcal I$}        
            \State{$R_{u,v} :=R_{u,v}\cap \proj_{u,v} C$}
        \EndFor
    \EndFor
    \Repeat \Comment{Propagate constraints to reduce  $R_{u,v}$}
        \State{$\Changed:= false$}
        \For{$u,v,w\in\Var(\mathcal I)$} 
            \State{$R_{u,v}'
            :=R_{u,v}\cap (R_{u,w}\circ R_{w,v})$}
            \If{$R_{u,v} \neq R_{u,v}'$}
                \State{$R_{u,v}:=R_{u,v}'$}
                \State{$\Changed:= true$}               
            \EndIf
        \EndFor
    \Until{$\neg\Changed$} \Comment{We cannot reduce $R_{u,v}$ anymore}
    \For{$u\in\Var(\mathcal I)$} 
        \State{$D^{(\bot)}_{u}:= \proj_{1}(R_{u,u})$}
    \EndFor
    \Return{$D^{(\bot)}$}
\EndFunction
\end{algorithmic}
\end{algorithm}

We start with the function 
$\CheckCycleConsistency$ checking the cycle-consistency
of a CSP instance $\mathcal I$ (see the pseudocode). 
First, for every pair of different variables $(u,v)$ we consider the intersections of projections of all constraints onto these variables.
The corresponding relation we denote by $R_{u,v}$.
By $\proj_{u,u}(C)$ we denote the set of all pairs $(a,a)$ such that $a\in\proj_{u}(C)$. 
Then $R_{u,u} = \{(a,b)\mid a=b\in R_{u}\}$, where 
$R_{u}$ is the intersection of the projections 
of all constraints on $u$.
Then, for every $u,v,w\in\Var(\mathcal I)$
we iteratively replace
$R_{u,v}$ by $R_{u,v}' = R_{u,v}\cap (R_{u,w}\circ R_{w,v})$,
where 
$R_{u,w}\circ R_{w,v}$ is the composition of binary relations, 
that is 
$(R_{u,w}\circ R_{w,v})(x,y)  = \exists  z \; R_{u,w}(x,z)\wedge R_{w,v}(z,y).$
We repeat this procedure while we can change some $R_{u,v}$.
In the end we define and return the reduction 
$D^{(\bot)}$ satisfying the following lemma.
 
\begin{lem}\label{ProofCycleConsistencyFunction}
Suppose the function 
$\CheckCycleConsistency$ returns 
a reduction $D^{(\bot)}$ on a CSP instance $\mathcal I$.
If $D^{(\bot)}_{u} = D_{u}$ for every $u\in\Var(\mathcal I)$ 
then $\mathcal I$ is cycle-consistent.
Moreover, any solution of $\mathcal I$ is also a solution of 
$\mathcal I^{(\bot)}$.
\end{lem}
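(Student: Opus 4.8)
The goal is to show two things about the reduction $D^{(\bot)}$ returned by $\CheckCycleConsistency$: first, that if $D^{(\bot)}$ equals the original domains then $\mathcal I$ is cycle-consistent; second, that no solution of $\mathcal I$ is lost by passing to $\mathcal I^{(\bot)}$. I would handle the second (soundness) claim first, since it is the easier of the two: a straightforward induction on the steps of the algorithm. Initially $R_{u,v}$ is an intersection of projections of constraints of $\mathcal I$ onto $\{u,v\}$, so every solution of $\mathcal I$ projects into each $R_{u,v}$; and the composition step $R_{u,v}':=R_{u,v}\cap(R_{u,w}\circ R_{w,v})$ clearly preserves this invariant, because if $(a,b)$ and $(b,c)$ are the values of $u,w,v$ in a solution then $(a,c)\in R_{u,w}\circ R_{w,v}$. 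Hence at termination every solution of $\mathcal I$ still satisfies $u\mapsto a\in\proj_1(R_{u,u})=D^{(\bot)}_u$, and also satisfies every binary constraint $R_{u,v}$, which are exactly the binary constraints of $\mathcal I^{(\bot)}$ together with the original ones restricted to the new domains; so the solution survives in $\mathcal I^{(\bot)}$.

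**Cycle-consistency.** For the first claim, assume $D^{(\bot)}_u=D_u$ for all $u$. I must verify both defining properties. For 1-consistency: $D^{(\bot)}_u=D_u$ together with $R_{u,v}\subseteq\proj_{u,v}(C)$ for every constraint $C$, and the fact that the algorithm has stabilized, should force $\proj_u(C)=D_u$ for each constraint $C$ containing $u$. The point is that if some constraint cut down $\proj_u(C)$ to a proper subset, the loop initializing $R_{u,u}$ would have recorded this, giving $\proj_1(R_{u,u})\subsetneq D_u$, contradicting the hypothesis. For the cycle part: given a variable $z$, a value $a\in D_z$, and a path $z=z_1-C_1-z_2-\cdots-C_{l-1}-z_l=z$, I need that the path connects $a$ to $a$. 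The key observation is that the stabilized relation $R_{z,z}$ is contained in the composition $\proj_{z,z_2}(C_1)\circ\proj_{z_2,z_3}(C_2)\circ\cdots\circ\proj_{z_{l-1},z}(C_{l-1})$ — this is exactly what the transitive-closure/propagation loop enforces, since each $R_{z_i,z_{i+1}}\subseteq\proj_{z_i,z_{i+1}}(C_i)$ and the loop closes $R_{z,z}$ under composition through all intermediate variables. Conversely $R_{z,z}$ contains $(a,a)$ because $\proj_1(R_{z,z})=D^{(\bot)}_z=D_z\ni a$ and $R_{z,z}$ only ever contains diagonal-type pairs after intersection with $\proj_{z,z}$... actually here I must be a little careful: $R_{z,z}$ need not literally be the diagonal, but since it is obtained by intersecting with the set of pairs $(b,b)$ with $b\in\proj_z(C)$ in the initialization (via $\proj_{u,u}$), every element of $R_{z,z}$ is of the form $(b,b)$, and $\proj_1(R_{z,z})=D_z$ gives $(a,a)\in R_{z,z}$. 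Then membership $(a,a)\in R_{z,z}\subseteq(\text{composition along the path})$ unpacks precisely into the existence of witnesses $a_i\in D_{z_i}$ with $a_1=a_l=a$ and $(a_i,a_{i+1})\in\proj_{z_i,z_{i+1}}(C_i)$, which is the definition of the path connecting $a$ and $a$.

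**Main obstacle.** The delicate point — and the step I expect to require the most care — is making precise the claim that after the propagation loop terminates, $R_{z,z}$ is contained in the composition of the projections along \emph{an arbitrary} path from $z$ to $z$, not merely a path of length matching some bookkeeping order. A path may revisit variables and traverse constraints repeatedly, so one needs the invariant ``$R_{u,v}\subseteq R_{u,w}\circ R_{w,v}$ for all $u,v,w$ at termination'' and then an induction on path length, peeling off one constraint $C_i$ at a time and using $R_{z_i,z_{i+1}}\subseteq\proj_{z_i,z_{i+1}}(C_i)$ together with this composition invariant to collapse $R_{z_1,z_1}$ down through $R_{z_1,z_2}\circ R_{z_2,z_2}\circ\cdots$. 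One should also note that $R_{u,u}\circ R_{u,v}=R_{u,v}$ type identities hold because the diagonal-shaped $R_{u,u}$ is "subdiagonal-identity" on $\proj_1(R_{u,u})$, and 1-consistency guarantees all the intermediate projections are nonempty on the relevant domains so the existential witnesses genuinely exist. Everything else in the proof is routine chasing of definitions.
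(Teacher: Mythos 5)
Your proposal is correct and follows essentially the same route as the paper's proof: the soundness claim via the invariant that every solution projects into each $R_{u,v}$, 1-consistency from the initialization line forcing $\proj_u(C)\supseteq\proj_1(R_{u,u})=D_u$, and cycle-consistency by combining the diagonal shape of $R_{z,z}$ with the termination invariant $R_{u,v}\subseteq R_{u,w}\circ R_{w,v}$, applied iteratively along the path. The "main obstacle" you flag is handled exactly as in the paper, by repeated application of that invariant together with associativity of composition, and poses no real difficulty.
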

\begin{proof}
Assume that $D^{(\bot)}_{u} = D_{u}$ for every $u\in\Var(\mathcal I)$.
It follows from the line 6 of the pseudocode that 
for any variable $u\in\Var(\mathcal I)$ 
the projection of any constraint on $u$ 
equals $\proj_{1}(R_{u,u})  = D_{u}$.
Hence $\mathcal I$ is 1-consistent.
Let us prove the cycle-consistency.
Consider a path
$u_{1}-C_{1}-u_{2}-\dots-u_{\ell-1}-C_{\ell-1}-
u_{\ell}$
starting and ending 
with $u_{1}=u_{\ell}$. 
Since the projection 
of $C_{j}$ onto $u_{j},u_{j+1}$ 
contains $R_{u_{j},u_{j+1}}$ for every $j$, 
it is sufficient to prove 
that 
$$R_{u_1,u_\ell}\subseteq 
R_{u_{1},u_{2}}\circ 
R_{u_{2},u_{3}}\circ 
\dots
\circ 
R_{u_{\ell-1},u_{\ell}}.$$
Here we used the fact that the composition of binary relations is associative.
Since the function stopped, 
$R_{u,v}\subseteq R_{u,w}\circ R_{w,v}$ 
for all $u,v,w$. 
Thus, in the right-hand  side we first replace 
$R_{u_{1},u_{2}}\circ 
R_{u_{2},u_{3}}$ by $R_{u_1,u_3}$, 
then replace $R_{u_{1},u_{3}}\circ 
R_{u_{3},u_{4}}$ by $R_{u_1,u_4}$, and so on.
Finally we will get the required condition.

Let us prove that 
any solution of $\mathcal I$ is also a solution of 
$\mathcal I^{(\bot)}$.
In fact, since all the constraints 
$R_{u,v}(u,v)$ were derived from the original constraints, they should hold on any solution.
Thus, $\proj_{1}(R_{u,u})=D_{u}^{(\bot)}$ means 
that any solution of $\mathcal I$ should have 
$u\in D_{u}^{(\bot)}$, which completes the proof.
\end{proof}

\begin{algorithm}
\begin{algorithmic}[1]
\Function{\Solve}{$\mathcal I$}
\State{\textbf{Input:} CSP($\Gamma$) instance $\mathcal I$} 
    \State{$D^{(\bot)} := \CheckCycleConsistency(\mathcal I)$}
    \If{$D^{(\bot)}$ is empty}
        \Return{``No solution"}
    \EndIf
    \If{$D^{(\bot)}_{y}=D_{y}$ for every $y\in\Var(\mathcal I)$}
        \Return{\mbox{``Ok"}}
    \EndIf    
    \Return{$\Solve (\mathcal I^{(\bot)})$}
\EndFunction
\end{algorithmic}
\end{algorithm}

Thus, the function $\CheckCycleConsistency$ either 
gives us a reduction of the instance, or says that the instance is cycle consistent. After we found a reduction, we 
can apply it to the instance and check the cycle consistency again.
We can do this till the moment when we get 
a cycle-consistent reduction of the instance.
See the pseudocode of the function $\Solve$ doing this.
From Lemma~\ref{ProofCycleConsistencyFunction}
we can easily derive the following lemma.

\begin{lem}\label{ProofSolveFunction}
If the function 
$\Solve$ returns 
\mbox{``Ok"} then 
there exists a nonempty cycle-consistent reduction 
of the instance;
if it returns \mbox{``No solution"}
then the instance has no solutions.
\end{lem}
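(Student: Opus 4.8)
The plan is to prove both claims simultaneously by induction on the measure $\mu(\mathcal I):=\sum_{y\in\Var(\mathcal I)}|D_y|$, which at the same time shows that $\Solve$ terminates. First I would record two preliminary observations. (i) The instance $\mathcal I^{(\bot)}$ produced inside $\Solve$ is genuinely a reduction of $\mathcal I$: each $D^{(\bot)}_u=\proj_1(R_{u,u})$ is built from the constraints of $\mathcal I$ by intersections and relational compositions, hence is pp-definable over $\Gamma$, and clearly $D^{(\bot)}_u\subseteq D_u$. (ii) On the recursive branch of $\Solve$ the reduction $D^{(\bot)}$ is not empty (otherwise the first branch is taken) and $D^{(\bot)}_y\neq D_y$ for some $y$ (otherwise the second branch is taken), so $\mu(\mathcal I^{(\bot)})<\mu(\mathcal I)$; thus the recursion is well founded and the induction goes through.

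For the ``No solution'' verdict: if $\Solve$ outputs it in the first branch, then $D^{(\bot)}$ is empty, so $\mathcal I^{(\bot)}$ has no solution, and by Lemma~\ref{ProofCycleConsistencyFunction} every solution of $\mathcal I$ is a solution of $\mathcal I^{(\bot)}$; hence $\mathcal I$ has no solution. If the verdict comes from the recursive call $\Solve(\mathcal I^{(\bot)})$, the inductive hypothesis gives that $\mathcal I^{(\bot)}$ has no solution, and the same application of Lemma~\ref{ProofCycleConsistencyFunction} shows that $\mathcal I$ has none either.

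For the ``Ok'' verdict: if $\Solve$ outputs it in the second branch, then $D^{(\bot)}_y=D_y$ for all $y$ and $D^{(\bot)}$ is not empty, so Lemma~\ref{ProofCycleConsistencyFunction} gives that $\mathcal I$ itself is cycle-consistent, and $D$ is the desired nonempty cycle-consistent reduction. If the verdict comes from the recursive call, the inductive hypothesis produces a nonempty cycle-consistent reduction $D'$ of $\mathcal I^{(\bot)}$, so $D'_x\subseteq D^{(\bot)}_x\subseteq D_x$ with $D'_x$ pp-definable over $\Gamma$, i.e.\ $D'$ is also a reduction of $\mathcal I$. Here I would use the key observation that, because $D'\le D^{(\bot)}$, a constraint $R\cap(D^{(\bot)}_{x_1}\times\dots\times D^{(\bot)}_{x_n})$ of $\mathcal I^{(\bot)}$, restricted to $D'_{x_1}\times\dots\times D'_{x_n}$, is just $R\cap(D'_{x_1}\times\dots\times D'_{x_n})$; hence the instance obtained from $\mathcal I$ by the reduction $D'$ is literally the same as the one obtained from $\mathcal I^{(\bot)}$ by $D'$. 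Therefore it is cycle-consistent, and $D'$ is nonempty, which finishes the inductive step.

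The only point requiring real care is this last compatibility step: that a reduction of a reduction is again a reduction of the original instance and that cycle-consistency is inherited through the composition. This is a direct unfolding of the definitions of a reduction and of $\mathcal I^{(\top)}$, so I do not expect it to be difficult; everything else is bookkeeping on top of Lemma~\ref{ProofCycleConsistencyFunction} and the strictly decreasing measure $\mu$.
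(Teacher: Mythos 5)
Your proof is correct and follows exactly the route the paper intends: the paper gives no explicit argument, saying only that the lemma is easily derived from Lemma~\ref{ProofCycleConsistencyFunction}, and your induction on $\sum_y|D_y|$ together with the observation that a reduction of a reduction is a reduction of the original instance is a faithful, complete elaboration of that derivation. The only point worth noting is that you (correctly) use the ``moreover'' clause of Lemma~\ref{ProofCycleConsistencyFunction} unconditionally, which matches the paper's proof of that clause even though its statement could be misread as conditional on $D^{(\bot)}=D$.
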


Let us check that the functions $\CheckCycleConsistency$ and $\Solve$ work in polynomial time.
In the function 
$\CheckCycleConsistency$
we go through the \textbf{repeat} loop 
at most $2^{|A|^{2}}\cdot |\Var(\mathcal I)|^{2}$ times, 
because every time we change at least one relation $R_{u,v}$, which is binary, 
and we have $|\Var(\mathcal I)|^{2}$ of them.
Thus, $\CheckCycleConsistency$ works in polynomial time.
In the function 
$\Solve$ every time we go deeper in the recursion, 
we reduce the domain of at least one variable
and we have polynomially many of them.
Thus, the function $\Solve$ is also polynomial.

\subsection{When cycle-consistency solves CSP}\label{whenccworkssubsection}

In this section we characterize all constraint languages $\Gamma$ for which the cycle-consistency 
guarantees the existence of a solution and, therefore, the function $\Solve$ solves $\CSP(\Gamma)$.
Note that the characterization of all CSPs that can be solved by local methods was obtained earlier 
in \cite{bartokozikboundedwidth,bulatov2009bounded}.
For more information on this topic see the surveys \cite{bartopolymorphisms,BulatovAboutCSP}.

We will prove a general result for the nonidempotent case, that is why we need to repeat the construction we made in Subsection \ref{ReductionToAcoreSubsection}.
Again we assume that 
$A = \{0,1,\dots,k-1\}$.
Suppose $\Gamma\subseteq\mathcal R_{A}$ is a constraint language, 
$f$ is a unary polymorphism with the minimal range. 
Then $f(\Gamma)$ is the constraint language with domain $f(A)$
defined by $f(\Gamma) = \{f(R)\mid R\in \Gamma\}$.
Let 
$\Gamma' = f(\Gamma)\cup \{x=a\mid a\in f(A)\}$.
As we know from Lemma~\ref{ReductionToACore}
and Theorem \ref{addingIdempotency}
the problems 
$\CSP(\Gamma)$, $\CSP(f(\Gamma))$, 
and $\CSP(\Gamma')$ are polynomially equivalent.
Recall how we build an instance of $\CSP(\Gamma)$
from an  instance of 
$\CSP(\Gamma')$.
Suppose we have an instance $\mathcal I'$ 
of $\CSP(\Gamma')$.
First, to get an instance $\mathcal I_{f}$ over $f(\Gamma)$ we add
a quantifier-free pp-definition of the relation 
$\sigma(z_{0},\dots,z_{k-1})$ over $f(\Gamma)$
(see the proof of Theorem \ref{addingIdempotency}
and Lemma~\ref{QuantifierFreeDefinitionOfSigma}),
and replace every constraint 
$(x=a)$ by $x=z_{a}$.
To get an instance $\mathcal I$ of $\CSP(\Gamma)$ we just replace every constraint relation 
$f(R)$ by $R$.
Thus we have a mapping, which we denote by $\Sigma$, 
that assigns an instance of $\CSP(\Gamma)$ to an instance of
$\CSP(\Gamma')$.
First, we show that 
$\Solve$ works identically on 
$\Sigma(\mathcal I')$ and $\mathcal I'$

\begin{lem}\label{CheckConsistencyWorksTheSame}
Suppose 
the function $\Solve$ returns 
\mbox{``Ok"} on $\mathcal I'$, 
then it returns \mbox{``Ok"} on $\Sigma(\mathcal I')$.
\end{lem}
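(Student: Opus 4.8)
The plan is to unfold the definition of the mapping $\Sigma$ and argue that $\Solve$ cannot distinguish $\mathcal I'$ from $\Sigma(\mathcal I')$ because the transformation only renames and pp-defines relations, leaving the underlying solution structure intact. First I would recall that $\Sigma(\mathcal I')$ is obtained from $\mathcal I'$ by (a) adding a quantifier-free pp-definition of $\sigma(z_0,\dots,z_{k-1})$, (b) replacing every constraint $(x=a)$ by $(x=z_a)$, and (c) replacing each constraint relation $f(R)$ by $R$; note that step (c) matters only in how it interacts with the domains, since $f$ is a bijection on $f(A)$ precisely when restricted there, but here $f$ has minimal range so on $f(A)$ it is the identity — hence step (c) is essentially vacuous for the core $\Gamma'=f(\Gamma)\cup\{x=a\mid a\in f(A)\}$, and the only real change is the introduction of the auxiliary variables $z_0,\dots,z_{k-1}$ together with the constraint $\sigma$.

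Next I would track what $\CheckCycleConsistency$ does on the new variables. The key point is that $\sigma=\Sg_{\algA^k}(\{(0,1,\dots,k-1)\})$ is a subdirect subalgebra of $\algA^k$ whose first-coordinate projection relations between the $z_i$'s behave "freely" in the sense relevant to $1$-consistency and path-consistency: since $\Gamma$ is a core, for every tuple in $\sigma$ there is a bijective unary polymorphism realizing it, so the binary projections $R_{z_i,z_j}$ computed by the algorithm are exactly the graphs of the bijective unary polymorphisms' action, and reducing the domain of any $z_i$ forces the corresponding reduction on all the others in a way that mirrors reducing the domain of the original variable. I would show by induction along the run of $\Solve$ that at each recursive call the instance on $\Sigma(\mathcal I')$ is, up to the bijective correspondence between the reduced domains of the $z_i$'s and the domains of the original variables, the same as the instance on $\mathcal I'$; in particular $D^{(\bot)}$ is empty on one iff it is empty on the other, and $D^{(\bot)}_y=D_y$ holds for all $y$ on one iff it does on the other. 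Therefore $\Solve$ returns ``Ok'' on $\Sigma(\mathcal I')$ exactly when it does on $\mathcal I'$, which is the claim (we only need one direction).

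The main obstacle I anticipate is handling the interaction between the auxiliary constraint $\sigma$ and the constraint-propagation loop of $\CheckCycleConsistency$: one must verify that the binary projections $R_{z_i,z_j}$ and their compositions with the $R_{u,v}$ coming from the genuine constraints never create a reduction that would not also occur in $\mathcal I'$, and conversely that every reduction forced in $\mathcal I'$ is faithfully reproduced. The cleanest way to do this is to set up an explicit correspondence: a reduction $D$ of $\mathcal I'$ lifts to a reduction $\hat D$ of $\Sigma(\mathcal I')$ by putting $\hat D_{z_a}=\{\,\varphi(a)\mid \varphi$ a unary bijective polymorphism with $\varphi|_{\{\text{relevant tuple}\}}$ compatible$\,\}$ — equivalently, $\hat D_{z_a}=\proj_a(\sigma\cap(D^{(\text{orig})}\text{-constraints}))$ — and check that $\CheckCycleConsistency$ commutes with this lifting. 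Because $\sigma$ is exactly the ``graph of all unary polymorphisms'' relation from Lemma~\ref{QuantifierFreeDefinitionOfSigma}, and because $\Gamma$ is a core so these are all bijections, the $z_a$'s cannot be independently shrunk, and the bookkeeping closes. Once this correspondence is established the lemma follows immediately, since ``Ok'' is decided purely by the condition $D^{(\bot)}_y=D_y$ for all $y$, which transfers across the correspondence.
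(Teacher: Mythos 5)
There is a genuine gap: your central claim --- that at each recursive call the instance on $\Sigma(\mathcal I')$ is, up to a bijective correspondence on the $z_i$'s, \emph{the same} as the instance on $\mathcal I'$, so that the two runs of $\Solve$ are step-by-step equivalent --- is both stronger than what is needed and not actually true. The runs need not match: the domains of $\Sigma(\mathcal I')$ live in $A$ rather than $f(A)$, the constraint relations are genuinely enlarged (replacing $f(R)$ by $R\supseteq f(R)$ is not ``essentially vacuous''; also $f$ restricted to $f(A)$ is a permutation of $f(A)$, not the identity), and the intermediate reductions produced by $\CheckCycleConsistency$ on the two instances can differ, with the domains of the $z_a$'s shrinking to sets that are not singletons and not related to the domains of $\mathcal I'$ by any single bijection. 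Your own ``obstacle'' paragraph identifies exactly the point where this breaks --- verifying that the algorithm commutes with the proposed lifting --- and then asserts that ``the bookkeeping closes'' without an argument.

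What the paper does instead, and what your proposal is missing, is a one-sided monotone invariant rather than a bisimulation. Take the relations $S_{u,v}$ at which $\CheckCycleConsistency$ terminates on $(\mathcal I')^{(\bot)}$, extend them to the new variables by \emph{pinning} each $z_i$ to the singleton $\{i\}$ (so $S_{u,z_i}=S_u\times\{i\}$ and $S_{z_i,z_j}=\{i\}\times\{j\}$), check once that these extended relations still satisfy the closure condition $S_{u,v}\subseteq S_{u,w}\circ S_{w,v}$, and then prove by induction on the steps of the algorithm that every $R_{u,v}$ computed on any $\mathcal I^{(\top)}$ with $D^{(\top)}\ge D^{(\bot)}$ always contains $S_{u,v}$. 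This lower bound is preserved by the composition step precisely because the $S_{u,v}$ are closed under composition, it holds initially because the constraints of $\Sigma(\mathcal I')$ contain those of $\mathcal I'$ together with $z_a=a$, and it immediately yields that no reduction ever becomes empty, hence $\Solve$ returns ``Ok''. You do not need (and cannot get) equality of the two runs; you only need this containment, and your proposal never isolates the invariant that makes it go through.
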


\begin{proof}
Suppose $\mathcal I = \Sigma(\mathcal I')$.
Since 
$\Solve$ returns 
\mbox{``Ok"} on $\mathcal I'$, 
it finds a cycle-consistent reduction 
$D^{(\bot)}$ of $\mathcal I'$.
Assume that 
$\CheckCycleConsistency$ on 
$(\mathcal I')^{(\bot)}$ stopped with 
$R_{u,v}=S_{u,v}$ for $u,v\in\Var(\mathcal I')$.
Additionally, we assign
$S_{u} = \proj_{1}(S_{u,u})$, 
$S_{u,z_{i}} = S_{u}\times \{i\}$ 
$S_{z_{i},z_{j}} = \{i\}\times\{j\}$.
Thus, we defined $S_{u,v}$ for all 
variables 
$u,v\in\Var(\mathcal I)$.
Now, we argue that 
if we launch $\CheckCycleConsistency$ on $\mathcal I^{(\top)}$ for any reduction 
$D^{(\top)}\ge D^{(\bot)}$, then
$R_{u,v}\supseteq S_{u,v}$ for 
all variables $u,v\in\Var(\mathcal I)$.
It is obviously true at the beginning.

Since $\CheckCycleConsistency$ stopped with 
$S_{u,v}$, we have 
$S_{u,v} \subseteq  S_{u,w}\circ S_{w,v}$, 
for all $u,v,w\in\Var(\mathcal I')$.
If one of the variables is $z_{i}$ then the same condition holds. Recursively, we can show that 
the property $R_{u,v}\supseteq S_{u,v}$ keeps
when we execute $\CheckCycleConsistency$ on $\mathcal I^{(\top)}$.
Thus, we always have 
$R_{u,v}\supseteq S_{u,v}$,
$\CheckCycleConsistency$ cannot return a reduction 
smaller than $D^{(\bot)}$, 
and $\Solve$ returns
\mbox{``Ok"} on $\mathcal I$.
\end{proof}

Next, we generalize Theorem~\ref{CycleConsistencyImpliesSolutionIdemp} for the nonidempotent case. 

\begin{thm}\label{CycleConsistencyImpliesSolutionNonidemp}\cite{kozik2016weak}
Suppose $\Gamma\subseteq \mathcal R_{A}$ is a constraint language having a WNU polymorphism of every arity $n\ge 3$ and 
$\mathcal I$ is a cycle-consistent instance of $\CSP(\Gamma)$.
Then $\mathcal I$ has a solution.
\end{thm}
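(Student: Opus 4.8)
The plan is to reduce to the idempotent case handled by Theorem~\ref{CycleConsistencyImpliesSolutionIdemp}. Fix a unary polymorphism $f$ of $\Gamma$ of minimal range; then $f(\Gamma)$ is a core and, by minimality of the range, $f$ restricted to $f(A)$ is a permutation of $f(A)$. Put $\Gamma'=f(\Gamma)\cup\{x=a\mid a\in f(A)\}$. Since $\Gamma'$ contains all constant relations over $f(A)$, to invoke Theorem~\ref{CycleConsistencyImpliesSolutionIdemp} it is enough to show that $\Gamma'$ has a WNU polymorphism of every arity $n\ge 3$. Given a WNU polymorphism $w$ of $\Gamma$ of arity $n$, the operation $\hat w(x_1,\dots,x_n)=f(w(x_1,\dots,x_n))$ maps into $f(A)$, preserves every $f(R)$ with $R\in\Gamma$ (because $f(R)\subseteq R$ and both $w$ and $f$ preserve $R$), and still satisfies the WNU identities since composing with $f$ on the outside respects them; so $\hat w$ is a WNU polymorphism of $f(\Gamma)$ of arity $n$. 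As $f(\Gamma)$ is a core, the unary map $x\mapsto \hat w(x,\dots,x)$ is a permutation of $f(A)$ of some order $M$, and composing $\hat w$ with its $(M-1)$-st power produces an \emph{idempotent} WNU of $f(\Gamma)$ of arity $n$, which automatically preserves the constant relations and is therefore a polymorphism of $\Gamma'$.

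Next I would transport the instance. Let $f(\mathcal I)$ be the $\CSP(\Gamma')$-instance obtained from $\mathcal I$ by replacing every constraint relation $R$ by $f(R)$ and setting the domain of each variable $x$ to $f(D_x)$. Minimality of the range of $f$ gives $f(D_x)=D_x\cap f(A)$, and one checks $f(D_x)\in\Inv(\Pol(f(\Gamma)))$ (if $g\in\Pol(f(\Gamma))$, then $g(f(x_1),\dots,f(x_k))\in\Pol(\Gamma)$, hence preserves $D_x$), so each $f(D_x)$ is pp-definable over $\Gamma'$, as required for a legitimate instance. The instance $f(\mathcal I)$ is again cycle-consistent: projection commutes with the coordinatewise action of $f$, which yields $1$-consistency, and a path in $\mathcal I$ connecting $b$ to $b$ is carried by $f$ to a path in $f(\mathcal I)$ connecting $f(b)$ to $f(b)$, covering all elements of every $f(D_x)$.

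Finally, Theorem~\ref{CycleConsistencyImpliesSolutionIdemp} applied to $\Gamma'$ and $f(\mathcal I)$ yields a solution $s$ of $f(\mathcal I)$. Since $f$ is a unary polymorphism of $\Gamma$, we have $f(R)\subseteq R$ for each $R\in\Gamma$, so $s$ satisfies every original constraint $R(v_1,\dots,v_{n})$ of $\mathcal I$; hence $s$ is a solution of $\mathcal I$. The step that needs the most care is the passage from $\Gamma$ to $\Gamma'$: one must at once manufacture an idempotent WNU of each arity for $\Gamma'$ out of the WNUs of $\Gamma$, and verify that the reduced domains $f(D_x)$ remain pp-definable, and both of these rely on $f$ having minimal range so that $f|_{f(A)}$ is invertible. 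Everything else is routine bookkeeping about how $f$ interacts with projections, paths, and cycle-consistency.
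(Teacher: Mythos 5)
Your proposal is correct and follows essentially the same route as the paper's proof: pass to the core $f(\Gamma)$ via a minimal-range unary polymorphism, turn each WNU $w$ into an idempotent WNU of $\Gamma'$ by composing $f\circ w$ with an inverse of its diagonal, transport the instance by applying $f$ to all constraint relations, check that cycle-consistency is preserved, invoke Theorem~\ref{CycleConsistencyImpliesSolutionIdemp}, and pull the solution back using $f(R)\subseteq R$. Your write-up is if anything slightly more careful than the paper's on the pp-definability of the reduced domains $f(D_x)$.
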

\begin{proof}
For every $n\ge 3$ the constraint language $\Gamma$ has a 
WNU polymorphism of arity $n$.
Then $f(w(x_{1},\dots,x_{n}))$ is a WNU 
polymorphism of $f(\Gamma)$.
To make this polymorphism idempotent
consider 
the function $g$ defined by
$g(x) = f(w(x,x,\dots,x))$, which is a bijection on $f(A)$
because $f$ has a minimal range.
Composing $g$ we can get $h$ such that 
$h(g(x)) = x$ on $f(A)$.
Therefore, 
$h(g(w(x_{1},\dots,x_{n})))$ gives us 
an $n$-ary idempotent WNU polymorphism of $f(\Gamma)$, 
which is also a polymorphism 
of $\Gamma'$.
Thus, we proved that $\Gamma'$ has an idempotent 
WNU polymorphism of any arity $n\ge 3$.

Consider the instance 
$\mathcal I_{0}$ of $\CSP(f(\Gamma))$
that is obtained from $\mathcal I$ by replacing of every constraint relation $R$ by $f(R)$.
Note that this is also an instance of $\CSP(\Gamma')$. 
Let us show that 
$\mathcal I_{0}$ is cycle-consistent.
Consider a path in $\mathcal I_{0}$
$$u_{1}-f(C_{1})-u_{2}-\dots-u_{\ell-1}-f(C_{\ell-1})-
u_{\ell}$$
starting and ending 
with $u_{1}=u_{\ell}$. 
Consider an element $b\in f(A)$.
Let $b = f(c)$ for some element $c\in A$.
Since $\mathcal I$ is cycle-consistent 
the same path in $\mathcal I$ connects 
$c$ and $c$. To connect $b$ and $b$ in $\mathcal I_{0}$ 
we apply $f$ to the assignment of each variable.
Thus, the instance 
$\mathcal I_{0}$ is cycle-consistent.
By Theorem~\ref{CycleConsistencyImpliesSolutionIdemp},
it has a solution.
Since $f(R)\subseteq R$ for every $R\in\Gamma$, it is also a solution of the original instance 
$\mathcal I$.
\end{proof}

The following lemma essentially says that 
a system of linear equations cannot be solved 
by local consistency checking and the function 
$\Solve$. Similar claim was originally proved in 
\cite{FederVardi}.

\begin{lem}\label{CheckCycleConsistencyFailExample}
Suppose a $p$-WNU-blocker $R$ is pp-definable over $\Gamma\subseteq \mathcal R_{A}$
and $\Gamma$ contains all constant relations.
Then there exists an instance 
$\mathcal I$ of $\CSP(\Gamma)$ such that 
it has no solutions but 
$\Solve$ returns \mbox{``Ok"} on $\mathcal I$.
\end{lem}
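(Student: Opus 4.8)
The goal is: given a $p$-WNU-blocker $R$ pp-definable over $\Gamma$ (with all constant relations in $\Gamma$), construct an instance $\mathcal I$ of $\CSP(\Gamma)$ that has no solution, yet on which $\Solve$ returns ``Ok''. The plan is to encode a small unsatisfiable system of linear equations over $\mathbb Z_p^s$ using the relation $R$, and then check directly that the function $\CheckCycleConsistency$ (hence $\Solve$) never detects the inconsistency because the pairwise (binary) projections of all constraints remain full and closed under composition.

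**Constructing $\mathcal I$.** Recall $R = \{(a_1,a_2,a_3,a_4)\mid a_i\in S,\ \varphi(a_1)+\varphi(a_2)=\varphi(a_3)+\varphi(a_4)\}$ with $\varphi\colon S\to\mathbb Z_p^s$ surjective. The relation $R$ pp-defines, by identifying and quantifying coordinates, the ``same-class'' equivalence $\sigma$ ($R(x,x,x,y)$), the unary relation $S$ ($R(x,x,x,x)$), and — more importantly — a relation expressing $\varphi(x)+\varphi(y)=\varphi(z)+c$ for any fixed constant $c$, using a constant relation to pin one coordinate. The simplest unsatisfiable linear instance is: pick elements $d_0,d_1\in S$ with $\varphi(d_0)\neq\varphi(d_1)$ and write a ``sum of three variables equals $0$ and also equals $1$''-type contradiction that nevertheless looks locally consistent. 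Concretely I would take three variables $x_1,x_2,x_3$, constrain via pp-copies of $R$ that $\varphi(x_1)+\varphi(x_2)+\varphi(x_3)$ equals two different values of $\mathbb Z_p^s$ (this requires an auxiliary chain of variables so that no single constraint has small arity but each binary projection is all of $S\times S$). Every constraint of $\mathcal I$ is obtained by taking such a pp-formula over $\Gamma$ and then substituting the pp-definition of $R$ itself; the existentially quantified variables become fresh variables of $\mathcal I$. Because $p$ is prime and $\varphi$ is onto, a genuine global contradiction is available, so $\mathcal I$ has no solution.

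**Showing $\Solve$ returns ``Ok''.** This is the heart of the argument. I would show that for every pair of distinct variables $u,v$ of $\mathcal I$, the relation $R_{u,v}$ computed by $\CheckCycleConsistency$ equals $D_u\times D_v$, where each $D_u$ is $S$ (or a single constant, for the pinned variables). The key point: the set of binary relations of the form ``everything'' ($S\times S$), ``graph of an affine bijection $\varphi(v)=\varphi(u)+c$'', and singletons is closed under intersection and composition, and the binary projection of each constraint of $\mathcal I$ already lies in this family and is in fact $S\times S$ (or a singleton for the constant-pinned variables) — projecting a relation like $\{\varphi(x_1)+\varphi(x_2)=\varphi(x_3)+\varphi(x_4)\}$ onto any two coordinates gives the full square. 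Since composition of two full squares $S\times S$ is again $S\times S$, the \textbf{repeat}-loop of $\CheckCycleConsistency$ never shrinks any $R_{u,v}$, so it returns $D^{(\bot)}=D$, i.e. no reduction; by the ``if'' branch of $\Solve$ this yields ``Ok''. (If some $R_{u,v}$ does become a nontrivial affine-bijection graph, one still checks these never propagate to an empty $R_{u,u}$, because cycle-consistency of a linear system over $\mathbb Z_p$ is exactly the condition that sums around cycles vanish, which holds for the binary closure even when the global system is inconsistent — the obstruction to solvability is a ``longer-range'' relation, not a binary one.)

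**Main obstacle.** The delicate part is engineering $\mathcal I$ so that simultaneously (i) it is globally unsatisfiable, and (ii) \emph{every} binary projection $R_{u,v}$ — including those of the fresh existentially-quantified variables introduced when unfolding the pp-definition of $R$ — stays full (or a harmless affine bijection) under composition. One must be careful that no short cycle through the auxiliary variables of a single $R$-gadget forces a contradiction at the binary level; this is guaranteed because a single copy of $R$ is itself 2-decomposable (its binary projections are just $S\times S$ and the relation $R$ is recovered only by a non-binary combination), but verifying closure of the whole collection under the composition step of the algorithm requires a short careful bookkeeping. Once that is in place, the conclusion follows from Lemma~\ref{ProofSolveFunction} together with the explicit description of $\CheckCycleConsistency$'s fixed point.
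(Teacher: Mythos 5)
Your overall strategy is the paper's: encode an unsatisfiable linear system over $\mathbb Z_p$ using the $p$-WNU-blocker together with constant relations, unfold the pp-definition of $R$ into fresh variables, and then exhibit a nonempty family of binary relations that is a fixed point of the composition step of $\CheckCycleConsistency$, so that $\Solve$ never shrinks the domains. However, there is a genuine gap in the execution, and it sits exactly where you flag the ``main obstacle.'' The choice of the linear system is not a detail: the paper uses four equations of the form $x_i+x_j=x_k+c$, each with exactly \emph{one} pinned constant and three free variables among $x_1,\dots,x_6$, arranged so that any two equations share exactly one variable. This guarantees that the projection of each gadget $\mathcal J_i$ onto any two of its shared variables is the \emph{full} product $S_{x_i}\times S_{x_j}$, which is what makes the fixed-point family $S_{u,v}$ (defined as $\mathcal J_i(u,v)$ inside a gadget and as a composition through the unique shared variable across gadgets) verifiably closed under $S_{u,v}\subseteq S_{u,w}\circ S_{w,v}$. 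Your construction --- a chain realizing ``$\varphi(x_1)+\varphi(x_2)+\varphi(x_3)=0$ and $=1$'' --- necessarily contains links with only two free variables (e.g.\ $\varphi(y)+\varphi(x_3)=\mathrm{const}$), whose binary projections are proper affine relations, not $S\times S$. This contradicts your stated invariant that every $R_{u,v}$ stays full, and it forces a genuine propagation analysis of the nontrivial binary relations around cycles, which you do not carry out.

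The parenthetical you offer in its place --- that ``cycle-consistency of a linear system over $\mathbb Z_p$ is exactly the condition that sums around cycles vanish, which holds for the binary closure even when the global system is inconsistent'' --- is false as a general principle (the system $x=0\wedge x=1$ is inconsistent and is detected at the binary, indeed unary, level), so it cannot substitute for the verification. The content of the lemma is precisely that one can engineer an instance whose inconsistency is invisible to binary composition, and that engineering plus the explicit closure check (the paper's three-case verification of $S_{u,v}\subseteq S_{u,w}\circ S_{w,v}$, including the case of two auxiliary variables lying in distinct gadgets, where one composes through the unique shared $x_\ell$ and uses $S_{x_i,x_j}=S_{x_i}\times S_{x_j}$) is the proof. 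To repair your argument, either adopt the paper's four-equation system with its ``pairwise one shared variable, three free variables per equation'' structure, or explicitly list the nontrivial binary relations your chain produces and verify that every composition around every cycle of the instance remains nonempty and contains the corresponding $S_{u,v}$.
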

\begin{proof}
Consider the following system of linear equations in $\mathbb Z_{p}$.
\begin{equation}\label{LinearEquation}
\left\{
\begin{aligned} 
  x_1+x_2&=x_{3} +\;0\;\\
  x_{3} +\;0\;&=x_4+x_5\\
  x_{4} +\;0\;&=x_1+x_6\\
  x_5+x_6&=x_{2} +\;1\; 
\end{aligned}
\right.
\end{equation}
If we calculate the sum of all equations we will get 
$0=1$, which means that the system does not have a solution.
We can show that the function $\CheckCycleConsistency$ returns \mbox{``Ok"} on this system and this is almost what we need because 
the relation $R$ is very similar 
to $x_{1}+x_{2} = x_{3}+x_{4}$.
Choose two elements $c_0$ and $c_1$ from $A$ such that 
$(c_0,c_1,c_0,c_1)\in R$ and 
$(c_0,c_0,c_0,c_1)\notin R$.
Let
\begin{align*}\mathcal I_{0}  = 
((z_{1} = c_0)\wedge R(x_1,x_2,x_{3},z_{1}))\wedge 
((z_{2}=c_0) \wedge R(x_3,z_{2},x_{4},x_{5}))\wedge&\\
((z_{3}=c_0) \wedge R(x_4,z_{3},x_{1},x_{6}))\wedge
((z_{4}=c_1) \wedge R(x_5,x_6,x_{2},&z_{4})).
\end{align*}
This is almost an instance of $\CSP(\Gamma)$ and it has no solutions.
Since $R$ is pp-definable over $\Gamma$, 
we can write an equivalent instance 
$\mathcal I$ of $\CSP(\Gamma)$,
that is
$\mathcal I = \mathcal J_{1}\wedge\mathcal J_{2}\wedge\mathcal J_{3}\wedge\mathcal J_{4}$,
where each $\mathcal J_{i}$ is a pp-definition of the corresponding $R$ and the constraint $z_{i} = c$.
Thus, each $\mathcal J_{i}$ corresponds to the $i$-th equation of the system (\ref{LinearEquation}).
We assume that if $i\neq j$ then 
$\mathcal J_{i}$ and $\mathcal J_{j}$ have exactly one common variable and this variable is from the set $X = \{x_{1},\dots,x_{6}\}$.

It remains to show that $\Solve$ returns 
\mbox{``Ok"} on $\mathcal I$.
Let $V_{i} = \Var(\mathcal J_{i})$ for every $i$.
We write $u\sim v$ if $u$ and $v$ are from the same set $V_{i}$ ($\sim$ is not transitive).
For $u\in V_{i}$ 
we put 
$S_{u} = \mathcal J_{i}(u)$.
Note that 
if $u\in V_{i}\cap V_{j}$ then 
$\mathcal J_{i}(u)=\mathcal J_{j}(u)$.
For two variables 
$u,v$ from the same set $V_{i}$ put 
$S_{u,v} = \mathcal J_{i}(u,v)$.
Otherwise, 
if $u\in V_{i}$ and $v\in V_{j}$, 
we put 
$S_{u,v} = S_{u,x}\circ S_{x,v}$, 
where $\{x\} = V_{i}\cap V_{j}$.
Since any two variables of any equation in (\ref{LinearEquation}) can be chosen arbitrary to satisfy the equation, 
we have 
$S_{x_{i},x_{j}} = S_{x_{i}}\times S_{x_{j}}$
if $x_{i}\sim x_{j}$.
By $D^{(\bot)}$ we denote the reduction of 
$\mathcal I$ such that 
$D_{u}^{(\bot)} = S_{u}$ for every $u\in \Var(\mathcal I)$.

We will prove that 
if we execute 
$\CheckCycleConsistency$ on 
an instance 
$\mathcal I^{(\top)}$ with 
$D^{(\top)}\ge D^{(\bot)}$, then 
$R_{u,v}$ 
 will always contain 
$S_{u,v}$.
We can check that it is true when we start.
Then, it is sufficient to show that every time we 
calculate $R_{u,v}'$ in $\CheckCycleConsistency$,
it still contains $S_{u,v}$.
This will follow from the fact that 
$S_{u,v}\subseteq S_{u,w}\circ S_{w,v}$
for any $u,v,w\in\Var(\mathcal I)$.
Let us prove this considering 3 cases.

Case 1. If $u\sim v$ and $v\sim w$ then 
it follows from the definition.

Case 2. If $u\sim w\not\sim v$ then 
$$S_{u,w}\circ S_{w,v} = S_{u,w}\circ 
S_{w,x_{j}}\circ S_{x_{j},v}
\supseteq \\
S_{u,x_{j}}\circ S_{x_{j},v}=
S_{u,v}
$$

Case 3. If $u\not\sim w\not\sim v$ then
\begin{align*}
    S_{u,w}\circ S_{w,v} = S_{u,x_{i}}\circ S_{x_{i},w}\circ 
S_{w,x_{j}}\circ S_{x_{j},v}
\supseteq &\\
S_{u,x_{i}}\circ S_{x_{i},x_{j}}\circ S_{x_{j},v}
=S_{u,x_{i}}\circ 
(S_{x_{i}}\times &S_{x_{j}})\circ S_{x_{j},v}
=S_{u}\times S_{v}.
\end{align*}

Thus, we proved that 
in $\CheckCycleConsistency$ 
we always have 
$R_{u,v}\supseteq S_{u,v}$, hence 
$\CheckCycleConsistency$ 
can never return 
a reduction smaller than $D^{(\bot)}$ on $\mathcal I^{(\top)}$, and therefore 
$\Solve$ returns \mbox{``Ok"} on $\mathcal I$.
\end{proof}

Formally, when we reduce the domain of a variable 
of an instance, we may get a relation outside of the constraint language. 
That is why, even if we calculated (for example in 
$\CheckCycleConsistency$) that $x\in B\subsetneq D_{x}$, 
we cannot just reduce the domain of $x$.
To avoid this trouble, 
it is natural to assume that 
for any relation $R\in \Gamma$ and 
any set $D$ pp-definable over $\Gamma$ 
the relation 
$R\cap (A\times\dots\times A\times D\times A\times\dots \times A)$ 
is also from $\Gamma$.
A constraint language satisfying this property will be called 
\emph{unary-closed}.


The next theorem characterizes constraint languages solvable by local consistency checking and the function $\Solve$. 
For the original proof of similar claims see 
\cite{bartokozikboundedwidth,kozik2016weak}.

\begin{thm}
Suppose $\Gamma\subseteq \mathcal R$ is a unary-closed constraint language.
Then the following conditions are equivalent.
\begin{enumerate}
    \item[(1)] $\Gamma$ has a WNU polymorphism of every arity 
    $n\ge 3$;
    \item[(2)] every cycle-consistent instance of $\CSP(\Gamma)$ has a solution;
    \item[(3)] the function $\Solve$ solves $\CSP(\Gamma)$.
\end{enumerate}
\end{thm}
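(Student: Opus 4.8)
The plan is to prove the cycle $(1)\Rightarrow(2)\Rightarrow(3)\Rightarrow(1)$, in each case reducing to a result already established in this section.

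The implication $(1)\Rightarrow(2)$ is immediate: it is exactly the statement of Theorem~\ref{CycleConsistencyImpliesSolutionNonidemp}, which already handles the general (non-idempotent) case.

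For $(2)\Rightarrow(3)$ the place where unary-closedness is used is to guarantee that every reduced instance produced while running $\Solve$ is again an instance of $\CSP(\Gamma)$. Indeed, the domains computed by $\CheckCycleConsistency$ are built from the constraints by projections, intersections and compositions, hence are pp-definable over $\Gamma$; and by the unary-closedness hypothesis, intersecting a relation of $\Gamma$ coordinatewise with such pp-definable unary sets stays in $\Gamma$. With this in hand I would argue as follows: $\Solve$ terminates and runs in polynomial time (already shown); if it returns ``No solution'' the instance is unsatisfiable by Lemma~\ref{ProofSolveFunction}, and a satisfiable instance never receives that answer, again by Lemma~\ref{ProofSolveFunction}; if it returns ``Ok'', then by Lemma~\ref{ProofSolveFunction} there is a nonempty cycle-consistent reduction of the instance, which is a $\CSP(\Gamma)$-instance and hence has a solution by (2), and since the reduced relations are contained in the original ones this is a solution of the original instance. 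Thus $\Solve$ correctly decides $\CSP(\Gamma)$.

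For $(3)\Rightarrow(1)$ I would prove the contrapositive, and here the reduction to the core-plus-constants language from Subsection~\ref{ReductionToAcoreSubsection} is needed. Assume $\Gamma$ has no WNU polymorphism of some arity $n\ge 3$. Let $f$ be a unary polymorphism of $\Gamma$ of minimal range and $\Gamma'=f(\Gamma)\cup\{x=a\mid a\in f(A)\}$. As in the hardness proof, a WNU polymorphism $w'$ of $\Gamma'$ of arity $n$ would yield $w(x_1,\dots,x_n)=w'(f(x_1),\dots,f(x_n))$, a WNU polymorphism of $\Gamma$ of arity $n$; hence $\Gamma'$ also has no WNU polymorphism of arity $n$. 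The term operations of the finite idempotent algebra $(f(A);\Pol(\Gamma'))$ are exactly the polymorphisms of $\Gamma'$, so condition (1) of Theorem~\ref{CharacterizationOfALLWNUTHM} fails for it, whence its condition (3) fails as well: there is a $p$-WNU-blocker $R\in\Inv(f(A);\Pol(\Gamma'))=\RelClo(\Gamma')$, i.e. pp-definable over $\Gamma'$. Now Lemma~\ref{CheckCycleConsistencyFailExample} yields an unsatisfiable instance $\mathcal I'$ of $\CSP(\Gamma')$ on which $\Solve$ returns ``Ok''; by Lemma~\ref{CheckConsistencyWorksTheSame}, $\Solve$ returns ``Ok'' on $\Sigma(\mathcal I')$ as well; and $\Sigma(\mathcal I')$ is an unsatisfiable instance of $\CSP(\Gamma)$, since $\Sigma$ preserves (un)satisfiability by Lemma~\ref{ReductionToACore} and Theorem~\ref{addingIdempotency}. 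So $\Solve$ does not solve $\CSP(\Gamma)$, contradicting (3).

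I do not expect a genuine obstacle here: there is no new idea beyond assembling the earlier results. The one step requiring care is the last implication, where the counterexample lives over $\Gamma'$ and must be transported back to $\Gamma$ while keeping both of its features intact --- being unsatisfiable and fooling $\Solve$ --- which is exactly what Lemma~\ref{CheckConsistencyWorksTheSame} together with the reduction lemmas is designed to provide. A secondary point to state explicitly is that the unary-closedness hypothesis is used precisely in $(2)\Rightarrow(3)$, to keep the reduced instances inside $\CSP(\Gamma)$ so that hypothesis (2) can be applied to them.
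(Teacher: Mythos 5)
Your proposal is correct and follows essentially the same route as the paper: $(1)\Rightarrow(2)$ via Theorem~\ref{CycleConsistencyImpliesSolutionNonidemp}, $(2)\Rightarrow(3)$ via Lemma~\ref{ProofSolveFunction} plus unary-closedness to keep the reduced instance inside $\CSP(\Gamma)$, and the contrapositive of $(3)\Rightarrow(1)$ via $\Gamma'$, Theorem~\ref{CharacterizationOfALLWNUTHM}, Lemma~\ref{CheckCycleConsistencyFailExample}, and Lemma~\ref{CheckConsistencyWorksTheSame}. The only difference is that you spell out a few routine details (e.g.\ the Galois-connection step and where unary-closedness is used) that the paper leaves implicit.
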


\begin{proof}
$(1)\Rightarrow(2)$ is by Theorem \ref{CycleConsistencyImpliesSolutionNonidemp}.

$(2)\Rightarrow(3)$. 
By Lemma \ref{ProofSolveFunction}
if the function 
$\Solve$ returns \mbox{``No solution"}, 
then the instance has no solutions, 
if $\Solve$ returns \mbox{``Ok"}, 
then there exists a 
nonempty cycle-consistent reduction of the instance.
Since $\Gamma$ is unary-closed, the reduction is still an instance of $\CSP(\Gamma)$ and it has a solution by (2).
Thus, $\Solve$ solves $\CSP(\Gamma)$.

$(3)\Rightarrow(1)$. Let us show that 
$\neg(1)\Rightarrow \neg(3)$.
Here we again use the definition of the unary polymorphism $f$, the constraint language 
$\Gamma'$ and the mapping $\Sigma$
(see the beginning of this subsection).
If $\Gamma$ has no a WNU polymorphism $w$ of some arity $m$,
then $\Gamma'$ has no a WNU polymorphism $w'$ of arity $m$
(otherwise,
we would get
$w(x_{1},\dots,x_{m}) = 
w'(f(x_{1}),\dots,f(x_{m}))$.
Applying
Theorem~\ref{CharacterizationOfALLWNUTHM}
to the algebra 
$(f(A),\Pol(\Gamma'))$
we conclude that there exists 
a $p$-WNU-blocker $R$
pp-definable over $\Gamma'$.
By Lemma
\ref{CheckCycleConsistencyFailExample} there exists 
an instance $\mathcal I'$ of $\CSP(\Gamma')$ such that 
$\Solve$ returns \mbox{``Ok"} on 
it and it does not have a solution.
Then 
$\Sigma(\mathcal I')$ also does not have a solution 
and by Lemma~\ref{CheckConsistencyWorksTheSame}
$\Solve$ returns \mbox{``Ok"}
on $\Sigma(\mathcal I')$.
\end{proof}

Thus, the cycle-consistency solves the constraint satisfaction problem only if 
the constraint language has a WNU polymorphism 
of every arity $n\ge 3$. 
Moreover, we can prove that 
any local method fails if the constraint language $\Gamma$ has no  WNU polymorphisms of some arity $n\ge 3$. 
In fact, in this case by Theorem~\ref{CharacterizationOfALLWNUTHM}
there exists a $p$-WNU-blocker pp-definable over $\Gamma$
and 
a $p$-WNU-blocker is like a linear equation 
$x_{1}+x_{2} = x_{3} +x_{4}$.
Using this equation and constant relations 
we can express a system of linear equations in $\mathbb Z_{p}$
as an instance of $\CSP(\Gamma)$,
and a system of linear equations cannot be solved locally (see \cite{FederVardi,bartokozikboundedwidth} for a formal statement).
Hence, we described all constraint languages that can be solved by local methods. 
Note that 
the cycle-consistency is not the weakest type of local consistency that guarantees a solution
in this case (see \cite{kozik2016weak} for more details).


\section{Strong subuniverses}\label{StrongSubuniversesSection}

In this section we will prove all the properties of 
strong subalgebras formulated in Section~\ref{StrongSubalgebrasSection}.
We assume that all algebras appearing in this section 
are finite and idempotent.
Below we give necessary definitions.

Suppose  $R\subseteq A_{1}\times\dots\times A_{n}$.
The relation $R$ is called 
\emph{full} if 
$R= A_{1}\times\dots\times A_{n}$.
It is called 
\emph{full-projective} if 
for any $I\subsetneq [n]$ 
the relation $\proj_{I}(R)$ is full.
We say that \emph{the $i$-th coordinate of $R$ is 
uniquely-determined}
if for all 
$a_{1},\dots,a_{i-1},a_{i+1},\dots,a_{n}$
there exists at most one $a_{i}\in A_{i}$ such that 
$(a_{1},\dots,a_{n})\in R$.
The relation $R$ is called \emph{uniquely-determined} if each of its coordinates is uniquely-determined.

We represent relations as matrices whose columns 
are tuples of the relations.
Sometimes we put a subset instead of an element in this matrix meaning that we can choose any element of this subset.
For instance, 
$\begin{pmatrix}
a& C \\B& c
\end{pmatrix}$
means the binary relation 
$(\{a\}\times B)\cup (C\times\{c\})$.

For a binary relation $R\subseteq A\times B$,
$A'\subseteq A$, and $B'\subseteq B$,
by 
$A'+R$ we denote
$\proj_{2}(R\cap (A'\times B))$,
by $B'-R$ we denote
$\proj_{1}(R\cap (A\times B'))$.
To shorten we
write $a+R$ instead of $\{a\}+R$.
For a congruence $\sigma$ on $\alg A$ and $B\subseteq A$ 
by $B/\sigma$ we denote the set
$\{b/\sigma\mid b\in B\}$.
Some other definitions will be given in subsections they are used.


\subsection{Absorbing subuniverse}\label{absorbingSubuniverseSubsection}
\begin{lem}\label{AbsImplies}\cite{DecidingAbsorption}
Suppose $R$ is defined by a pp-formula $\Phi$, that is, 
$$R(x_{1},\dots,x_{n}) = 
\exists y_1\dots\exists y_{s}
R_{1}(v_{1,1},\dots,v_{1,n_1})\wedge 
\dots\wedge
R_{k}(v_{k,1},\dots,v_{k,n_k}),$$
where 
$v_{i,j}\in\{x_{1},\dots,x_{n},y_{1},\dots,y_{s}\}$, 
each 
$R_{i}\le \algA_{i,1}\times \dots\times\algA_{i,n_{i}}$ , 
and $\Phi'$ is obtained from $\Phi$ by replacing of each 
relation $R_{i}$ by $R_{i}'\le_{BA(t)} \alg R_{i}$.
Then $\Phi'$ defines a relation $R'$ 
such that 
$R'\le_{BA(t)} \alg R$.
\end{lem}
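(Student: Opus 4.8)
The statement asserts that binary absorption (with a fixed term $t$) is preserved under pp-definitions: if every relation $R_i$ in a pp-definition of $R$ is replaced by a binary-absorbing subrelation $R_i' \le_{BA(t)} \alg R_i$ (all with the *same* binary term $t$), then the resulting smaller relation $R'$ binary-absorbs $R$ with $t$. The natural approach is to verify directly that applying $t$ coordinatewise to tuples of $R$, where all but one argument comes from $R'$, lands in $R'$. So the plan is: take $\alpha^{(1)},\dots,\alpha^{(m)} \in R$ with $\alpha^{(j)} \in R'$ for all $j \neq j_0$, and show $t(\alpha^{(1)},\dots,\alpha^{(m)}) \in R'$ (where $t$ is binary, so really $m=2$, but I will phrase it for the absorbing pattern).

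\medskip

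\textbf{Step 1: Witness extension.} Since $R(x_1,\dots,x_n) = \exists y_1 \dots \exists y_s\, \Phi$, each tuple $\alpha^{(j)} \in R$ extends to a full tuple $\hat\alpha^{(j)} \in A^{n+s}$ on the variables $x_1,\dots,x_n,y_1,\dots,y_s$ that satisfies every conjunct, i.e. for each conjunct $R_i(v_{i,1},\dots,v_{i,n_i})$, the projection of $\hat\alpha^{(j)}$ onto the coordinates named by $v_{i,1},\dots,v_{i,n_i}$ lies in $R_i$. The first issue: for $j \neq j_0$ I need $\alpha^{(j)} \in R'$, and I want its witnesses to satisfy the *stronger* conjuncts $R_i'$. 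This is where I must be careful — a priori a witness for $R$ need not be a witness for $R'$. But $R'$ is itself pp-defined by $\Phi'$, so each $\alpha^{(j)} \in R'$ (for $j \neq j_0$) *does* have some witnessing tuple $\hat\alpha^{(j)}$ with all conjuncts satisfied by the primed relations $R_i'$. For $j = j_0$ I just pick any witness $\hat\alpha^{(j_0)}$ satisfying the unprimed conjuncts.

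\medskip

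\textbf{Step 2: Apply $t$ to witnesses.} Form $\hat\beta := t(\hat\alpha^{(1)},\dots,\hat\alpha^{(m)})$ coordinatewise. For each conjunct indexed by $i$: restrict to the coordinates named by $v_{i,1},\dots,v_{i,n_i}$. Among the $m$ restricted tuples, the one coming from $j_0$ lies in $R_i$, and all others lie in $R_i'$. Since $R_i' \le_{BA(t)} \alg R_i$, applying $t$ to this absorbing pattern gives a tuple in $R_i'$. Hence $\hat\beta$ restricted to those coordinates lies in $R_i'$, for every conjunct $i$. Therefore $\hat\beta$ is a satisfying assignment for $\Phi'$, so its projection onto $x_1,\dots,x_n$, which equals $t(\alpha^{(1)},\dots,\alpha^{(m)})$, lies in $R'$. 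Running over all positions of the ``$A$-slot'' $j_0$ and using that $t$ is binary (so the absorbing pattern for $R'$ in $R$ is just $t(R',R) \subseteq R'$ and $t(R,R') \subseteq R'$) gives $R' \le_{BA(t)} \alg R$.

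\medskip

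\textbf{Main obstacle.} The only genuinely delicate point is Step 1 — making sure each $\alpha^{(j)} \in R'$ (for $j\ne j_0$) comes equipped with witnesses compatible with the *primed* conjuncts simultaneously, and that the single ``outside'' tuple $\alpha^{(j_0)}\in R$ uses witnesses compatible with the *unprimed* ones; after that, the absorption property of each $R_i'$ in $\alg R_i$ does all the work conjunct-by-conjunct, and the coordinatewise action of $t$ commutes with projection/existential quantification, which is routine. One should also note the identification of repeated variables: if $v_{i,a} = v_{i,b}$ the relation $R_i$ sits inside a diagonal, but since $t$ acts coordinatewise this is automatically respected and needs no extra argument. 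A remark is also warranted that the lemma (and its proof) uses in an essential way that the *same* term $t$ witnesses absorption for all the $R_i'$ — this is exactly why the conclusion carries the same $t$.
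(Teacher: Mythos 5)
Your proposal is correct and follows essentially the same route as the paper's proof: pick witnesses for the existential variables (primed witnesses for the tuples lying in $R'$, an unprimed witness for the one tuple only in $R$), apply $t$ coordinatewise to the extended tuples, and use $R_i'\le_{BA(t)}\alg R_i$ conjunct-by-conjunct to see that the result satisfies $\Phi'$. The point you flag as delicate (choosing witnesses compatible with the primed conjuncts) is exactly the step the paper handles by taking evaluations of $(y_1,\dots,y_s)$ in $\Phi'$ for the tuples of $R'$, so there is no gap.
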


\begin{proof}
Let the term $t$ be of arity $m$.
Without loss of generality we will show that 
$t(\beta_{1},\dots,\beta_{m-1},\beta_{m})\in R'$
whenever 
$\beta_{1},\dots,\beta_{m-1}\in R'$ and $\beta_{m}\in R$.
For $i\in[m-1]$ let $\gamma_{i}$ be an evaluation of 
$(y_{1},\dots,y_{s})$ in $\Phi'$ corresponding to $(x_{1},\dots,x_{n}) = \beta_{i}$.
Let $\gamma_{m}$ be the evaluation of 
$(y_{1},\dots,y_{s})$ in $\Phi$ corresponding to $(x_{1},\dots,x_{n}) = \beta_{m}$.
Since 
$R_{i}'\le_{BA(t)} \alg R_{i}$ for every $i\in[k]$,
$t(\gamma_{1},\dots,\gamma_{m})$ is a correct evaluation
of $(y_{1},\dots,y_{s})$ in $\Phi'$ corresponding to 
$(x_{1},\dots,x_{n}) = t(\beta_{1},\dots,\beta_{m})$, 
which confirms that $t(\beta_{1},\dots,\beta_{m})\in R'$.
Thus, $R'\le_{BA(t)} \alg R$.
\end{proof}

\begin{conslem}\label{AbsorptionQuotient}
Suppose $\theta$ is a congruence of $\algA$.
\begin{enumerate}
    \item If $B$ absorbs $\algA$, 
    then $B/\theta$ absorbs $\algA/\theta$
    with the same term.
        \item 
        If $B$ absorbs $\algA/\theta$, 
    then $\bigcup_{E\in B} E$ absorbs $\algA$
    with the same term.
\end{enumerate}
\end{conslem}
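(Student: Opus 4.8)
The plan is to prove both parts directly from the definition of absorption, using the elementary fact that a term operation $t$ of $\algA$ of arity $m$ induces a term operation $t/\theta$ of $\algA/\theta$ which is computed on classes by applying $t$ to representatives, i.e. $(t/\theta)(a_{1}/\theta,\dots,a_{m}/\theta) = t(a_{1},\dots,a_{m})/\theta$. Write $\pi\colon A\to A/\theta$, $\pi(a)=a/\theta$, for the canonical surjective homomorphism. Since images and preimages of subuniverses under a surjective homomorphism are again subuniverses, $B/\theta=\pi(B)$ is a subuniverse of $\algA/\theta$ in case (1), and $\bigcup_{E\in B}E=\pi^{-1}(B)$ is a subuniverse of $\algA$ in case (2); so in both cases the object we claim to be absorbing is at least a subuniverse.

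For part (1), suppose $B$ absorbs $\algA$ with a term $t$ of arity $m$, so that $t(B,\dots,B,A,B,\dots,B)\subseteq B$ for every position of the $A$. Fix a position $i$ and take classes $E_{1},\dots,E_{m}$ with $E_{j}=b_{j}/\theta$ for some $b_{j}\in B$ when $j\ne i$ and $E_{i}=a/\theta$ for some $a\in A$. Then $(t/\theta)(E_{1},\dots,E_{m})=t(b_{1},\dots,b_{i-1},a,b_{i+1},\dots,b_{m})/\theta$, and since $t(b_{1},\dots,a,\dots,b_{m})\in B$ this class lies in $B/\theta$. As $i$ was arbitrary, $B/\theta$ absorbs $\algA/\theta$ with the same term $t$.

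For part (2), suppose $B$ absorbs $\algA/\theta$ with a term $t$ of arity $m$, and put $B'=\bigcup_{E\in B}E=\pi^{-1}(B)$. Fix a position $i$ and take $b_{1},\dots,b_{i-1},b_{i+1},\dots,b_{m}\in B'$ and $a\in A$. Then $b_{j}/\theta\in B$ for $j\ne i$ while $a/\theta\in A/\theta$, hence $(t/\theta)(b_{1}/\theta,\dots,a/\theta,\dots,b_{m}/\theta)\in B$, i.e. $t(b_{1},\dots,a,\dots,b_{m})/\theta\in B$, which means $t(b_{1},\dots,a,\dots,b_{m})\in\pi^{-1}(B)=B'$. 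Thus $B'$ absorbs $\algA$ with the same term $t$.

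There is essentially no obstacle: the single point that needs care is the passage between term operations of $\algA$ and of $\algA/\theta$, namely that $\pi$ is a homomorphism with respect to every term operation (equivalently $\Clo(\algA/\theta)=\{t/\theta\st t\in\Clo(\algA)\}$), after which both directions are immediate. Alternatively, one can phrase this as an instance of Lemma~\ref{AbsImplies} applied to the correspondence between $\Inv(\algA)$ and $\Inv(\algA/\theta)$ via the congruence $\theta$, but the direct computation above is shorter and self-contained.
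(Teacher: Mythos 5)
Your proof is correct. The paper states this result as an unproved corollary of Lemma~\ref{AbsImplies}, and the key point your argument supplies — that $\Clo(\algA/\theta)=\{t/\theta \mid t\in\Clo(\algA)\}$ with $(t/\theta)(a_{1}/\theta,\dots,a_{m}/\theta)=t(a_{1},\dots,a_{m})/\theta$, after which both directions are a one-line check against the definition of absorption — is exactly the routine verification being elided; your handling of the only delicate spot in part (1), namely choosing representatives $b_{j}\in B$ for classes of $B/\theta$ even though $B$ need not be $\theta$-saturated, is also correct.
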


\begin{conslem}\label{AbsImpliesCons}
Suppose $R \le \algA_{1}\times\dots\times \algA_{n}$,
$\proj_1 (R) = A_{1}$,
$B_{i}\le_{BA(t)}\algA_{i}$ for every $i\in[n]$,
and 
$B = \proj_{1}(R\cap (B_{1}\times\dots \times B_{n}))$.
Then $B\le_{BA(t)}\algA_{1}$.
\end{conslem}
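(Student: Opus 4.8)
The plan is to realize $B$ as a positive primitive definition built from $R$ and the $B_{i}$, and then invoke Lemma~\ref{AbsImplies}, which is tailor-made to propagate binary absorption with a fixed term $t$ through pp-formulas.

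Concretely, let $\Phi$ be the pp-formula
$$\exists x_{2}\cdots\exists x_{n}\;\; R(x_{1},\dots,x_{n})\wedge A_{1}(x_{1})\wedge A_{2}(x_{2})\wedge\cdots\wedge A_{n}(x_{n}),$$
where $A_{i}(x_{i})$ is the atom asserting that $x_{i}$ lies in the universe of $\algA_{i}$, so it imposes nothing. Then $\Phi$ defines the relation $\proj_{1}(R)=A_{1}$; here, and only here, the hypothesis $\proj_{1}(R)=A_{1}$ enters, since it guarantees that the algebra attached to $\Phi$ is $\algA_{1}$ itself rather than a proper subalgebra of it. Now form $\Phi'$ from $\Phi$ by leaving the atom $R$ unchanged and replacing each atom $A_{i}$ by $B_{i}$. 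We have $R\le_{BA(t)}\alg R$ trivially, since $t$ is a term operation of $\alg R$, and $B_{i}\le_{BA(t)}\algA_{i}$ by assumption, all with one and the same binary term $t$; thus the hypotheses of Lemma~\ref{AbsImplies} are met. By construction $\Phi'$ defines $\proj_{1}(R\cap(B_{1}\times\cdots\times B_{n}))=B$, so Lemma~\ref{AbsImplies} yields $B\le_{BA(t)}\algA_{1}$, as required.

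I do not expect any genuine obstacle: the whole content sits in Lemma~\ref{AbsImplies}, and the only point to be careful about is that the single binary term $t$ simultaneously witnesses the (trivial) self-absorption of $R$ and the absorption of each $B_{i}$ in $\algA_{i}$. If one prefers to avoid quoting the lemma, the statement can also be checked by hand: note first that $B\subseteq B_{1}$; then, given $\beta_{1}\in B$ and $\beta_{2}\in A_{1}$, pick a tuple $(\beta_{1},c_{2},\dots,c_{n})\in R$ with all $c_{j}\in B_{j}$ and any tuple $(\beta_{2},d_{2},\dots,d_{n})\in R$, apply $t$ coordinatewise, and use that $R$ is a subalgebra together with $B_{1}\le_{BA(t)}\algA_{1}$ and $B_{j}\le_{BA(t)}\algA_{j}$ to conclude that the resulting tuple lies in $R\cap(B_{1}\times\cdots\times B_{n})$, hence $t(\beta_{1},\beta_{2})\in B$; the case $\beta_{1}\in A_{1}$, $\beta_{2}\in B$ is symmetric. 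The pp-formula route above simply packages this bookkeeping.
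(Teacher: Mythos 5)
Your proposal is correct and matches the paper's proof essentially verbatim: the paper likewise writes $A_{1}=\proj_{1}(R)$ as a pp-formula over $R$ and the full unary relations $A_{i}$, replaces each $A_{i}$ by $B_{i}$ while keeping $R$, and applies Lemma~\ref{AbsImplies}. Your remarks about where $\proj_{1}(R)=A_{1}$ enters and about $R\le_{BA(t)}\alg R$ being the trivial self-absorption are exactly the right points of care.
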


\begin{proof}
It is not hard to see that the sets $B$ and 
$A_{1}$ can be defined by
the following pp-formulas
\begin{align*}(x_1\in C) =& \exists 
x_{2}\dots\exists x_{n}\; 
\left[(x_{1}\in C_{1})\wedge 
\dots\wedge 
(x_{n}\in C_{n})\wedge 
R(x_{1},\ldots,x_{n})\right],\\
(x_1\in A_1)=& \exists 
x_{2}\dots\exists x_{n}\; 
\left[(x_{1}\in A_{1})\wedge 
\dots\wedge 
(x_{n}\in A_{n})\wedge 
R(x_{1},\ldots,x_{n})\right].
\end{align*}
It remains to apply Lemma~\ref{AbsImplies}.
\end{proof}

\begin{conslem}\label{RIntersectionBACons}
Suppose $R \le \algA_{1}\times\dots\times \algA_{n}$
and $B_{i}\le_{BA(t)}\algA_{i}$ for every $i\in[n]$.
Then $R\cap (B_{1}\times\dots\times B_{n})\le_{BA(t)}\alg R$.
\end{conslem}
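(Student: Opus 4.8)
The plan is to exhibit $R\cap(B_{1}\times\dots\times B_{n})$ as the relation obtained from a pp-definition of $R$ by replacing some conjuncts with binary absorbing subuniverses, so that Lemma~\ref{AbsImplies} applies directly. Write $R$ by the trivial quantifier-free pp-formula $\Phi$ given by $R(x_{1},\dots,x_{n})\wedge(x_{1}\in A_{1})\wedge\dots\wedge(x_{n}\in A_{n})$, whose conjuncts are the relation $R$ itself (a subalgebra of $\algA_{1}\times\dots\times\algA_{n}$) together with the $n$ full unary relations $A_{i}\subseteq A_{i}$. Clearly $\Phi$ defines $R$, since intersecting with the full relations changes nothing.

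First I would record the two trivial absorption facts needed to invoke Lemma~\ref{AbsImplies}: the full subuniverse $A_{i}$ is a binary absorbing subuniverse of $\algA_{i}$ with the term $t$, since $t(A_{i},\dots,A_{i})\subseteq A_{i}$; and $R$ is (trivially) a binary absorbing subuniverse of $\alg R$ with $t$, since every subuniverse absorbs itself with any term. Now form $\Phi'$ from $\Phi$ by replacing each conjunct $(x_{i}\in A_{i})$ with $(x_{i}\in B_{i})$, using $B_{i}\le_{BA(t)}\algA_{i}$, and leaving the conjunct $R$ untouched (which is legitimate as $R\le_{BA(t)}\alg R$). Then $\Phi'$ defines exactly $R\cap(B_{1}\times\dots\times B_{n})$.

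Applying Lemma~\ref{AbsImplies} to this situation — with the relation defined by $\Phi$ being $R$ — yields $R\cap(B_{1}\times\dots\times B_{n})\le_{BA(t)}\alg R$, which is the claim. The argument mirrors the proof of Corollary~\ref{AbsImpliesCons} almost verbatim; the only point requiring a little care is the bookkeeping of which relation plays the role of the ambient ``$\alg R$'' in Lemma~\ref{AbsImplies}, namely $R$ itself and not the product $\algA_{1}\times\dots\times\algA_{n}$. I do not expect any genuine obstacle here.
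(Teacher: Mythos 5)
Your proof is correct and is exactly the argument the paper intends (the corollary is stated without proof precisely because it follows from Lemma~\ref{AbsImplies} by the same pp-formula device used for Corollary~\ref{AbsImpliesCons}): defining $R$ by $R(x_{1},\dots,x_{n})\wedge(x_{1}\in A_{1})\wedge\dots\wedge(x_{n}\in A_{n})$ and replacing each $A_{i}$ by $B_{i}$ while keeping $R$ (which trivially satisfies $R\le_{BA(t)}\alg R$) is a legitimate instance of the lemma. The bookkeeping point you flag — that the ambient algebra in the conclusion is $\alg R$, not the product — is handled correctly.
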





\begin{lem}\label{AbsorptionReduceArity}
Suppose $B\le \alg A$,
$R\le (\alg A^{k})^{n}$ is an $n$-ary $B^{k}$-essential relation.
Then there exists 
a $B$-essential relation $R'\le \algA^{n}$.
\end{lem}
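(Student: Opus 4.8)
The plan is to avoid building $R'$ explicitly out of $R$ by slicing or projecting $k$-tuples, and instead to route everything through the absorption criterion of Lemma~\ref{NoEssential}, applying it once to the power algebra $\algA^{k}$ and once to $\algA$. The point is that ``$B$-essential relations of arity $n$ exist'' is, by that lemma, equivalent to ``$B$ does not absorb $A$ with an $n$-ary term'', and the non-absorption statement behaves very cleanly under taking powers.

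First I would observe that $B^{k}$ is a subuniverse of $\algA^{k}$ (a product of subuniverses), and that the hypothesis gives a $B^{k}$-essential relation $R \le (\algA^{k})^{n}$. Applying Lemma~\ref{NoEssential} to the algebra $\algA^{k}$ with subuniverse $B^{k}$, we conclude that $B^{k}$ does \emph{not} absorb $A^{k}$ with any term operation of arity $n$.

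Next I would show that $B$ does not absorb $A$ with any term operation of arity $n$ either. Suppose, for contradiction, that $B \le_{BA(t)} \algA$ fails only in arity $n$, i.e.\ there is $t\in\Clo(\algA)$ of arity $n$ with $t(B,\dots,B,A,B,\dots,B)\subseteq B$ for every position of $A$. Applying $t$ coordinatewise yields a term operation $t^{(k)}$ of $\algA^{k}$ of arity $n$ (every term of $\algA$ gives a term of $\algA^{k}$ by coordinatewise evaluation, by an easy induction on term structure), and since the inclusion $t(B,\dots,A,\dots,B)\subseteq B$ holds in each of the $k$ coordinates, we get $t^{(k)}(B^{k},\dots,B^{k},A^{k},B^{k},\dots,B^{k})\subseteq B^{k}$ for every position of $A^{k}$. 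Thus $B^{k}$ absorbs $A^{k}$ with the $n$-ary term $t^{(k)}$, contradicting the previous paragraph. Finally, applying the other direction of Lemma~\ref{NoEssential}, this time to $\algA$ with subuniverse $B$: since $B$ does not absorb $A$ with any $n$-ary term, there exists a $B$-essential relation $R'\le\algA^{n}$, which is exactly the desired conclusion.

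I expect no serious obstacle here; the only points needing a line of care are that term operations lift to the power algebra by coordinatewise application, and that Lemma~\ref{NoEssential} is a genuine ``if and only if'' so that it can be used in one direction on $\algA^{k}$ and in the opposite direction on $\algA$. The small conceptual step is simply to resist the temptation to manipulate $R$ directly and instead use the absorption dictionary twice.
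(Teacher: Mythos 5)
Your argument is circular within this paper's logical development. The direction of Lemma~\ref{NoEssential} that you invoke in your final step --- ``$B$ does not absorb $A$ with an $n$-ary term, hence there exists a $B$-essential relation $R'\le\algA^{n}$'' --- is exactly the $\Leftarrow$ implication of that lemma, and the paper's proof of that implication is itself built on Lemma~\ref{AbsorptionReduceArity}: there one generates a relation $R=\Sg_{\algA}(M)$ from a suitable matrix, observes that if $R\cap B^{kn}=\varnothing$ then $R$ is an $n$-ary $B^{k}$-essential relation over $\algA^{k}$, and then applies Lemma~\ref{AbsorptionReduceArity} to descend to a $B$-essential relation over $\algA$ and reach a contradiction. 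So you are using the statement you are asked to prove. (The other direction of Lemma~\ref{NoEssential}, which you use on $\algA^{k}$ in your first step, is harmless; the problem is entirely in the final step.) Since the paper advertises self-contained proofs and orders Lemma~\ref{AbsorptionReduceArity} before the proof of Lemma~\ref{NoEssential} precisely for this reason, appealing to the external source \cite{DecidingAbsorption} to break the cycle would defeat the purpose.

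The intended argument is a direct construction and is worth internalizing. Set $B_{i}=B^{i}\times A^{k-i}\le\algA^{k}$, so $B_{0}=A^{k}$ and $B_{k}=B^{k}$. Among all tuples $(i_1,\dots,i_n)$ with $R\cap(B_{i_1}\times\dots\times B_{i_n})=\varnothing$ pick one with minimal sum; $B^{k}$-essentiality forces every $i_j\ge 1$. Then the relation
$$R'=\{(\alpha_{1}(i_1),\dots,\alpha_{n}(i_n))\mid (\alpha_1,\dots,\alpha_{n})\in R\cap (B_{i_1-1}\times\dots\times B_{i_n-1})\}$$
is pp-definable, hence an invariant of $\algA$, and minimality of the sum makes it $B$-essential: relaxing any single coordinate from $B_{i_j}$ to $B_{i_j-1}$ produces a witness whose $j$-th extracted entry may lie outside $B$ while all others lie in $B$, whereas a tuple of $R'$ lying entirely in $B^{n}$ would contradict $R\cap(B_{i_1}\times\dots\times B_{i_n})=\varnothing$. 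If you want to keep the absorption-dictionary flavor of your approach, you would first need an independent proof of the $\Leftarrow$ direction of Lemma~\ref{NoEssential}, which is essentially the same amount of work.
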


\begin{proof}
Put 
$B_{i} = B^{i}\times A^{k-i}$.
Consider a tuple $(i_1,\dots,i_{n})$ 
with the minimal sum $i_1+\dots+i_{n}$
such that 
$R\cap (B_{i_1}\times\dots\times B_{i_n})=\varnothing.$
Since $R$ is $B^{k}$-essential, 
$i_{j}\ge 1$ for every $i$.
Then 
an $n$-ary $B$-essential relation $R'$ can be defined by
(here $\alpha(i)$ is the $i$-th element of the tuple $\alpha$)
$$
\{(\alpha_{1}(i_1),\dots,\alpha_{n}(i_n))\mid 
\exists (\alpha_1,\dots,\alpha_{n})\in R\cap (B_{i_1-1}\times\dots\times B_{i_n-1})\},$$
which can be viewed as a pp-definition and therefore $R'\le \algA^{n}$.
\end{proof}

\begin{LEMNoEssential}\cite{DecidingAbsorption}
Suppose $B$ is a subuniverse of $\algA$.
Then $B$ absorbs $A$ with an operation $t$ of arity $n$ if and only if
there does not exist a $B$-essential relation $R\le \algA^{n}$.
\end{LEMNoEssential}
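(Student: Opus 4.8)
The plan is to prove the two implications of the equivalence separately. The implication "$B$ absorbs $\algA$ with an $n$-ary term $\Rightarrow$ no $B$-essential relation of arity $n$'' is a short direct computation; the converse is the substantial one and will be obtained from a product construction together with Lemma~\ref{AbsorptionReduceArity}.

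For the first implication, suppose $B$ absorbs $\algA$ with a term $t$ of arity $n$ and, towards a contradiction, that $R\le\algA^{n}$ is $B$-essential. For each $i\in[n]$ pick a tuple $\beta_{i}\in R\cap(B^{i-1}\times A\times B^{n-i})$. Apply $t$ coordinatewise to $\beta_{1},\dots,\beta_{n}$; since $R$ is a subuniverse of $\algA^{n}$, the result lies in $R$. In the $j$-th coordinate all entries $\beta_{i}(j)$ with $i\neq j$ lie in $B$ while $\beta_{j}(j)\in A$, so the absorption property of $t$ forces that coordinate into $B$. Hence $t(\beta_{1},\dots,\beta_{n})\in R\cap B^{n}$, contradicting $R\cap B^{n}=\varnothing$.

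For the converse I would build an auxiliary relation of the kind already used in the proof of Lemma~\ref{NoWNULemma}. Let $T=A\times B^{n}$, put $\alg D=\algA^{T}$, and inside $D=A^{T}$ consider the element $\alpha$ with $\alpha(a,b_{1},\dots,b_{n})=a$ and, for each $i\in[n]$, the element $\beta_{i}\in B^{T}$ with $\beta_{i}(a,b_{1},\dots,b_{n})=b_{i}$. Let $\delta_{i}\in D^{n}$ be the tuple having $\alpha$ in coordinate $i$ and $\beta_{i}$ in every other coordinate, and set $R=\Sg_{\alg D^{n}}(\{\delta_{1},\dots,\delta_{n}\})$, so that $R\le(\algA^{|T|})^{n}$. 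Each $\delta_{i}$ witnesses $R\cap\bigl((B^{T})^{i-1}\times A^{T}\times(B^{T})^{n-i}\bigr)\neq\varnothing$, so $R$ meets every "side''. Now I consider two cases. If $R\cap(B^{T})^{n}=\varnothing$, then $R$ is a $B^{T}$-essential relation of arity $n$, and Lemma~\ref{AbsorptionReduceArity} produces a $B$-essential relation of arity $n$ over $\algA$, contradicting the hypothesis. Hence $R\cap(B^{T})^{n}\neq\varnothing$; choosing $\eta=(\eta_{1},\dots,\eta_{n})$ in this intersection and writing $\eta=s(\delta_{j_{1}},\dots,\delta_{j_{p}})$ for a term operation $s$ of $\algA$, the $n$-ary term $t(x_{1},\dots,x_{n}):=s(x_{j_{1}},\dots,x_{j_{p}})$ satisfies $\eta=t(\delta_{1},\dots,\delta_{n})$. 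Reading this off in coordinate $k$ gives $\eta_{k}=t(\beta_{1},\dots,\beta_{k-1},\alpha,\beta_{k+1},\dots,\beta_{n})$, and evaluating at $(a,b_{1},\dots,b_{n})\in T$ converts $\eta_{k}\in B^{T}$ into precisely $t(b_{1},\dots,b_{k-1},a,b_{k+1},\dots,b_{n})\in B$ for all choices; since $k$ was arbitrary, $B$ absorbs $\algA$ with $t$.

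The main obstacle is calibrating the auxiliary construction. The delicate point is that a single tuple $\eta$ has to encode the absorption identity for all $n$ positions of $t$ at once, and for the entries occupying the "$B$-slots'' to range over $B$ independently one is forced to use $n$ distinct generic $B$-columns $\beta_{1},\dots,\beta_{n}$ (rather than a single one, as in the WNU case), which is why the coordinate space is indexed by $A\times B^{n}$. The remaining points — that $\Sg_{\alg D^{n}}(\{\delta_{1},\dots,\delta_{n}\})$ really meets every side, and that the extracted operation has arity exactly $n$ — are routine.
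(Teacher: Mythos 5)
Your proof is correct and follows essentially the same route as the paper: the easy direction is the same direct application of $t$ to the $n$ witnessing tuples, and for the converse you generate a relation from $n$ ``generic'' tuples (each lying in $(B^{T})^{i-1}\times A^{T}\times(B^{T})^{n-i}$), split on whether it meets $(B^{T})^{n}$, extract the absorbing term in one case and invoke Lemma~\ref{AbsorptionReduceArity} in the other --- exactly the paper's argument, with your index set $T=A\times B^{n}$ playing the role of the paper's matrix of all tuples with exactly one entry outside $B$ (your version just carries some harmless redundant rows). No gaps.
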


\begin{proof}
$\Rightarrow$. Assume that such a $B$-essential relation $R$ exists. 
Consider $n$ tuples witnessing that $R$ is $B$-essential, 
that is, 
$\alpha_{i}\in R\cap (B^{i-1}\times A\times B^{n-i})$
for $i\in[n]$.
Since $B$ absorbs $\alg A$ with a term $t$, 
$t(\alpha_{1},\dots,\alpha_{n})\in R\cap B^{n}$, which contradicts 
the fact that $R$ is $B$-essential.

$\Leftarrow.$ 
Let $M$ be the matrix whose rows are all tuples from $A^{n}$ having exactly one element outside of 
$B$. Moreover, we assume that the matrix starts with the rows whose first element outside of $B$,
then we have the rows whose second element outside of $B$, and so on.
Finally, there are rows whose last element outside of $B$.
Let $k:=(|A|-|B|)|B|^{n-1}$, then the matrix has $n\cdot k$ rows.
Let 
$R =\Sg_{\algA}(M)$.

Assume that $R\cap B^{kn}\neq\varnothing$, 
then there exists a term operation such that 
$t(M)\in B^{kn}$. By the definition of the matrix, $B$ absorbs $\algA$ 
with a term operation $t$, which is what we need.

Assume that $R\cap B^{kn}=\varnothing$.
Note that $R$ can be viewed as an $n$-ary relation on the set $A^{k}$.
In the first column of the matrix $M$ only the first $k$ elements are not from $B$, 
in the second column only $(k+1)$-th to $2k$-th elements are not from $B$, and so on.
Thus, columns of the matrix $M$ witness that $R\le (\alg A^{k})^{n}$ is 
a $B^{k}$-essential relation.
Lemma~\ref{AbsorptionReduceArity} implies that there exists a $B$-essential relation $R'\le \alg A^{n}$.
Contradiction.

\end{proof}
\subsection{Central subuniverse}

\begin{lem}\label{CenterAddingB}
Suppose 
$\alg C\le_{C} \alg A$, 
then 
$\alg C\times \alg B\le_{C}
\alg A\times\alg B$.
\end{lem}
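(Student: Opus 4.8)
The statement asks: if $\alg C \le_C \alg A$ (a central subuniverse), then $\alg C \times \alg B \le_C \alg A \times \alg B$. By the definition of central subuniverse, I must verify two things: (i) $C \times B$ is an absorbing subuniverse of $\alg A \times \alg B$, and (ii) for every $(a,b) \in (A\times B) \setminus (C\times B)$ — equivalently every pair with $a \in A\setminus C$ — we have $((a,b),(a,b)) \notin \Sg_{\alg A\times\alg B}\big((\{(a,b)\}\times(C\times B)) \cup ((C\times B)\times\{(a,b)\})\big)$.

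For part (i): since $C$ absorbs $\alg A$ with some term $t$, i.e. $t(C,\dots,C,A,C,\dots,C)\subseteq C$ for any position of $A$, the same term $t$ witnesses that $C\times B$ absorbs $\alg A\times\alg B$. Indeed, applying $t$ coordinatewise to tuples from $\alg A\times\alg B$, the first-coordinate part stays in $C$ by absorption, and the second-coordinate part stays in $B$ because $B$ is a subuniverse and $t$ is idempotent (so $t(B,\dots,B,B,B,\dots,B)\subseteq B$). This is routine. (One could even appeal to Corollary~\ref{AbsorptionQuotient} or the product structure, but writing $t$ explicitly is cleanest.)

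For part (ii): fix $a\in A\setminus C$ and $b\in B$. I want to show $((a,b),(a,b))$ is not generated from $D := (\{(a,b)\}\times(C\times B)) \cup ((C\times B)\times\{(a,b)\})$ inside $\alg A\times\alg B$. The key observation is that $D$ projects, under the map sending $((a_1,b_1),(a_2,b_2))$ to $(a_1,a_2)$, into $(\{a\}\times C)\cup(C\times\{a\})$. Since subalgebra generation commutes with this projection (it is a homomorphism of the relevant power algebras — applying operations coordinatewise), $\Sg_{\alg A\times\alg B}(D)$ projects into $\Sg_{\alg A}\big((\{a\}\times C)\cup(C\times\{a\})\big)$. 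By the definition of central subuniverse applied to $\alg C\le_C\alg A$ and the element $a\in A\setminus C$, we have $(a,a)\notin\Sg_{\alg A}\big((\{a\}\times C)\cup(C\times\{a\})\big)$. Hence no element of $\Sg_{\alg A\times\alg B}(D)$ can project to $(a,a)$, and in particular $((a,b),(a,b))$ — which projects to exactly $(a,a)$ — is not in $\Sg_{\alg A\times\alg B}(D)$. This establishes (ii).

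The only mild subtlety — and the step I would be most careful with — is the assertion that $\Sg$ commutes with coordinate projections: if $\pi:\alg P\to\alg Q$ is a homomorphism and $X\subseteq P$, then $\pi(\Sg_{\alg P}(X)) = \Sg_{\alg Q}(\pi(X))$. This is standard (one inclusion is because homomorphic images of subalgebras are subalgebras, the other because $\pi^{-1}$ of a subalgebra containing $\pi(X)$ is a subalgebra containing $X$), so I would just invoke it. Everything else is bookkeeping with the definitions, so I expect no real obstacle here; the lemma is a warm-up establishing that centrality is preserved under taking a product with an arbitrary algebra.
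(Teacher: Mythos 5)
Your proof is correct and follows essentially the same route as the paper: absorption of $C\times B$ is immediate (the paper invokes Lemma~\ref{AbsImplies}, you write the term $t$ out directly), and the centrality condition is reduced, via projection onto the $\alg A$-coordinates, to the fact that $(a,a)\notin\Sg_{\algA}\bigl((\{a\}\times C)\cup(C\times\{a\})\bigr)$. The paper states this last reduction in one line; you justify it by noting that $\Sg$ commutes with the coordinate-projection homomorphism, which is exactly the intended argument.
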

\begin{proof}
 By Lemma~\ref{AbsImplies}, 
$C\times B$ is an absorbing subuniverse of 
$A\times B$. 
Thus, it is sufficient to show that 
for any $a\in A\setminus C$ and 
$b\in B$
we have 
$$\begin{pmatrix}
a\\b\\a\\b
\end{pmatrix}
\notin
\Sg_{\alg{A}\times\alg{B}}
\begin{pmatrix}
a& C \\b& B \\C & a\\B& b
\end{pmatrix}.$$
This follows from the fact that 
$\begin{pmatrix}
a\\a
\end{pmatrix}
\notin
\Sg_{\algA}
\begin{pmatrix}
a& C\\C & a
\end{pmatrix}.$
\end{proof}

\begin{lem}\label{CenterIntersection}
Suppose 
$\alg C_{1}\le_{C}\alg A$ 
and 
$\alg C_{2}\le_{C}\alg A$.
Then 
$C_{1}\cap C_{2} \le_{C}\alg A$.
\end{lem}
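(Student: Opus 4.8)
The plan is to verify directly the two clauses of the definition of a central subuniverse for $C_{1}\cap C_{2}$: first that it is an absorbing subuniverse of $\algA$, and second that $(a,a)\notin\Sg_{\algA}((\{a\}\times(C_{1}\cap C_{2}))\cup((C_{1}\cap C_{2})\times\{a\}))$ for every $a\in A\setminus(C_{1}\cap C_{2})$. That $C_{1}\cap C_{2}$ is a subuniverse is immediate, since an intersection of subuniverses is a subuniverse.

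For the absorption clause I would produce an explicit term. Let $t$ (of arity $k$) witness that $C_{1}$ absorbs $\algA$ and let $s$ (of arity $l$) witness that $C_{2}$ absorbs $\algA$, and set
\[
u(x_{1,1},\dots,x_{1,k},\dots,x_{l,1},\dots,x_{l,k})=s\bigl(t(x_{1,1},\dots,x_{1,k}),\dots,t(x_{l,1},\dots,x_{l,k})\bigr).
\]
To check that $u$ witnesses $(C_{1}\cap C_{2})\le_{C}$-absorption, suppose all inputs lie in $C_{1}\cap C_{2}$ except one, say $x_{p,q}\in A$. For $i\neq p$, the inner value $t(x_{i,1},\dots,x_{i,k})$ has all arguments in $C_{1}\cap C_{2}$, hence lies in $C_{1}\cap C_{2}$ because $t$, being a term operation, preserves the subuniverses $C_{1}$ and $C_{2}$. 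For $i=p$, the value $t(x_{p,1},\dots,x_{p,k})$ has $k-1$ arguments in $C_{1}\cap C_{2}\subseteq C_{1}$ and one in $A$, so it lies in $C_{1}$ by absorption. Now $s$ is applied to $l$ values all lying in $C_{1}$, so (as $s$ preserves $C_{1}$) the output lies in $C_{1}$; and $l-1$ of these values lie in $C_{2}$ while the remaining one lies in $A$, so by absorption the output also lies in $C_{2}$. Thus $u(\dots)\in C_{1}\cap C_{2}$, as required. (Alternatively one could invoke the general fact that absorbing subuniverses are closed under intersection, but giving the term keeps the argument self-contained.)

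For the second clause, fix $a\in A\setminus(C_{1}\cap C_{2})$; then $a\notin C_{1}$ or $a\notin C_{2}$, and without loss of generality $a\notin C_{1}$. Since $C_{1}\cap C_{2}\subseteq C_{1}$ we have $(\{a\}\times(C_{1}\cap C_{2}))\cup((C_{1}\cap C_{2})\times\{a\})\subseteq(\{a\}\times C_{1})\cup(C_{1}\times\{a\})$, hence the subalgebra generated by the left-hand set is contained in the subalgebra generated by the right-hand set. Because $\alg C_{1}\le_{C}\algA$ and $a\notin C_{1}$, we know $(a,a)\notin\Sg_{\algA}((\{a\}\times C_{1})\cup(C_{1}\times\{a\}))$, so $(a,a)$ is not in the smaller subalgebra either. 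This finishes the verification. The only step needing any care is the bookkeeping in the absorption clause showing that the composite term lands simultaneously in $C_{1}$ and in $C_{2}$; the second clause is just monotonicity of $\Sg$, so I do not anticipate a genuine obstacle.
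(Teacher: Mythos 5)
Your proof is correct and follows essentially the same route as the paper: the second clause is handled exactly as in the paper (if $a\notin C_1\cap C_2$ then $a\notin C_i$ for some $i$, and monotonicity of $\Sg_{\algA}$ does the rest), and the absorption clause differs only in that you build the composite term $s(t(\cdot),\dots,t(\cdot))$ explicitly where the paper appeals to Lemma~\ref{AbsImplies}. Your explicit term is a sound, self-contained substitute, and in fact it addresses the point that $C_1$ and $C_2$ may a priori absorb with different terms.
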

\begin{proof}
By Lemma~\ref{AbsImplies}, 
$C_1\cap C_2$ is an absorbing subuniverse of 
$A$. 
Let us show the second condition of a central subuniverse.
Suppose 
$a\in A\setminus (C_{1}\cap C_2)$,
then $a\notin C_{i}$ for some $i\in\{1,2\}$
and
$$\begin{pmatrix}
a\\a
\end{pmatrix}
\notin
\Sg_{\algA}
\begin{pmatrix}
a& C_i\\C_i & a
\end{pmatrix}
\supseteq 
\Sg_{\algA}
\begin{pmatrix}
a& C_1\cap C_2 \\
C_1\cap C_2 & a 
\end{pmatrix}.$$
\end{proof}

\begin{lem}\label{CenterIntersectionA}
Suppose 
$\alg C\le_{C}\alg A$,
$\alg B\le\alg A$.
Then 
$C\cap  B\le_{C} \alg B$.
\end{lem}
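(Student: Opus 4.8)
The plan is to verify directly the two defining conditions of a central subuniverse for $C\cap B$ viewed inside $\alg B$, using only that $C$ is central in $\alg A$ and that $\alg B\le\alg A$. First I would note that $C\cap B$ is a subuniverse of $\alg B$, being the intersection of the two subuniverses $C$ and $B$ of $\alg A$.

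For the absorption condition, let $t$ be a term operation witnessing that $C$ absorbs $\alg A$. Since $t\in\Clo(\alg A)$ and $B$ is a subuniverse of $\alg A$, the restriction $t|_{B}$ is a term operation of $\alg B$. Now fix a coordinate $j$ and substitute elements of $C\cap B$ in all coordinates except the $j$-th and an element of $B$ in the $j$-th. On the one hand all inputs lie in $A$ and those outside position $j$ lie in $C$, so the value lies in $C$; on the other hand all inputs lie in $B$, so the value lies in $B$. Hence the value lies in $C\cap B$, i.e. $C\cap B$ absorbs $\alg B$ with $t|_{B}$. (Alternatively this is an instance of the absorption statements already established, e.g.\ Lemma~\ref{AbsImplies}.)

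For the second condition of centrality, take $a\in B\setminus(C\cap B)$. Then $a\in A$ and $a\notin C$, so the centrality of $C$ in $\alg A$ gives $(a,a)\notin\Sg_{\alg A}\bigl((\{a\}\times C)\cup(C\times\{a\})\bigr)$. Since $C\cap B\subseteq C$, the generating set $(\{a\}\times(C\cap B))\cup((C\cap B)\times\{a\})$ is contained in $(\{a\}\times C)\cup(C\times\{a\})$. Moreover $\alg B\le\alg A$ implies $\alg B^{2}\le\alg A^{2}$, and every basic operation of $\alg B^{2}$ is the restriction of one of $\alg A^{2}$, so $\Sg_{\alg B}(X)\subseteq\Sg_{\alg A}(X)$ for every $X\subseteq B^{2}$. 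Combining these inclusions, $(a,a)\notin\Sg_{\alg B}\bigl((\{a\}\times(C\cap B))\cup((C\cap B)\times\{a\})\bigr)$, which is precisely what is required. Together with the previous paragraph this shows $C\cap B\le_{C}\alg B$.

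I do not anticipate a genuine obstacle; the only point that deserves an explicit word is the monotonicity $\Sg_{\alg B}(X)\subseteq\Sg_{\alg A}(X)$ for $X\subseteq B^{2}$, which holds because any element produced from $X$ using the operations of $\alg B^{2}$ can be produced from $X$ using the corresponding (unrestricted) operations of $\alg A^{2}$.
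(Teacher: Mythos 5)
Your proof is correct and follows essentially the same route as the paper: the absorption part (which you verify directly with the absorbing term $t$, and which the paper obtains from Lemma~\ref{AbsImplies}) plus the observation that $\Sg_{\alg B}\bigl((\{a\}\times(C\cap B))\cup((C\cap B)\times\{a\})\bigr)\subseteq\Sg_{\alg A}\bigl((\{a\}\times C)\cup(C\times\{a\})\bigr)$, so $(a,a)$ cannot lie in the smaller generated subalgebra. No gaps.
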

\begin{proof}
It follows from Lemma~\ref{AbsImplies}
that $C\cap B$ is an absorbing subuniverse.
The remaining part follows from the fact that for any
$a\in B\setminus C$ we have 
$$\begin{pmatrix}
a\\a
\end{pmatrix}
\notin
\Sg_{\algA}
\begin{pmatrix}
a& C \\C & a
\end{pmatrix}
\supseteq 
\Sg_{\alg B}
\begin{pmatrix}
a& C\cap B\\C\cap B & a
\end{pmatrix}
.$$%
\end{proof}

\begin{lem}\label{CenterIntersectionInIntersection}
Suppose 
$\alg C_i\le_{C} \alg A_i\le \algA$
for $i\in[k]$.
Then 
$(C_1\cap\dots \cap C_k)\le_{C}
(\alg A_1\cap\dots \cap \alg A_k)$.
\end{lem}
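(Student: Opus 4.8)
The plan is to combine the two facts already available: Lemma~\ref{CenterIntersectionA}, which lets us restrict a central subuniverse to a subalgebra, and Lemma~\ref{CenterIntersection}, which closes central subuniverses of a \emph{fixed} algebra under intersection. For $k=1$ the claim is just the hypothesis, so assume $k\ge 2$.

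First I would put $\alg A' := \alg A_1\cap\dots\cap\alg A_k$; this is a subalgebra of $\algA$ because an intersection of subuniverses is again a subuniverse. For each $i\in[k]$ we have $\alg A'\le \alg A_i$ (since $A'\subseteq A_i$) and $\alg C_i\le_{C}\alg A_i$, so Lemma~\ref{CenterIntersectionA}, applied with $\alg A_i,\alg C_i,\alg A'$ playing the roles of $\alg A,\alg C,\alg B$, yields $C_i\cap A'\le_{C}\alg A'$. Thus $C_1\cap A',\dots,C_k\cap A'$ are all central subuniverses of the single algebra $\alg A'$. A straightforward induction on $k$, using Lemma~\ref{CenterIntersection} at each step, then gives $\bigcap_{i\in[k]}(C_i\cap A')\le_{C}\alg A'$. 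Finally, since $C_i\subseteq A_i$ for every $i$, we get $\bigcap_{i\in[k]}C_i\subseteq\bigcap_{i\in[k]}A_i = A'$, hence $\bigcap_{i\in[k]}(C_i\cap A') = \bigcap_{i\in[k]}C_i$, and therefore $(C_1\cap\dots\cap C_k)\le_{C}\alg A' = \alg A_1\cap\dots\cap\alg A_k$, as required.

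I do not expect a real obstacle here: the only points needing care are tracking which algebra plays which role when invoking Lemma~\ref{CenterIntersectionA}, and observing that the degenerate cases (some $C_i$ empty, or $\bigcap_{i}C_i=\varnothing$) are handled automatically, since the empty set absorbs every algebra and vacuously satisfies the second condition in the definition of a central subuniverse. An alternative that avoids citing Lemma~\ref{CenterIntersection} would be to establish absorption of $\bigcap_{i}C_i$ directly via Lemma~\ref{AbsImplies} and to reuse the ``$(a,a)$ is not generated'' argument from the proofs of Lemmas~\ref{CenterIntersection} and~\ref{CenterIntersectionA}, but routing everything through the two existing lemmas is cleaner.
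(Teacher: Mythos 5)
Your proposal is correct and matches the paper's own argument: the paper likewise applies Lemma~\ref{CenterIntersectionA} to restrict each $C_i$ to the subalgebra $\alg A_1\cap\dots\cap\alg A_k$ and then intersects the resulting central subuniverses via Lemma~\ref{CenterIntersection}. The only cosmetic difference is that you spell out the induction on $k$ and the identity $\bigcap_i(C_i\cap A')=\bigcap_i C_i$, which the paper leaves implicit.
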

\begin{proof}
Since 
$(A_{1}\cap\dots\cap A_{k})\le\algA$,
by Lemma~\ref{CenterIntersectionA},
for every $i\in[k]$ we have
$$(\alg A_{1}\cap\dots\cap \alg A_{i-1}\cap \alg C_{i}
\cap \alg A_{i+1}\cap\dots\cap  \alg A_{k})
\le_{C}
(\alg A_{1}\cap\dots\cap \alg A_{k}).$$
Considering their intersection and 
applying Lemma~\ref{CenterIntersection} we complete the proof. 
\end{proof}

\begin{lem}\label{NoEEForCenter}
Suppose
$C\le_{C} \alg \algA$, 
$E\le \algA$, $E\cap C=\varnothing$,
and $e\in E$ is chosen so that 
$\Sg_{\algA}(C\cup\{e\})$ is inclusion minimal.
Then 
$\Sg_{\algA}
\begin{pmatrix}
e& C \\C & e 
\end{pmatrix}\cap E^{2}=\varnothing$.
\end{lem}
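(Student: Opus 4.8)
The plan is to argue by contradiction. Suppose $(e_{1},e_{2})\in R\cap E^{2}$, where I write $R$ for the relation $\Sg_{\algA}((\{e\}\times C)\cup(C\times\{e\}))$ appearing in the conclusion. (We may assume $C\neq\varnothing$; otherwise $R=\varnothing$ and there is nothing to prove.) Put $D:=\Sg_{\algA}(C\cup\{e\})$. A routine check gives $\proj_{1}(R)=\proj_{2}(R)=D$: all generators of $R$ have both coordinates in $C\cup\{e\}\subseteq D$, so $R\subseteq D^{2}$, while $\proj_{1}(R)$ is a subuniverse containing $C$ (via the generators $C\times\{e\}$) and $e$ (via $\{e\}\times C$, using $C\neq\varnothing$). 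Also $e\in A\setminus C$ since $e\in E$ and $E\cap C=\varnothing$.

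The first main step is to produce, from the hypothetical bad pair, a pair of the special form $(e,y^{*})$ with $y^{*}\in E$. Consider $G:=\proj_{1}(R\cap(A\times E))$, which is a subuniverse of $\algA$, being the $\proj_{1}$-image of the subuniverse $R\cap(A\times E)$ of $\algA^{2}$. The generators $C\times\{e\}$ give $C\subseteq G$, and $(e_{1},e_{2})\in R\cap(A\times E)$ gives $e_{1}\in G\subseteq D$. Now apply the minimality hypothesis to $e_{1}$: since $e_{1}\in E$ and $C\cup\{e_{1}\}\subseteq D$, we have $\Sg_{\algA}(C\cup\{e_{1}\})\subseteq D=\Sg_{\algA}(C\cup\{e\})$, so minimality forces $\Sg_{\algA}(C\cup\{e_{1}\})=D$. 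Hence $\Sg_{\algA}(G)=D\ni e$, so $e=h(g_{1},\dots,g_{m})$ for some term operation $h$ and $g_{1},\dots,g_{m}\in G$. Choosing $y_{i}\in E$ with $(g_{i},y_{i})\in R$ and applying $h$ coordinatewise inside the subuniverse $R$ yields $(e,h(y_{1},\dots,y_{m}))\in R$ with $y^{*}:=h(y_{1},\dots,y_{m})\in E$, because $E$ is a subuniverse.

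The second step is to enlarge the ``$e$-row'' of $R$. The set $C^{+}:=\{d\in A\mid (e,d)\in R\}$ is a subuniverse of $\algA$: applying any operation $f$ to tuples $(e,d_{1}),\dots,(e,d_{m})\in R$ gives $(f(e,\dots,e),f(d_{1},\dots,d_{m}))=(e,f(d_{1},\dots,d_{m}))\in R$ by idempotency. It contains $C$ (generators $\{e\}\times C$) and $y^{*}$ (previous step), and it lies inside $\proj_{2}(R)=D$. Applying minimality once more, now to $y^{*}\in E\cap D$, gives $\Sg_{\algA}(C\cup\{y^{*}\})=D$, so $\Sg_{\algA}(C^{+})=D$; but $C^{+}$ is already a subuniverse, hence $C^{+}=D\ni e$. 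Thus $(e,e)\in R=\Sg_{\algA}((\{e\}\times C)\cup(C\times\{e\}))$, which contradicts $e\in A\setminus C$ together with the defining property of the central subuniverse $C$. Therefore $R\cap E^{2}=\varnothing$.

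I expect the genuinely load-bearing point to be the double exploitation of the minimality of $\Sg_{\algA}(C\cup\{e\})$: first applied to the coordinate $e_{1}$ of the hypothetical bad pair to force $e\in\Sg_{\algA}(G)$ and thereby manufacture the pair $(e,y^{*})$ with $y^{*}\in E$, and then applied to $y^{*}$ to force the $e$-row $C^{+}$ to be all of $D$. The remaining verifications — that $G$ and $C^{+}$ are subuniverses and that $\proj_{i}(R)=D$ — are routine, and once the $e$-row collapses to $D$ the contradiction is immediate from the definition of a central subuniverse.
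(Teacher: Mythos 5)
Your proof is correct and follows essentially the same route as the paper's: assuming a bad pair $(e_1,e_2)\in R\cap E^2$, you apply the minimality of $\Sg_{\algA}(C\cup\{e\})$ twice — first to produce a pair $(e,y^{*})\in R$ with $y^{*}\in E$ (the paper does this via $E_1=E+R$, you via $\proj_1(R\cap(A\times E))$, which coincide by the symmetry of $R$), and then to collapse the $e$-row $C^{+}=e+R$ onto $D$, forcing $(e,e)\in R$ and contradicting centrality. The differences are purely cosmetic (e.g.\ your detour through a term $h$ is unnecessary since $G$ is already a subuniverse equal to $D$).
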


\begin{proof}
Assume the converse.
Let 
$R = \Sg_{\algA}
\begin{pmatrix}
e& C \\C & e 
\end{pmatrix}$,
$(e_1,e_2)\in R\cap E^{2}$.
Let 
$E_{1} = E+R$, then 
$C\cup \{e_2\}\subseteq E_{1}$ and by the 
minimality of
$\Sg_{\algA}(C\cup \{e\})$ 
we have $e\in E_{1}$, hence 
$(e,e')\in R$ for some $e'\in E$.
Let $E_{2} = e + R$.
We know that 
$C\cup \{e'\} \subseteq E_{2}$.
Then by the minimality of 
$\Sg_{\algA}(C\cup \{e\})$ 
we have $e\in E_{2}.$
Thus, $(e,e)\in R$, which contradicts the definition of a central subuniverse.
\end{proof}

\begin{lem}\label{CenterQuotient}
Suppose 
$\alg C\le_{C} \alg A$,
$\sigma$ is a congruence on $\alg A$.
Then 
$C/\sigma\le_{C} \alg A/\sigma$.
\end{lem}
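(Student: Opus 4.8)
The plan is to check the two requirements in the definition of a central subuniverse for $C/\sigma$ inside $\algA/\sigma$. The absorption requirement is immediate: since $\alg C\le_{C}\algA$, the set $C$ absorbs $\algA$, so by Corollary~\ref{AbsorptionQuotient}(1) the set $C/\sigma$ absorbs $\algA/\sigma$ with the same term. Thus all the work lies in showing that $(a/\sigma,a/\sigma)\notin\Sg_{\algA/\sigma}\bigl((\{a/\sigma\}\times(C/\sigma))\cup((C/\sigma)\times\{a/\sigma\})\bigr)$ for every $a\in A$ with $a/\sigma\notin C/\sigma$.

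Fix such an $a$, let $\pi\colon\algA\to\algA/\sigma$ be the natural homomorphism, and let $E:=a/\sigma$ be the $\sigma$-block of $a$. Since $\algA$ is idempotent, $E$ is a subuniverse of $\algA$, and $E\cap C=\varnothing$ (otherwise $a/\sigma=c/\sigma\in C/\sigma$ for some $c\in C$). Because $A$ is finite, we may choose $e\in E$ with $\Sg_{\algA}(C\cup\{e\})$ inclusion minimal, and then Lemma~\ref{NoEEForCenter} applied to $C$, $E$, $e$ gives $\Sg_{\algA}\bigl((\{e\}\times C)\cup(C\times\{e\})\bigr)\cap E^{2}=\varnothing$. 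Now transport this to the quotient: $\pi\times\pi$ is a surjective homomorphism $\algA^{2}\to(\algA/\sigma)^{2}$, so the image of a generated subalgebra is the subalgebra generated by the image, and since $\pi(e)=\pi(a)=a/\sigma$,
\[
\Sg_{\algA/\sigma}\bigl((\{a/\sigma\}\times(C/\sigma))\cup((C/\sigma)\times\{a/\sigma\})\bigr)
=(\pi\times\pi)\Bigl(\Sg_{\algA}\bigl((\{e\}\times C)\cup(C\times\{e\})\bigr)\Bigr).
\]
If $(a/\sigma,a/\sigma)$ lay in this set, it would have a preimage $(a_{1},a_{2})$ in $\Sg_{\algA}\bigl((\{e\}\times C)\cup(C\times\{e\})\bigr)$ with $a_{1}/\sigma=a_{2}/\sigma=a/\sigma$, i.e.\ $(a_{1},a_{2})\in E^{2}$, contradicting the previous sentence. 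This finishes the center condition, hence $C/\sigma\le_{C}\algA/\sigma$.

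The only genuinely delicate point is the choice of the representative $e$ of the block $E$: replacing the whole block $E$ by the single element $e$ in the generating ``cross'' does not change the image under $\pi\times\pi$ (because $\pi(e)=a/\sigma$), while the inclusion-minimal choice of $e$ is exactly what makes Lemma~\ref{NoEEForCenter} applicable. Everything else — that $\sigma$-blocks of an idempotent algebra are subuniverses, and that surjective homomorphisms commute with $\Sg$ — is routine. I expect this bridging step, from ``$(a/\sigma,a/\sigma)$ is generated in the quotient'' to ``a pair lying inside the block $E$ is generated in $\algA$'', to be the crux of the argument.
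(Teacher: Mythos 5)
Your proof is correct and follows essentially the same route as the paper's: absorption of $C/\sigma$ via Corollary~\ref{AbsorptionQuotient}, then choosing an inclusion-minimal representative $e$ of the offending block $E$ and applying Lemma~\ref{NoEEForCenter}. The only difference is presentational — you spell out the lifting step (that $\pi\times\pi$ maps the subalgebra generated by the cross on $\{e\}$ and $C$ onto the one generated by the cross on $\{a/\sigma\}$ and $C/\sigma$) which the paper compresses into the phrase ``by the definition of $E$''.
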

\begin{proof}
By Corollary~\ref{AbsorptionQuotient},
$C/\sigma$ is an absorbing 
subuniverse of $\alg A/\sigma$.
Assume that 
$C/\sigma$ is not a central subuniverse.
Then there exists an equivalence class $E$
of $\sigma$
such that 
$E\cap C= \varnothing$ and 
$\begin{pmatrix}
E\\E 
\end{pmatrix}
\in 
\Sg_{\algA/\sigma}
\begin{pmatrix}
E& C/\sigma \\C/\sigma & E 
\end{pmatrix}$.
Choose 
an element $e\in E$ such that 
$\Sg_{\algA} (\{e\}\cup C)$ is 
minimal by inclusion.
By the definition of 
$E$ we have 
$\Sg_{\algA}
\begin{pmatrix}
e& C\\C & e 
\end{pmatrix}
\cap 
\begin{pmatrix}
E\\E
\end{pmatrix}\neq \varnothing$,
which contradicts Lemma~\ref{NoEEForCenter}.
\end{proof}

\begin{cons}\label{CenterProjection}
Suppose 
$\alg C\le_{C}\alg R\le 
\alg A_1\times\dots\times \alg A_{k}$,
$I\subseteq [k]$.
Then 
$(\proj_{I}C)\le_{C}(\proj_{I}\alg R)$.
\end{cons}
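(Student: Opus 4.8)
The plan is to realize the projection $\proj_I$ as a quotient map and then invoke Lemma~\ref{CenterQuotient}.

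First I would observe that the map $\pi\colon \alg R\to \proj_I(\alg R)$ sending a tuple $r$ to $\proj_I(r)$ is a surjective homomorphism of algebras: it is simply the restriction to $\alg R$ of the natural projection of the product $\alg A_1\times\dots\times\alg A_k$ onto $\prod_{i\in I}\alg A_i$, and such a projection is a homomorphism. Let $\sigma$ be the kernel congruence of $\pi$ on $\alg R$. By the homomorphism theorem there is an isomorphism $\iota\colon \alg R/\sigma\to \proj_I(\alg R)$ with $\iota(r/\sigma)=\proj_I(r)$ for every $r\in R$.

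Next I would note that, under $\iota$, the set $C/\sigma=\{c/\sigma\mid c\in C\}$ is carried exactly onto $\proj_I(C)$, because $\iota(c/\sigma)=\proj_I(c)$. Applying Lemma~\ref{CenterQuotient} with $\alg A:=\alg R$ gives $C/\sigma\le_{C}\alg R/\sigma$. Finally, the property of being a central subuniverse is defined purely in terms of term operations (absorption) together with subalgebras generated by sets of tuples, and all of these are preserved under the isomorphism $\iota$; hence $\proj_I(C)=\iota(C/\sigma)\le_{C}\iota(\alg R/\sigma)=\proj_I(\alg R)$, which is the claim.

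Since each step is a direct application of Lemma~\ref{CenterQuotient} together with standard facts about quotients and isomorphisms, there is no genuine obstacle here; the only point requiring a moment's care is the identification of $\proj_I(C)$ with the image of $C/\sigma$ under $\iota$, which is immediate from the displayed formula for $\iota$.
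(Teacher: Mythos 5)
Your proof is correct and follows essentially the same route as the paper: the paper also realizes $\proj_I$ as the quotient by the congruence $(\alpha,\beta)\in\sigma\Leftrightarrow\proj_I(\alpha)=\proj_I(\beta)$ and applies Lemma~\ref{CenterQuotient}. Your additional remarks about the kernel congruence and the invariance of centrality under isomorphism just make explicit what the paper leaves implicit.
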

\begin{proof}
It is sufficient to consider 
a natural congruence 
$\sigma$ on $\alg A_1\times\dots\times \alg A_{k}$ defined by 
$(\alpha,\beta)\in\sigma\Leftrightarrow
\proj_{I}(\alpha) = \proj_{I}(\beta)$ and apply Lemma 
\ref{CenterQuotient}
\end{proof}

Now we are ready to prove the first main property of a central subuniverse, 
that is 
if we replace every relation in a pp-definition 
by its central subuniverse then we define a central subuniverse of the originally defined relation. 

\begin{thm}\label{CenterImplies}
Suppose $R$ is defined by a pp-formula $\Phi$, that is, 
$$R(x_{1},\dots,x_{n}) = 
\exists y_1\dots\exists y_{s}
R_{1}(v_{1,1},\dots,v_{1,n_1})\wedge 
\dots\wedge
R_{k}(v_{t,1},\dots,v_{k,n_k}),$$
where 
$v_{i,j}\in\{x_{1},\dots,x_{n},y_{1},\dots,y_{s}\}$, 
each 
$R_{i}\le \algA_{i,1}\times \dots\times\algA_{i,n_{i}}$ , 
and $\Phi'$ is obtained from $\Phi$ by replacing of each 
relation $R_{i}$ by $R_{i}'\le_{C} \alg R_{i}$.
Then $\Phi'$ defines a relation $R'$ 
such that 
$R'\le_{C} \alg R$.
\end{thm}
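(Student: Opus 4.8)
The plan is to imitate the strategy of Lemma~\ref{AbsImplies}, but to work around the fact that centrality is not witnessed by a single term operation, so the slick ``apply the absorbing term $t$ to the witnessing evaluations'' argument is not directly available. Instead I would decompose the pp-definition $\Phi$ into elementary operations on relations and invoke the closure lemmas for central subuniverses already proved above (Lemmas~\ref{CenterAddingB}, \ref{CenterIntersection}, \ref{CenterIntersectionA}, \ref{CenterIntersectionInIntersection} and Corollary~\ref{CenterProjection}), each of which already handles both the absorption part and the $\Sg$-condition in the definition of a central subuniverse.

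Concretely, as in Lemma~\ref{AbsImplies} let each variable $w$ among $x_1,\dots,x_n,y_1,\dots,y_s$ range over an algebra $\alg D_w$ (the common algebra of all coordinates at which $w$ occurs), and set $W=\{x_1,\dots,x_n,y_1,\dots,y_s\}$. First I would turn each conjunct into a subuniverse of $\prod_{w\in W}\alg D_w$. Fix $i$ and consider $R_i(v_{i,1},\dots,v_{i,n_i})$: identify the repeated occurrences of variables in this conjunct, i.e.\ intersect $R_i$ with the appropriate diagonal subuniverses of $\algA_{i,1}\times\dots\times\algA_{i,n_i}$ and then project onto one coordinate per distinct variable; by Lemma~\ref{CenterIntersectionA} and Corollary~\ref{CenterProjection} the relation obtained the same way from $R_i'$ is a central subuniverse of the one obtained from $R_i$. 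Then cylindrify to the full coordinate set $W$ by adjoining a dummy factor $\alg D_w$ for every variable $w$ not occurring in the $i$-th conjunct, using Lemma~\ref{CenterAddingB} iteratively; coordinate permutations are harmless since the definition of a central subuniverse is symmetric in the coordinates. This yields subuniverses $\widehat{R_i}\le\prod_{w\in W}\alg D_w$ and $\widehat{R_i'}$ with $\widehat{R_i'}\le_{C}\widehat{R_i}$ for every $i$, in such a way that $\bigwedge_i R_i(v_{i,1},\dots,v_{i,n_i})$, read as a relation on $W$, equals $\bigcap_i\widehat{R_i}$, and correspondingly $\bigwedge_i R_i'(v_{i,1},\dots,v_{i,n_i})$ equals $\bigcap_i\widehat{R_i'}$.

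Now I would apply Lemma~\ref{CenterIntersectionInIntersection} with $\algA=\prod_{w\in W}\alg D_w$, $\alg A_i=\widehat{R_i}$ and $\alg C_i=\widehat{R_i'}$ to conclude $\bigcap_i\widehat{R_i'}\le_{C}\bigcap_i\widehat{R_i}$. Since existentially quantifying $y_1,\dots,y_s$ is exactly projection onto the coordinates $x_1,\dots,x_n$, we have $R=\proj_{[n]}\bigl(\bigcap_i\widehat{R_i}\bigr)$ and $R'=\proj_{[n]}\bigl(\bigcap_i\widehat{R_i'}\bigr)$, so Corollary~\ref{CenterProjection} gives $R'\le_{C}\alg R$, as required.

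The only genuinely non-routine point is the conjunction step: the replacements $R_i'$ are central in the distinct relations $R_i$, not in one common algebra, so the plain binary intersection lemma (Lemma~\ref{CenterIntersection}) does not apply — this is exactly what Lemma~\ref{CenterIntersectionInIntersection} was set up for. Everything else is bookkeeping: checking that the decomposition of $\Phi$ into ``identify coordinates / adjoin dummy coordinates / permute / intersect / project'' is faithful, and that at each elementary step the passage from $R_i$ to $R_i'$ commutes with the operation. I expect the treatment of repeated variables inside a single conjunct to be the fiddliest detail, but it is fully covered by Lemma~\ref{CenterIntersectionA} together with Corollary~\ref{CenterProjection}, so no new idea beyond the lemmas of this subsection should be needed.
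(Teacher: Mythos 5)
Your proposal is correct and follows essentially the same route as the paper's proof: cylindrify each conjunct with dummy coordinates via Lemma~\ref{CenterAddingB}, intersect via Lemma~\ref{CenterIntersectionInIntersection}, and project via Corollary~\ref{CenterProjection}. The only difference is that you explicitly treat repeated variable occurrences within a single conjunct (via Lemma~\ref{CenterIntersectionA} and Corollary~\ref{CenterProjection}), a bookkeeping detail the paper's proof leaves implicit; your handling of it is sound.
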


\begin{proof}
First, we want each relation $R_{i}$ and $R_{i}'$ to depend on all the variables
$x_{1},\dots,x_{n},y_{1},\dots,y_{s}$.
To achieve this
using Lemma~\ref{CenterAddingB} we add dummy variables 
to all relations.
By Lemma~\ref{CenterIntersectionInIntersection}
$R_{1}'\cap \dots\cap R_{k}'$ is a central subuniverse of 
$\alg R_{1}\cap \dots\cap \alg R_{k}$.
Applying existential quantifiers is equivalent to 
taking a projection. Thus, the final step follows from Corollary~\ref{CenterProjection}.
\end{proof}

\begin{conslem}\label{CenterQuotientTwo}
Suppose $B\le_{C}\alg A/\sigma$. 
Then 
$\bigcup_{E\in B} E\le_{C}\algA$.
\end{conslem}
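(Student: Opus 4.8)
The plan is to check the two conditions in the definition of a central subuniverse directly for the set $C:=\bigcup_{E\in B}E\subseteq A$, that is, for the full preimage of $B$ under the quotient homomorphism $\pi\colon\algA\to\algA/\sigma$. Since $B$ is a subuniverse of $\algA/\sigma$ and $\pi$ is a homomorphism, $C$ is a subuniverse of $\algA$, so $\alg C\le\algA$ makes sense.

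First, I would note that $C$ absorbs $\algA$: by assumption $B$ is an absorbing subuniverse of $\algA/\sigma$, say with a term $t$, and Corollary~\ref{AbsorptionQuotient}(2) then yields that $\bigcup_{E\in B}E=C$ absorbs $\algA$ with the same term $t$.

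Second, I would verify that for every $a\in A\setminus C$ we have $(a,a)\notin\Sg_{\algA}\big((\{a\}\times C)\cup(C\times\{a\})\big)$, arguing by contradiction. Assume the contrary for some $a\in A\setminus C$, and apply the surjective homomorphism $\pi\times\pi\colon\algA^{2}\to(\algA/\sigma)^{2}$. Homomorphisms commute with the subalgebra-generation operator, and $\pi\times\pi$ sends $(\{a\}\times C)\cup(C\times\{a\})$ onto $(\{a/\sigma\}\times B)\cup(B\times\{a/\sigma\})$ because $C/\sigma=B$. Hence $(a/\sigma,a/\sigma)\in\Sg_{\algA/\sigma}\big((\{a/\sigma\}\times B)\cup(B\times\{a/\sigma\})\big)$. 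Since $C$ is a union of $\sigma$-classes and $a\notin C$, we get $a/\sigma\notin B$, which contradicts the fact that $B$ is a central subuniverse of $\algA/\sigma$. This contradiction establishes the second condition, so $C\le_{C}\algA$.

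I do not anticipate a real obstacle here: the only point needing a little care is the standard identity $\varphi\big(\Sg_{\algA}(X)\big)=\Sg_{\varphi(\algA)}\big(\varphi(X)\big)$ for a homomorphism $\varphi$, applied with $\varphi=\pi\times\pi$, together with the elementary observations that factoring $C=\bigcup_{E\in B}E$ by $\sigma$ returns $B$ and that $a\notin C$ forces $a/\sigma\notin B$.
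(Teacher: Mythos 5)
Your proof is correct. The paper states Corollary~\ref{CenterQuotientTwo} without an explicit argument, and your direct verification is exactly the natural way to fill it in: Corollary~\ref{AbsorptionQuotient}(2) handles the absorption condition, and pushing the generated subalgebra forward along the quotient homomorphism (using $\varphi(\Sg_{\algA}(X))\subseteq\Sg_{\varphi(\algA)}(\varphi(X))$, that $\pi(C)=B$, and that $a\notin C$ forces $a/\sigma\notin B$) handles the second condition. This parallels the paper's own proofs of the analogous lifting statements (e.g.\ Lemma~\ref{ProjectiveQuotient}), so there is nothing to add.
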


\begin{conslem}\label{CenterImpliesCons}
Suppose $R \le \algA_{1}\times\dots\times \algA_{n}$,
$\proj_1 (R) = A_{1}$,
$B_{i}\le_{C}\algA_{i}$ for every $i\in[n]$,
and 
$B = \proj_{1}(R\cap (B_{1}\times\dots \times B_{n}))$.
Then $B\le_{C}\algA_{1}$.
\end{conslem}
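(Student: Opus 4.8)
The plan is to derive this corollary from Theorem~\ref{CenterImplies} in exactly the way Corollary~\ref{AbsImpliesCons} was derived from Lemma~\ref{AbsImplies}. The key observation is that $B$ and $A_{1}$ both admit pp-definitions over suitable relations, and that these two pp-definitions differ only by replacing, for each $i$, the full unary relation $A_{i}$ with the unary relation $B_{i}$, while keeping $R$ unchanged.

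Concretely, I would first record the two pp-formulas
\begin{align*}
(x_{1}\in B) &= \exists x_{2}\dots\exists x_{n}\;\bigl[(x_{1}\in B_{1})\wedge\dots\wedge(x_{n}\in B_{n})\wedge R(x_{1},\dots,x_{n})\bigr],\\
(x_{1}\in A_{1}) &= \exists x_{2}\dots\exists x_{n}\;\bigl[(x_{1}\in A_{1})\wedge\dots\wedge(x_{n}\in A_{n})\wedge R(x_{1},\dots,x_{n})\bigr];
\end{align*}
here the first line is immediate from the definition of $B$, and the second is a correct definition of $A_{1}$ precisely because $\proj_{1}(R)=A_{1}$. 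I would then note that the first formula is obtained from the second by replacing each relation $A_{i}$ (viewed as a unary subuniverse of $\algA_{i}$) by $B_{i}\le_{C}\algA_{i}$, and replacing $R$ by $R$ itself; since $A_{i}\le_{C}\algA_{i}$ and $R\le_{C}\alg R$ hold trivially, the hypotheses of Theorem~\ref{CenterImplies} are satisfied, and that theorem yields directly $B\le_{C}\algA_{1}$.

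I do not expect a genuine obstacle here: the only things to verify are that each relation appearing in the pp-definition is a central subuniverse of the relevant factor algebra (true because a full relation is always a central subuniverse of itself, and every relation is a central subuniverse of itself), and that the displayed formulas really do define $B$ and $A_{1}$ (immediate from the definition of $B$ and from $\proj_{1}(R)=A_{1}$). Thus the whole proof is a one-line reduction to Theorem~\ref{CenterImplies}, mirroring the proof of Corollary~\ref{AbsImpliesCons}.
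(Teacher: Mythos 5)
Your proof is correct and is exactly the argument the paper intends: the corollary is stated as a direct consequence of Theorem~\ref{CenterImplies} (the paper gives it no separate proof), and your reduction via the two pp-formulas mirrors the paper's proof of Corollary~\ref{AbsImpliesCons} verbatim, with the trivial observations $A_{i}\le_{C}\algA_{i}$ and $R\le_{C}\alg R$ filling in the hypotheses.
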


\begin{conslem}\label{RIntersectionCenterCons}
Suppose $R \le \algA_{1}\times\dots\times \algA_{n}$
and $B_{i}\le_{C}\algA_{i}$ for every $i\in[n]$.
Then $R\cap (B_{1}\times\dots\times B_{n})\le_{C}\alg R$.
\end{conslem}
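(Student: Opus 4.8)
The plan is to realize $R \cap (B_{1}\times\dots\times B_{n})$ as a relation pp-defined from $R$ and the unary relations $B_{1},\dots,B_{n}$, and then to quote Theorem~\ref{CenterImplies}, exactly as Corollary~\ref{RIntersectionBACons} is obtained from Lemma~\ref{AbsImplies} in the binary absorbing case.

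First I would record two trivialities. The relation $R$ is a central subuniverse of $\alg R$: it absorbs $\alg R$ with any term (this is just closure of $R$ under the term operations of $\alg R$), and the centrality condition is vacuous because $R\setminus R=\varnothing$; so $\alg R\le_{C}\alg R$. Likewise each $A_{i}$, viewed as a unary relation on $A_{i}$, is a central subuniverse of $\algA_{i}$, and by hypothesis so is $B_{i}$.

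Next, consider the quantifier-free pp-formula
\[
\Phi(x_{1},\dots,x_{n}) \;=\; R(x_{1},\dots,x_{n}) \wedge (x_{1}\in A_{1}) \wedge\dots\wedge (x_{n}\in A_{n}),
\]
which defines $R$ since $R\subseteq A_{1}\times\dots\times A_{n}$. Let $\Phi'$ be obtained from $\Phi$ by replacing the occurrence of $R$ by $R$ itself (central in $\alg R$) and each occurrence of $A_{i}$ by $B_{i}$ (central in $\algA_{i}$). Then $\Phi'$ defines precisely $R\cap(B_{1}\times\dots\times B_{n})$, so Theorem~\ref{CenterImplies} gives $R\cap(B_{1}\times\dots\times B_{n})\le_{C}\alg R$, which is the claim.

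Since the whole argument is a single application of Theorem~\ref{CenterImplies}, there is essentially no obstacle; the only point worth a second's thought is the observation that an algebra is always a central subuniverse of itself, which is what lets $R$ play the role of one of the relations being replaced. Alternatively, one can bypass the pp-formula entirely: by Lemma~\ref{CenterAddingB} each product $A_{1}\times\dots\times A_{i-1}\times B_{i}\times A_{i+1}\times\dots\times A_{n}$ is a central subuniverse of $\algA_{1}\times\dots\times\algA_{n}$, hence so is their intersection $B_{1}\times\dots\times B_{n}$ by Lemma~\ref{CenterIntersection}, and then Lemma~\ref{CenterIntersectionA}, applied with $\alg R\le\algA_{1}\times\dots\times\algA_{n}$, yields $(B_{1}\times\dots\times B_{n})\cap R\le_{C}\alg R$.
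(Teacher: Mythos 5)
Your proof is correct and matches the paper's intent: the corollary is stated immediately after Theorem~\ref{CenterImplies} without proof, the intended derivation being exactly the pp-formula substitution you describe (with the harmless observation that $R\le_{C}\alg R$ trivially). Your alternative route via Lemmas~\ref{CenterAddingB}, \ref{CenterIntersection}, and \ref{CenterIntersectionA} is also valid and simply unwinds the same steps that the proof of Theorem~\ref{CenterImplies} performs internally.
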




\begin{lem}\label{CenterLessThanThreeHelp}
Suppose
$n\ge 3$, 
$\alg C_{i}\le_{C} \alg A_{i}$ for $i\in[n]$,
and $R\le\alg A_{1}\times \dots \times \alg A_{n}$
is a $(C_{1},\dots,C_{n})$-essential 
relation.
Then there exists 
$(C_{1},\dots,C_{n-1},C_{1},\dots,$ $C_{n-1})$-essential
relation 
$R'\le \alg A_{1}\times \dots \times \alg A_{n-1}\times\alg A_{1}\times \dots \times \alg A_{n-1}$.
\end{lem}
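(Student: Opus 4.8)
The plan is to \emph{fold} the last coordinate of $R$ onto itself, but through a binary relation on $A_{n}$ coming from the centrality of $C_{n}$ rather than through the equality relation, which would be too crude.

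First I would set $E=\proj_{n}(R\cap(C_{1}\times\dots\times C_{n-1}\times A_{n}))$. This $E$ is a subuniverse of $\alg A_{n}$, since it is pp-definable over $R$ and the $C_{i}$; it is nonempty because $R$ is $(C_{1},\dots,C_{n})$-essential (this is exactly the escape condition of $R$ at coordinate $n$); and $E\cap C_{n}=\varnothing$, since otherwise $R$ would meet $C_{1}\times\dots\times C_{n}$. Then I would pick $e\in E$ so that $\Sg_{\alg A_{n}}(C_{n}\cup\{e\})$ is inclusion minimal, and set
$$T=\Sg_{\alg A_{n}}\begin{pmatrix} e & C_{n}\\ C_{n} & e\end{pmatrix}\le\alg A_{n}^{2}.$$
By construction $(e,c),(c,e)\in T$ for every $c\in C_{n}$, and by Lemma~\ref{NoEEForCenter} we obtain the key property $T\cap E^{2}=\varnothing$.

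Next I would define $R'$ on $A_{1}\times\dots\times A_{n-1}\times A_{1}\times\dots\times A_{n-1}$ by the pp-formula
$$R'(x_{1},\dots,x_{n-1},x_{1}',\dots,x_{n-1}')=\exists y\,\exists y'\; R(x_{1},\dots,x_{n-1},y)\wedge R(x_{1}',\dots,x_{n-1}',y')\wedge T(y,y'),$$
so that $R'\le\alg A_{1}\times\dots\times\alg A_{n-1}\times\alg A_{1}\times\dots\times\alg A_{n-1}$ is immediate. To see that $R'$ contains no tuple from $C_{1}\times\dots\times C_{n-1}\times C_{1}\times\dots\times C_{n-1}$: the witnesses $y,y'$ of such a tuple would both lie in $E$ by the very definition of $E$, contradicting $T\cap E^{2}=\varnothing$. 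To exhibit an escape tuple at a coordinate $i\in[n-1]$ of the first block, I would take the tuple $(c_{1},\dots,c_{i-1},a_{i},c_{i+1},\dots,c_{n-1},c_{n})\in R$ witnessing that $R$ escapes at coordinate $i$ (here $c_{j}\in C_{j}$, $a_{i}\in A_{i}$, and $c_{n}\in C_{n}$ because $i<n$), set $y=c_{n}$ and $y'=e$, use $(c_{n},e)\in T$, and fill the second block with any tuple $(e_{1},\dots,e_{n-1},e)\in R$ with $e_{j}\in C_{j}$, which exists because $e\in E$. The escape at a coordinate of the second block is obtained symmetrically, using $(e,c_{n})\in T$.

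The only genuine obstacle is the point that dictates the choice of $T$: folding $R$ with the equality relation on the last coordinate would still produce all the escape tuples, but it would not rule out a common escape value $y\notin C_{n}$ gluing two center-tuples into a tuple of $R'$ lying entirely in $C_{1}\times\dots\times C_{n-1}\times C_{1}\times\dots\times C_{n-1}$. Replacing equality with the relation $T$ attached to an inclusion-minimal $e\in E$, and invoking Lemma~\ref{NoEEForCenter}, is exactly what forces $T\cap E^{2}=\varnothing$ and thereby destroys such bad tuples. Idempotency is used throughout to ensure that the singletons, products and projections appearing above are subuniverses.
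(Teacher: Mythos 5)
Your proof is correct and follows essentially the same route as the paper: the same set $E$, the same inclusion-minimal choice of $e$, the same generated binary relation (your $T$, the paper's $\sigma$) with Lemma~\ref{NoEEForCenter} giving $T\cap E^{2}=\varnothing$, and the same pp-definition of $R'$ with the same verification of the essentiality conditions.
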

\begin{proof}
Put $E = \proj_{n}(R\cap (C_{1}\times \dots\times C_{n-1}\times A_{n})).$
Choose $e\in E$ such that 
$\Sg_{\algA}(\{e\}\cup C_{n})$ is inclusion minimal 
among all choices of $e$.
Let $\sigma = \Sg_{\algA}
\begin{pmatrix}
e& C_n \\C_n & e 
\end{pmatrix}$.
By Lemma~\ref{NoEEForCenter}, $\sigma\cap E^{2}=\varnothing$.
Put
\begin{align*}R'(y_1,\ldots,y_{n-1},y_{1}',\ldots,y_{n-1}') =& \\
\exists z\exists z' \;R(y_1,\ldots,y_{n-1},z)&\wedge
R(y_{1}',\ldots,y_{n-1}',z')\wedge\sigma(z,z').
\end{align*}
Let us show that the relation $R'$ 
is $(C_{1},\dots,C_{n-1},C_{1},\dots,C_{n-1})$-essential.
Since $\sigma\cap E^{2}=\varnothing$,
$(C_{1}\times\dots\times C_{n-1}\times
C_{1}\times\dots\times C_{n-1})
\cap R' =\varnothing$.

Since 
the relation 
$R$ is 
$(C_{1},\dots,C_{n})$-essential, 
for any $i\in[n-1]$ there exists 
a tuple 
$(a_{1},\ldots,a_{n})\in R$ 
such that 
only its $i$-th element 
is not from the corresponding set 
of $(C_{1},\dots,C_{n})$.
Since $e\in E$,
there exist $c_{1},\ldots,c_{n-1}$ 
such that 
$(c_{1},\ldots,c_{n-1},e)\in 
R\cap (C_{1}\times \dots\times C_{n-1}\times A_{n})$.
Then if put $z= a_{n}$ and $z' = e$
we derive that $(a_{1},\dots,a_{n-1},c_{1},\ldots,c_{n-1})\in R'$.
Thus, for any 
$i\in[n-1]$
we build a tuple from $R'$ such that 
only its $i$-th element 
is not from the corresponding set 
of $(C_{1},\dots,C_{n-1},C_{1},\dots,C_{n-1})$.
In the same way we can build such a tuple
for each
$i\in\{n,n+1,\ldots,2n-2\}$.
\end{proof}

\begin{lem}\label{CenterLessThanThree}
Suppose
$n\ge 3$
and 
$\alg C_{i}\le_{C} \alg A_{i}$ for $i\in[n]$.
Then there does not exist a 
$(C_{1},\dots,C_{n})$-essential 
relation
$R\le\alg A_{1}\times \dots \times \alg A_{n}$.
\end{lem}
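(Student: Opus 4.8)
The plan is a proof by contradiction, reducing to Lemma~\ref{NoEssential}. Suppose $R\le\algA_{1}\times\dots\times\algA_{n}$ is a $(C_{1},\dots,C_{n})$-essential relation with $n\ge 3$. Note first that essentiality already forces each $C_{i}$ to be nonempty (the witness tuple for any coordinate $j\neq i$ lies in $C_{i}$ on coordinate $i$); I will use this when choosing elements below. The idea is to manufacture from $R$ a $C$-essential relation over a single algebra whose arity is so large that Lemma~\ref{NoEssential} forbids it.

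The engine is the helper Lemma~\ref{CenterLessThanThreeHelp}: it converts an $n$-ary essential relation into a $(2n-2)$-ary one of shape $(C_{1},\dots,C_{n-1},C_{1},\dots,C_{n-1})$, i.e.\ it discards the last coordinate algebra and only reuses $\algA_{1},\dots,\algA_{n-1}$. The output again has arity $\ge 3$ and carries a central subuniverse on each coordinate, so the lemma applies to it again; iterating, I obtain essential relations $R^{(k)}$ of arities $N_{1}=2n-2$, $N_{k+1}=2N_{k}-2\to\infty$, every coordinate of $R^{(k)}$ being of the form $(\algA_{j},C_{j})$ with $j\in[n-1]$. Thus there are essential relations of unbounded arity, all of whose coordinates come from the fixed finite list $(\algA_{1},C_{1}),\dots,(\algA_{n-1},C_{n-1})$.

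Now set $\algA^{*}=\algA_{1}\times\dots\times\algA_{n-1}$ and $C^{*}=C_{1}\times\dots\times C_{n-1}$; by Lemmas~\ref{CenterAddingB} and~\ref{CenterIntersection} we have $C^{*}\le_{C}\algA^{*}$, so $C^{*}$ absorbs $\algA^{*}$ with a term of some fixed arity $m$. Pick $k$ with $N_{k}\ge m$ and write $S=R^{(k)}$, of arity $N=N_{k}$, with coordinate $i$ of type $(\algA_{j(i)},C_{j(i)})$. I will pad $S$ to $\widehat S\subseteq(\algA^{*})^{N}$ by declaring an $N$-tuple of $\algA^{*}$-elements to lie in $\widehat S$ iff selecting the $j(i)$-th component of its $i$-th entry for every $i$ gives a tuple of $S$, while every unselected component of the $i$-th entry lies in the matching $C_{j}$. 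This is a pp-definition over $S,C_{1},\dots,C_{n-1}$ and equality, so $\widehat S\le(\algA^{*})^{N}$; one checks that $\widehat S$ is disjoint from $(C^{*})^{N}$ (all components in $C$ would produce a tuple of $S$ inside $D_{1}\times\dots\times D_{N}$, which is empty by essentiality of $S$), and that each of the $N$ essentiality witnesses of $S$ lifts to one of $\widehat S$ by filling the unselected components with elements of the nonempty $C_{j}$'s. Hence $\widehat S$ is $C^{*}$-essential of arity $N\ge m$. But $C^{*}$ absorbs $\algA^{*}$ with an $m$-ary, hence (by adding dummy variables) with an $N$-ary, term, so Lemma~\ref{NoEssential} says no such $\widehat S$ can exist --- contradiction.

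The routine-but-fiddly part is the padding construction together with the verification that it preserves both emptiness over $C^{*}$ and essentiality; the one genuine decision is to use Lemma~\ref{CenterLessThanThreeHelp} to drive the arity \emph{upward} rather than (by regrouping the $2n-2$ coordinates into matching pairs) downward to the stubborn case $n=3$, since it is precisely the unbounded growth past the absorption arity $m$ of $C^{*}$ in $\algA^{*}$ that lets Lemma~\ref{NoEssential} close the argument.
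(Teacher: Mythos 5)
Your proof is correct and follows essentially the same route as the paper: both iterate Lemma~\ref{CenterLessThanThreeHelp} to push the arity of an essential relation past the absorption arity and then invoke Lemma~\ref{NoEssential}. The only difference is the normalization step --- the paper first homogenizes all coordinates to $(\algA_1,C_1)$ by restricting coordinates to their $C_i$'s and projecting, whereas you keep the mixed coordinates and at the end pad the relation into the product $\algA^{*}=\algA_1\times\dots\times\algA_{n-1}$ with $C^{*}=C_1\times\dots\times C_{n-1}$ (central, hence absorbing, by Lemmas~\ref{CenterAddingB} and~\ref{CenterIntersection}) --- and both normalizations work.
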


\begin{proof}
Assume that 
a
$(C_{1},\dots,C_{n})$-essential 
relation
$R\le\alg A_{1}\times \dots \times \alg A_{n}$ 
exists.
Note that from any 
$(C_{1},\dots,C_{n})$-essential relation 
$S\le \alg A_{1}\times \dots \times \alg A_{n}$ we can get 
$(C_{1},\dots,C_{n-1})$-essential relation by 
$$S'= \proj_{[n-1]}(R\cap (A_{1}\times \dots\times A_{n-1}\times C_{n})).$$
First, we derive a
$(C_{1},C_{2},C_{3})$-essential relation from $R$, 
then by Lemma~\ref{CenterLessThanThree}
we derive 
a $(C_{1},C_{2},C_{1},C_{2})$-essential relation, 
then a $(C_{1},C_{1},C_{2})$-essential relation
and again by Lemma~\ref{CenterLessThanThree}
a $(C_{1},C_{1},C_{1},C_1)$-essential relation.

Then, using Lemma~\ref{CenterLessThanThree} we can obtain 
a $C_{1}$-essential relation of any arity, which,
by Lemma~\ref{NoEssential}, contradicts 
the fact that 
$C_1$ absorbs $\algA_{1}$.
\end{proof}

\begin{conslem}\label{ternaryAbsorption}
Suppose $\alg C\le_{C}\alg A$. Then
$C$ is a ternary absorbing subuniverse of~$\alg A$.
\end{conslem}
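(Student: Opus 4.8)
The plan is to deduce the statement immediately by composing Lemma~\ref{CenterLessThanThree} with Lemma~\ref{NoEssential}. Since $\alg C\le_{C}\alg A$, in particular $C$ is an absorbing subuniverse, hence a subuniverse of $\alg A$, so Lemma~\ref{NoEssential} applies with $B=C$. First I would instantiate Lemma~\ref{CenterLessThanThree} with $n=3$, $\alg A_{1}=\alg A_{2}=\alg A_{3}=\alg A$, and $\alg C_{1}=\alg C_{2}=\alg C_{3}=C$: it gives that there is no $(C,C,C)$-essential relation $R\le\alg A^{3}$. By the definitions in Subsection on essential relations, a $(C,C,C)$-essential subalgebra of $\alg A^{3}$ is literally the same thing as a $C$-essential relation of arity $3$, so in fact no $C$-essential relation $R\le\alg A^{3}$ exists.

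Then I would invoke the ``only if'' direction of Lemma~\ref{NoEssential} (``no $B$-essential relation of arity $n$ implies $B$ absorbs $\alg A$ with an $n$-ary term'') with $B=C$ and $n=3$. This produces a ternary term operation $t\in\Clo(\alg A)$ with $t(C,C,A)\subseteq C$, $t(C,A,C)\subseteq C$, and $t(A,C,C)\subseteq C$, which is exactly the assertion that $C$ is a ternary absorbing subuniverse of $\alg A$.

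There is no genuine obstacle here; the corollary is a one-line consequence of the two preceding lemmas. The only point worth stating explicitly in the write-up is the harmless bookkeeping remark that the $(C_{1},\dots,C_{n})$-essential terminology, specialised to $A_{i}=A$ and $C_{i}=C$ for all $i$, coincides verbatim with the $C$-essential terminology, so Lemma~\ref{CenterLessThanThree} and Lemma~\ref{NoEssential} glue together with no gap.
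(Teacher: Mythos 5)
Your proposal is correct and is exactly the paper's own argument: Lemma~\ref{CenterLessThanThree} with $n=3$ and all factors equal to $\alg A$, $C$ rules out a $C$-essential relation $R\le\alg A^{3}$, and then Lemma~\ref{NoEssential} yields the ternary absorbing term. The bookkeeping remark identifying $(C,C,C)$-essential with $C$-essential of arity $3$ is indeed the only point to note, and it is harmless.
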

\begin{proof}
By Lemma~\ref{CenterLessThanThree} there does not exists 
a $C$-essential relation $R\le \algA^{3}$.
Then by Lemma~\ref{NoEssential}, 
$C$ absorbs $\algA$ with a ternary term operation.
\end{proof}

\subsection{Projective subuniverses}\label{ProjectiveSubuniversesSubsection}


\begin{LEMProjectiveSubCharacterization}
Suppose $\algA$ is a finite idempotent algebra.
Then 
$B$ is a projective subuniverse of $\algA$ if and only if 
$A^{n}\setminus (A\setminus B)^{n}\in\Inv(\algA)$ for every $n\ge 1$.
\end{LEMProjectiveSubCharacterization}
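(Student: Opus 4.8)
The plan is to prove the two implications separately, working throughout with the relation $R_n := A^{n}\setminus (A\setminus B)^{n}$, which is precisely the set of tuples in $A^{n}$ having at least one coordinate in $B$; note that $R_1 = B$ and that any basic (or term) operation acts on such relations coordinatewise.

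First I would handle the ``if'' direction. From $R_1 = B\in\Inv(\algA)$ it is immediate that $B$ is a subuniverse, so only the ``projective coordinate'' condition remains. I would prove it by contradiction: suppose some basic operation $f$ of arity $k$ has, for every $i\in[k]$, a tuple $(c^{i}_{1},\dots,c^{i}_{k})\in A^{k}$ with $c^{i}_{i}\in B$ but $f(c^{i}_{1},\dots,c^{i}_{k})\notin B$. Then define columns $\alpha_{1},\dots,\alpha_{k}\in A^{k}$ by $\alpha_{j}(i) := c^{i}_{j}$. Each $\alpha_{j}$ has its $j$-th coordinate $c^{j}_{j}$ in $B$, so $\alpha_{j}\in R_k$; but the $i$-th coordinate of $f(\alpha_{1},\dots,\alpha_{k})$ is $f(c^{i}_{1},\dots,c^{i}_{k})\notin B$ for every $i$, so $f(\alpha_{1},\dots,\alpha_{k})\in (A\setminus B)^{k}$ lies outside $R_k$. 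This contradicts $R_k\in\Inv(\algA)$, so $f$ must have a projective coordinate.

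For the ``only if'' direction, fix $n\ge 1$ and a basic operation $f$ of arity $k$; by hypothesis $f$ has a coordinate $i$ with $f(\underbrace{A,\dots,A,B}_{i},A,\dots,A)\subseteq B$. Given arguments $\alpha_{1},\dots,\alpha_{k}\in R_n$, the $i$-th argument $\alpha_{i}$ has some coordinate $m$ with $\alpha_{i}(m)\in B$; then the $m$-th coordinate of $f(\alpha_{1},\dots,\alpha_{k})$ equals $f(\alpha_{1}(m),\dots,\alpha_{k}(m))$, which lies in $f(\underbrace{A,\dots,A,B}_{i},A,\dots,A)\subseteq B$. Hence $f(\alpha_{1},\dots,\alpha_{k})\in R_n$, and since $f$ was an arbitrary basic operation, $R_n\in\Inv(\algA)$.

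I do not expect a genuine obstacle here: the argument is elementary and uses only the definitions together with the coordinatewise action of operations on relations. The one place demanding care is the index bookkeeping in the ``if'' direction — taking the arity of the test relation equal to the arity $k$ of $f$ and transposing the $k$ failure witnesses so that the ``diagonal'' entries $c^{j}_{j}$ witness membership in $R_k$ while the whole output avoids $B$. The idempotency hypothesis on $\algA$ is not used in an essential way (it only excludes the degenerate case of nullary basic operations, for which the notion of projective coordinate would be vacuous).
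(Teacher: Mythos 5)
Your proof is correct and follows essentially the same route as the paper's: the "only if" direction evaluates $f$ coordinatewise and uses the witness coordinate of the $i$-th argument, and the "if" direction transposes the $k$ failure witnesses into columns of $R_k$ exactly as the paper does. The additional remarks ($R_1=B$ giving the subuniverse property, idempotency being inessential) are accurate.
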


\begin{proof}

$\Rightarrow$. Let us show that 
every $m$-ary operation $f$ of $\algA$ preserves the relation
$R_{n} =A^{n}\setminus (A\setminus B)^{n}$.
Assume that $f$ returns an element of $B$ whenever the $i$-th coordinate 
is from $B$. 
Consider tuples
$\alpha_1,\dots,\alpha_m\in R_{n}$.
Since the tuple $\alpha_{i}$ should contain an element of $B$, 
$f(\alpha_1,\dots,\alpha_m)$ also contains an element of $B$.
Hence, $f(\alpha_1,\dots,\alpha_m)\in R_{n}$.

$\Leftarrow$.
Assume the converse. Consider an $n$-ary operation $f$ of the algebra $\algA$
contradicting the fact that 
$\algA$ is projective.
Then for every coordinate $i\in[n]$
there exists a tuple 
$(b_{1}^{i},\dots,b_{n}^{i})$ such that 
$b_{i}^{i}\in B$ and $f(b_{1}^{i},\dots,b_{n}^{i})\notin B$.
Put 
$\beta_{j} = (b_{j}^{1},\dots,b_{j}^{n})$ for every $j\in[n]$.
Since each $\beta_{j}$ contains an element $b_{j}^{j}\in B$, 
each $\beta_{j}\in R_{n}$.
Since $f$ preserves $R_{n}$, 
$f(\beta_{1},\dots,\beta_{n})\in R$.
This contradicts our assumption that 
$f(b_{1}^{i},\dots,b_{n}^{i})\notin B$ for every $i\in[n]$.
\end{proof}

\begin{lem}\label{findBlocker}
Suppose $\alg A$ is a finite idempotent algebra,
$\Sg_{\alg A}(A^{n}\setminus (A\setminus B)^{n}) \neq A^{n}$ for every $n$,
where $\varnothing \neq B\subsetneq A$.
Then there exists a nontrivial projective subuniverse $C$ of $\algA$
such that $B\subseteq C$.
\end{lem}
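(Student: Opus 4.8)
The plan is to construct the subuniverse $C$ explicitly and then verify that it is projective by means of Lemma~\ref{ProjectiveSubCharacterization}.

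First a harmless normalization: we may assume that $B$ is itself a subuniverse. Replacing $B$ by $\Sg_{\alg A}(B)$ does not change the relations $A^{n}\setminus(A\setminus B)^{n}$ up to the closure operator $\Sg$ — each block $A^{i-1}\times B\times A^{n-i}$ generates $A^{i-1}\times\Sg_{\alg A}(B)\times A^{n-i}$ by idempotency — so the hypothesis is preserved, and a projective subuniverse containing $\Sg_{\alg A}(B)$ in particular contains $B$. Now set $R_{n}:=A^{n}\setminus(A\setminus B)^{n}$ and, for $n\ge1$, let $C_{n}:=\{a\in A\st(a,a,\dots,a)\in\Sg_{\alg A^{n}}(R_{n})\}$. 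Each $C_{n}$ is a subuniverse (apply a term to the generating constant tuples, using idempotency), $B\subseteq C_{1}$, and $C_{n}\subseteq C_{n+1}$: this follows from the idempotency identity $\Sg_{\alg A^{n+1}}(R_{n+1})\supseteq\Sg_{\alg A^{n}}(R_{n})\times A$ (because $R_{n+1}\supseteq R_{n}\times A$ and, for idempotent algebras, $\Sg(X\times A)\supseteq\Sg(X)\times A$). Since $A$ is finite the chain stabilizes; let $C$ be its union, a subuniverse with $B\subseteq C$. (Equivalently one may take $C$ to be a subuniverse maximal with respect to $B\subseteq C\subsetneq A$ and ``$\Sg_{\alg A^{n}}(A^{n}\setminus(A\setminus C)^{n})\ne A^{n}$ for all $n$''; such a $C$ exists since $C=B$ has this property by hypothesis, and this description makes the nontriviality of $C$ automatic.)

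It remains to prove that $C$ is projective. By Lemma~\ref{ProjectiveSubCharacterization} it suffices to show $R^{C}_{m}:=A^{m}\setminus(A\setminus C)^{m}\in\Inv(\algA)$, i.e.\ $\Sg_{\alg A^{m}}(R^{C}_{m})\cap(A\setminus C)^{m}=\varnothing$, for every $m$. Suppose not, and take $m$ minimal with $\Sg(R^{C}_{m})\supsetneq R^{C}_{m}$; then $R^{C}_{m-1}\in\Inv(\algA)$ and $m\ge2$ (as $R^{C}_{1}=C$ is a subuniverse). Pick $\gamma\in\Sg(R^{C}_{m})$ with every coordinate outside $C$, and write $\gamma=t(\beta^{1},\dots,\beta^{k})$ with each $\beta^{j}\in R^{C}_{m}$. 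The goal is to contradict the defining property of $C$: either by showing $C$ can be properly enlarged while keeping that property, or, in the $C_{n}$-description, by pushing the constant tuples $(\gamma(i),\dots,\gamma(i))$ back into some $\Sg(R_{N})$ and thereby forcing a coordinate of $\gamma$ into $C$. The main structural ingredients are that $\Sg(R^{C}_{m})$ is a symmetric invariant relation squeezed strictly between $R^{C}_{m}$ and $A^{m}$; that the subuniverse $Y:=\{a\in A\st\{a\}\times A^{m-1}\subseteq\Sg(R^{C}_{m})\}$ satisfies $C\subseteq Y\subsetneq A$, with every $m$-tuple meeting $Y$ lying in $\Sg(R^{C}_{m})$; and the two idempotency facts $\Sg(\rho\times A)=\Sg(\rho)\times A$ and $\Sg(\proj_{I}\rho)=\proj_{I}\Sg(\rho)$.

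The delicate step — the one I expect to be the real work — is controlling the generated relations $\Sg(R^{C'}_{n})$ simultaneously for \emph{all} arities $n$ after the enlargement $C\rightsquigarrow C'$, not just for the single arity $m$ at which projectivity of $C$ failed; a priori, enlarging the base set can inflate these relations unpredictably. Overcoming this requires exploiting both the minimality of $m$ (so that $R^{C}_{n}\in\Inv(\algA)$ for $n<m$) and the explicit origin of $\gamma$ as a term value on tuples of $R^{C}_{m}$, combined with a diagonal/pattern argument that reduces an arbitrary arity $n$ back to $m$ through the idempotency identities above.
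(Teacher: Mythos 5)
Your construction of $C$ is the same as the paper's (a maximal $C$ with $B\subseteq C\subsetneq A$ and $\Sg_{\alg A}(A^{n}\setminus(A\setminus C)^{n})\neq A^{n}$ for all $n$), and the reduction to showing $\Sg_{\alg A}(A^{m}\setminus(A\setminus C)^{m})\cap(A\setminus C)^{m}=\varnothing$ via Lemma~\ref{ProjectiveSubCharacterization} is correct. But the proof stops exactly where the lemma begins. After picking a bad tuple $\gamma$ you only describe what a contradiction would look like and state that the ``delicate step'' of controlling $\Sg_{\alg A}(A^{n}\setminus(A\setminus C')^{n})$ for \emph{all} arities $n$ after an enlargement $C\subsetneq C'$ is ``the real work,'' to be handled by an unspecified ``diagonal/pattern argument.'' That step is the entire mathematical content of the lemma, and nothing in your text supplies it: you never say which element is added to $C$, never exhibit a proper invariant relation witnessing that the enlarged set still satisfies the hypothesis, and the two tools you announce (minimality of the failing arity $m$, the representation $\gamma=t(\beta^{1},\dots,\beta^{k})$) are not actually used anywhere. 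As written this is a plan, not a proof.

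For comparison, here is the mechanism the paper uses and that your sketch is missing. Writing $\sigma_{n}=\Sg_{\alg A}(A^{n}\setminus(A\setminus C)^{n})$ and taking $(a_{1},\dots,a_{n})\in\sigma_{n}\cap(A\setminus C)^{n}$, one chooses the largest $m$ such that fixing the first $m$ coordinates to $a_{1},\dots,a_{m}$ never makes $\sigma_{m+s}$ full; then fixing one more coordinate gives $\{a_{1}\}\times\dots\times\{a_{m+1}\}\times A^{s'}\subseteq\sigma_{m+s'+1}$ for some $s'$, hence for all $s\ge s'$ by idempotency. The sections $\delta_{s+1}(x_{1},\dots,x_{s+1})=\sigma_{m+s+1}(a_{1},\dots,a_{m},x_{1},\dots,x_{s+1})$ are proper relations which, by the symmetry of $\sigma_{m+s+1}$, contain every tuple meeting $C\cup\{a_{m+1}\}$; this shows $\Sg_{\alg A}(A^{n}\setminus(A\setminus C')^{n})\subseteq\delta_{n}\neq A^{n}$ for $C'=C\cup\{a_{m+1}\}$ and all large $n$ (idempotency then covers small $n$), contradicting the maximality of $C$. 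Some argument of this kind — producing, uniformly in $n$, a proper invariant relation containing all tuples that meet the enlarged set — is indispensable, and it is precisely what your proposal leaves out.
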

\begin{proof}
Let $C\subsetneq A$ be a maximal set containing $B$ such that
$\Sg_{\alg A} (A^{n}\setminus (A\setminus C)^{n}) \neq A^{n}$ for every $n$.
Put $\sigma_n = \Sg_{\alg A}( A^{n}\setminus (A\setminus C)^{n})$, let us show that
$\sigma_{n}= A^{n}\setminus (A\setminus C)^{n}$ for every $n$.
Assume the opposite.
Then there exists
$(a_1,\ldots,a_{n})\in \sigma_n \cap (A\setminus C)^{n}$.
Let $m\in\{0,1,\ldots,n-1\}$
be the maximal number such that
$\{a_1\}\times\dots\times\{a_{m}\}\times A^{s} \not\subseteq \sigma_{m+s}$ for every $s\ge 0$.
Then for some $s'$
we have $\{a_1\}\times\dots\times\{a_{m+1}\}\times A^{s'}\subseteq \sigma_{m+s'+1}$.
Since the algebra is idempotent,
$\{a_1\}\times\dots\times\{a_{m+1}\}\times A^{s}\subseteq \sigma_{m+s+1}$
for every $s\ge s'$.
Put $$\delta_{s+1}(x_1,\ldots,x_{s+1})= \sigma_{m+s+1}(a_1,\ldots,a_{m},x_1,\ldots,x_{s+1}).$$
By the definition of
$m$ we know that $\delta_{s+1}\neq A^{s+1}$.
Since $\sigma_{m+s+1}$ is symmetric,
$\delta_{s+1}$ contains all tuples with $a_{m+1}$
and all tuples with an element from $C$.
Put
$C' = C\cup \{a_{m+1}\}$.
Therefore,
$\Sg_{\alg A}(A^{n}\setminus (A\setminus C')^{n}) \subseteq
\delta_{n}\neq A^{n}$ for every $n\ge s'+1$.
Since $\alg A$ is idempotent, 
$\Sg_{\alg A}(A^{n}\setminus (A\setminus C')^{n})=A^n$
for $n<s'+1$ would imply that 
$\Sg_{\alg A}(A^{s'+1}\setminus (A\setminus C')^{s'+1})=A^{s'+1}$.
Therefore, 
$\Sg_{\alg A}(A^{n}\setminus (A\setminus C')^{n})\neq A^n$
for every $n\ge 1$, which contradicts our assumption
about the maximality of $C$.
Hence, $A^{n}\setminus (A\setminus C)^{n}\in\Inv(\algA)$
for every $n\ge 1$ and by Lemma~\ref{ProjectiveSubCharacterization}
$C$ is a nontrivial projective subuniverse containing $B$.
\end{proof}


\begin{lem}\label{NontrivialProjectiveFromPower}
Suppose $\algA$ is a finite 
idempotent algebra,
$R$ is a notrivial 
projective subuniverse of $\algA^{n}$.
Then there exists a nontrivial projective 
subuniverse of $\algA$.
\end{lem}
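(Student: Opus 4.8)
The plan is to induct on the exponent $n$. The base case $n=1$ is trivial: $\algA^{1}=\algA$, so $R$ itself is a nontrivial projective subuniverse of $\algA$.

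For the inductive step, let $n\ge 2$ and let $R\le\algA^{n}$ be a nontrivial projective subuniverse. I will use two elementary observations. The first is that projections of a projective subuniverse are projective: if $I\subseteq[n]$ and a basic operation $f$ of $\algA$ (of arity $m$) applied coordinatewise satisfies $f(A^{n},\dots,A^{n},R,A^{n},\dots,A^{n})\subseteq R$ with $R$ in position $i$, then, since $\proj_{I}$ commutes with the coordinatewise application of $f$, we get $f(A^{|I|},\dots,A^{|I|},\proj_{I}(R),A^{|I|},\dots,A^{|I|})\subseteq\proj_{I}(R)$ with $\proj_{I}(R)$ in position $i$; hence $\proj_{I}(R)$ is a projective subuniverse of $\algA^{|I|}$. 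In particular $B:=\proj_{\{n\}}(R)$ is a projective subuniverse of $\algA$, and it is nonempty since $R\neq\varnothing$. So if $B\neq A$ we are done.

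It remains to treat the case $\proj_{\{n\}}(R)=A$. Here I pass to the fibers $R_{a}:=\{\bar x\in A^{n-1}\mid(\bar x,a)\in R\}$ for $a\in A$. Each $R_{a}$ equals $\proj_{[n-1]}(R\cap(A^{n-1}\times\{a\}))$, and since $\algA$ is idempotent $\{a\}$ is a subuniverse, so $R_{a}$ is a subuniverse of $\algA^{n-1}$. The key point --- the one place where idempotency is really used --- is that $R_{a}$ is again projective in $\algA^{n-1}$: given a basic operation $f$ of arity $m$ with position $i$ such that $f(A^{n},\dots,R,\dots,A^{n})\subseteq R$, take tuples $\bar x_{1},\dots,\bar x_{m}\in A^{n-1}$ with $\bar x_{i}\in R_{a}$, pad each $\bar x_{l}$ with the entry $a$ to obtain a tuple $\hat x_{l}\in A^{n}$, note $\hat x_{i}\in R$, apply $f$ coordinatewise to get $f(\hat x_{1},\dots,\hat x_{m})\in R$, and observe that its last coordinate is $f(a,\dots,a)=a$, so $f(\bar x_{1},\dots,\bar x_{m})\in R_{a}$; thus $f(A^{n-1},\dots,R_{a},\dots,A^{n-1})\subseteq R_{a}$ with $R_{a}$ in position $i$.

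Finally, $\proj_{\{n\}}(R)=A$ makes every $R_{a}$ nonempty, while $R=\bigcup_{a\in A}(R_{a}\times\{a\})\neq A^{n}$ forces $R_{a}\neq A^{n-1}$ for at least one $a$. Fixing such an $a$, the fiber $R_{a}$ is a nontrivial projective subuniverse of $\algA^{n-1}$, and the induction hypothesis supplies a nontrivial projective subuniverse of $\algA$. The main obstacle is thus the verification that fibers of a projective subuniverse remain projective; beyond that the argument is pure bookkeeping, and it does not seem to require Lemmas~\ref{ProjectiveSubCharacterization} or~\ref{findBlocker}.
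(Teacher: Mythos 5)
Your proof is correct and follows essentially the same route as the paper's: induct on $n$, observe that a proper unary projection is itself a nontrivial projective subuniverse of $\algA$, and otherwise pass to a fiber over an element $a$ whose fiber is proper, using idempotency ($f(a,\dots,a)=a$) to see the fiber is again projective. The only difference is cosmetic (you fiber over the last coordinate, the paper over the first), and your write-up fills in the same verifications the paper leaves implicit.
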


\begin{proof}
We prove by induction on the arity of $R$. For $n=1$ it is trivial.
Assume that  
$\proj_{1}(R)\neq A_{1}$. We know that every basic operation $f$
of $\algA^{n}$ 
has a coordinate that preserves $R$. Hence, the same coordinate
preserves $\proj_{1}(R)$, and $\proj_{1}(R)$ is a projective subuniverse of $A_{1}$.

Otherwise, we choose any element $a\in A_{1}$
such that $R$ does not contain all tuples starting with $a$.
Then we consider 
$R' = \{(a_2,\ldots,a_{n})\mid (a,a_2,\ldots,a_n)\in R\}$.
Since every operation preserves $a$, 
$R'$ is a nontrivial projective subuniverse of $\algA_{2}\times\dots \times \algA_{n}$. 
It remains to apply the inductive assumption.
\end{proof}

\begin{lem}\label{ProjectiveQuotient}
Suppose $B$ is a projective subuniverse of $\alg A/\sigma$. 
Then 
$\bigcup_{E\in B} E$ is a projective subuniverse of $\algA$.
\end{lem}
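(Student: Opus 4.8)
The plan is to pull the projectivity witness back along the canonical surjection $\pi\colon A\to A/\sigma$, $a\mapsto a/\sigma$. Observe first that $\bigcup_{E\in B}E$ is exactly $\pi^{-1}(B)$, the union of those $\sigma$-classes that, as elements of $A/\sigma$, belong to $B$; I denote this set by $B'$.

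First I would record that $B'$ is a subuniverse of $\algA$: it is the preimage under the homomorphism $\pi$ of the subuniverse $B$ of $\algA/\sigma$, hence closed under every basic operation of $\algA$ (indeed under all of $\Clo(\algA)$). Note that if $B=\varnothing$ or $B=A/\sigma$ then $B'=\varnothing$ or $B'=A$, so there is nothing special to check in the degenerate cases.

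Next, fix a basic operation $f$ of $\algA$, say of arity $m$. Its induced operation $f/\sigma$ is a basic operation of $\algA/\sigma$, so by projectivity of $B$ in $\algA/\sigma$ there is a coordinate $i\in[m]$ with $(f/\sigma)(A/\sigma,\dots,A/\sigma,B,A/\sigma,\dots,A/\sigma)\subseteq B$, the set $B$ standing in position $i$. I claim the same $i$ witnesses $f(A,\dots,A,B',A,\dots,A)\subseteq B'$. To see this, take $a_1,\dots,a_m\in A$ with $a_i\in B'$; then $\pi(a_i)\in B$, while $\pi(a_j)\in A/\sigma$ for $j\ne i$, so $(f/\sigma)(\pi(a_1),\dots,\pi(a_m))\in B$. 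Since $\pi$ is a homomorphism, $(f/\sigma)(\pi(a_1),\dots,\pi(a_m))=\pi(f(a_1,\dots,a_m))$, hence $\pi(f(a_1,\dots,a_m))\in B$, i.e. $f(a_1,\dots,a_m)\in\pi^{-1}(B)=B'$. Running over all basic operations $f$ of $\algA$ then shows that $B'$ is a projective subuniverse of $\algA$.

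There is essentially no obstacle here: the argument is the routine ``transfer the witnessing coordinate through the quotient'' computation, in the same spirit as Lemma~\ref{CenterQuotient} and Corollary~\ref{AbsorptionQuotient}. The only point deserving a moment's attention is that the off-position arguments $\pi(a_j)$ automatically lie in $A/\sigma$, so the defining inclusion for $B$ in $\algA/\sigma$ applies verbatim — in particular surjectivity of $\pi$ is not even required for this direction.
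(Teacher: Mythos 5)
Your proof is correct and is essentially the paper's own argument: for each basic operation $f$, take the coordinate $i$ witnessing projectivity of $B$ for $f/\sigma$ and observe that the same coordinate works for $f$ and $\pi^{-1}(B)$, since $\pi(f(a_1,\dots,a_m))=(f/\sigma)(\pi(a_1),\dots,\pi(a_m))$. The paper states this in two lines; your version merely makes the pullback computation explicit.
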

\begin{proof}
Suppose an operation $f/\sigma$ returns an element of 
$B$ whenever its $i$-th coordinate is from $B$.
By the definition of a quotient, 
$f$ returns an element of $\bigcup_{E\in B} E$
whenever the $i$-th coordinate of $f$ is from $\bigcup_{E\in B} E$.
\end{proof}

\subsection{Central relation}

$\sigma\subseteq A^{2}$ is 
called \emph{reflexive} if $(a,a)\in \sigma$ for every $a\in A$.
A subdirect relation $R\subseteq A\times B$ is called \emph{central}
if 
$\{a\mid \{a\}\times B\subseteq R\}$ 
is nonempty and not equal to $A$.
This kind of relations came from 
the Rosenberg Classification of maximal clones \cite{rosmax}.
Such relations gave the name to central subuniverses 
as the set $\{a\mid \{a\}\times B\subseteq R\}$ can be called \emph{a center},
and we know from the following theorem that
a center is always a central subuniverse if the algebra $\alg B$ 
avoids nontrivial binary absorbing and projective subuniverses.

\begin{thm}\label{CenterImpliesTHM}
Suppose 
$\alg R\le_{sd} \alg A\times \alg B$,
$C = \{c\in A\mid \forall b\in B\colon (c,b)\in R\}$.
Then one of the following conditions holds:
\begin{enumerate}
    \item $C$ is a central subuniverse of $\alg A$;
    \item $\alg B$ has a nontrivial binary absorbing subuniverse;
    \item $\alg B$ has a nontrivial 
    projective subuniverse.
\end{enumerate}
\end{thm}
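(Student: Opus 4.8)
The plan is to assume that $\alg B$ has neither a nontrivial binary absorbing subuniverse nor a nontrivial projective subuniverse and to prove, under this assumption, that $C$ is a central subuniverse of $\alg A$; if one of these subuniverses does exist, conclusion~(2) or~(3) holds at once. First I would dispose of the routine points: $C$ is a subuniverse of $\alg A$ (apply a basic operation of $\alg A\times\alg B$ to the pairs $(c_{i},b)\in R$ for a fixed $b\in B$ and use idempotency), and if $C=\varnothing$ or $C=A$ then $C$ is trivially central, so we may assume $\varnothing\ne C\subsetneq A$. For every $a\in A\setminus C$ the set $B_{a}:=a+R$ is a nontrivial subuniverse of $\alg B$ (nonempty because $R$ is subdirect, proper because $a\notin C$), so by Lemma~\ref{NoEssential} it admits a $B_{a}$-essential relation of arity~$2$, and by Lemma~\ref{findBlocker} together with Lemma~\ref{ProjectiveSubCharacterization} there is an $n$ with $\Sg_{\alg B}\bigl(B^{n}\setminus(B\setminus B_{a})^{n}\bigr)=B^{n}$. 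The task is now to verify the two defining properties of a central subuniverse; the whole argument runs by induction on $|A|+|B|$.

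For the absorption property, suppose $C$ does not absorb $\alg A$; by Lemma~\ref{NoEssential} there is a $C$-essential relation $Q\le\alg A^{m}$ for every $m$. A first reduction: if the subalgebra $\alg A''$ generated by the entries of $Q$ is proper, then $Q$ is still $C$-essential over $\alg A''$, so by the inductive hypothesis (and since conclusions~(2),(3) would finish the proof) $C$ is central in $\alg A''$, which contradicts $C$ not absorbing $\alg A''$; hence we may assume $Q$ spans $\alg A$. The idea is then to push $Q$ through $R$ into $\alg B$: the pp-formula $S(y_{1},\dots,y_{m})=\exists x_{1}\cdots\exists x_{m}\,Q(x_{1},\dots,x_{m})\wedge\bigwedge_{i}R(x_{i},y_{i})$ defines a subuniverse of $\alg B^{m}$, and because $\{c\}\times B\subseteq R$ for $c\in C$ the $i$-th essential witness of $Q$ (whose only coordinate outside $C$ is the $i$-th) is carried onto the whole slab $B^{i-1}\times B_{a_{i}}\times B^{m-i}$ of $S$, while $Q\cap C^{m}=\varnothing$ forces every tuple of $S$ to have a coordinate inside some $B_{a}$. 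Refining $S$ (intersecting with translates and collapsing arities as in Lemma~\ref{AbsorptionReduceArity}) should produce, for a single nontrivial $D\le\alg B$, a family of relations witnessing that $\Sg_{\alg B}\bigl(B^{n}\setminus(B\setminus D)^{n}\bigr)$ is proper for all $n$, contradicting the first paragraph via Lemma~\ref{findBlocker}. So $C$ absorbs $\alg A$.

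For the center property, suppose some $a_{0}\in A\setminus C$ satisfies $(a_{0},a_{0})\in\Sigma:=\Sg_{\alg A}\bigl((\{a_{0}\}\times C)\cup(C\times\{a_{0}\})\bigr)$. If $A':=\Sg_{\alg A}(C\cup\{a_{0}\})$ is a proper subuniverse, then $R\cap(A'\times B)\sd\alg A'\times\alg B$ has the same center $C$, and the inductive hypothesis applied to it gives conclusion~(2) or~(3) for $\alg B$ (and we are done) or else that $C$ is central in $\alg A'$, which is impossible because $\Sg_{\alg A'}\bigl((\{a_{0}\}\times C)\cup(C\times\{a_{0}\})\bigr)=\Sigma\ni(a_{0},a_{0})$. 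Hence we may assume $\Sg_{\alg A}(C\cup\{a_{0}\})=A$; in this case I would choose $a_{0}$ so that $\Sg_{\alg A}(C\cup\{a_{0}\})$ is inclusion-minimal among such generated subalgebras (in the spirit of Lemma~\ref{NoEEForCenter}) and, using the equations witnessing both $\Sg_{\alg A}(C\cup\{a_{0}\})=A$ and $(a_{0},a_{0})\in\Sigma$ together with $\{c\}\times B\subseteq R$, construct a nontrivial binary absorbing or projective subuniverse of $\alg B$ (most plausibly out of $B_{a_{0}}$), again a contradiction. Combining the two properties, $C$ is a central subuniverse of $\alg A$.

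The main obstacle is the transfer through $R$ at the end of each case: a failure of a property of $C$ inside $\alg A$ must be converted into a genuine structural feature of $\alg B$, and the only bridge is the single subdirect relation $R$, whose fibers over $C$ are all of $B$ and therefore contribute nothing by themselves. For absorption this failure is an honest relational ($C$-essential) phenomenon, but for the center condition it concerns the generated relation $\Sg_{\alg A}$ rather than membership in a pp-definable set, so it is not transported by a pp-formula in any direct way; I expect pinning down exactly what the full fibers over $C$ contribute — so that the transferred relations on $\alg B$ stay proper — to be where the real work lies, with Lemma~\ref{NoEssential}, Lemma~\ref{findBlocker}, Lemma~\ref{CenterLessThanThree}, Theorem~\ref{CenterImplies} and Lemma~\ref{NoEEForCenter} all likely entering.
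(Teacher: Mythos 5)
Your overall architecture matches the paper's: assume $\alg B$ has no nontrivial binary absorbing or projective subuniverse and verify the two defining conditions of a central subuniverse for $C$, converting any failure into structure on $\alg B$ via $R$. (The paper does this by a direct case analysis on each $a\in A\setminus C$, with no induction on $|A|+|B|$ and no reductions to generated subalgebras, but that is cosmetic.) However, both places where you yourself flag ``the real work'' contain gaps, and the first is a genuine one.

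For the absorption condition, your relation $S(y_1,\dots,y_m)=\exists x_1\cdots\exists x_m\,Q(x_1,\dots,x_m)\wedge\bigwedge_i R(x_i,y_i)$ need not be proper: knowing that every tuple of $Q$ has a coordinate outside $C$ only confines, for each tuple of $S$, some coordinate to some $B_a$, and the sets $B_a$ over the various $a\in A\setminus C$ can cover $B$, so no tuple of $B^m$ is excluded; ``intersecting with translates and collapsing arities'' does not obviously repair this. The paper's fix is twofold. First it restricts the $C$-essential relations so that all essential witnesses use a single element $a$ (the most popular one among the witnesses for infinitely many arities). Second, it pushes the result not into $\alg B$ but into the power $\alg D=\alg B^{|B|}$ via $\rho=\{(c,(d_1,\dots,d_{|B|}))\mid \forall i\colon (c,d_i)\in R\}$: the tuple $(b_1,\dots,b_{|B|})$ enumerating all of $B$ is $\rho$-related exactly to the elements of $C$, so the corresponding constant tuple is provably outside the transferred relation $\Omega_n$, while every tuple of $D^{k_n}$ having a coordinate equal to $(a',\dots,a')$ with $(a,a')\in R$ lies inside it. This gives $\Sg_{\alg D}(D^{i}\setminus(D\setminus\{(a',\dots,a')\})^{i})\neq D^{i}$ for all $i$, whence Lemma~\ref{findBlocker} yields a nontrivial projective subuniverse of $\alg D$ and Lemma~\ref{NontrivialProjectiveFromPower} (which you do not invoke) brings it down to $\alg B$. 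For the second condition, your minimality detour through $\Sg_{\algA}(C\cup\{a_0\})$ is unnecessary and the promised contradiction is left unconstructed; the actual argument is short: $(a,a)\in\Sg_{\algA}((\{a\}\times C)\cup(C\times\{a\}))$ yields a term $f$ with $f(a,\dots,a,c_1,\dots,c_{n-s})=f(c'_1,\dots,c'_s,a,\dots,a)=a$ and all $c_i,c'_j\in C$, and since $\{a\}\times B_a\subseteq R$ and $C\times B\subseteq R$, the binary term $h(x,y)=f(x,\dots,x,y,\dots,y)$ (with $s$ occurrences of $x$) satisfies $h(B_a,B)\subseteq B_a$ and $h(B,B_a)\subseteq B_a$, where $B_a=\{b\in B\mid(a,b)\in R\}$; so $B_a$ is a nontrivial binary absorbing subuniverse of $\alg B$ outright, with no projective case and no minimality needed.
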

\begin{proof}
Consider $a\in A\setminus C$.
Assume that 
$\begin{pmatrix} a\\a\end{pmatrix}
\in\Sg_{\algA}
\begin{pmatrix} a & C\\
C & a 
\end{pmatrix}$.
By $a^{+}$ we denote 
the set $\{b\in B\mid(a,b)\in R\}$.
Then there exists a function 
$f\in \Clo(\alg A)$ of arity $n$ such that 
$$f(a,\dots,a,c_{1},\dots,c_{n-s})
=f(c_{1}',\dots,c_{s}',a,\dots,a) = a$$
for some 
$c_{1},\dots,c_{n-s},c_{1}',\dots,c_{s}'\in C$.
Since $f$ preserves $R$, 
we derive
for $h(x,y) = f(\underbrace{x,\dots,x}_{s},y,\dots,y)$ that 
$h(a^{+},A)\subseteq a^{+}$
and 
$h(A,a^{+})\subseteq a^{+}$.
Thus, we obtained a nontrivial binary absorbing subuniverse on $\alg B$, which corresponds to the second case.

Assume that 
$\begin{pmatrix} a\\a\end{pmatrix}
\notin\Sg_{\algA}
\begin{pmatrix} a & C\\
C & a 
\end{pmatrix}$
for every $a\in A\setminus C$.
If $C$ is an absorbing subuniverse, 
then $C$ is a central subuniverse, which is the first case.

Assume that 
$C$ is not an absorbing subuniverse.
By Lemma~\ref{NoEssential},
for every $n\ge 1$ 
there exists 
a $C$-essential subalgebra $\alg R_{n}\le \alg A^{n}$,
which means that 
for any $n$ and any $i\in\{1,2,\dots,n\}$
there exists $a_{i}^{n}\in A\setminus C$ 
such that 
$R_{n}\cap (C^{i-1}\times \{a_{i}^{n}\}\times C^{n-i})\neq \varnothing$.
Since 
$A$ is finite, there exists an element 
$a\in A\setminus C$ 
which is the most popular element 
among 
$\{a_{1}^{n},\ldots,a_{n}^{n}\}$ for infinitely many 
$n$.
For each $n$ with $a$ being the most popular
we do the following.
We restrict each variable 
of $R_{n}$ 
satisfying $a_{i}^{n}\neq a$ 
to $C$
and consider the projection 
of 
$R_{n}$ onto 
the remaining variables.
As a result we get a relation $R_{n}'$ of arity
$k_{n}\ge n/|A|$ such that 
$R_{n}'\cap C^{k_{n}}  = \varnothing$ and 
$R_{n}'\cap (C^{i}\times\{a\}\times C^{k_{n}-i})\neq\varnothing$ 
for every $i$.
Let $\alg D = \alg B^{|B|}$ and $\rho\le \alg A\times \alg D$ be the binary relation
defined by
$$\{(c,(d_{1},\ldots,d_{|B|}))\mid \forall i\colon (c,d_{i})\in R\}.$$
For each $R_{n}'$ 
let the relation 
$\Omega_{n}(y_{1},\dots,y_{k_{n}})$ be defined by 
$$
\exists x_{1}\dots\exists x_{k_{n}}\;
R_n'(x_1,\dots,x_{k_{n}})
\wedge 
\rho(x_{1},y_{1})\wedge\dots\wedge 
\rho(x_{k},y_{k_{n}}).
$$
Since $R$ is subdirect,
we can choose $a'$ such that 
$(a,a')\in R$.
Let 
$B = \{b_1,\dots,b_{|B|}\}$.
Since 
$R_{n}'\cap C^{k_{n}} = \varnothing$,
we have 
$$((b_1,\dots,b_{|B|}),\dots,(b_1,\dots,b_{|B|}))\notin \Omega_{n}.$$
Thus
$\Omega_{n}\neq D^{k_{n}}$ 
and 
$D^{k_{n}}\setminus (D\setminus\{(a',\dots,a')\})^{k_{n}}
\subseteq \Omega_{n}$. 
Since 
we consider idempotent case and $k_{n}$ can be 
as large as we need, 
we obtain that 
$\Sg_{\alg D}(D^{i}\setminus (D\setminus\{(a',\dots,a')\})^{i}
)\neq D^{i}$ for every $i$.
By Lemma~\ref{findBlocker} 
we obtain 
a nontrivial projective subuniverse of $\alg D$.
Then, by Lemma~\ref{NontrivialProjectiveFromPower},
there exists a nontrivial projective subuniverse 
on $\alg B$, which is the third case of our claim.
\end{proof}

\begin{lem}\label{CoolSubdirectImplies}
Suppose 
$R\lneq\algA_{1}\times\dots\times \algA_{n}$ is 
full-projective
and the $i$-th coordinate of $R$ is not uniquely-determined.
Then 
\begin{enumerate}
    \item there exist $l\in[n]$ and a central relation $\rho\le \algA_{l}\times \algA_{i}$, or
    \item there exists a nontrivial equivalence relation $\rho\le \algA_{i}^{2}$.
\end{enumerate}
\end{lem}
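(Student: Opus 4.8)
The plan is to reduce to the case $n=2$ and then analyse a binary subdirect relation by means of Theorem~\ref{CenterImpliesTHM}. After permuting coordinates we may assume $i=n$: full-projectivity is coordinate-symmetric and ``coordinate $i$ not uniquely determined'' becomes ``coordinate $n$ not uniquely determined''. (For $n=1$ the statement follows from Theorem~\ref{ExistsStrongSubalgebraTHM} applied to $\algA_{1}$, so assume $n\ge 2$.)

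\emph{Reduction to $n=2$.} Since $R$ is proper, pick $(c_{1},\dots,c_{n})\notin R$. For any tuple $(a_{1},\dots,a_{n-1})$ occurring in $\proj_{[n-1]}(R)$, fixing the coordinates $2,\dots,n-1$ of $R$ to $a_{2},\dots,a_{n-1}$ produces a subuniverse of $\algA_{1}\times\algA_{n}$ (using idempotency to treat singletons as subuniverses), and it is subdirect because every proper projection of $R$ is full. Taking $a_{j}=c_{j}$ gives a \emph{proper} subdirect $\sigma\le_{sd}\algA_{1}\times\algA_{n}$ since $(c_{1},c_{n})\notin\sigma$. If, in addition, coordinate $n$ of $R$ branches over some tuple whose section $\{y:(a_{1},\dots,a_{n-1},y)\in R\}$ is a proper subset of $A_{n}$, then fixing $2,\dots,n-1$ to that tuple yields a proper subdirect $\sigma'\le_{sd}\algA_{1}\times\algA_{n}$ whose second coordinate is not uniquely determined, and we are reduced to $n=2$. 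The remaining subcase is that every branching section of $R$ equals $A_{n}$; then $P:=\{(a_{1},\dots,a_{n-1}):\{(a_{1},\dots,a_{n-1})\}\times A_{n}\subseteq R\}$, an intersection of subuniverses, is a nontrivial subuniverse of $\algA_{1}\times\dots\times\algA_{n-1}$ (nonempty because coordinate $n$ is not uniquely determined, proper because $R$ is). I would handle this subcase by induction on $n$, checking that $P$ is again full-projective and replaying the argument on it (and, when $P$ happens to have all coordinates uniquely determined, reading the conclusion off the resulting rigid shape of $R$).

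\emph{The case $n=2$.} Now $R\le_{sd}\algA_{1}\times\algA_{2}$ is proper with second coordinate not uniquely determined. Consider $\rho:=R^{-1}\circ R\le\algA_{2}^{2}$, that is $\rho(y,y')\equiv\exists x\;R(x,y)\wedge R(x,y')$: a subuniverse of $\algA_{2}^{2}$, reflexive since $R$ is subdirect, symmetric, and strictly larger than the diagonal precisely because coordinate $2$ is not uniquely determined. The chain $\rho\subseteq\rho\circ\rho\subseteq\cdots$ consists of subuniverses and stabilises at a transitive, reflexive, symmetric subuniverse $\theta$ of $\algA_{2}^{2}$, i.e. a congruence of $\algA_{2}$ with $\Delta\subsetneq\theta$. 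If $\theta\ne A_{2}^{2}$, then $\theta$ is a nontrivial congruence and we are in case~(2).

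\emph{The main obstacle: $\theta=A_{2}^{2}$.} Here the link relation spreads to everything, so the obstruction to $R$ being full must live in a center. If the center $\{a\in A_{1}:\{a\}\times A_{2}\subseteq R\}$ of $R$ is nonempty it is proper (as $R\ne A_{1}\times A_{2}$), so $R$ itself is a central relation $\le\algA_{1}\times\algA_{2}$, giving case~(1) with $l=1$; symmetrically for $R^{-1}$ with $l=i=2$. Otherwise both centers are empty, and I would feed $R$, $R^{-1}$, and $\rho$ into Theorem~\ref{CenterImpliesTHM} to obtain a nontrivial binary absorbing or projective subuniverse of $\algA_{1}$ or $\algA_{2}$, and then manufacture a central relation from it: a nontrivial projective subuniverse $P\le\algA_{2}$ gives, by Lemma~\ref{ProjectiveSubCharacterization}, the invariant relation $A_{2}^{2}\setminus(A_{2}\setminus P)^{2}$, which is a central relation $\le\algA_{2}^{2}$ with center exactly $P$; a nontrivial binary absorbing subuniverse $D\le\algA_{2}$ with term $t$ gives the subuniverse $\Sg_{\algA_{2}^{2}}\big((D\times A_{2})\cup(A_{2}\times D)\big)$, whose center contains $D$ and which I would show is proper — equivalently, not all of $A_{2}^{2}$ — using the absorption term together with $\theta=A_{2}^{2}$. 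The delicate part will be exactly these last verifications: ruling out $\Sg_{\algA_{2}^{2}}((D\times A_{2})\cup(A_{2}\times D))=A_{2}^{2}$, checking the various centers are proper rather than empty or full, and making sure the subuniverses extracted from Theorem~\ref{CenterImpliesTHM} sit on single coordinates so that case~(1) applies as stated.
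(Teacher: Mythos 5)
Your outline diverges from the paper's proof, and the divergence opens a genuine gap at the decisive point. In your ``main obstacle'' case ($\theta=A_{2}^{2}$, both centers empty), Theorem~\ref{CenterImpliesTHM} gives you nothing: the empty set is vacuously an absorbing subuniverse and vacuously satisfies the second condition in the definition of a central subuniverse, so conclusion (1) of that theorem holds trivially when $C=\varnothing$ and conclusions (2)--(3) are not forced (this is visible in that theorem's own proof: ``If $C$ is an absorbing subuniverse, then $C$ is a central subuniverse, which is the first case''). So you cannot extract a binary absorbing or projective subuniverse this way. Even granting one, the conversion back to a central relation fails for the absorbing case: $\Sg_{\algA_{2}^{2}}\bigl((D\times A_{2})\cup(A_{2}\times D)\bigr)$ can equal $A_{2}^{2}$ --- in the two-element lattice $\{0\}$ binary absorbs via $\wedge$, yet $(0,1)\vee(1,0)=(1,1)$ --- and the lemma demands a central \emph{relation}, not an absorbing subuniverse. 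Finally, your reduction to $n=2$ leaves an unresolved induction in the subcase where every branching section equals $A_{n}$; the set $P$ you define need not be full-projective nor have a non-uniquely-determined coordinate, so ``replaying the argument'' is not yet an argument.

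The paper closes exactly the case you cannot, by a different and self-contained device. Taking $i=1$, it fixes an inclusion-maximal proper reflexive symmetric $S\le\algA_{1}^{2}$ with non-uniquely-determined first coordinate (or $S=R$ if none exists) and studies the link relations $\sigma_{k}(x_{1},\dots,x_{k})=\exists y_{2}\dots\exists y_{m}\,\bigwedge_{j}S(x_{j},y_{2},\dots,y_{m})$. If $\sigma_{|A_{1}|}$ is full, a common witness exists and a section of $S$ is directly a central relation; if $\sigma_{2}$ is proper, maximality forces $S=\sigma_{2}$ to be a nontrivial equivalence relation (your transitive-closure step is the analogue of this); and in the intermediate case the first non-full $\sigma_{l}$ is itself a central relation, its center containing $a_{1},\dots,a_{l-2}$ because $\sigma_{l}$ contains every tuple with a repeated entry. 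Some argument of this kind, producing the central relation directly from the combinatorics of $R$ rather than from Theorem~\ref{CenterImpliesTHM}, is what your proof is missing.
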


\begin{proof}
Without loss of generality we assume that $i=1$.
Choose an inclusion-maximal reflexive symmetric relation $S\lneq\algA_{1}^{2}$ 
such that its first coordinate is not uniquely-determined.
If such $S$ doesn't exist then put $S=R$.
Let $S$ be of arity $m$. Note that 
in either case 
$S$ is full-projective.

For every $k\in\{2,3,\dots,|A_{1}|\}$ we define
$\sigma_{k}(x_{1},\dots,x_{k})$ by
$$ 
\exists y_2 \dots \exists y_{m}\;
S(x_{1},y_{2},\dots,y_{m})\wedge 
\dots\wedge S(x_{k},y_{2},\dots,y_{m}).$$
Consider several cases:

Case 1. $\sigma_{|A_{1}|}$ is full. 
This means that there exists 
$(b_{2},\dots,b_{m})$ such that 
$(a,b_{2},\dots,b_{m})\in R$
for all $a\in A_{1}$.
Let $l\ge 2$ be the minimal number such that 
$S'(x_{1},\dots,x_{l}) = S(x_{1},\dots,x_{l},b_{l+1},\dots,b_{m})$ is not full.
Choose $(a_{1},\dots,a_{l})\notin S'$.
Since $S$ is full-projective, the required central relation with a center containing $b_{l}$ can be defined by 
$\rho(x,y) = S'(y,a_{2},\dots,a_{l-1},x)$.

Case 2. $\sigma_{2}$ is not full.
Since the first coordinate of $S$ is not uniquily determined, 
$\sigma_{2}$ is not uniquely-determined.
Note that $\sigma_{2}$ is a reflexive symmetric relation. 
Hence, unless $S$ is also binary reflexive symmetric relation, we get a contradiction with the choice of $S$.
Suppose $S$ is binary reflexive symmetric.
If $S$ is transitive, then $S$ is an equivalence relation, which gives us case 2.
If $S$ is not transitive, then $S\subsetneq \sigma_{2}$, giving a contradiction with the choice of $S$.

Case 3. $\sigma_{2}$ is full, $\sigma_{|A_1|}$ is not full.
Let $l$ be the minimal number such that 
$\sigma_{l}$ is not full
and $(a_{1},\dots,a_{l})\notin\sigma_{l}$.
Note that $\sigma_{l}$ is symmetric and contains all tuples with repetitive elements.
Then the required central relation 
with a center containing $\{a_{1},\dots,a_{l-2}\}$ 
can be defined by 
$\rho(x,y) = \sigma_{l}(a_{1},\dots,a_{l-2},x,y)$. 
\end{proof}

\subsection{PC subuniverses}
We start with a well-known characterization of the relations preserved by all operations (see Theorem 2.9.3 from \cite{lau}).
\begin{lem}\label{PolIsFull}
Suppose
$R\subseteq A^{n}$ is preserved by every operation on $A$.
Then $R$ can be represented as a conjunction of
binary relations of the form $x_i = x_j$. 
\end{lem}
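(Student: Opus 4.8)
The plan is to identify $R$ with the relation defined by the conjunction of exactly those equalities $x_{i}=x_{j}$ that happen to hold on all of $R$. First I would set aside the degenerate case $R=\varnothing$ (which carries no such description, but does not arise where the lemma is used) and assume $R\neq\varnothing$. Then I would define an equivalence relation $\equiv$ on $[n]$ by $i\equiv j$ iff $\alpha(i)=\alpha(j)$ for every $\alpha\in R$, and let $R^{*}$ be the relation $\bigwedge_{i\equiv j}(x_{i}=x_{j})$. The inclusion $R\subseteq R^{*}$ is immediate from the definition of $\equiv$, so the entire task reduces to proving $R^{*}\subseteq R$.

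For this, I would fix $\alpha\in R^{*}$, enumerate $R=\{\beta_{1},\dots,\beta_{m}\}$, and consider for each coordinate $t\in[n]$ the ``column'' $v_{t}=(\beta_{1}(t),\dots,\beta_{m}(t))\in A^{m}$. The crucial observation is that $v_{t}=v_{t'}$ holds precisely when $\beta_{i}(t)=\beta_{i}(t')$ for all $i$, i.e.\ precisely when $t\equiv t'$. Consequently one can define an $m$-ary operation $f$ on $A$ by the rule $f(v_{t}):=\alpha(t)$ for $t\in[n]$, extended arbitrarily on all other inputs. Since $f$ preserves $R$ by hypothesis and $\beta_{1},\dots,\beta_{m}\in R$, the tuple $f(\beta_{1},\dots,\beta_{m})$ (coordinatewise application) lies in $R$; its $t$-th entry equals $f(v_{t})=\alpha(t)$, so this tuple is $\alpha$ itself, giving $\alpha\in R$.

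The one step that needs care is the well-definedness of $f$: if two coordinates receive the same column, $v_{t}=v_{t'}$, then $t\equiv t'$ and hence $\alpha(t)=\alpha(t')$ because $\alpha\in R^{*}$ is constant on $\equiv$-classes, so the prescribed values are consistent. I do not expect any further obstacle — the argument uses nothing about $R$ beyond its being an invariant of every operation on $A$, and in particular it exploits that operations of arbitrarily large arity are available, which is exactly what allows a single $f$ to reproduce any chosen $\alpha\in R^{*}$ from the generators $\beta_{1},\dots,\beta_{m}$.
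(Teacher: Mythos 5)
Your proof is correct, and it takes a more direct route than the paper's. The paper argues by induction on the arity $n$: if $R$ is not full, then (viewing $R$ as a matrix whose columns are the tuples of $R$) two rows must coincide — otherwise an operation could be interpolated sending the columns to any tuple outside $R$ — so some $\proj_{i,j}(R)$ is the equality relation and one coordinate can be deleted without changing $\Pol(R)$. You instead carry out the interpolation argument in full generality in one step: you identify the coincidence pattern of the rows with the equivalence $\equiv$, observe that any $\alpha$ respecting that pattern (i.e.\ any $\alpha\in R^{*}$) can be realized as $f(\beta_{1},\dots,\beta_{m})$ for a suitable $m$-ary $f$, and conclude $R^{*}\subseteq R$ directly. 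The two proofs share the same key idea (interpolating an operation on the columns of the relation), but yours avoids the induction and makes explicit exactly which equalities define $R$; the paper's version is shorter to state but leaves the final description implicit in the recursion. Your handling of the degenerate case $R=\varnothing$ is also appropriate: the statement is vacuously inapplicable there (a conjunction of equalities over a nonempty $A$ always contains the constant tuples), and the paper's own proof silently ignores this case as well, since the lemma is only ever invoked for nonempty relations.
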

\begin{proof}
We prove by induction on $n$.
For $n=1$ it is clear.
If $R$ is full then we are done.
Otherwise, consider $R$ as a matrix whose columns are tuples of the relation.
Since $\Pol(R)$ contains all operations 
and there exists $\alpha\notin R$,
two rows of the matrix should be equal (otherwise we could find
an operation giving $\alpha$ on the matrix).  
Thus, for some $i,j\in[n]$ 
the relation 
$\proj_{i,j}(R)$ is the equality relation.
Hence, 
$\Pol(\proj_{[n]\setminus\{i\}}(R))=\Pol(R)$ 
and the claim follows from the inductive assumption applied to 
$\proj_{[n]\setminus\{i\}}(R)$.
\end{proof}

\begin{lem}\label{ReflexivePCRelations}
Suppose
$A$ is a PC algebra and 
$R\le \algA^{n}$ contains all 
the constant tuples $(a,\dots,a)$. 
Then $R$ can be represented as a conjunction of
binary relations of the form $x_i = x_j$.\end{lem}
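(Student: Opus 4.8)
The plan is to show that $R$ is in fact preserved by \emph{every} operation on $A$, after which the statement follows immediately from Lemma~\ref{PolIsFull}.

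First I would record the two sources of polymorphisms of $R$. On one hand, $R$ is preserved by every basic operation of $\algA$: this is precisely the hypothesis $R\le\algA^{n}$, i.e.\ $R\in\Inv(\algA)$. On the other hand, $R$ is preserved by every constant operation $c_{a}\colon x\mapsto a$ with $a\in A$; indeed, applying $c_{a}$ coordinatewise to any tuples $\alpha_{1},\dots,\alpha_{m}\in R$ produces the constant tuple $(a,\dots,a)$, which lies in $R$ by assumption. So $R$ is simultaneously preserved by all basic operations of $\algA$ and by all constant operations on $A$.

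Next I would invoke the standard fact that, for a fixed relation, the set of operations preserving it is a clone (it contains the projections and is closed under composition). Hence $R$ is preserved not only by the generators listed above but by the entire clone they generate. Since $\algA$ is polynomially complete, that clone is the clone of \emph{all} operations on $A$; therefore every operation on $A$ preserves $R$, i.e.\ $R$ is an invariant of the full clone of operations on $A$. Applying Lemma~\ref{PolIsFull} to $R$ then yields that $R$ is a conjunction of binary relations of the form $x_{i}=x_{j}$, which is exactly the claim.

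I do not anticipate a real obstacle here. The only point that needs (entirely routine) care is the passage from ``$R$ is preserved by the generating operations'' to ``$R$ is preserved by the whole generated clone,'' which rests on the fact that a composition of operations each preserving $R$ again preserves $R$; combined with polynomial completeness of $\algA$ this immediately delivers $\Pol(R)=$ the clone of all operations on $A$, and Lemma~\ref{PolIsFull} finishes the proof.
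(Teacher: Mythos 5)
Your proposal is correct and is essentially identical to the paper's own proof: the paper likewise observes that the constant operations preserve $R$ (because it contains all constant tuples), that together with the basic operations of $\algA$ they generate all operations by polynomial completeness, and then invokes Lemma~\ref{PolIsFull}. You have merely spelled out the routine clone-closure step in more detail.
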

\begin{proof}
All constant operations preserve $R$, 
and together with the constant operations 
the algebra $\algA$ generates all operations on the set $A$.
Hence, $R$ is preserved by all operations on $A$,
and by Lemma~\ref{PolIsFull}
it can be represented
as a conjunction of
binary relations of the form $x_i = x_j$.
\end{proof}

\begin{lem}\label{PCRelationsLem}
Suppose $R\le_{sd} \algA_1\times\dots \times \algA_{n}$,
$\algA_{i}$ is a PC algebra without BACP for every $i\in\{2,\ldots,n\}$,
$\algA_1$ has no a nontrivial central subuniverse.
Then 
$R(x_{1},\dots,x_{n})=
\delta_{1}(x_{i_1},x_{j_1})\wedge \dots\wedge
\delta_{s}(x_{i_s},x_{j_s})$,
where for every $\ell\in[s]$
the first variable of $\delta_{\ell}$
is uniquely-determined whenever $i_{\ell}\neq 1$
and the second variable of $\delta_{\ell}$
is uniquely-determined whenever $j_{\ell}\neq 1$.
\end{lem}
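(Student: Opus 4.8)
The plan is to induct on the arity $n$. For $n\le 2$ there is nothing to prove if $R$ is full, so suppose $R\lneq\algA_1\times\algA_2$; being subdirect, it is full-projective. I would show that the second coordinate of $R$ is uniquely-determined, so that (since $R$ is subdirect) $R$ is the graph of a surjection $A_1\to A_2$ and we may take $\delta_1=R$, $i_1=1$, $j_1=2$. If it were not, Lemma~\ref{CoolSubdirectImplies} would produce a central relation $\rho\le\algA_l\times\algA_2$ for some $l\in\{1,2\}$, or a nontrivial equivalence relation $\rho\le\algA_2^2$. A nontrivial congruence of a polynomially complete algebra is impossible: it is preserved by all basic operations and by all constant operations, hence by every operation, so by Lemma~\ref{PolIsFull} it must be trivial. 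For the central relation, Theorem~\ref{CenterImpliesTHM} applied to the subdirect $\rho$ yields a nontrivial central subuniverse of $\algA_l$, or a nontrivial binary absorbing or projective subuniverse of $\algA_2$; the last two contradict that $\algA_2$ is PC without BACP, and a nontrivial central subuniverse of $\algA_l$ contradicts the hypothesis on $\algA_1$ (if $l=1$) or again that $\algA_2$ is PC without BACP (if $l=2$).

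The core of the proof is the claim that, under the hypotheses, no proper full-projective relation of arity $\ge 3$ can exist. Applying Lemma~\ref{CoolSubdirectImplies} to each coordinate $i\ne 1$ of such an $R$ and repeating the exclusions above (central relations via Theorem~\ref{CenterImpliesTHM}, nontrivial congruences via Lemma~\ref{PolIsFull}), one shows that every coordinate of $R$ other than the $\algA_1$-coordinate is uniquely-determined. For a full-projective relation this forces a rigid ``Latin-square'' structure on the projections onto three coordinates: such a projection is either full or the graph of a quasigroup-type operation that forces two of the three algebras to be isomorphic. From this configuration I would extract a contradiction -- its fibres yield a nontrivial binary absorbing or projective subuniverse of some $\algA_i$, or, after identifying the isomorphic PC factors, Lemma~\ref{ReflexivePCRelations} forces the putative proper quasigroup relation to be a conjunction of equalities, which is absurd.

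Granting this, the inductive step is short. For $n\ge 3$: if $R$ is full we are done; if $R$ is proper and full-projective the core claim rules it out; so $R$ is not full-projective, and we pick $I\subsetneq[n]$ minimal with $\proj_I(R)$ proper. Then $\proj_I(R)$ is proper, full-projective and subdirect over the algebras $\algA_i$, $i\in I$, so the core claim (whose hypotheses hold for $\proj_I(R)$ -- matching ours when $1\in I$, and being even stronger when $1\notin I$) forces $|I|=2$, say $I=\{p,q\}$ with $q\ne 1$ after relabelling. By the base case the $q$-th coordinate of $\proj_{p,q}(R)$ is uniquely-determined, so $x_q=g(x_p)$ on $R$ for a surjection $g$. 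One checks that $R$ equals the pullback of $\proj_{[n]\setminus\{q\}}(R)$ intersected with the binary constraint $g(x_p,x_q)$: a tuple of $R$ restricted to the coordinates $[n]\setminus\{q\}$ has its $q$-th coordinate forced to $g(x_p)$, hence a unique extension into $R$. Since $\proj_{[n]\setminus\{q\}}(R)$ has arity $n-1$ and still satisfies the hypotheses, the induction hypothesis decomposes it; the extra conjunct $g(x_p,x_q)$ has the required form, its second variable $x_q$ ($q\ne 1$) being uniquely-determined, and if $p\ne 1$ the base case makes $g$ a bijection, so its first variable is uniquely-determined too.

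The main obstacle is the core claim, and specifically the subcase where the ``central set'' $C$ of Theorem~\ref{CenterImpliesTHM} is empty: there that theorem gives no information and the quasigroup / affine-type linking must be excluded by hand, which is exactly where the full strength of ``PC without BACP'' (no nontrivial binary absorbing, central, or projective subuniverse, together with simplicity of PC algebras) is needed. Everything else -- the base case, the peeling-off of a functional binary coordinate, and the verification of the uniquely-determined conditions on the conjuncts -- is routine bookkeeping with the already-established properties of central, binary absorbing and projective subuniverses.
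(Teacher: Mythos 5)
Your overall architecture coincides with the paper's: induct on the arity, split off a functional binary conjunct whenever some proper projection of $R$ is not full, and otherwise reduce to the case of a full-projective $R$, where Lemma~\ref{CoolSubdirectImplies} together with Theorem~\ref{CenterImpliesTHM} and Lemma~\ref{ReflexivePCRelations} shows that every coordinate $i\in\{2,\dots,n\}$ is uniquely-determined. Up to that point your argument is sound (one small correction: your worry about the case where the center of Theorem~\ref{CenterImpliesTHM} is empty is moot, since a central relation has a nonempty center by definition, so Lemma~\ref{CoolSubdirectImplies} always hands you a genuinely nontrivial central/BA/projective subuniverse).

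The genuine gap is exactly the ``core claim'': that a proper full-projective subdirect $R$ of arity $n\ge 3$ with all coordinates $2,\dots,n$ uniquely-determined cannot exist. Your sketch of this step (``Latin-square structure'', ``quasigroup-type operation forcing two factors to be isomorphic'', ``fibres yield a BA or projective subuniverse'') is not an argument, and you concede as much; but this is the only nontrivial content of the lemma once the bookkeeping is done. The paper settles it with a specific rectangularity construction: define
\begin{align*}
\zeta(z_{1},z_{2},z_{3},z_{4}) =\exists x_{1}\exists x_{2}\dots\exists x_{n-1}\exists x_{1}'\exists x_{2}'\;
&R(x_{1},x_{2},x_{3},\ldots,x_{n-1},z_{1})\wedge R(x_{1},x_{2}',x_{3},\ldots,x_{n-1},z_{2})\\
\wedge\;&R(x_{1}',x_{2},x_{3},\ldots,x_{n-1},z_{3})\wedge R(x_{1}',x_{2}',x_{3},\ldots,x_{n-1},z_{4}).
\end{align*}
Full-projectivity of $R$ makes every $3$-variable projection of $\zeta$ full, and $\zeta$ contains all constant tuples, so Lemma~\ref{ReflexivePCRelations} (applied over the PC algebra $\algA_n$) forces $\zeta$ to be full; taking $(a,a,a,b)\in\zeta$ with $a\neq b$ and using that the second and last coordinates of $R$ are uniquely-determined gives $x_2=x_2'$ and then $a=b$, a contradiction. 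Nothing in your proposal supplies this step or an equivalent one, so the proof as written does not go through.
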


This lemma 
says that 
the relation $R$ can be represented by 
constraints 
from the first coordinate to an $i$-th coordinate such that 
the $i$-th coordinate is uniquely-determined by the first 
(also we can define the corresponding PC congruence on the first coordinate 
using this relation)
and by bijective binary constraints between pairs of coordinates
other than first.
Also, it says that in a subdirect product of PC algebras without BACP (even $\algA_{1}$ is a PC algebra) we can choose some essential coordinates which can have any value, each other coordinate is uniquely-determined by exactly one of them (in a bijective way).

\begin{proof}
We prove this lemma by induction on the 
arity of $R$. 
Assume that 
$\proj_{I}(R)$ is not full 
for some $I\subsetneq [n]$.
Then, by the inductive assumption, one of its coordinates 
is uniquely-determined by another coordinate.
Hence some coordinate $i$ of $R$ is uniquely-determined 
by some coordinate $j$.
Let $R' =\proj_{[n]\setminus\{i\}}(R)$
and $\sigma = \proj_{i,j}(R)$.
Then 
$$R(x_{1},\dots,x_{n}) = 
R'(x_1,\dots,x_{i-1},x_{i+1},\dots,x_{n})
\wedge 
\sigma(x_{i},x_{j}).$$
By the inductive assumption, 
$R'$ and $\sigma$ can be represented as
conjunctions of proper binary relations.
Then $R$ has the required representation.

It remains to consider the case when 
$R$ is full-projective.
If some coordinate $i\in\{2,\dots,n\}$ is not uniquely-determined 
then Lemma \ref{CoolSubdirectImplies} implies  the 
existence of a nontrivial congruence on $\algA_{i}$, 
or a central relation $\rho\le\algA_{l}\times \algA_{i}$.
The first option contradicts Lemma~\ref{ReflexivePCRelations}, 
the second option by Theorem~\ref{CenterImpliesTHM}
implies the existence of a nontrivial central subuniverse on $\algA_{l}$,
or the existence of a BA/projective subuniverse on $\algA_{i}$, 
which contradicts the statement.
Thus, 
each coordinate $i\in\{2,\dots,n\}$ of $R$
is uniquely-determined.

If $R$ is binary, then we are done.

Otherwise, consider the relation 
$\zeta$ defined by
\begin{align*}\zeta(z_{1},z_{2},z_{3},z_{4}) =
\exists x_{1} \exists x_{2}\dots\exists x_{n-1} \exists x_{1}' \exists x_{2}'&\\
R(x_{1},x_{2},x_{3},\ldots,x_{n-1},z_{1})\wedge&
R(x_{1},x_{2}',x_{3},\ldots,x_{n-1},z_{2})\wedge\\
R(x_{1}',x_{2},x_{3},\ldots,x_{n-1},z_{3})\wedge&
R(x_{1}',x_{2}',x_{3},\ldots,x_{n-1},z_{4}).
\end{align*}
Since 
$R$ is full-projective,
any projection of 
$\zeta$ onto 3 variables is a full relation
and
$\zeta$ contains all constant tuples (put $x_1'=x_{1}$ and 
$x_{2}'=x_{2}$).
Then
Lemma~\ref{ReflexivePCRelations}
implies that 
$\zeta$ is a full relation.
Choose $a\neq b$ and consider 
the evaluations of the variables 
corresponding to 
$(a,a,a,b)\in\zeta$.
Since $z_{1}=z_{2}=a$ and the second variable of $R$ is uniquely-determined,
we have $x_{2} = x_{2}'$.
Since the last variable is uniquely-determined, 
we get $z_{3}=z_{4}$, that is $a=b$. 
Contradiction.
\end{proof}


\begin{conslem}\label{PCProperties}
Suppose $\sigma_{1},\ldots,\sigma_{k}$ 
are all congruences on $\algA$
such that $\algA_{i} := \algA/\sigma_{i}$ is a PC algebra without BACP,
put $\sigma = \sigma_{1}\cap \dots\cap \sigma_{k}$
and define  
$\psi:A\to A_{1}\times \dots\times A_{k}$
by $\psi(a) = (a/\sigma_{1},\dots,a/\sigma_{k})$. Then 
\begin{enumerate}
    \item[(1)] $\psi$ is surjective, hence 
    $\algA/\sigma\cong \algA_{1}\times\dots\times \algA_{k}$;
    \item[(2)] the PC subuniverses are the sets of the form 
    $\psi^{-1}(S)$, where $S\subseteq A_{1}\times \dots\times A_{k}$
    is a relation definable by unary constraints of the form 
    $x_{j} = a_{j}$;

\end{enumerate}
\end{conslem}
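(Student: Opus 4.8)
The plan is to realize $\algA/\sigma$ as a subdirect product and reduce everything to Lemma~\ref{PCRelationsLem}. Note that $\psi$ is a homomorphism from $\algA$ to $\algA_{1}\times\dots\times\algA_{k}$ whose $i$-th component is the quotient map $\algA\to\algA_{i}$; hence $\ker\psi=\sigma$ and $R:=\psi(A)$ is a subalgebra with $\proj_{i}(R)=A_{i}$, that is, $\alg R\le_{sd}\algA_{1}\times\dots\times\algA_{k}$. If I show $R=A_{1}\times\dots\times A_{k}$, the first isomorphism theorem yields $\algA/\sigma\cong\algA_{1}\times\dots\times\algA_{k}$, which is (1).

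To prove that $R$ is full I would first show that for each pair $i\neq j$ the projection $\rho:=\proj_{i,j}(R)\le_{sd}\algA_{i}\times\algA_{j}$ is full. Suppose it is not. Since every $\algA_{l}$ is a PC algebra without BACP, in particular it has no nontrivial central subuniverse, so Lemma~\ref{PCRelationsLem} applies to the binary relation $\rho$ with either coordinate taken as the distinguished one; because $\rho$ is subdirect but not full, both applications force the non-distinguished coordinate to be uniquely determined by the other, so $\rho$ is the graph of a bijection $A_{i}\to A_{j}$. But $\rho=\{(a/\sigma_{i},a/\sigma_{j})\mid a\in A\}$ is such a bijection graph only when $\sigma_{i}=\sigma_{j}$, contradicting that the listed congruences are pairwise distinct. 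Now apply Lemma~\ref{PCRelationsLem} to $R$ itself to get $R=\bigwedge_{\ell}\delta_{\ell}(x_{i_{\ell}},x_{j_{\ell}})$; a conjunct with $i_{\ell}=j_{\ell}$ would cut down a coordinate of the subdirect $R$ and is therefore trivial, while a conjunct with $i_{\ell}\neq j_{\ell}$ contains $\proj_{i_{\ell},j_{\ell}}(R)$, which we just showed is full, so every conjunct is full and hence so is $R$.

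For (2) I would argue both inclusions through the definition of a PC subuniverse. If $E$ is a PC subuniverse realized as a block of a congruence $\theta$ with $\algA/\theta\cong\alg D_{1}\times\dots\times\alg D_{s}$, then composing the quotient $\algA\to\algA/\theta$ with the $j$-th projection gives a surjection onto $\alg D_{j}$; its kernel $\tau_{j}$ satisfies $\algA/\tau_{j}\cong\alg D_{j}$, a PC algebra without BACP, so $\tau_{j}=\sigma_{m_{j}}$ for some $m_{j}$, and $\theta=\bigcap_{j}\sigma_{m_{j}}$. Transporting $E$ through the isomorphism shows $E=\{a\mid a/\sigma_{m_{j}}=d_{j}\ \forall j\}=\psi^{-1}(S)$ where $S$ is cut out by the unary constraints $x_{m_{j}}=d_{j}$. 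Conversely, for $S$ given by consistent constraints $x_{m_{1}}=d_{1},\dots,x_{m_{s}}=d_{s}$ with distinct $m_{j}$, set $\theta=\bigcap_{j}\sigma_{m_{j}}$; the surjectivity of $\psi$ from part (1), restricted to the coordinates $m_{1},\dots,m_{s}$, gives $\algA/\theta\cong\algA_{m_{1}}\times\dots\times\algA_{m_{s}}$, a product of PC algebras without BACP, and $\psi^{-1}(S)$ is a nonempty $\theta$-block, hence a PC subuniverse. The degenerate cases $S=\varnothing$ and $S=A_{1}\times\dots\times A_{k}$ give the subuniverses $\varnothing$ and $A$, which are PC by definition.

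The main obstacle is the fullness of $R$ in part (1): it is the only place where Lemma~\ref{PCRelationsLem} is genuinely needed, and the delicate point is that the lemma singles out one coordinate, so one must invoke it several times with different distinguished coordinates and combine this with the distinctness of the $\sigma_{i}$ to exclude bijective binary links between coordinates. Once $\psi$ is known to be surjective, part (2) is routine bookkeeping with congruences, their blocks, and the universal property of direct products.
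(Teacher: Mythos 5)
Your proof is correct and follows essentially the same route as the paper: realize $\psi(A)$ as a subdirect product, invoke Lemma~\ref{PCRelationsLem} to see that non-fullness would force a bijective link between two coordinates and hence $\sigma_{i}=\sigma_{j}$, and for (2) decompose the congruence $\delta$ witnessing a PC subuniverse into congruences that must appear among $\sigma_{1},\dots,\sigma_{k}$. You are somewhat more thorough than the paper in two places -- you handle the distinguished-coordinate asymmetry of Lemma~\ref{PCRelationsLem} by applying it twice to each binary projection, and you also prove the converse inclusion in (2) (that every $\psi^{-1}(S)$ is indeed a PC subuniverse), which the paper leaves implicit -- but the underlying argument is the same.
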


\begin{proof}

(1). Consider 
the image 
$\psi(A)$, which is a subdirect subuniverse 
of $\algA_{1}\times\dots\times \algA_{k}$.
By Lemma~\ref{PCRelationsLem},
this relation can be represented as a conjunction of binary relations whose one coordinate uniquely determines another (in a bijective way).
This means that congruences $\sigma_{i}$ corresponding to 
these coordinates should be equal, which contradicts the definition.
Then $\psi(A)$ is a full relation and $\psi$ is surjective.

(2). 
Suppose $B$ is a PC subuniverse.
If $B$ is empty, then the claim is trivial.
Otherwise, $B$ is a block of a congruence 
$\delta$ such that 
$\algA/\delta\cong \alg D_{1}\times\dots\times \alg D_{s}$,
where each $\alg D_{i}$ is a PC algebra without BACP.
Then for each $i\in[s]$ there exists a congruence 
$\delta_{i}\supseteq \delta$ such that 
$\algA/\delta_{i}\cong \alg D_{i}$
and $\delta_{1}\cap\dots\cap\delta_{s} = \delta$.
Then $\delta_{1},\dots,\delta_{s}$ are among 
$\sigma_{1},\dots,\sigma_{k}$, 
and the PC subuniverse $B$ is defined by fixing 
elements from the corresponding algebras among 
$\algA_{1},\dots,\algA_{k}$.
\end{proof}

\begin{conslem}\label{IntersectionOfPCisPC}
The intersection of two PC subuniverses of $\algA$ is a PC subuniverse of $\algA$.
\end{conslem}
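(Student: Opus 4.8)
The plan is to reduce everything to the explicit description of PC subuniverses provided by Corollary~\ref{PCProperties}. First I would dispose of the degenerate cases built into the definition of a PC subuniverse: if one of the two given PC subuniverses is empty, their intersection is $\varnothing$, which is a PC subuniverse; and if one of them equals $A$, the intersection is the other one, hence again a PC subuniverse. So from that point on I may assume both subuniverses are nonempty and proper.

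Next I would fix the canonical data from Corollary~\ref{PCProperties}: let $\sigma_1,\dots,\sigma_k$ be \emph{all} congruences of $\algA$ whose quotients $\algA_i:=\algA/\sigma_i$ are PC algebras without BACP, and let $\psi\colon A\to A_1\times\dots\times A_k$ be the associated map, which by part~(1) is surjective. By part~(2) of that corollary, each of the two PC subuniverses has the form $\psi^{-1}(S)$, where $S\subseteq A_1\times\dots\times A_k$ is cut out by unary constraints of the form $x_j=a_j$. Writing $B_1=\psi^{-1}(S_1)$ and $B_2=\psi^{-1}(S_2)$, I would observe that $B_1\cap B_2=\psi^{-1}(S_1\cap S_2)$, and that $S_1\cap S_2$ is defined by the conjunction of the unary constraints defining $S_1$ with those defining $S_2$ — hence again by unary constraints of the form $x_j=a_j$. (If these constraints are contradictory, i.e.\ some coordinate is pinned to two distinct values, then $S_1\cap S_2=\varnothing$, so $B_1\cap B_2=\varnothing$, which is again a PC subuniverse.) Applying Corollary~\ref{PCProperties}(2) once more, now in the direction ``such a preimage is a PC subuniverse'', I conclude that $B_1\cap B_2$ is a PC subuniverse of $\algA$.

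I do not expect a genuine obstacle here: all the real content has been absorbed into Corollary~\ref{PCProperties}, which translates PC subuniverses into coordinate-fixing subsets of the fixed product $\algA_1\times\dots\times\algA_k$, and coordinate-fixing subsets are trivially closed under intersection. The two points that need a little care are: using the \emph{same} list $\sigma_1,\dots,\sigma_k$ (equivalently, the same $\psi$) for both $B_1$ and $B_2$, which is legitimate because that list is determined intrinsically by $\algA$ rather than by the subuniverses; and keeping track of the empty-intersection corner case, handled above.
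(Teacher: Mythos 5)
Your argument is correct and is exactly the paper's proof: the paper's own justification of this corollary is a one-line appeal to Corollary~\ref{PCProperties}(2), and your proposal simply spells out the details (the common canonical map $\psi$, closure of coordinate-fixing sets under intersection, and the degenerate/empty cases) that the paper leaves implicit.
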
    

\begin{proof}
The statement follows from characterization of all PC subuniverses of 
$\algA$ in Corollary~\ref{PCProperties}(2).
\end{proof}

\begin{conslem}\label{EmptyPCIntersection}
Suppose $B_{i}\le_{PC} \algA$ for $i\in[n]$ and 
$B_{1}\cap\dots\cap B_{n}=\varnothing$.
Then there exist $i,j\in[n]$ such that $B_{i}\cap B_{j}=\varnothing$.
\end{conslem}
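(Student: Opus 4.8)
The plan is to reduce the statement entirely to the explicit description of PC subuniverses given in Corollary~\ref{PCProperties}. First I would invoke that corollary to fix the congruences $\sigma_{1},\dots,\sigma_{k}$ on $\algA$ for which $A_{\ell}:=\algA/\sigma_{\ell}$ is a PC algebra without BACP, together with the surjection $\psi\colon A\to A_{1}\times\dots\times A_{k}$, $\psi(a)=(a/\sigma_{1},\dots,a/\sigma_{k})$. By part~(2) of that corollary every nonempty PC subuniverse is $\psi^{-1}$ of a ``subcube'' of $A_{1}\times\dots\times A_{k}$, i.e.\ a set of tuples cut out by fixing the values of some of the coordinates. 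If one of the $B_{i}$ is already empty there is nothing to prove, so I assume each $B_{i}$ is nonempty.

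Next I would translate the hypothesis across $\psi$. Writing $B_{i}=\psi^{-1}(S_{i})$ with $S_{i}=\{(y_{1},\dots,y_{k})\mid y_{\ell}=a^{i}_{\ell}\text{ for }\ell\in J_{i}\}$ for a suitable $J_{i}\subseteq[k]$ and values $a^{i}_{\ell}\in A_{\ell}$ (obtained from the unary constraints of Corollary~\ref{PCProperties}(2) after discarding redundant or contradictory ones, which is legitimate since $S_{i}\neq\varnothing$), surjectivity of $\psi$ gives $\bigcap_{i}B_{i}=\psi^{-1}(\bigcap_{i}S_{i})$, so $\bigcap_{i}B_{i}=\varnothing$ exactly when $\bigcap_{i}S_{i}=\varnothing$. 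The intersection of subcubes is empty precisely when two of them disagree on some fixed coordinate: there are $\ell\in[k]$ and $i\neq j$ with $\ell\in J_{i}\cap J_{j}$ but $a^{i}_{\ell}\neq a^{j}_{\ell}$. For that very pair, $S_{i}\cap S_{j}=\varnothing$ already, hence $B_{i}\cap B_{j}=\psi^{-1}(S_{i}\cap S_{j})=\varnothing$, which is what we want.

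The only real content is the characterization in Corollary~\ref{PCProperties}(2), which has already been established; once we are working inside the product $A_{1}\times\dots\times A_{k}$ the argument is just the elementary observation that a family of subcubes has empty intersection if and only if some two members of the family conflict on a single coordinate. So I do not anticipate a genuine obstacle here — the one step that should be done carefully is making sure a nonempty PC subuniverse can indeed be written as $\psi^{-1}$ of a box with at most one fixed value per coordinate, and that $\psi^{-1}$ commutes with intersections, both of which follow immediately from surjectivity of $\psi$.
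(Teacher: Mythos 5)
Your proposal is correct and follows essentially the same route as the paper: both reduce to Corollary~\ref{PCProperties}(2), write each nonempty $B_{i}$ as $\psi^{-1}(S_{i})$ for a subcube $S_{i}$ cut out by unary constraints, and observe that an empty intersection of such subcubes forces two of them to conflict on a single coordinate. Your extra care about discarding redundant constraints and about $\psi^{-1}$ commuting with intersections is a harmless elaboration of the same argument.
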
    

\begin{proof}
Using claim (2) of Corollary \ref{PCProperties}, we derive that
each $B_{i}=\psi^{-1}(S_{i})$, 
where $S_{i}$ is definable by unary constraints of the form $x_{j} = a_{j}$.
Since $B_{1}\cap\dots\cap B_{n}=\varnothing$, 
two of these unary constraints contradict each other. 
Considering $S_{i}$ and $S_{j}$ giving these two constraints, 
we obtain $B_{i}$ and $B_{j}$ such that $B_{i}\cap B_{j} = \varnothing$.
\end{proof}

\begin{lem}\label{PCrestrictionOfRel}
Suppose $\alg R\le_{sd} \alg A_{1}\times\dots\times\algA_{n}$, 
$B_{i}\le_{PC} \algA_{i}$ for every $i\in[n]$. 
Then 
$(R\cap (B_{1}\times\dots\times B_{n}))\le_{PC}\alg R$.
\end{lem}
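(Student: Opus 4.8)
The plan is to reduce, using the fact that intersections of PC subuniverses are again PC subuniverses, to the case in which a single coordinate is restricted by a single congruence, and then to exhibit that restriction as a block of an explicit congruence on $\alg R$. First I would dispose of the degenerate cases: if $B_i=\varnothing$ for some $i$, then $R\cap(B_1\times\dots\times B_n)=\varnothing$ is a PC subuniverse by definition; if $B_i=A_i$ for every $i$, the intersection equals $R$, again a PC subuniverse by definition. So from now on assume that every $B_i$ is nonempty, and let $I=\{i\in[n]\mid B_i\neq A_i\}$, so that the coordinates outside $I$ impose no constraint.

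For the reduction step I would invoke Corollary~\ref{PCProperties}: for $i\in I$ the PC subuniverse $B_i$ can be written as $\bigcap_{j\in J_i}B_{i,j}$, where each $B_{i,j}$ is a block of a congruence $\sigma_{i,j}$ on $\algA_i$ whose quotient $\algA_i/\sigma_{i,j}$ is a PC algebra without BACP; in particular each $B_{i,j}$ is itself a PC subuniverse of $\algA_i$. Setting $C_{i,j}:=A_1\times\dots\times A_{i-1}\times B_{i,j}\times A_{i+1}\times\dots\times A_n$, we obtain
\[
R\cap(B_1\times\dots\times B_n)=\bigcap_{i\in I,\ j\in J_i}\bigl(R\cap C_{i,j}\bigr).
\]
Since the intersection of two PC subuniverses of $\alg R$ is again a PC subuniverse of $\alg R$ (Corollary~\ref{IntersectionOfPCisPC}), it suffices to show that each $R\cap C_{i,j}$ is a PC subuniverse of $\alg R$; if $I=\varnothing$, the lemma is already proved.

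For the core step, fix $i\in I$ and $j\in J_i$, and abbreviate $\sigma:=\sigma_{i,j}$, $B:=B_{i,j}$, $C:=C_{i,j}$. Because $\alg R\le_{sd}\algA_1\times\dots\times\algA_n$, the projection $\pi_i\colon\alg R\to\algA_i$ is a surjective homomorphism, so
\[
\theta:=\{(\alpha,\beta)\in R^2\mid(\alpha(i),\beta(i))\in\sigma\}
\]
is a congruence on $\alg R$, and $R\cap C=\{\alpha\in R\mid\alpha(i)\in B\}$ is precisely the $\theta$-block lying over $B$ (nonempty, since $\pi_i(R)=A_i\supseteq B\neq\varnothing$). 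The map $\alpha/\theta\mapsto\alpha(i)/\sigma$ is a well-defined isomorphism $\alg R/\theta\to\algA_i/\sigma$, so $\alg R/\theta\cong\algA_i/\sigma$ is a PC algebra without BACP; viewing this as a product with a single factor, the definition of a PC subuniverse yields $R\cap C\le_{PC}\alg R$, which finishes the argument.

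I do not expect a genuinely hard step here; the only place that needs care is the reduction, namely checking that a PC subuniverse of $\algA_i$ really is an intersection of blocks of the maximal PC-without-BACP quotients and that each such block is again a PC subuniverse — but this is exactly the content of Corollary~\ref{PCProperties}. The remainder is the routine observation that pulling a congruence back along a surjective projection preserves the quotient algebra.
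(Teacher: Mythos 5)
Your proposal is correct and follows essentially the same route as the paper's proof: pull the defining congruence of each $B_i$ back along the subdirect projection $\proj_i$ to a congruence on $\alg R$ with isomorphic quotient, recognize the corresponding block as a PC subuniverse of $\alg R$, and conclude with Corollary~\ref{IntersectionOfPCisPC}. The only cosmetic difference is that you first split each $B_i$ into blocks of the individual factor congruences via Corollary~\ref{PCProperties}(2), whereas the paper pulls back the whole product congruence $\sigma_i$ at once.
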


\begin{proof}
If some $B_{i}$ is empty, then the claim is trivial.
Otherwise, each $B_{i}$ is a block of a congruence $\sigma_{i}$ on $\algA_{i}$ (if $B_{i}=A_{i}$ then this congruence is $A_{i}^2$).
Let $\sigma_{i}'$ be $\sigma_{i}$ naturally extended on 
$R$, that is, 
two tuples from $R$ are equivalent if their 
$i$-th coordinates are equivalent modulo $\sigma_{i}$.
Let $E_{i}$ be the block of $\sigma_{i}'$ corresponding to $B_{i}$.
Since $R$ is subdirect, 
$\alg R/\sigma_{i}' \cong \algA_{i}/\sigma_{i}$.
Therefore, $E_{i}$ is a PC subuniverse of $\alg R$.
Note that 
$(R\cap (B_{1}\times\dots\times B_{n}))=E_{1}\cap\dots\cap E_{n}$.
By Corollary \ref{IntersectionOfPCisPC}, 
$E_{1}\cap\dots\cap E_{n}$ is a PC subuniverse.
\end{proof}

\begin{lem}\label{PCImpliesForTwo}
Suppose $R \le_{sd} \alg A_1\times\alg A_2$,
$\algA_1$ has no nontrivial central subuniverses,
$B_2$ is a PC subuniverse of $\alg A_2$,
and 
$B_1 = \proj_{1}(R\cap(A_{1}\times B_{2}))$.
Then $B_1$ is a PC subuniverse of $\algA_{1}$.
\end{lem}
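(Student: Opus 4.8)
The plan is to move the problem to a quotient of $\algA_{2}$ where Lemma~\ref{PCRelationsLem} applies. If $B_{2}=A_{2}$ then $B_{1}=\proj_{1}(R)=A_{1}$ by subdirectness, and if $B_{2}=\varnothing$ then $B_{1}=\varnothing$; both are PC subuniverses, so I may assume $B_{2}$ is a block of a congruence $\sigma_{2}$ on $\algA_{2}$ with $\algA_{2}/\sigma_{2}\cong\alg D_{1}\times\dots\times\alg D_{s}$, each $\alg D_{i}$ a PC algebra without BACP. Let $q\colon\algA_{2}\to\alg D_{1}\times\dots\times\alg D_{s}$ be the corresponding surjective homomorphism and put $R'=(\mathrm{id}\times q)(R)$, a subuniverse of $\algA_{1}\times\alg D_{1}\times\dots\times\alg D_{s}$; since $R$ is subdirect and $q$ surjective, $R'\le_{sd}\algA_{1}\times\alg D_{1}\times\dots\times\alg D_{s}$, with coordinate $1$ being the $\algA_{1}$-coordinate. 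If $(d_{1},\dots,d_{s})$ is the tuple corresponding to $B_{2}$ under the identification $\algA_{2}/\sigma_{2}\cong\prod_{i}\alg D_{i}$, then unfolding the definitions gives $B_{1}=\proj_{1}\bigl(R'\cap(A_{1}\times\{d_{1}\}\times\dots\times\{d_{s}\})\bigr)$.

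Next I would apply Lemma~\ref{PCRelationsLem} to $R'$, the distinguished coordinate being coordinate $1$ (which, by hypothesis, has no nontrivial central subuniverse) and the others being the $\alg D_{i}$'s. This writes $R'$ as a conjunction of binary constraints $\delta_{\ell}(x_{i_{\ell}},x_{j_{\ell}})$ on distinct coordinates in which every variable sitting on a $\alg D$-coordinate is uniquely-determined. Now substitute the tuple $(d_{1},\dots,d_{s})$ on the $\alg D$-coordinates. A constraint between two $\alg D$-coordinates becomes a closed condition that is automatically satisfied, because $B_{2}\ne\varnothing$ together with subdirectness of $R$ places $(d_{1},\dots,d_{s})$ in the projection of $R'$ onto its $\alg D$-coordinates; such constraints therefore disappear. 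Each remaining constraint $\delta_{\ell}$ joins coordinate $1$ to some $\alg D_{j}$-coordinate, and contributes the fiber $P_{\ell}$ of $d_{j}$ in the appropriate coordinate of $\delta_{\ell}$, so that $B_{1}=\bigcap_{\ell}P_{\ell}$ (and $B_{1}=A_{1}$ if no such constraint is present).

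It remains to see that each $P_{\ell}$ is a PC subuniverse of $\algA_{1}$, which is the crux. Since $\delta_{\ell}$ contains the subdirect binary projection of $R'$ onto the two coordinates it uses, $\delta_{\ell}$ is itself subdirect in $\algA_{1}\times\alg D_{j}$; combined with unique determination of the $\alg D_{j}$-coordinate, this forces $\delta_{\ell}$ to be the graph of a surjective homomorphism $g_{\ell}\colon\algA_{1}\to\alg D_{j}$. Then $\theta_{\ell}:=\ker g_{\ell}$ is a congruence of $\algA_{1}$ with $\algA_{1}/\theta_{\ell}\cong\alg D_{j}$, a PC algebra without BACP, and $P_{\ell}=g_{\ell}^{-1}(d_{j})$ is one of its blocks, hence a PC subuniverse of $\algA_{1}$ by definition. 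Finally $B_{1}=\bigcap_{\ell}P_{\ell}$ is a PC subuniverse by Corollary~\ref{IntersectionOfPCisPC}. The main obstacle is the bookkeeping after the substitution: identifying exactly which constraints survive and recognizing that each survivor is a congruence-block constraint; once that is clear, closure of PC subuniverses under finite intersection finishes the proof.
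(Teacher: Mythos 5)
Your proposal is correct and follows essentially the same route as the paper: pass to $R'=(\mathrm{id}\times\psi)(R)$ over $\algA_{1}\times\alg D_{1}\times\dots\times\alg D_{s}$, decompose it via Lemma~\ref{PCRelationsLem}, fix the $\alg D$-coordinates to $\psi(B_{2})$, and recognize $B_{1}$ as an intersection of blocks of congruences with PC quotients, closed under intersection by Corollary~\ref{IntersectionOfPCisPC}. Your graph-of-a-surjective-homomorphism argument is just a spelled-out version of the step the paper states tersely ("each constraint from the first coordinate to an $i$-th coordinate is as follows\dots").
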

\begin{proof}
By the definition of 
a PC subuniverse, 
$B_2$ is a block of a congruence 
$\sigma_2$ 
such that 
$\algA_{2}/\sigma_{2}\cong \alg D_{1}\times\dots\times\alg D_{k}$
for 
PC algebras $\alg D_1,\dots,\alg D_{k}$.
Consider the natural mapping 
$\psi:A_{2}\to D_{1}\times \dots\times D_{k}$
and the relation 
$R'\le \alg A_{1}\times \alg D_{1}\times\dots\times\alg D_{k}$ 
defined by 
$R'=\{(a_{1},\psi(a_{2}))\mid (a_1,a_2)\in R\}$.
Since $R'$ is subdirect, 
by Lemma~\ref{PCRelationsLem}
it can be represented 
by binary constraints from the first coordinate
to an $i$-th coordinate such that 
the $i$-th coordinate is uniquely-determined by the first,
and by bijective binary constraints between pairs of coordinates
other than first.
Note that 
each constraint from the first coordinate 
to an $i$-th coordinate is as follows.
There exists a congruence $\sigma$ on $\algA_{1}$ 
such that $\algA_{1}/\sigma$ is a PC algebra isomorphic to 
$\alg D_{i-1}$, then the constraint  
assigns to all elements of each block of $\sigma$ 
the corresponding element of $D_{i-1}$.

Suppose $\psi(B_{2}) = \{(b_{1},\dots,b_{k})\}$.
To define $B_{1}$ we need to fix all the coordinates of 
$R'$ other then first to $b_{1},\dots,b_{k}$.
Then $B_{1}$ is either empty, or an
intersection of blocks of congruences $\sigma$ 
such that $\algA_{1}/\sigma$ is a PC algebra without BACP.
Since each such block is a PC subuniverse,
by Corollary~\ref{IntersectionOfPCisPC}, 
$B_1$ is a PC subuniverse.
\end{proof}

\begin{lem}\label{PCImplies}
Suppose $R \le_{sd} \alg A_{1}\times\dots\times \alg A_{n}$,
$\algA_{1}$ has no nontrivial central subuniverses, 
$B_{i}$ is a PC subuniverse of $\algA_{i}$ for every $i\in[n]$,
and 
$B = \proj_{1}(R\cap(B_{1}\times\dots \times B_{n}))$.
Then $B$ is a PC subuniverse of $\algA_{1}$.
\end{lem}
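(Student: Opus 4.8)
The plan is to reduce the statement to the binary case treated in Lemma~\ref{PCImpliesForTwo} by collapsing the coordinates $2,\dots,n$ into a single one, and then to present $B$ as an intersection of two PC subuniverses of $\alg A_1$ so that Corollary~\ref{IntersectionOfPCisPC} applies. First I would dispose of the degenerate situations: if $n=1$ then $B=B_1$ and there is nothing to prove, and if $R\cap(B_1\times\dots\times B_n)=\varnothing$ then $B=\varnothing$, which is a PC subuniverse by definition; so from now on assume $n\ge 2$.

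Next, set $\alg A'':=\proj_{2,\dots,n}(\alg R)$. Since $R$ is subdirect, $\proj_i(\alg A'')=A_i$ for every $i\in\{2,\dots,n\}$, hence $\alg A''\sd\alg A_2\times\dots\times\alg A_n$, and moreover $R\sd\alg A_1\times\alg A''$ once $R$ is viewed as a subset of $A_1\times A''$. Applying Lemma~\ref{PCrestrictionOfRel} to $\alg A''$ and the PC subuniverses $B_2,\dots,B_n$ shows that $C'':=A''\cap(B_2\times\dots\times B_n)$ is a PC subuniverse of $\alg A''$. Now Lemma~\ref{PCImpliesForTwo}, applied to the subdirect relation $R\sd\alg A_1\times\alg A''$ --- using that $\alg A_1$ has no nontrivial central subuniverse --- and to the PC subuniverse $C''$ of $\alg A''$, yields that $B_1':=\proj_1(R\cap(A_1\times C''))$ is a PC subuniverse of $\alg A_1$. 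Since every tuple of $R$ already lies in $A_1\times A''$, we have $R\cap(A_1\times C'')=R\cap(A_1\times B_2\times\dots\times B_n)$, so in fact $B_1'=\proj_1(R\cap(A_1\times B_2\times\dots\times B_n))$.

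Finally I would check the identity $B=B_1\cap B_1'$. The inclusion $B\subseteq B_1\cap B_1'$ is immediate from the definitions; conversely, if $a\in B_1\cap B_1'$ then $a\in B_1'$ gives a tuple $(b_2,\dots,b_n)\in B_2\times\dots\times B_n$ with $(a,b_2,\dots,b_n)\in R$, and since $a\in B_1$ this tuple lies in $R\cap(B_1\times\dots\times B_n)$, whence $a\in B$. Thus $B$ is the intersection of the two PC subuniverses $B_1$ and $B_1'$ of $\alg A_1$, and Corollary~\ref{IntersectionOfPCisPC} finishes the argument. The step I expect to require the most care is the passage to $\alg A''=\proj_{2,\dots,n}(\alg R)$: one cannot feed $\alg A_1\times(\alg A_2\times\dots\times\alg A_n)$ directly into Lemma~\ref{PCImpliesForTwo}, because $R$ need not be subdirect there --- its projection onto the last $n-1$ coordinates may be proper --- and it is precisely Lemma~\ref{PCrestrictionOfRel} that guarantees $B_2\times\dots\times B_n$ still cuts out a PC subuniverse of $\alg A''$. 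The remaining bookkeeping (the equality $R\cap(A_1\times C'')=R\cap(A_1\times B_2\times\dots\times B_n)$ and $B=B_1\cap B_1'$) is routine.
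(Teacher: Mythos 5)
Your proposal is correct and follows essentially the same route as the paper: view $R$ as a binary subdirect relation over $\alg A_1\times\proj_{2,\dots,n}(\alg R)$, apply Lemma~\ref{PCrestrictionOfRel} and then Lemma~\ref{PCImpliesForTwo}, and finish by writing $B$ as the intersection of $B_1$ with the resulting PC subuniverse via Corollary~\ref{IntersectionOfPCisPC}. Your write-up is in fact slightly more careful than the paper's, since it verifies the identity $B=B_1\cap B_1'$ explicitly.
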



\begin{proof}
Let 
$\proj_{2,\dots,n}(R) = D$
and $C=D\cap (B_{2}\times\dots\times B_n)$.
The relation $R$ can be viewed as a binary subdirect relation 
$R\le_{sd} \alg A_{1}\times \alg D$.
By Lemma~\ref{PCrestrictionOfRel},
$C\le_{PC} \alg D$.
By Lemma~\ref{PCImpliesForTwo},
$\proj_{1}(R\cap (A_{1}\times C))$ is a PC 
subuniverse of $\algA_{1}$.
Then $B$ is the intersection of 
$\proj_{1}(R\cap (A_{1}\times C))$ and $B_1$, 
and also a PC subuniverse by Corollary~\ref{IntersectionOfPCisPC}.
\end{proof}

\begin{lem}\label{PCLessThanThree}
Suppose
$B_{i}$ is a PC subuniverse of $\algA_{i}$
for $i\in[n]$, $n\ge 3$.
Then there does not exist
a $(B_{1},\dots,B_{n})$-essential relation 
$R\le_{sd} \algA_{1}\times \dots \times \algA_{n}$.
\end{lem}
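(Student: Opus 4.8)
The plan is to reduce to the case where every $\algA_i$ is a direct product of PC algebras without BACP and every $B_i$ is a one‑element set, and then to read off the structure of $R$ from Lemma~\ref{PCRelationsLem}. For the reduction, assume $\alg R\le_{sd}\algA_1\times\dots\times\algA_n$ is $(B_1,\dots,B_n)$-essential with $n\ge 3$. First note that no $\algA_i$ is trivial, no $B_i$ equals $A_i$, and no $B_i$ is empty: since $n\ge 2$, the $i$-th ``essential slice'' condition would otherwise coincide with $R\cap(B_1\times\dots\times B_n)=\varnothing$. So each $B_i$ is a block of a congruence $\sigma_i$ on $\algA_i$ with $\algA_i/\sigma_i\cong\alg D_{i,1}\times\dots\times\alg D_{i,k_i}$, the $\alg D_{i,j}$ being PC algebras without BACP. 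I would then pass to $\bar R:=R/(\sigma_1\times\dots\times\sigma_n)\le_{sd}\prod_i(\algA_i/\sigma_i)$: a tuple of $R$ lies in $B_1\times\dots\times B_n$ (respectively in $B_1\times\dots\times A_i\times\dots\times B_n$) exactly when its $\sigma$-image lies in $\{e_1\}\times\dots\times\{e_n\}$ (respectively $\{e_1\}\times\dots\times(\algA_i/\sigma_i)\times\dots\times\{e_n\}$), where $e_i=B_i/\sigma_i$ is a single element, so $\bar R$ is still essential of the same type. Hence I may assume $\algA_i=\alg D_{i,1}\times\dots\times\alg D_{i,k_i}$ and, identifying $e_i$ with $(e_{i,1},\dots,e_{i,k_i})$, that $B_i$ is the set of tuples whose $\alg D_{i,j}$-coordinate equals $e_{i,j}$ for all $j$.

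Next I would analyse the structure of $R$ viewed as a subdirect subuniverse of the $m$-fold product $\prod_{(i,j)}\alg D_{i,j}$, where $m=\sum_i k_i\ge n\ge 3$ and every factor is a PC algebra without BACP, hence has no nontrivial central subuniverse. By Lemma~\ref{PCRelationsLem}, applied with any single coordinate in the role of the ``first'' one (its full-projective subcase being impossible once $m\ge 3$), $R$ is a conjunction of binary constraints between $\alg D$-coordinates; discarding the full constraints and re-running the lemma with a third coordinate in the ``first'' role shows each surviving constraint is uniquely determined in both coordinates, i.e.\ the graph of an isomorphism between two of the $\alg D_{i,j}$. Therefore the $\alg D$-coordinates split into connected components, inside each component any two coordinates are linked by a canonical isomorphism (consistent around cycles, since $R$ is subdirect), and $R$ is exactly the set of tuples that agree component-by-component under these isomorphisms; moreover two coordinates with the same super-index $i$ cannot lie in one component, because $\proj_i(R)=\algA_i$ makes their joint projection full.

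To finish, I would argue as follows. Since $R\cap(\{e_1\}\times\dots\times\{e_n\})=\varnothing$, some component $K_0$ is not $e$-consistent, i.e.\ the fixed values on its coordinates are not all matched by the connecting isomorphisms. For each $i$, the $i$-th essential slice being nonempty forces every component to become $e$-consistent once the super-$i$ coordinates are freed; so $K_0$ must contain a coordinate from super-$i$, hence exactly one, for every $i$. Thus $K_0=\{(1,j_1),\dots,(n,j_n)\}$, the values $e_{1,j_1},\dots,e_{n,j_n}$ are not all matched, yet for each $i$ the remaining $n-1$ of them are matched. But ``being matched'' is an equivalence relation on these $n$ values, and for $n\ge 3$ the $n$ subsets $[n]\setminus\{i\}$ pairwise intersect, so they all lie in one class, contradicting that the full $n$-tuple is unmatched. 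Hence no such $R$ exists.

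The step I expect to be the main obstacle is the structural analysis: extracting cleanly from Lemma~\ref{PCRelationsLem} that for $m\ge 3$ there is no full-projective part and that every surviving binary constraint is an isomorphism, and then organising the ``component/consistency'' bookkeeping precisely. The quotient reduction is routine but must be carried out carefully so that essentiality is genuinely transported to $\bar R$.
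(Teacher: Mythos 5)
Your argument is correct in outline, but it takes a much longer route than the paper, which disposes of the lemma in three lines: by Lemma~\ref{PCrestrictionOfRel} each slice $R_{i}:=R\cap(A_{1}\times\dots\times B_{i}\times\dots\times A_{n})$ is a PC subuniverse of $\alg R$ (here one uses that each $A_{j}$ is itself a PC subuniverse of $\algA_{j}$); essentiality says $R_{1}\cap\dots\cap R_{n}=\varnothing$ while, since $n\ge 3$, every pairwise intersection $R_{i}\cap R_{j}$ contains some nonempty $\bigcap_{l\neq k}R_{l}$; and this contradicts the Helly-type property of PC subuniverses (Corollary~\ref{EmptyPCIntersection}, which rests on the coordinatewise description in Corollary~\ref{PCProperties}(2)). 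What you do instead is re-derive that Helly property from scratch inside $R$: you quotient so that each $B_{i}$ becomes a point, decompose $R$ over the refined product $\prod_{(i,j)}\alg D_{i,j}$ via Lemma~\ref{PCRelationsLem}, and run the component/consistency bookkeeping. This buys nothing over the paper's proof except independence from Corollary~\ref{EmptyPCIntersection}, at the cost of redoing essentially the same structure theory; the step you rightly flag as delicate (every surviving binary constraint is bijective) is fixable by replacing each conjunct $\delta_{\ell}$ with the binary projection $\proj_{i_{\ell},j_{\ell}}(R)$ (the conjunction is unchanged since $R\subseteq\bigwedge_{\ell}\proj_{i_{\ell},j_{\ell}}(R)\subseteq\bigwedge_{\ell}\delta_{\ell}=R$) and applying Lemma~\ref{PCRelationsLem} to that binary subdirect relation with either coordinate in the distinguished role, which shows it is full or the graph of a bijection. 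The quotient reduction and the final ``two slices with $n\ge 3$ cover all coordinates'' argument are sound.
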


\begin{proof}
Assume the contrary. By 
Lemma \ref{PCrestrictionOfRel}, 
each relation $R_{i}:= R\cap (A_{1}\times\dots\times A_{i-1}\times B_{i}\times
A_{i+1}\times\dots\times A_{n})$ is a PC subuniverse of $\alg R$.
Since $R$ is 
$(B_{1},\dots,B_{n})$-essential, 
$R_{1}\cap\dots\cap R_{n} = \varnothing$ 
and $R_{i}\cap R_{j}\neq \varnothing$ for all $i,j\in [n]$.
This contradicts 
Corollary \ref{EmptyPCIntersection}.
\end{proof}


\subsection{Common properties}

In this section we will prove statements formulated in 
Section~\ref{StrongSubalgebrasSection}.

\begin{lem}\label{GetBACenterFromPower}
Suppose $R$ is a nontrivial strong subuniverse of $\algA_{1}\times \dots\times \algA_{n}$ of type $\mathcal T\neq PC$.
Then there exists $i\in[n]$ such that 
$\algA_{i}$ has a nontrivial subuniverse of type $\mathcal T$.
\end{lem}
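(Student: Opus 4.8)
The plan is to argue by induction on $n$. For $n=1$ there is nothing to do: $R$ itself is then a nontrivial subuniverse of $\algA_1$ of type $\mathcal T$. So assume $n\ge 2$ and that the statement holds for fewer than $n$ factors.

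First I would dispose of the non-subdirect case. If $\proj_j(R)\neq A_j$ for some $j\in[n]$, then $\proj_j(R)$ is a nonempty proper subuniverse of $\algA_j$, i.e.\ nontrivial, and I claim it has type $\mathcal T$. If $\mathcal T=BA(t)$, this is because applying $\proj_j$ coordinatewise commutes with the action of $t$, so $t(R,A_1\times\dots\times A_n)\subseteq R$ and $t(A_1\times\dots\times A_n,R)\subseteq R$ project to $t(\proj_j(R),A_j)\subseteq\proj_j(R)$ and $t(A_j,\proj_j(R))\subseteq\proj_j(R)$. If $\mathcal T=C$, I would apply Corollary~\ref{CenterProjection} with the ambient relation taken to be the full product $\algA_1\times\dots\times\algA_n$, $\alg C=R$, and $I=\{j\}$, which gives $\proj_j(R)\le_C\algA_j$. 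Either way we are done with $i=j$.

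Now suppose $R$ is subdirect. Since $R$ is nontrivial it is proper, so fix a tuple $(b_1,\dots,b_n)\notin R$ and put $R'=\{(a_2,\dots,a_n)\mid (b_1,a_2,\dots,a_n)\in R\}\le\algA_2\times\dots\times\algA_n$. This is nonempty because $b_1\in\proj_1(R)=A_1$, and proper because $(b_2,\dots,b_n)\notin R'$, so $R'$ is nontrivial. Since $\algA_1$ is idempotent, $\{b_1\}$ is a subuniverse of $\algA_1$, so $R'$ has the pp-definition
$$R'(x_2,\dots,x_n)=\exists x_1\;(x_1\in\{b_1\})\wedge R(x_1,x_2,\dots,x_n).$$
Replacing in this formula the relation $R$ by the full product $A_1\times\dots\times A_n$ (and $\{b_1\}$ by itself) yields a pp-definition of the full relation $A_2\times\dots\times A_n$; and $R\le_{\mathcal T}\algA_1\times\dots\times\algA_n$ by hypothesis, while $\{b_1\}$ is trivially a subuniverse of type $\mathcal T$ of the one-element algebra on $\{b_1\}$ (for $BA(t)$ the term $t$ acts as the identity on a singleton; for $C$ the extra condition is vacuous). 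Hence Lemma~\ref{AbsImplies} (when $\mathcal T=BA(t)$) or Theorem~\ref{CenterImplies} (when $\mathcal T=C$) gives $R'\le_{\mathcal T}\algA_2\times\dots\times\algA_n$. Thus $R'$ is a nontrivial strong subuniverse of $\algA_2\times\dots\times\algA_n$ of type $\mathcal T$, and the induction hypothesis produces $i\in\{2,\dots,n\}\subseteq[n]$ such that $\algA_i$ has a nontrivial subuniverse of type $\mathcal T$.

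I expect the only delicate point to be the bookkeeping in the subdirect case: setting up the pp-definition so that the ``replace each relation by its strong subuniverse of type $\mathcal T$'' hypothesis of Lemma~\ref{AbsImplies}/Theorem~\ref{CenterImplies} is literally satisfied, in particular justifying that the constant unary relation $\{b_1\}$ counts as such a subuniverse of itself. Everything else — the projection computations, nonemptiness and properness of $R'$, and the reduction of the arity — is routine.
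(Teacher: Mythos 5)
Your proof is correct and follows essentially the same route as the paper's: induction on $n$, with the case $\proj_j(R)\neq A_j$ handled by projection and the remaining case handled by fixing one coordinate to an element $b_1$ (a singleton subuniverse, by idempotency) and applying Lemma~\ref{AbsImplies}/Theorem~\ref{CenterImplies} to the resulting pp-definition before invoking the induction hypothesis. The only cosmetic difference is that you split on subdirectness of $R$ while the paper splits only on whether $\proj_1(R)=A_1$; both versions of the bookkeeping work.
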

\begin{proof}
We prove by induction on the arity of $R$. For $n=1$ it is trivial.
If 
$\proj_{1}(R)\neq A_{1}$ then
by Lemmas~\ref{AbsImplies} and \ref{CenterImplies} this projection is a subuniverse of $\algA_1$ of type 
$\mathcal T$.

Otherwise, we choose any element $a_1\in A_{1}$
such that $R$ does not contain all tuples starting with $a_1$.
Then we consider $R' = \{(a_2,\ldots,a_{n})\mid (a_1,a_2,\ldots,a_n)\in R\}$, which, by Lemmas~\ref{AbsImplies} and \ref{CenterImplies}, is
a nontrivial subuniverse of $\algA_{2}\times\dots \times \algA_{n}$ 
of type $\mathcal T$.
It remains to apply the inductive assumption.
\end{proof}




\begin{LEMCBTNonAbsorbing}
Suppose $B$ is a nontrivial projective subuniverse of a finite idempotent algebra $\algA$, 
and $B$ is not a binary absorbing subuniverse.
Then there exists an essentially unary algebra 
$\alg U\in \HS(\algA)$ of size at least 2.
\end{LEMCBTNonAbsorbing}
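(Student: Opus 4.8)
The plan is to build an essentially unary algebra of size $2$ inside $\HSP(\algA)$ and then pull it down to $\HS(\algA)$ using Corollary~\ref{CorEssUnary}. Throughout write $D=A\setminus B$; since $B$ is nontrivial, $D\neq\varnothing$ and $A=B\sqcup D$.

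First, since $B$ is not a binary absorbing subuniverse, Lemma~\ref{NoEssential} applied with $n=2$ yields a $B$-essential relation $R^{\ast}\le\algA^{2}$, that is, $R^{\ast}\cap B^{2}=\varnothing$ while $R^{\ast}\cap(B\times A)\neq\varnothing$ and $R^{\ast}\cap(A\times B)\neq\varnothing$. Next I would normalize $R^{\ast}$: since $B$ is projective, Lemma~\ref{ProjectiveSubCharacterization} gives $A^{2}\setminus D^{2}\in\Inv(\algA)$, so $R:=R^{\ast}\cap(A^{2}\setminus D^{2})$ is again a subuniverse of $\algA^{2}$. A short check shows $R$ is still $B$-essential (the witnessing tuples of $R^{\ast}$ in $B\times A$ and $A\times B$ automatically avoid $D^{2}$, and $R\cap B^{2}\subseteq R^{\ast}\cap B^{2}=\varnothing$), and now in addition $R\cap D^{2}=\varnothing$. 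Hence every tuple of $R$ has exactly one coordinate in $B$ and one in $D$, so $R=R_{0}\sqcup R_{1}$ where $R_{0}=R\cap(B\times A)$ and $R_{1}=R\cap(A\times B)$, and both $R_{0},R_{1}$ are nonempty by $B$-essentiality.

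The core step is to show that $\sigma:=(R_{0}\times R_{0})\cup(R_{1}\times R_{1})$ is a congruence of $\alg R\le\algA^{2}$ whose quotient is essentially unary. Given an $m$-ary basic operation $f$ of $\algA$, projectivity supplies a coordinate $i_{f}$ with $f(A,\dots,A,B,A,\dots,A)\subseteq B$, the $B$ standing in place $i_{f}$. For $\gamma_{1},\dots,\gamma_{m}\in R$ the tuple $f(\gamma_{1},\dots,\gamma_{m})$ again lies in $R$; if $\gamma_{i_{f}}\in R_{0}$ its first coordinate lies in $B$, forcing $f(\gamma_{1},\dots,\gamma_{m})\in R_{0}$, and if $\gamma_{i_{f}}\in R_{1}$ its second coordinate lies in $B$, forcing $f(\gamma_{1},\dots,\gamma_{m})\in R_{1}$. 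Thus the block of $f(\gamma_{1},\dots,\gamma_{m})$ depends only on the block of $\gamma_{i_{f}}$, so $f$ preserves $\sigma$ and, identifying $R_{0},R_{1}$ with $0,1$, the induced operation $f/\sigma$ is exactly the projection $\proj_{i_{f}}$. Consequently every term operation of $\alg R/\sigma$ is a projection, so $\alg R/\sigma$ is essentially unary of size $2$. Since $\alg R/\sigma\in\HS(\algA^{2})\subseteq\HSP(\algA)$, Corollary~\ref{CorEssUnary} produces the required essentially unary $\alg U\in\HS(\algA)$ with $|U|\ge 2$.

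The only genuinely non-formal ingredient — the step I expect to be the crux — is the normalization together with the ensuing observation: once $R$ is confined to $(B\times D)\cup(D\times B)$, the projective coordinate $i_{f}$ of each basic operation completely determines which of the two "sides" the image of a tuple from $R$ lands on, depending on nothing but the side of its $i_{f}$-th argument. Everything after that (that $\sigma$ is a congruence, that the quotient collapses to projections, that it descends from $\HSP(\algA)$ to $\HS(\algA)$) is bookkeeping using the lemmas already proved.
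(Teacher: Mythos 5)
Your proof is correct. The key idea --- splitting a binary $B$-essential relation into the block whose first coordinate lies in $B$ and the block whose second coordinate lies in $B$, and observing that projectivity forces every basic operation to act as a projection on these two blocks --- is the same as in the paper, but the technical realization differs. The paper stays inside $\algA$ itself: it pp-defines a binary relation $S=((D\cap B)\times(D\setminus B))\cup((D\setminus B)\times(D\cap B))$ on a subuniverse $D=\proj_{1}(R')$ of $A$, takes the congruence $\sigma$ with blocks $D\cap B$ and $D\setminus B$, and obtains the essentially unary algebra $\alg D/\sigma$ directly as a member of $\HS(\algA)$. You instead regard the normalized relation $R\subseteq(B\times D)\cup(D\times B)$ as a subalgebra of $\algA^{2}$, take the two-block congruence $\sigma=(R_{0}\times R_{0})\cup(R_{1}\times R_{1})$ on $\alg R$, and then descend from $\HSP(\algA)$ to $\HS(\algA)$ via Corollary~\ref{CorEssUnary}. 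Your route makes the congruence verification essentially mechanical (the blocks are literally cut out by which coordinate lies in $B$), at the price of invoking the $\HSP$-to-$\HS$ reduction; the paper's route avoids that lemma but has to construct the equivalence on $D$ by an explicit pp-definition and check its shape. Both are complete; all the steps you flag as needing care (that $A^{2}\setminus D^{2}\in\Inv(\algA)$, that the normalized $R$ remains $B$-essential, that both $R_{0}$ and $R_{1}$ are nonempty) do go through.
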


\begin{proof}
By Lemma~\ref{ProjectiveSubCharacterization}, $R_{n} = A^{n}\setminus (A\setminus B)^{n}$ is an invariant of $\algA$ for every $n$.
Since $B$ is not a binary absorbing subuniverse, 
by Lemma~\ref{NoEssential} there exists a binary $B$-essential relation 
$R\in\Inv(\algA)$.
Put $R' = R\cap R_{2}$, 
$D = \proj_{1}(R')$, and 
$ S(x,y) = \exists x'\exists y'\; R'(x,x')\wedge R'(y,y')\wedge R_{2}(x',y')\wedge R_{2}(x,y).$
It is not hard to check that
$S = ((D\cap B)\times (D\setminus B))\cup 
((D\setminus B)\times (D\cap B))$.

Then $\sigma(x,y) = \exists z \; S(x,z)\wedge S(y,z)$
is the congruence on $\alg D$ having just two blocks
$(D\cap B)$ and $(D\setminus B)$.
We claim that the algebra
$\alg D/\sigma\in \HS(\algA)$ is essentially unary.

Since $B$ is a projective subuniverse, 
every $n$-ary operation $f$ of $\algA$ has a coordinate
$i\in[n]$
such that 
$f(b_{1},\dots,b_{n})\in B$ whenever $b_{i}\in B$.
If we restrict $f$ to $D$ (denote it by $f|_{D}$), 
then 
$f|_{D}(b_{1},\dots,b_{n})\in D\cap B$
whenever $b_{i}\in D\cap B$.
Since $f|_{D}$ preserves $S$, 
we also have
$f|_{D}(b_{1},\dots,b_{n})\in D\setminus B$
whenever $b_{i}\in D\setminus B$.
Thus, $f|_{D}$ is the $i$-th projection modulo $\sigma$.
Hence $\alg D/\sigma$ is essentially unary.
\end{proof}


\begin{THMCommonPropertiesThm}
Suppose 
$\alg R \le_{sd} \algA_{1}\times\dots\times \algA_{n}$,
$n\ge 2$,
$\algA_{1}, \dots,\algA_n$ are finite idempotent algebras, 
and 
$\alg B_{i}\le_{\mathcal T}\algA_{i}$ 
for every $i\in[n]$.
Then 
\begin{enumerate}
\item[(1)]
$(R\cap (B_{1}\times\dots\times B_{n}))\le_{\mathcal T} \alg R$;

\item[(2)]
if 
$\mathcal T\neq PC$ or 
$\algA_{1}$ has no nontrivial central subuniverses then 

$\proj_{1}(R\cap (B_{1}\times\dots \times B_{n}))\le_{\mathcal T}\algA_{1}$;


\item[(3)] 
if $R$ is $(B_{1},\dots,B_{n})$-essential
then 
$\mathcal T\in\{C,PC\}$  and $n=2$.
\end{enumerate}
\end{THMCommonPropertiesThm}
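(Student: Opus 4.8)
The plan is to assemble the three claims from the type-specific results already proved in the paper, handling the three cases $\mathcal T = BA(t)$, $\mathcal T = C$, and $\mathcal T = PC$ separately. For part (1): when $\mathcal T = BA(t)$ this is exactly Corollary~\ref{RIntersectionBACons}; when $\mathcal T = C$ this is Corollary~\ref{RIntersectionCenterCons}; when $\mathcal T = PC$ this is Lemma~\ref{PCrestrictionOfRel}. So part (1) is just a citation in each case. For part (2): when $\mathcal T = BA(t)$ it is Corollary~\ref{AbsImpliesCons} (using $\proj_1(R)=A_1$ from subdirectness); when $\mathcal T = C$ it is Corollary~\ref{CenterImpliesCons}; when $\mathcal T = PC$ with $\algA_1$ having no nontrivial central subuniverse it is Lemma~\ref{PCImplies}. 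So part (2) is again a case split into three citations, with the hypothesis exactly matching the restriction on the $PC$ case.

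The substantive content is part (3). First I would record the easy observation that if $\mathcal T = BA(t)$ then no $B$-essential relation can exist at all: by Lemma~\ref{NoEssential}, $B_i$ absorbing $\algA_i$ with a binary term is equivalent to the nonexistence of a $B_i$-essential binary relation, and one should check that a $(B_1,\dots,B_n)$-essential $R \le \algA_1 \times \dots \times \algA_n$ forces the existence of a $B_1$-essential relation of some arity on $\algA_1$ (restrict the coordinates $2,\dots,n$ to $B_i$ where appropriate and project, in the style of the arity-reduction argument of Lemma~\ref{AbsorptionReduceArity} or the proof of Lemma~\ref{CenterLessThanThree}), contradicting $B_1 \le_{BA}\algA_1$. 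Actually the cleanest route: a $(B_1,\dots,B_n)$-essential relation with $n\ge 2$ and $\mathcal T$ binary absorbing contradicts the fact that the relation obtained by fixing some coordinates to the absorbing subuniverses and projecting would have to be nonempty — so $\mathcal T \in \{C,PC\}$.

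It remains, for $\mathcal T\in\{C,PC\}$, to show $n = 2$. This is precisely what Lemma~\ref{CenterLessThanThree} gives when $\mathcal T = C$ (there does not exist a $(C_1,\dots,C_n)$-essential relation for $n\ge 3$) and what Lemma~\ref{PCLessThanThree} gives when $\mathcal T = PC$ (no $(B_1,\dots,B_n)$-essential relation for $n\ge 3$). So assuming $n\ge 3$ and $R$ is $(B_1,\dots,B_n)$-essential directly contradicts one of these two lemmas, forcing $n = 2$. The main obstacle, if any, is the bookkeeping in the binary-absorbing subcase of part (3): one must make sure that projecting a $(B_1,\dots,B_n)$-essential relation (with coordinates $\ge 2$ restricted to their absorbing subuniverses) genuinely yields a $B_1$-essential relation on $\algA_1$ — i.e. that the "hit every coordinate outside $B_1$" witnesses survive the restriction and the resulting relation still avoids $B_1^k$. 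This is routine but needs the symmetry-free version of the argument; alternatively one can invoke Lemma~\ref{NoEssential} together with Lemma~\ref{AbsorptionReduceArity}-style reasoning. Everything else in the theorem is a direct appeal to a previously established lemma according to the type $\mathcal T$.
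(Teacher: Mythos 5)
Your proposal is correct and matches the paper's proof almost exactly: parts (1) and (2) are, as you say, just the three type-wise citations (Corollaries~\ref{RIntersectionBACons}, \ref{RIntersectionCenterCons}, Lemma~\ref{PCrestrictionOfRel} for (1); Corollaries~\ref{AbsImpliesCons}, \ref{CenterImpliesCons}, Lemma~\ref{PCImplies} for (2)), and in part (3) the cases $\mathcal T=C$ and $\mathcal T=PC$ are exactly Lemmas~\ref{CenterLessThanThree} and \ref{PCLessThanThree}. The one place you differ is the binary absorbing subcase of (3), which you flag as the ``main obstacle'' and propose to handle via Lemma~\ref{NoEssential} and an arity-reduction to a $B_1$-essential relation on $\algA_1$. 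The paper's argument there is a one-liner that avoids all the bookkeeping you worry about: take $\alpha\in R\cap(A_1\times B_2\times\dots\times B_n)$ and $\beta\in R\cap(B_1\times\dots\times B_{n-1}\times A_n)$ (both nonempty by essentiality) and apply the binary term $t$ coordinatewise; in every coordinate $i$ at least one of the two arguments lies in $B_i$, so $t(\alpha,\beta)\in R\cap(B_1\times\dots\times B_n)$, contradicting $(B_1,\dots,B_n)$-essentiality. This works for every $n\ge 2$ and needs no projection or reduction, so the concern you raise simply does not arise; your alternative route could be made to work (e.g.\ $\proj_1(R\cap(A_1\times B_2\times\dots\times B_n))$ is a binary absorbing subuniverse of $\algA_1$ disjoint from $B_1$, and two binary absorbing subuniverses for the same term cannot be disjoint), but it is strictly more work.
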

\begin{proof}
(1).
It follows from 
Corollary~\ref{RIntersectionBACons}, 
Corollary~\ref{RIntersectionCenterCons}, and 
Lemma~\ref{PCrestrictionOfRel}.

(2). For $\mathcal T = BA(t)$ it follows from 
Corollary \ref{AbsImpliesCons}, 
for $\mathcal T = C$ it follows 
from Corollary \ref{CenterImpliesCons},
for $\mathcal T = PC$ it follows 
from Lemma \ref{PCImplies}.

(3). For $\mathcal T = BA(t)$ we 
just apply the term operation $t$ to a tuple from 
$R\cap (A_{1}\times B_{2}\times\dots\times B_{n})$ and 
a tuple from $R\cap (B_{1}\times\dots\times B_{n-1}\times A_{n})$
to get a tuple from 
$R\cap (B_{1}\times\dots\times B_{n})$, which gives a contradiction.
For $\mathcal T = C$ it follows from
Lemma \ref{CenterLessThanThree}, 
for $\mathcal T = PC$ it follows from
Lemma \ref{PCLessThanThree}.
\end{proof}


Types of strong subalgebras are called \emph{similar} if 
they are the same or if they are
BA types (with probably different term operations).

\begin{lem}\label{IntersectionOfTwoSubuniverses}
Suppose 
$B_{1}$ and $B_{2}$
are nonempty strong subuniverses of a finite idempotent algebra $\algA$ and 
$B_{1}\cap B_{2} = \varnothing$.
Then $B_{1}$ and $B_{2}$ are of similar types.
\end{lem}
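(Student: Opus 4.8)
The plan is to run a case analysis on the pair $(\mathcal T_{1},\mathcal T_{2})$. If the two types are similar there is nothing to prove, so, since the conclusion is symmetric in $B_{1}$ and $B_{2}$, I may assume (interchanging $B_{1}$ and $B_{2}$ if necessary) that $(\mathcal T_{1},\mathcal T_{2})$ is one of $(BA(t),C)$, $(BA(t),PC)$, $(C,PC)$, and derive a contradiction in each case. First note that $B_{1}\neq A$: otherwise $B_{2}=B_{1}\cap B_{2}=\varnothing$, contradicting that $B_{2}$ is nonempty; symmetrically $B_{2}\neq A$. Hence in the two cases with $\mathcal T_{2}=PC$ the subuniverse $B_{2}$ is genuinely a block of a congruence $\sigma$ with $\algA/\sigma\cong\alg D_{1}\times\dots\times\alg D_{s}$, each $\alg D_{i}$ a PC algebra without BACP, and $|A/\sigma|>1$.

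Case $(BA(t),C)$. I apply Lemma~\ref{NoEEForCenter} with $C:=B_{2}$ and $E:=B_{1}$: these are disjoint subuniverses, so choosing $e\in B_{1}$ with $\Sg_{\algA}(B_{2}\cup\{e\})$ inclusion-minimal yields $\Sg_{\algA}\bigl(\,(\{e\}\times B_{2})\cup(B_{2}\times\{e\})\,\bigr)\cap B_{1}^{2}=\varnothing$. On the other hand, pick any $c\in B_{2}$; the pairs $(e,c)$ and $(c,e)$ lie in this generating set, so applying the binary absorbing term $t$ gives $t\bigl((e,c),(c,e)\bigr)=\bigl(t(e,c),t(c,e)\bigr)$, which lies in $B_{1}\times B_{1}$ because $e\in B_{1}$ and $B_{1}\le_{BA(t)}\algA$. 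This element witnesses that the intersection above is nonempty, a contradiction.

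Cases $(BA(t),PC)$ and $(C,PC)$. Consider $B_{1}/\sigma$. It is nonempty since $B_{1}\neq\varnothing$, and it is not all of $A/\sigma$: the $\sigma$-block $B_{2}$ is disjoint from $B_{1}$, so the element $B_{2}\in A/\sigma$ does not belong to $B_{1}/\sigma$. Hence $B_{1}/\sigma$ is a nontrivial subuniverse of $\algA/\sigma$, and it has type $\mathcal T_{1}$: by Corollary~\ref{AbsorptionQuotient}(1) if $\mathcal T_{1}=BA(t)$, and by Lemma~\ref{CenterQuotient} if $\mathcal T_{1}=C$. Transporting along the isomorphism $\algA/\sigma\cong\alg D_{1}\times\dots\times\alg D_{s}$ and applying Lemma~\ref{GetBACenterFromPower} (applicable since $\mathcal T_{1}\neq PC$), some factor $\alg D_{i}$ has a nontrivial subuniverse of type $\mathcal T_{1}$, that is, a nontrivial binary absorbing or central subuniverse, contradicting that $\alg D_{i}$ is a PC algebra without BACP.

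The only delicate point is the case $(BA(t),C)$: absorbing subuniverses need not be pairwise disjoint in general, so one really must exploit the extra centrality condition on $B_{2}$, which is exactly what Lemma~\ref{NoEEForCenter} packages; the role of $B_{1}$ being binary (rather than merely ternary) absorbing is that a single application of $t$ to the two ``swapped'' generators $(e,c)$ and $(c,e)$ already returns an element of $B_{1}^{2}$. The PC cases are routine once one passes to $\algA/\sigma$ and uses the product decomposition of a PC subuniverse together with Lemma~\ref{GetBACenterFromPower}.
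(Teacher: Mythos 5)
Your proposal is correct and follows essentially the same route as the paper: the $(BA,C)$ case via Lemma~\ref{NoEEForCenter} and a single application of the binary absorbing term to the swapped generators, and the two PC cases by passing to $\algA/\sigma$, invoking Corollary~\ref{AbsorptionQuotient}/Lemma~\ref{CenterQuotient}, and then Lemma~\ref{GetBACenterFromPower} to contradict the ``without BACP'' property of the factors. Your explicit justification that $B_{1}/\sigma$ is proper (because the block $B_{2}$ is missed) is in fact cleaner than the paper's wording at that point.
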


\begin{proof}
Assume the converse. Consider two cases.

Case 1. $B_{1}$ is a PC subuniverse, $B_{2}$ is a BA/central subuniverse.
Let $B_{1}$ be a block of a congruence $\sigma$ 
such that 
$\algA/\sigma\cong \alg D_{1}\times \dots\times\alg D_{s}$
(from the definition of a PC subuniverse).
By Corollary~\ref{AbsorptionQuotient}/Lemma~\ref{CenterQuotient}
the set $\{b/\sigma\mid b\in B_{2}\}$ 
is a BA/central subuniverse of $\algA/\sigma$.
Since $B_{1}\cap B_{2}\neq\varnothing$, 
this subuniverse is nontrivial.
By Lemma~\ref{GetBACenterFromPower}
there exists $i\in[s]$ such that $\alg D_{i}$ has a nontrivial BA/central
subuniverse, which contradicts 
the properties of $\alg D_{i}$.


Case 2. $B_{1}$ is a binary absorbing subuniverse, 
$B_{2}$ is a central subuniverse.
Let $e\in B_{1}$ be chosen so that 
the set $\Sg_{\algA}(B_{2}\cup\{e\})$ is inclusion minimal.
By Lemma \ref{NoEEForCenter} we have $\Sg_{\algA}
\begin{pmatrix}
e& B_{2} \\B_{2} & e 
\end{pmatrix}\cap B_{1}^{2}=\varnothing$.
Applying 
the binary absorbing term operation to 
$(e,b)$ and $(b,e)$ for some $b\in B_{2}$,
we get a tuple 
from $\Sg_{\algA}
\begin{pmatrix}
e& B_{2} \\B_{2} & e 
\end{pmatrix}\cap B_{1}^{2}$. Contradiction.
\end{proof}

\begin{LEMPCBsub}
Suppose $\algA$ is a finite idempotent algebra, 
$B_1$ and $B_{2}$ are subuniverses of $\algA$ of types $\mathcal T_{1}$
and $\mathcal T_{2}$, respectively.
Then $B_{1}\cap B_{2}$ is strong subuniverse of $\algB_{2}$ of type $\mathcal T_{1}$.
\end{LEMPCBsub}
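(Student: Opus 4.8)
The plan is to argue by cases on the type $\mathcal T_1$ of $B_1$; the type $\mathcal T_2$ of $B_2$ will be needed only in the last case. In every case $B_1\cap B_2$ is a subuniverse of $\algA$ (an intersection of two subuniverses) that is contained in $B_2$, hence a subuniverse of $\algB_2$. If $\mathcal T_1=BA(t)$, I would restrict the binary term $t$ to $B_2$: for $b\in B_1\cap B_2$ and $a\in B_2$ both $t(b,a)$ and $t(a,b)$ lie in $B_1$ since $B_1$ absorbs $\algA$ with $t$, and they lie in $B_2$ since $B_2$ is a subuniverse, so $B_1\cap B_2$ absorbs $\algB_2$ with the same term $t$. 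If $\mathcal T_1=C$, the claim is exactly Lemma~\ref{CenterIntersectionA} applied to $B_1\le_C\algA$ and the subalgebra $B_2\le\algA$.

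The substantive case is $\mathcal T_1=PC$. After discarding the trivial situations ($B_1=A$, so $B_1\cap B_2=B_2$ is the universe of $\algB_2$; or $B_2=\varnothing$; or $B_1\cap B_2=\varnothing$), I may assume $B_1$ is a block of a congruence $\sigma$ of $\algA$ with $\algA/\sigma\cong\alg D_1\times\dots\times\alg D_s$, each $\alg D_i$ a PC algebra without BACP, and $B_1\cap B_2\neq\varnothing$, whence $B_2\neq\varnothing$. The key point is that $\sigma|_{B_2}$ is a congruence of $\algB_2$, that $B_1\cap B_2$ is one of its blocks (it is $B_1$ intersected with $B_2$, and $B_1$ is a $\sigma$-block), and that $\algB_2/(\sigma|_{B_2})\cong B_2/\sigma$, the image of $B_2$ in $\algA/\sigma$. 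So everything reduces to showing that $B_2/\sigma$ is isomorphic to a product of PC algebras without BACP.

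For $\mathcal T_2\in\{BA,C\}$ this is short: by Corollary~\ref{AbsorptionQuotient} (respectively Lemma~\ref{CenterQuotient}), $B_2/\sigma$ is a binary absorbing (respectively central) subuniverse of $\algA/\sigma\cong\prod_i\alg D_i$, and by the contrapositive of Lemma~\ref{GetBACenterFromPower} a finite product of PC algebras without BACP has no nontrivial binary absorbing or central subuniverse; hence $B_2/\sigma$ is empty or all of $\algA/\sigma$, and being nonempty it is everything, so $\algB_2/(\sigma|_{B_2})\cong\prod_i\alg D_i$ as required. For $\mathcal T_2=PC$ I would first apply Corollary~\ref{IntersectionOfPCisPC} to get $B_1\cap B_2\le_{PC}\algA$, and then transfer this down to $\algB_2$ using the description of all PC subuniverses in Corollary~\ref{PCProperties}(2): writing $\psi\colon A\to\prod_{j=1}^{k}A/\sigma_j$ for the canonical map onto the product over all congruences of $\algA$ with a PC-without-BACP quotient, one has $B_2=\psi^{-1}(S_2)$ and $B_1\cap B_2=\psi^{-1}(S)$ with $S,S_2$ defined by unary constraints and $S\subseteq S_2$; restricting $\bigcap\{\sigma_j:j\text{ constrained in }S\text{ but free in }S_2\}$ to $B_2$, surjectivity of $\psi$ shows the resulting quotient of $\algB_2$ is the product of the corresponding $\alg D_j$ and that $B_1\cap B_2$ is precisely one of its blocks.

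I expect the subcase $\mathcal T_1=\mathcal T_2=PC$ to be the main obstacle: unlike binary absorbing and central subuniverses, a PC subuniverse can survive as a proper nontrivial subuniverse after quotienting, so $B_2/\sigma$ need not be all of $\algA/\sigma$ and the bookkeeping with the unary constraints inside $\prod_jA/\sigma_j$ seems unavoidable. It is also here that the hypothesis ``$B_2$ is strong'' is genuinely used rather than merely ``$B_2$ is a subuniverse'': for an arbitrary subuniverse $B_2$ the quotient $\algB_2/(\sigma|_{B_2})$ is only a subalgebra of $\prod_i\alg D_i$ and need not split as a product of PC algebras without BACP, so $B_1\cap B_2$ can then fail to be a PC subuniverse of $\algB_2$.
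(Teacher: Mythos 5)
Your proof is correct and follows essentially the same route as the paper: the same case split on $\mathcal T_1$ and, within the PC case, on whether $\mathcal T_2$ is PC, with the BA and central cases handled by restriction of the term (resp.\ Lemma~\ref{CenterIntersectionA}) and the PC--PC case by Corollary~\ref{PCProperties}(2). The only cosmetic difference is that where the paper invokes Lemma~\ref{IntersectionOfTwoSubuniverses} to conclude that every $\sigma$-block meets $B_2$, you re-derive that fact directly via Corollary~\ref{AbsorptionQuotient}/Lemma~\ref{CenterQuotient} together with Lemma~\ref{GetBACenterFromPower}, which is exactly how that lemma is proved in the paper anyway.
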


\begin{proof}
If $B_{1}$ or $B_{2}$ is empty, then the claim is trivial.
If $B_{1}$ is a BA or  central subuniverse 
then the claim follows from 
Lemmas~\ref{AbsImplies} and \ref{CenterImplies}, respectively.

Assume that $B_{1}$ is a PC subuniverse of $\algA$.
If $B_{1}$ is full, then the claim is obvious.
Otherwise, 
$B_{1}$ is a block of a congruence $\sigma$ on $\algA$.
Assume that $B_{2}$ is not a 
PC subuniverse.
Any block $B$ of $\sigma$ is a PC subuniverse of $\algA$
and by Lemma~\ref{IntersectionOfTwoSubuniverses}
we have $B\cap B_{2}\neq \varnothing$.
Let $\sigma' =\sigma\cap (B_{2}\times B_{2})$.
Then 
$\algA/\sigma\cong \alg B_{2}/\sigma'$ 
and $B_{1}\cap B_{2}$ is a block of a congruence 
$\sigma'$, which means that 
$B_{1}\cap B_{2}$ is a PC subuniverse of 
$\alg B_{2}$.

If $B_{2}$ is also a PC subuniverse then
it follows from  Corollary~\ref{PCProperties}(2)
that 
$B_{1}\cap B_{2}$ is a PC subuniverse of $\alg B_{2}$.
\end{proof}

\begin{THMPCBint}
Suppose $\algA$ is a finite idempotent algebra, 
$B_{i}\le_{\mathcal T_{i}}\algA$
for every $i\in[n]$,
$n\ge 2$, 
$\bigcap_{i\in[n]}B_{i} = \varnothing$,
and 
$\bigcap_{i\in[n]\setminus\{j\}}B_{i} \neq  \varnothing$
for every $j\in[n]$.
Then 
one of the following conditions holds:
\begin{enumerate}
    \item[(1)] $n=2$ and $\mathcal T_{1} = \mathcal T_{2}\in\{C,PC\}$;
    \item[(2)] $\mathcal T_{1}, \dots, \mathcal T_{n}$ are binary absorbing types.
\end{enumerate}
\end{THMPCBint}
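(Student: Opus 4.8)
The plan is to prove the statement by induction on $n$, using Lemma~\ref{PCBsub} as the main reduction tool together with the three ``no $(B_1,\dots,B_n)$-essential relation'' results for the three types of strong subuniverses that are hidden inside Theorem~\ref{CommonPropertiesThm}(3). The key observation is that the hypothesis exactly says that the sets $B_1,\dots,B_n$ form a minimal family with empty intersection; so if we set $A_i = B_i$ viewed as an algebra and consider the diagonal relation $R = \{(a,\dots,a)\mid a\in A\}$, then $R\cap (B_1\times\dots\times B_n) = \bigcap_i B_i = \varnothing$ while $R$ meets every ``one coordinate free'' box, i.e. $R$ is $(B_1,\dots,B_n)$-essential. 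Thus Theorem~\ref{CommonPropertiesThm}(3) immediately forces, in the homogeneous case where all $\mathcal T_i$ equal a fixed type $\mathcal T$, that either $\mathcal T$ is a binary absorbing type (giving case (2)) or $\mathcal T\in\{C,PC\}$ and $n=2$ (giving case (1)). So the real content is to reduce the mixed-type situation to the homogeneous one.

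First I would dispose of $n=2$ directly: if $\mathcal T_1\neq \mathcal T_2$ then by Lemma~\ref{IntersectionOfTwoSubuniverses} the types must be ``similar'', so the only remaining non-similar possibilities are ruled out, and the similar cases are: both binary absorbing (case (2)), or $\mathcal T_1=\mathcal T_2\in\{C,PC\}$ (case (1)). For $n\ge 3$ I would argue that not all of the $\mathcal T_i$ can fail to be binary absorbing. Suppose for contradiction some $\mathcal T_j\in\{C,PC\}$. By Lemma~\ref{PCBsub}, $B_i\cap B_j$ is a strong subuniverse of $\mathbf B_j$ of type $\mathcal T_i$ for each $i\neq j$. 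Now inside the algebra $\mathbf B_j$, the family $\{B_i\cap B_j\mid i\in[n]\setminus\{j\}\}$ has empty intersection (since $\bigcap_i B_i=\varnothing$) but every proper subfamily has nonempty intersection (using the minimality hypothesis on the original family), so it is again a minimal family with empty intersection, now of size $n-1$. Applying the inductive hypothesis inside $\mathbf B_j$: either $n-1=2$ with the two types in $\{C,PC\}$ and equal, or all $n-1$ types are binary absorbing. Iterating this (peeling off one $C$/$PC$ coordinate at a time, each time landing inside a smaller algebra) we reach either a size-$2$ family with equal types in $\{C,PC\}$, or a situation where we can apply the homogeneous case.

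To make the bookkeeping clean I would actually phrase the induction as: among $\mathcal T_1,\dots,\mathcal T_n$, if at least one is in $\{C,PC\}$, restrict to that coordinate's algebra as above to drop to $n-1$; repeat until either $n=2$ (and then the $n=2$ analysis with Lemma~\ref{IntersectionOfTwoSubuniverses} shows the two surviving types are equal and in $\{C,PC\}$, but we also have to chase back that this forces $n$ was originally $2$ — this is where care is needed), or all remaining types are binary absorbing, in which case the peeled-off coordinates were all $C$ or $PC$; then I need to show that a $C$ or $PC$ type cannot coexist with binary absorbing types in such a minimal family, which again follows from Theorem~\ref{CommonPropertiesThm}(3) applied to the diagonal relation intersected with the relevant boxes (a $B$-essential relation of arity $\ge 3$ is forbidden for $C$ and $PC$, and for binary absorbing coordinates one simply applies the absorbing term to move a ``free'' tuple into the box).

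The main obstacle I anticipate is the last point: ruling out a genuinely mixed family (some coordinates binary absorbing, at least one central/PC, with $n\ge 3$). The clean way is: pick a coordinate $j$ with $\mathcal T_j\in\{C,PC\}$ and a binary absorbing coordinate $k$ with term $t$; since $\bigcap_{i\neq j}B_i\neq\varnothing$ and $\bigcap_{i\neq k}B_i\neq\varnothing$, choose witnesses and apply $t$ to produce an element in the central/PC set that also lies in $B_k$, and then iterate over the other absorbing coordinates to land in $\bigcap_i B_i$, contradiction — essentially the argument already used in the proof of Theorem~\ref{CommonPropertiesThm}(3) and Lemma~\ref{IntersectionOfTwoSubuniverses}, Case~2. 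Thus every binary absorbing coordinate can be ``absorbed away'', reducing to a family whose types are all in $\{C,PC\}$, which is then handled by the homogeneous diagonal argument plus Lemma~\ref{IntersectionOfTwoSubuniverses}. Making sure the index $n$ stays $\ge 2$ throughout these reductions and that the minimality hypothesis is genuinely inherited at each step is the part requiring the most attention.
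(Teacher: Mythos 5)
Your overall skeleton matches the paper's: induction on $n$ with Lemma~\ref{PCBsub} as the reduction tool, Lemma~\ref{IntersectionOfTwoSubuniverses} for the base case $n=2$, and the diagonal relation together with Corollary~\ref{EmptyPCIntersection} and Lemma~\ref{CenterLessThanThree} for the homogeneous case. But there is a genuine gap in your treatment of the mixed case (one coordinate of type $C$ or $PC$, the others binary absorbing, $n\ge 3$), and your scheme cannot avoid that case: if you always peel off a $C/PC$ coordinate $j$, the inductive hypothesis applied inside $\alg B_{j}$ tells you (for $n\ge 4$ always, and in one branch for $n=3$) that all the remaining types are binary absorbing, so you land exactly in the mixed family. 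Your proposed way out --- take $u\in\bigcap_{i\neq j}B_{i}$ and $v\in\bigcap_{i\neq k}B_{i}$ and apply the absorbing term $t$ of $B_{k}$ to produce an element of $\bigcap_{i}B_{i}$ --- does not work: $t(u,v)$ lies in $B_{k}$ (because $u\in B_{k}$ and $B_{k}$ absorbs) and in every $B_{i}$ with $i\notin\{j,k\}$ (both arguments lie there), but there is no reason for it to lie in $B_{j}$, since $B_{j}$ is merely a subuniverse, not an absorbing one, and $u\notin B_{j}$ (otherwise $u$ would lie in the empty total intersection). Nor can you fall back on Theorem~\ref{CommonPropertiesThm}(3), which is stated only for a single common type $\mathcal T$ on all coordinates. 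The argument you allude to from Case~2 of Lemma~\ref{IntersectionOfTwoSubuniverses} is genuinely different: it needs two \emph{disjoint} sets and Lemma~\ref{NoEEForCenter}, and disjointness of specific pairs is exactly what the minimality hypothesis denies you when $n\ge 3$.

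The paper sidesteps the mixed case entirely: it proves by induction only that $\mathcal T_{1},\dots,\mathcal T_{n}$ are similar, applying the inductive statement twice --- once to $\{B_{i}\cap B_{n}\}_{i<n}$ inside $\alg B_{n}$ and once to $\{B_{i}\cap B_{1}\}_{i>1}$ inside $\alg B_{1}$; since for $n\ge 3$ these two index sets overlap, all $n$ types are similar and a mixed family never arises. Your induction can also be repaired without that trick: in the mixed case peel off a \emph{binary absorbing} coordinate $k$ rather than the $C/PC$ one; the resulting minimal family of size $n-1\ge 2$ inside $\alg B_{k}$ still contains the type $\mathcal T_{j}\in\{C,PC\}$ together with at least one BA type, so neither conclusion of the theorem can hold for it --- a contradiction. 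As written, however, the mixed case is not closed, and the proof is incomplete.
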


\begin{proof}
First, we prove by induction on $n$ that 
the types $\mathcal T_{1}, \dots, \mathcal T_{n}$ are similar.
For $n=2$ it follows from Lemma~\ref{IntersectionOfTwoSubuniverses}.
Assume that $n\ge 3$. 
Let $C_{i} = B_{i}\cap B_{n}$ for every $i\in[n-1]$.
By Lemma~\ref{PCBsub}, 
$C_{i}\le_{\mathcal T_{i}} \alg B_{n}$, 
and 
$C_{1},\dots, C_{n-1}$ satisfy all the assumptions 
of this theorem. 
By the inductive assumption 
$\mathcal T_{1},\dots,\mathcal T_{n-1}$ are similar types.
In the same way we can prove that 
$\mathcal T_{2},\dots,\mathcal T_{n}$ are similar types. 
Hence, all the types  $\mathcal T_{1},\dots,\mathcal T_{n}$ are similar.

If $\mathcal T_{1},\dots,\mathcal T_{n}$ are binary absorbing types, then 
we don't need anything else. 
If $\mathcal T_{1}=\dots = \mathcal T_{n}= PC$, 
then by Corollary \ref{EmptyPCIntersection} we have $n=2$.
Assume that  $\mathcal T_{1}=\dots = \mathcal T_{n}=C$.
Let $R$ be the $n$-ary relation 
consisting of all the constant tuples $(a,a,\dots,a)$.
Then
$R$ is a $(B_{1},\dots,B_{n})$-essential relation, 
which contradicts Lemma~\ref{CenterLessThanThree}.
\end{proof}


\subsection{Linear algebras and Maltsev operation}

An algebra $\alg A$ is called \emph{linear}
if there exists
an abelian group operation $\oplus$ on $A$ 
such that 
$(x_{1}\oplus x_{2} = x_{3}\oplus x_{4})\in \Inv(\algA)$.
If additionally 
$(A;\oplus)\cong 
(\mathbb Z_{p}^{s};+)$
then $\alg A$ is called \emph{$p$-linear}.
If a linear/$p$-linear algebra $\algA$ has a term operation 
$(x\ominus y\oplus z)$
then $\algA$ is called 
\emph{affine}/\emph{$p$-affine}.

An operation $m$ is called \emph{Maltsev} 
if $m(x,x,y) = m(y,x,x) = y$.

\begin{lem}\label{MaltsevImpliesAffine}
Suppose a linear algebra $\algA$ has 
a Maltsev term operation $m$.
Then $\algA$ is affine.
\end{lem}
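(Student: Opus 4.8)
The plan is to prove the stronger statement that the Maltsev term $m$ itself equals the operation $x \ominus y \oplus z$; affineness of $\algA$ is then immediate from the definition, since linearity already supplies the relation $(x_1 \oplus x_2 = x_3 \oplus x_4) \in \Inv(\algA)$ and it only remains to exhibit the term $(x\ominus y\oplus z)$.

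First I would record that the relation $Q = \{(x_1,x_2,x_3,x_4) \mid x_1 \oplus x_2 = x_3 \oplus x_4\}$ lies in $\Inv(\algA)$ by the definition of a linear algebra, and that the operations preserving a fixed relation form a clone containing all basic operations of $\algA$; hence the term operation $m$ preserves $Q$. Fix arbitrary $x,y,z \in A$ together with an auxiliary element $\beta \in A$ (the group identity $0$ would do), and consider the three tuples $\alpha_1 = (x, 0, \beta, x\ominus\beta)$, $\alpha_2 = (y, 0, \beta, y\ominus\beta)$ and $\alpha_3 = (z, 0, (z\ominus y)\oplus\beta, y\ominus\beta)$; a one-line abelian-group computation shows each $\alpha_i \in Q$. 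Applying $m$ coordinatewise, the tuple $\big(m(x,y,z),\, m(0,0,0),\, m(\beta,\beta,(z\ominus y)\oplus\beta),\, m(x\ominus\beta,y\ominus\beta,y\ominus\beta)\big)$ must again lie in $Q$. The Maltsev identities $m(u,u,v) = m(v,u,u) = v$ evaluate the last three coordinates to $0$, $(z\ominus y)\oplus\beta$ and $x\ominus\beta$ respectively, so membership in $Q$ reads $m(x,y,z) \oplus 0 = \big((z\ominus y)\oplus\beta\big) \oplus (x\ominus\beta)$, and cancelling $\beta$ in the abelian group yields $m(x,y,z) = x \ominus y \oplus z$. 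Since $x,y,z$ were arbitrary, $x\ominus y \oplus z \in \Clo(\algA)$, so $\algA$ is affine; the same argument gives the $p$-affine conclusion when $(A;\oplus) \cong (\mathbb Z_p^s;+)$, as $\ominus$ is definable from $\oplus$ in any abelian group.

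The only genuinely delicate point is the choice of the three witnessing tuples in $Q$: they must be arranged so that, in coordinates two through four, the arguments fed to $m$ either coincide or form one of the Maltsev patterns $(u,u,v)$ or $(v,u,u)$, while the first coordinates stay equal to the free variables $x,y,z$. Once this configuration is found the rest is routine group bookkeeping, so I expect no real obstacle beyond writing down the right tuples.
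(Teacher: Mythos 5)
Your proof is correct and follows essentially the same route as the paper's: both apply the Maltsev term coordinatewise to three tuples of the relation $x_1\oplus x_2 = x_3\oplus x_4$ chosen so that the Maltsev identities collapse the last three coordinates, forcing $m(x,y,z)=x\ominus y\oplus z$. The paper's witnessing tuples, $(a_1,a_2,a_1,a_2)$, $(a_2,a_2,a_2,a_2)$, $(a_3,a_2,a_2,a_3)$, are marginally simpler than yours (no auxiliary $\beta$ needed), but the argument is the same.
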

\begin{proof}
Applying the term $m$ to the tuples
$(a_{1},a_{2},a_{1},a_{2})$,
$(a_{2},a_{2},a_{2},a_{2})$,
$(a_{3},a_{2},a_{2},a_{3})$
from the relation $x_{1}\oplus x_{2} = x_{3}\oplus x_{4}$
we obtain the tuple 
$(m(a_{1},a_{2},a_{3}),a_{2},a_{1},a_{3})$ from the same relation. 
Hence $m(a_{1},a_{2},a_{3})= a_{1}\ominus a_{2}\oplus a_{3}$ for all $a_{1},a_{2},a_{3}\in A$, which completes the proof.
\end{proof}

\begin{lem}\label{ReduceParallelogram}
Suppose 
$R\le \algA^{n}$, $1\le k<n$, and 
\begin{align*}
(a_{1},\dots,a_{k},b_{k+1},\ldots,b_{n})\in R,\\
(b_{1},\dots,b_{k},a_{k+1},\ldots,a_{n})\in R,\\
(b_{1},\dots,b_{k},b_{k+1},\ldots,b_{n})\in R,\\
(a_{1},\dots,a_{k},a_{k+1},\ldots,a_{n})\notin R.
\end{align*}
Then there exist $\rho\le \algA^{2}$
and $c_1,c_2,d_1,d_2\in A$ such that 
$(c_{1},c_{2})\notin \rho$ and 
$(c_{1},d_{2}),(d_{1},c_{2}),(d_{1},d_{2})\in \rho$.
\end{lem}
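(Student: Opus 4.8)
The plan is to induct on the arity $n$. For $n=2$ we necessarily have $k=1$, and we can take $\rho=R$ with $(c_1,c_2)=(a_1,a_2)$ and $(d_1,d_2)=(b_1,b_2)$: the three required memberships are precisely the three hypotheses on $R$, and $(c_1,c_2)\notin\rho$ is the non-membership. So assume $n\ge 3$.

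Call a coordinate $i$ \emph{split} if $a_i\ne b_i$. First observe that each of the blocks $\{1,\dots,k\}$ and $\{k+1,\dots,n\}$ contains a split coordinate: otherwise, if for instance the whole second block is non-split, then $(a_1,\dots,a_k,b_{k+1},\dots,b_n)=(a_1,\dots,a_n)$, contradicting that the first of these lies in $R$ and the second does not. If some coordinate $i$ is non-split, I reduce the arity. Since $\algA$ is idempotent, $\{a_i\}$ is a subuniverse, hence $R':=\proj_{[n]\setminus\{i\}}\bigl(R\cap(A^{i-1}\times\{a_i\}\times A^{n-i})\bigr)\le\algA^{n-1}$; all four distinguished tuples carry the value $a_i=b_i$ on coordinate $i$, so their restrictions to $[n]\setminus\{i\}$ witness the same parallelogram configuration for $R'$ with new split index $k'\in\{k-1,k\}$, and by the observation each block keeps a split coordinate, so $1\le k'\le n-2$. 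The induction hypothesis applied to $R'$ finishes this case.

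So assume every coordinate is split. The hypotheses are symmetric under interchanging the two blocks while simultaneously swapping the mixed tuples $(a_1,\dots,a_k,b_{k+1},\dots,b_n)$ and $(b_1,\dots,b_k,a_{k+1},\dots,a_n)$ (the split index becomes $n-k$), so we may assume $k\ge 2$. Put $J:=\{1\}\cup\{k+1,\dots,n\}$, so $2\le|J|=n-k+1<n$, and let $R_J:=\proj_J(R)\le\algA^{|J|}$. The $J$-projections of the three in-tuples are $(a_1,b_{k+1},\dots,b_n)$, $(b_1,a_{k+1},\dots,a_n)$, $(b_1,b_{k+1},\dots,b_n)$, i.e.\ the three corners of a parallelogram on $R_J$ with split index $1$, whose fourth corner is the $J$-projection $(a_1,a_{k+1},\dots,a_n)$ of the out-tuple. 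If this corner lies outside $R_J$, the induction hypothesis applied to $R_J$ does it. Otherwise there is $T=(a_1,t_2,\dots,t_k,a_{k+1},\dots,a_n)\in R$ with $T\ne(a_1,\dots,a_n)$, hence $t_j\ne a_j$ for some $j\in\{2,\dots,k\}$.

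In this remaining situation I fix coordinate $1$ to $a_1$ (again legitimate, since singletons are subuniverses of an idempotent algebra) and pass to $R^{\sharp}:=\{(x_2,\dots,x_n):(a_1,x_2,\dots,x_n)\in R\}\le\algA^{n-1}$. With respect to the split $\{2,\dots,k\}$ versus $\{k+1,\dots,n\}$, the tuples $(t_2,\dots,t_k,a_{k+1},\dots,a_n)$ and $(a_2,\dots,a_k,b_{k+1},\dots,b_n)$ are opposite corners lying in $R^{\sharp}$, while $(a_2,\dots,a_k,a_{k+1},\dots,a_n)$ is a corner that is not; so either the fourth corner $(t_2,\dots,t_k,b_{k+1},\dots,b_n)$ belongs to $R^{\sharp}$, and then $R^{\sharp}$ exhibits a parallelogram configuration of arity $n-1$ with split index $k-1$ and we conclude by induction, or it does not, which produces a fresh non-member $(a_1,t_2,\dots,t_k,b_{k+1},\dots,b_n)$ of $R$ that is surrounded by the members $T$ and $(a_1,\dots,a_k,b_{k+1},\dots,b_n)$. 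I expect this last case to be the main obstacle: it has to be resolved by iterating the chasing just described, and the real work is to identify an induction measure --- some combination of $n$, the block size $k$, and the distance from the forbidden tuple to $R$ along block-$1$ coordinates --- that strictly decreases at each step while keeping a genuine three-corners-in / one-corner-out configuration. Everything outside this loop is routine manipulation of pp-definable relations, together with the repeated use that, $\algA$ being idempotent, every singleton is a subuniverse.
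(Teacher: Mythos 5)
Your argument is incomplete, and the incompleteness is exactly where you say it is: the final case, in which $(t_2,\dots,t_k,b_{k+1},\dots,b_n)\notin R^{\sharp}$, is not resolved, and the ``iterate the chasing'' plan has no visible terminating measure --- each round produces a fresh non-member of $R$ at the same arity and the same block sizes, so nothing strictly decreases. The reductions you do carry out (fixing non-split coordinates, the symmetry giving $k\ge 2$, the projection onto $J=\{1\}\cup\{k+1,\dots,n\}$) are all correct, but the plain projection $\proj_J(R)$ only hands you a witness $T=(a_1,t_2,\dots,t_k,a_{k+1},\dots,a_n)\in R$, with no control over what happens when you swap its tail to $(b_{k+1},\dots,b_n)$; that uncontrolled fourth corner is precisely the obstacle.

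The paper closes this hole with one extra idea: instead of projecting $R$ alone, it projects the \emph{conjunction} of $R$ with a shifted copy of itself, setting
\[
R'(x_2,\dots,x_n)=\exists x_1\;R(x_1,x_2,\dots,x_n)\wedge R(x_1,x_2,\dots,x_k,b_{k+1},\dots,b_n).
\]
The three in-tuples still project into $R'$ (witnessed by $a_1$, $b_1$, $b_1$ respectively), so if $(a_2,\dots,a_n)\notin R'$ one descends to arity $n-1$ with split index $k-1$. If instead $(a_2,\dots,a_n)\in R'$, the witness $c$ for $x_1$ satisfies \emph{both} $(c,a_2,\dots,a_n)\in R$ and $(c,a_2,\dots,a_k,b_{k+1},\dots,b_n)\in R$; the second conjunct is exactly the ``all-$b$ corner'' your witness $T$ fails to provide. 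Fixing coordinates $2,\dots,k$ to $a_2,\dots,a_k$ then yields a relation of arity $n-k+1<n$ carrying the full three-in/one-out configuration (with $a_1$ and $c$ in the first slot), and a minimal-arity (equivalently, strong-induction) argument finishes. So the overall descent-on-arity strategy you chose is the right one, but without the conjunction trick the induction does not go through; you should replace the projection $R_J$ and the subsequent ad hoc chase by this single pp-definition.
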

\begin{proof}
Let $R$ be a relation of the minimal arity satisfying the conditions of this lemma.
If $R$ is of arity 2 then we have the required property.
With out loss of generality, assume
that $k\ge 2$ (otherwise $n-k\ge 2$).
Define 
$$
R'(x_{2},\dots,x_{n}) = 
\exists x_{1}\; 
R(x_{1},\dots,x_{n}) \wedge 
R(x_{1},\dots,x_{k},b_{k+1},\dots,b_{n}).
$$
The projections of the three tuples from the statement onto the last $n-1$ variables are in $R'$, 
hence if
$(a_{2},\dots,a_{n})\notin R'$, 
then we derived a relation of a smaller arity 
satisfying the required property, 
which contradicts our assumption.
Assume that
$(a_{2},\dots,a_{n})\in R'$, then for some $c\in A$ 
we have 
$(c,a_{2},\dots,a_{n}),(c,a_{2},\dots,a_{k},b_{k+1},\dots,b_{n})\in R.$
Then for the relation $R''$
defined by 
$$
R''(x_{1},x_{k+1},\dots,x_{n}) = 
R(x_{1},a_{2},\dots,a_{k},x_{k+1},\dots,x_{n}).
$$
we have 
$(a_{1},b_{k+1},\ldots,b_{n}),
(c,a_{k+1},\ldots,a_{n}),
(c,b_{k+1},\ldots,b_{n})\in R''$,
and 
$(a_{1},a_{k+1},\ldots,a_{n})\notin R''$.
Again we get a relation of a smaller arity with the required properties, 
which contradicts the assumption about the minimality.
\end{proof}

\begin{lem}\label{NoMaltsevImplies}
Suppose $\Clo(\algA)$ has no Maltsev term operation. 
Then there exist $\rho\le \algA^{2}$
and $a_1,a_2,b_1,b_2\in A$ such that 
$(a_{1},b_{2}),(b_{1},a_{2}),(b_{1},b_{2})\in \rho$
and $(a_{1},a_{2})\notin \rho$.
\end{lem}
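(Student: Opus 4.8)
The statement is the contrapositive-flavoured converse of Lemma~\ref{MaltsevImpliesAffine}-style reasoning: if $\Clo(\algA)$ has no Maltsev term operation, then some binary invariant relation exhibits the "parallelogram defect" witnessed by $(a_1,b_2),(b_1,a_2),(b_1,b_2)\in\rho$ but $(a_1,a_2)\notin\rho$. The natural strategy is to produce, from the absence of a Maltsev term, a relation of \emph{higher} arity with such a defect spread across two blocks of coordinates, and then invoke Lemma~\ref{ReduceParallelogram} to collapse it to arity $2$.

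\textbf{Step 1: locate the higher-arity defect.} I would consider the relation $M$ that a Maltsev operation would have to preserve. Concretely, on $A^3$ let $M := \Sg_{\algA^3}\{(x,x,y),(y,x,x)\mid x,y\in A\}$, i.e. the subuniverse of $\algA^3$ generated by all tuples of the form $(x,x,y)$ and $(y,x,x)$. An operation $t\in\Clo(\algA)$ of arity $m$ is a Maltsev operation exactly when it maps the generating matrix to the "diagonal" tuple $(y,x,x)$-pattern; more precisely, $\Clo(\algA)$ contains a Maltsev term iff $M$ contains every tuple $(a,b,c)$ for which $(a,b,c)$ lies in the image under some term of the two generating families. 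Since there is no Maltsev term, for some $a,b$ we have $(a,b,b)\notin M$ or $(b,b,a)\notin M$; without loss of generality $(a,b,b)\notin M$, while $(a,b,a)=(a,b,a)$ is in $M$ (it is a generator $(x,x,y)$ with $x=a$? — no, check carefully: $(b,b,a)$ and $(a,b,b)$ are generators, so both are in $M$, but $(a,b,a)$ is the Maltsev output we want). So the honest setup is: $(a,a,b),(b,a,a)\in M$ always, and the failure of Maltsev means $\exists a,b$ with $(b,a,a)$'s "partner" $(b,a,b)$... Let me instead use the cleaner formulation: there exist $a\neq b$ with $(a,b,a)\notin M$ while the three tuples $(a,b,b),(b,b,b),(b,b,a)$ all lie in $M$ (the first and third are generators, the middle is constant by idempotency). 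This is precisely the hypothesis of Lemma~\ref{ReduceParallelogram} with $n=3$, $k=1$: set the first tuple $(a_1,a_2,a_3)=(a,b,a)\notin M$, and $(a_1,b_2,b_3)=(a,b,b)\in M$, $(b_1,a_2,a_3)=(b,b,a)\in M$, $(b_1,b_2,b_3)=(b,b,b)\in M$.

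\textbf{Step 2: descend to arity two.} Apply Lemma~\ref{ReduceParallelogram} to $R:=M\le\algA^3$ with $k=1$. It yields $\rho\le\algA^2$ and $c_1,c_2,d_1,d_2\in A$ with $(c_1,c_2)\notin\rho$ and $(c_1,d_2),(d_1,c_2),(d_1,d_2)\in\rho$ — which is exactly the conclusion (rename $c_i\rightsquigarrow a_i$, $d_i\rightsquigarrow b_i$).

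\textbf{Main obstacle.} The delicate point is Step~1: rigorously extracting from "no Maltsev term" the statement "$(a,b,a)\notin M$ for some $a,b$ with the three companion tuples in $M$". One must argue that a term $t$ witnessing $(a,b,a)\in M$ for \emph{all} $a,b$ would itself be (or generate) a Maltsev operation — this requires noting that a single such containment is witnessed by one term applied to finitely many generator-tuples, and that uniformity over all $a,b$ can be achieved because there are only finitely many pairs and one can combine/compose the finitely many witnessing terms, or alternatively work with the relation $M'$ generated by the full generating \emph{matrix} at once so that one term handles all pairs simultaneously. Getting this bookkeeping exactly right — matching it to the precise shape required by Lemma~\ref{ReduceParallelogram} — is where the care is needed; the rest is a direct citation.
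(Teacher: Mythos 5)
Your Step 2 is fine, and your overall strategy (generate a relation from Maltsev-pattern tuples, find a parallelogram defect, invoke Lemma~\ref{ReduceParallelogram}) is exactly the paper's. The genuine gap is in Step 1. With your ternary relation $M=\Sg_{\algA^{3}}\{(x,x,y),(y,x,x)\mid x,y\in A\}$, the implication ``no Maltsev term $\Rightarrow$ $(a,b,a)\notin M$ for some $a,b$'' is not established, and your sketch of why it should hold does not work. A witness for $(a,b,a)\in M$ is a term $t$ applied to \emph{some} selection of generators, which may involve tuples $(x,x,y),(y,x,x)$ for pairs $(x,y)$ unrelated to $(a,b)$; it therefore does not satisfy anything like the Maltsev identity even ``locally'' at $(a,b)$. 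Moreover, even if each pair $(a,b)$ had its own local Maltsev witness $t_{a,b}(a,b,b)=t_{a,b}(b,b,a)=a$, there is no evident way to ``combine/compose the finitely many witnessing terms'' into a single Maltsev term; that combination step is precisely the hard content, and you have not supplied it.

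The correct repair is the alternative you mention only in passing: the indicator (free-algebra) construction, which is what the paper does. Take the matrix whose $k=|A|^{2}$ rows are all pairs $(a,b)$, with columns $\alpha,\beta\in A^{k}$, and set $S=\Sg_{\algA}(\{\alpha\beta,\beta\beta,\beta\alpha\})\le\algA^{2k}$. Because $S$ has exactly three generators, $\alpha\alpha\in S$ if and only if a \emph{single} ternary term $t$ satisfies $t(\alpha\beta,\beta\beta,\beta\alpha)=\alpha\alpha$, which read coordinatewise is $t(a,b,b)=a=t(b,b,a)$ for \emph{all} pairs $(a,b)$ simultaneously, i.e.\ $t$ is Maltsev. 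So the absence of a Maltsev term forces $\alpha\alpha\notin S$ while $\alpha\beta,\beta\beta,\beta\alpha\in S$, and Lemma~\ref{ReduceParallelogram} applies with the coordinates split into the first $k$ and last $k$. Replacing your ternary $M$ by this $2k$-ary $S$ closes the gap; as written, your main line of argument does not.
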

\begin{proof}
Consider 
a matrix with 2 columns and $k=|A|^{2}$ rows containing all 
pairs $(a,b)\in A^{2}$ as rows. 
Let $\alpha$ and $\beta$ be the two columns of this matrix.

Let $S = \Sg_{\algA}(\{\alpha\beta,\beta\beta,\beta\alpha\})$.
If $\alpha\alpha\in S$, then 
there exists a term operation $t$
such that 
$t(\alpha,\beta,\beta) =
t(\beta,\beta,\alpha)=\alpha$, hence $t$ is 
a Maltsev term operation on $A$. 
Otherwise, we get a 
relation $S\le \alg A^{2k}$ containing 
$\alpha\beta,\beta\beta,\beta\alpha$ but not containing 
$\alpha\alpha$. It remains to apply Lemma~\ref{ReduceParallelogram}.
\end{proof}

\begin{lem}\label{NoMaltsevImplies2}
Suppose 
$\algA$ has no a Maltsev term. Then 
\begin{enumerate}
    \item $\algA$ has a proper subuniverses of size at least 2, or 
    \item there exists a central relation $R\le \algA^{2}$.
\end{enumerate}
\end{lem}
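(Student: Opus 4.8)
The plan is to use the ``broken parallelogram'' supplied by Lemma~\ref{NoMaltsevImplies}: a subuniverse $\rho\le\algA^{2}$ and elements $a_{1},a_{2},b_{1},b_{2}\in A$ with $(a_{1},b_{2}),(b_{1},a_{2}),(b_{1},b_{2})\in\rho$ and $(a_{1},a_{2})\notin\rho$. I would first record two elementary facts used throughout: $a_{1}\neq b_{1}$ (otherwise $(a_{1},a_{2})=(b_{1},a_{2})\in\rho$) and $a_{2}\neq b_{2}$ (otherwise $(a_{1},a_{2})=(a_{1},b_{2})\in\rho$); in particular $|A|\ge 2$. Then I would assume that conclusion~(1) fails, that is, that $\algA$ has no proper subuniverse of size at least $2$, and show that conclusion~(2) must hold.

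The key step is to mine subuniverses out of $\rho$ by fixing coordinates. Since $\algA$ is idempotent, every singleton is a subuniverse, and subuniverses are closed under intersection and projection; hence for $c\in A$ the set $c^{+}:=\proj_{2}(\rho\cap(\{c\}\times A))$ is a subuniverse of $\algA$, and likewise $\proj_{1}(\rho)$ and $\proj_{2}(\rho)$ are subuniverses. Now $\proj_{1}(\rho)$ contains the distinct elements $a_{1},b_{1}$, so it has size at least $2$, hence (since (1) fails) equals $A$; similarly $\proj_{2}(\rho)=A$, so $\rho\le_{sd}\algA^{2}$. Next, $a_{1}^{+}$ contains $b_{2}$ but not $a_{2}$, so it is a nonempty proper subuniverse, and since (1) fails it must be the singleton $\{b_{2}\}$. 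Finally $b_{1}^{+}$ contains the distinct elements $a_{2},b_{2}$, so it has size at least $2$, and since (1) fails it must equal $A$.

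It remains to recognize $\rho$ as a central relation. From $b_{1}^{+}=A$ we get $\{b_{1}\}\times A\subseteq\rho$, so $b_{1}$ lies in $C:=\{a\in A\mid\{a\}\times A\subseteq\rho\}$, while $a_{1}^{+}=\{b_{2}\}\subsetneq A$ (using $|A|\ge 2$) shows $a_{1}\notin C$. Hence $C$ is nonempty and a proper subset of $A$, and together with $\rho\le_{sd}\algA^{2}$ this is precisely the statement that $\rho$ is a central relation on $\algA$, which is conclusion~(2).

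I do not anticipate a genuine obstacle: essentially all the difficulty is concentrated in Lemma~\ref{NoMaltsevImplies}, and what remains is a short case chase. The only points requiring care are the verification that the sets $c^{+}$ and $\proj_{i}(\rho)$ really are subuniverses (which rests on idempotency together with closure of subuniverses under intersection and projection) and the bookkeeping that rules out the degenerate case $|A|=1$, which is already excluded by $a_{2}\neq b_{2}$.
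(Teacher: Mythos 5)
Your proposal is correct and follows essentially the same route as the paper: invoke Lemma~\ref{NoMaltsevImplies} to get the broken parallelogram $\rho$, observe that $\proj_{1}(\rho)$, $\proj_{2}(\rho)$ and $b_{1}+\rho$ are subuniverses containing two distinct elements, and conclude that if none of them is a proper subuniverse then $\rho$ is subdirect with a center that contains $b_{1}$ but not $a_{1}$. Your write-up is in fact slightly more careful than the paper's, since you explicitly verify that the center is a proper subset of $A$.
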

\begin{proof}
By Lemma~\ref{NoMaltsevImplies}
there exist $\rho\le \algA^{2}$
and $a_1,a_2,b_1,b_2\in A$ such that 
$(a_{1},b_{2}),(b_{1},a_{2}),(b_{1},b_{2})\in \rho$
and $(a_{1},a_{2})\notin \rho$.
If $\proj_{1}(\rho)\neq A$ or 
$\proj_{2}(\rho)\neq A$, then we are done.

Otherwise, let 
$S = b_{1}+\rho$. If $S = A$ then $\rho$ is a central relation with a center containing $b_{1}$.
Otherwise, $S\supseteq\{a_{2},b_{2}\}$ is a proper subuniverse of $\algA$ of size at least 2.
\end{proof}

\begin{lem}\label{FromStronglyRich}
Suppose $R\le \algA^{n}$ is a full-projective and 
uniquely-determined relation, and $n\ge 3$.
Then there exists a central relation $\rho\le \algA^{2}$ 
or $\algA$ is a linear algebra.
\end{lem}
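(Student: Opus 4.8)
The statement to prove is Lemma~\ref{FromStronglyRich}: if $R\le\algA^n$ is full-projective and uniquely-determined with $n\ge 3$, then either there is a central relation $\rho\le\algA^2$ or $\algA$ is linear. The natural dichotomy to exploit is the presence or absence of a Maltsev term operation, since the conclusion ``$\algA$ is linear'' is the kind of thing one extracts from a Maltsev operation acting on a rich relation, while the conclusion ``there is a central relation'' is exactly what Lemma~\ref{NoMaltsevImplies2} delivers when there is no Maltsev term (modulo proper subuniverses, which we must rule out here). So the first move is to split on whether $\Clo(\algA)$ has a Maltsev term.

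\textbf{Case without Maltsev.} Suppose $\algA$ has no Maltsev term operation. By Lemma~\ref{NoMaltsevImplies2}, either $\algA$ has a proper subuniverse of size at least $2$, or there is a central relation $\rho\le\algA^2$ and we are done. So I must argue that a full-projective uniquely-determined relation of arity $\ge 3$ forbids proper subuniverses of size $\ge 2$. The idea: if $S\subsetneq A$ is a subuniverse with $|S|\ge 2$, restrict the first coordinate of $R$ to $S$ and use full-projectivity to find, for each value $b\in S$ of coordinate $1$, the rest of a tuple; because coordinate $2$ is uniquely-determined, distinct values of coordinate $1$ that both lie in $S$ force distinct values in coordinate $2$ that (via full-projectivity onto other pairs and unique determination) eventually generate all of $A$ in some coordinate — contradicting that the restriction stays inside the subuniverse on some coordinate. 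The precise bookkeeping here is where I expect a little work; one wants to set up a pp-definition that projects $R\cap(S\times A^{n-1})$ to a binary relation on some coordinate pair and derive that its image is both contained in $S$ (by closure) and equal to $A$ (by full-projectivity plus unique determination forcing a bijection), a contradiction since $S\subsetneq A$.

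\textbf{Case with Maltsev.} Suppose $\algA$ has a Maltsev term operation $m$. Here the goal is to show $\algA$ is linear, i.e., to produce an abelian group $\oplus$ on $A$ with $(x_1\oplus x_2 = x_3\oplus x_4)\in\Inv(\algA)$. The standard route: a Maltsev term makes every invariant relation ``rectangular'' in a suitable sense, and a full-projective uniquely-determined $R$ of arity $\ge 3$ should encode a coset of an abelian group. Concretely, fix coordinates $1,2,3$ and use that $\proj_{1,2}(R)=A^2$ while coordinate $3$ is uniquely determined by $(1,2)$ to get a binary operation $g\colon A^2\to A$ with $(x,y,g(x,y))\in\proj_{1,2,3}(R)$; rectangularity from $m$ forces $g$ to satisfy the relevant associativity/quasigroup identities, making $(A;g)$ a coset of an abelian group, from which one extracts $\oplus$. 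Once $\oplus$ is obtained, one checks $(x_1\oplus x_2=x_3\oplus x_4)$ is invariant: this is the graph of a relation pp-definable from $R$ and the equality relation, hence in $\Inv(\algA)$. I'd then cite Lemma~\ref{MaltsevImpliesAffine} only if affineness were needed, but here linearity suffices.

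\textbf{Main obstacle.} The delicate part is the Maltsev case: turning ``full-projective, uniquely-determined, arity $\ge 3$, Maltsev'' into an actual abelian group structure, i.e., verifying that the operation $g$ read off from coordinate $3$ genuinely defines an abelian group coset and that the witnessing relation is pp-definable from $R$. This is a classical but somewhat intricate argument (the Maltsev-implies-``few subpowers''/affine-coset phenomenon), and the full-projectivity hypothesis is exactly what guarantees the relation is ``fat enough'' that $g$ is totally defined and the group axioms follow from rectangularity. The no-Maltsev case is comparatively routine given Lemma~\ref{NoMaltsevImplies2}, once the ``no proper subuniverse of size $\ge 2$'' observation is nailed down.
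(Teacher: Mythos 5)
Your plan diverges from the paper's proof, and both branches of your dichotomy have genuine gaps. The most serious one is in the Maltsev case. From a full-projective, uniquely-determined $R$ of arity $\ge 3$ you can indeed read off a binary operation $g$ from $\proj_{1,2,3}(R)$, and a Maltsev term does force $g$ to be (a coset of) a group operation; but nothing forces that group to be \emph{abelian}, and linearity requires an abelian group. Take $\algA$ to be the idempotent reduct of a non-abelian group $(G;\cdot)$ (say $S_3$, with basic operation $xy^{-1}z$) and $R=\{(x,y,z)\st xy=z\}$: this $R$ is full-projective and uniquely-determined, the algebra has a Maltsev term, yet $\algA$ is not linear. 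The lemma is still true for this example, but only because the non-commutativity itself yields a central relation (the pp-definable relation ``$x\oplus y=y\oplus x$'' is subdirect, proper, and its center contains the identity). Your Maltsev branch concludes ``linear'' unconditionally and never looks for a central relation, so it fails exactly on such examples. This is in fact how the paper proceeds, without any Maltsev dichotomy: it pp-defines the ternary relation $\oplus(x_1,x_2,x_3)$ directly from $R$ (using unique determination to see it is the graph of an operation with a two-sided identity $0$), and then, for each of commutativity and associativity, either the corresponding pp-definable ``agreement'' relation is full, or it is a central relation with $0$ in the center. Whenever all checks pass, $x_1\oplus x_2=x_3\oplus x_4$ is pp-definable and $\algA$ is linear.

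The no-Maltsev branch also rests on a false sub-claim. You assert that a full-projective, uniquely-determined relation of arity $\ge 3$ forbids proper subuniverses of size at least $2$; the graph of multiplication of any nontrivial finite group (preserved by the idempotent reduct of that group, which has plenty of proper subuniverses: subgroups and their cosets) shows this is not a property of such relations. Your sketched justification does not work either: unique determination says coordinate $2$ is determined by \emph{all} the other coordinates jointly, not by coordinate $1$ alone, and full-projectivity already gives $\proj_{2}(R\cap(S\times A^{n-1}))=A$ for any nonempty $S$, which contradicts nothing. (The paper's treatment of a proper subuniverse $B$ with $1<|B|<|A|$ — restricting the last coordinate of $R$ to $B$ to destroy unique determination and then invoking Lemma~\ref{CoolSubdirectImplies} — lives in Lemma~\ref{ExistsStrongSubalgebraLEM}, where a nontrivial congruence is an acceptable outcome; it cannot be imported into Lemma~\ref{FromStronglyRich}, whose conclusion does not include congruences.) So neither branch of your argument closes as written.
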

\begin{proof}
Choose elements $b,c\in A$ such that 
$(c,b,\dots,b)\in R$. Later we denote 
$c$ by $0$.
Define a ternary relation $\oplus(x_{1}, x_{2},x_{3})$ 
by 
$$\exists y_{1}\exists y_{2}\;
R(x_{1},y_{1},b,\dots,b)\wedge 
R(x_{2},b,y_{2},b,\dots,b)\wedge 
R(x_{3},y_{1},y_{2},b,\dots,b).$$
We want to show that $\oplus$ represents a binary 
operation from an abelian group. 
Let us prove all the required properties. 

\textbf{$\oplus$ is an operation.}
Since $R$ is uniquely-determined, for any $x_{1}$ and $x_{2}$
we have unique choices of $y_{1}$ and $y_{2}$, and therefore 
we have a unique choice for $x_{3}$. Thus, it is an operation.
Later we will use it as an operation
and write 
$x_1\oplus x_2 = x_3$ instead of 
$\oplus(x_1,x_2,x_3)$.
Note that if $x_{1}=0$ then $y_{1} = b$ and therefore 
$x_{2} = x_{3}$, hence $0\oplus a = a$ for every $a\in A$.
Similarly, $0\oplus a = a$.

\textbf{$\oplus$ is commutative.}
Let $\sigma(x_{1},x_{2}) = 
\exists x_{3} \; \oplus (x_{1},x_{2},x_{3})\wedge \oplus (x_{2},x_{1},x_{3}).$
In other words, 
$\sigma$ is the set of all pairs $(a_{1},a_{2})$ 
such that 
$a_{1}\oplus a_{2} = a_{2}\oplus a_{1}$.
Since $0\oplus a = a\oplus 0$, 
$(0,a),(a,0)\in\sigma$ for any $a\in A$. Hence, 
if $\sigma$ is not full, then we derived a central relation.
If $\sigma$ is full then 
$\oplus$ is commutative.

\textbf{$\oplus$ is associative.}
Let $\delta$ be the set of all tuples $(a_{1},a_{2},a_{3})$
such that 
$(a_{1}\oplus a_{2})\oplus a_{3} = a_{1}\oplus (a_{2}\oplus a_{3})$
(we could define $\delta$ by a pp-formula).
Note that 
$(0,a,b),(a,0,b)\in\delta$ for any $a,b\in A$. 
Assume that $\delta$ is not full, then 
choose $(a_{1},a_{2},a_{3})\in A^{3}\setminus\delta$
and define a central relation by 
$\rho(x,y) = \delta(a_{1},x,y)$, where 0 will be in the center. 

Thus $\oplus$ is a commutative associative operation.
Define $\rho$ by 
$$\rho(x_{1},x_{2},x_{3},x_{4})=\exists y\; \oplus (x_{1},x_{2},y)\wedge \oplus (x_{3},x_{4},y).$$
Then $\rho$ is 
$x_{1}\oplus x_{2} = x_{3}\oplus x_{4}$,
and therefore $\algA$ is linear.
\end{proof}

\begin{lem}\label{GetpAffineFromAffine}
Suppose $\algA$ is a linear algebra. Then 
there exists a nontrivial equivalence relation 
$\sigma\le \algA^{2}$ or 
$\algA$ is a $p$-linear algebra for some $p$. 
\end{lem}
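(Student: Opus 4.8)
The plan is to produce a nontrivial congruence of $\algA$ out of a suitable subgroup of $(A;\oplus)$, and to show that such a subgroup exists unless $(A;\oplus)\cong(\mathbb Z_p^s;+)$. Throughout write $Q$ for the relation $x_1\oplus x_2=x_3\oplus x_4$; since $\algA$ is linear we have $Q\in\Inv(\algA)$, so every term operation of $\algA$ preserves $Q$. We may assume $|A|\ge 2$.

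The first and main step is a normal form for term operations: I would show that every $n$-ary $f\in\Clo(\algA)$ equals $\phi_1(x_1)\oplus\dots\oplus\phi_n(x_n)$ for some endomorphisms $\phi_i$ of $(A;\oplus)$ with $\phi_1+\dots+\phi_n=\mathrm{id}$. The trick is that whenever $\alpha,\alpha',\beta,\beta'\in A^n$ satisfy $\alpha_i\ominus\alpha_i'=\beta_i\ominus\beta_i'$ for every $i$, the $4$-tuples $(\alpha_i,\beta_i',\alpha_i',\beta_i)$ lie in $Q$, so applying $f$ coordinatewise yields $(f(\alpha),f(\beta'),f(\alpha'),f(\beta))\in Q$, i.e.\ $f(\alpha)\ominus f(\alpha')=f(\beta)\ominus f(\beta')$. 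Hence $g(\alpha):=f(\alpha)\ominus f(0,\dots,0)$ satisfies $g(0,\dots,0)=0$ and, taking $\beta=\alpha\oplus\gamma$, $\beta'=\gamma$, also $g(\alpha\oplus\gamma)=g(\alpha)\oplus g(\gamma)$; thus $g$ is a group homomorphism $(A;\oplus)^n\to(A;\oplus)$. Writing $\phi_i$ for the restriction of $g$ to the $i$-th coordinate gives $f(\alpha)=\phi_1(\alpha_1)\oplus\dots\oplus\phi_n(\alpha_n)\oplus f(0,\dots,0)$, and evaluating on the diagonal and invoking idempotency forces $f(0,\dots,0)=0$ and $\phi_1+\dots+\phi_n=\mathrm{id}$, which is the claim.

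With this in hand, for any subgroup $H\le(A;\oplus)$ the equivalence relation $\sigma_H:=\{(a,b)\in A^2: a\ominus b\in H\}$ becomes a congruence of $\algA$ as soon as $H$ is \emph{fully invariant}, i.e.\ closed under every endomorphism of $(A;\oplus)$: indeed, if $a_i\ominus b_i\in H$ for all $i$ then $f(a_1,\dots,a_n)\ominus f(b_1,\dots,b_n)=\phi_1(a_1\ominus b_1)\oplus\dots\oplus\phi_n(a_n\ominus b_n)\in H$. Moreover $\sigma_H$ is nontrivial precisely when $\{0\}\subsetneq H\subsetneq A$. So it remains to find a nontrivial fully invariant subgroup, or else to land in the $p$-linear case. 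If at least two primes divide $|A|$, take $H$ to be the set of elements of $p$-power order for one such prime $p$; endomorphisms preserve orders, so $H$ is fully invariant, and it is nontrivial since $p\mid|A|$ while another prime divides $|A|$ too. Otherwise $(A;\oplus)$ is a nontrivial $p$-group; if it is not elementary abelian, take $H:=pA=\{pa:a\in A\}$, which is fully invariant, is proper (a nontrivial finite $p$-group has $pA\subsetneq A$), and is nonzero (otherwise $A$ would be elementary abelian). In the only remaining case $(A;\oplus)$ is elementary abelian, hence $\cong(\mathbb Z_p^s;+)$, and together with the operations $\oplus,\ominus$ witnessing linearity this exhibits $\algA$ as $p$-linear.

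The main obstacle is the normal-form step; everything after it is routine finite abelian group theory. The delicate points there are that it is genuinely the parallelogram relation $Q$ (not some other affine relation) that is invariant — this is exactly what "linear" provides — and that idempotency is used precisely to annihilate the constant term $f(0,\dots,0)$ and to pin down $\phi_1+\dots+\phi_n=\mathrm{id}$.
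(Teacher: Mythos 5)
Your proof is correct, but it takes a genuinely different (and much longer) route than the paper's. The paper's argument is a short pp-definability trick: if two nonzero elements of $(A;\oplus)$ have different orders $m<n$, then the relation $\delta=\{(a,b)\mid ma=mb\}$ (with $ma$ the $m$-fold sum $a\oplus\dots\oplus a$) is pp-definable from the linearity relation $x_1\oplus x_2=x_3\oplus x_4$ together with constant relations, which are invariant by idempotency; hence $\delta\in\Inv(\algA)$, and it is a nontrivial equivalence relation since it identifies an element of order $m$ with $0$ but separates an element of order $n$ from $0$. If all nonzero elements have the same order, that order is prime and the group is elementary abelian, giving $p$-linearity. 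You instead first prove the full normal form for the clone of a linear algebra (every term operation is $\phi_1(x_1)\oplus\dots\oplus\phi_n(x_n)$ with endomorphisms $\phi_i$ summing to the identity), then note that every fully invariant subgroup $H$ yields a congruence $\sigma_H$, and finally exhibit such an $H$ (a Sylow subgroup, or $pA$) unless the group is elementary abelian. Your normal-form step is sound --- the parallelogram computation with $\alpha'=0$, $\beta=\alpha\oplus\gamma$, $\beta'=\gamma$ does show that $g$ is a homomorphism, and idempotency correctly kills the constant term and forces $\sum_i\phi_i=\mathrm{id}$ --- and it proves strictly more than this lemma needs; it is essentially the standard description of affine clones via the endomorphism ring, and could be reused elsewhere. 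The paper's subgroups $A[m]=\{c\mid mc=0\}$ are also fully invariant, so both proofs produce congruences of the same kind; the paper just certifies invariance by pp-definability rather than by analyzing the clone. One cosmetic point: endomorphisms do not ``preserve orders'' but only divide them; that weaker fact still suffices for the Sylow subgroup to be fully invariant.
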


\begin{proof}
Let $\rho\le\algA^{4}$ be the relation defined by 
$x_{1}\oplus x_{2} = x_{3}\oplus x_{4}$, where 
$(A;\oplus)$ is the abelian group from the definition of 
a linear algebra $\algA$.

Assume that two elements of this group have different orders $m$ and $n$, where $m<n$.
Let $\delta$ be the set of all pairs $(a,b)$ 
such that 
$\underbrace{a\oplus \dots\oplus a}_{m} = \underbrace{b\oplus \dots\oplus b}_{m}$.
Obviously, $\delta$ is pp-definable over $\rho$ and constant relations, and $\delta$ is a nontrivial equivalence relation on $A$.

If all elements of $(A;\oplus)$ have the same order then 
$(A;\oplus)\cong (\mathbb Z_{p}^{s};+)$
for some prime number $p$ and integer $s$,
and $\algA$ is $p$-linear.
\end{proof}

\subsection{Existence of a strong subalgebra}

\begin{lem}\label{FullProjective}
Suppose $C$ is a clone on a set $A$ and $R\in\Inv(C)$ is a relation of the minimal arity such that 
$\Pol(R)$ is not the set of all operations.
Then $R$ is full-projective.
\end{lem}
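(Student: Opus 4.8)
The plan is to argue by minimality and then strip off a redundant coordinate. Suppose $R\subseteq A^n$ realizes the minimal arity among relations in $\Inv(C)$ that are not preserved by all operations, and assume toward a contradiction that $R$ is not full-projective. First I would dispose of the degenerate situations: if $n\le 1$ then $R$ is full-projective by definition, and $R\neq\varnothing$ because the empty relation is preserved by every operation (so $\Pol(R)$ would be the clone of all operations). Now, since $R$ is not full-projective, there is $I\subsetneq[n]$ with $S:=\proj_I(R)$ not full. Because $\Inv(C)$ is a relational clone it is closed under projections, so $S\in\Inv(C)$; moreover $S$ has arity $|I|<n$ and, being a projection of the nonempty $R$, is nonempty. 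By the minimality of $n$, $\Pol(S)$ must be the clone of all operations on $A$.

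Next I would invoke Lemma~\ref{PolIsFull}: a relation preserved by all operations is a conjunction of equalities of the form $x_i=x_j$. Applying this to $S$ and using that $S$ is not full, at least one of these equalities is nontrivial, i.e.\ there are distinct coordinates $p,q\in I$ with $\tau_p=\tau_q$ for every tuple $\tau\in S$, hence for every tuple $\tau\in R$ (projecting back does not change the values at coordinates $p,q$).

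The key step is then to eliminate the redundant coordinate $q$. Put $R^-:=\proj_{[n]\setminus\{q\}}(R)\in\Inv(C)$, a relation of arity $n-1$. On one hand $R^-$ is a projection of $R$, so $\Pol(R)\subseteq\Pol(R^-)$. On the other hand, since $\tau_p=\tau_q$ holds identically on $R$, the relation $R$ has the quantifier-free definition
$$R(x_1,\dots,x_n)\equiv R^-(x_1,\dots,x_{q-1},x_{q+1},\dots,x_n)\wedge(x_p=x_q),$$
and the equality relation is preserved by all operations, so $\Pol(R^-)\subseteq\Pol(R)$. Hence $\Pol(R^-)=\Pol(R)$ is not the clone of all operations, while $R^-\in\Inv(C)$ has arity $n-1<n$ — contradicting the minimality of $n$. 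Therefore no such $I$ exists and $R$ is full-projective.

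I expect the only delicate point to be verifying the displayed equivalence, i.e.\ that reintroducing coordinate $q$ as a copy of coordinate $p$ recovers exactly $R$ and nothing more: the inclusion $\subseteq$ is immediate, while for $\supseteq$ one must use that every tuple of $R$ (not merely of the projection) satisfies $\tau_p=\tau_q$, so the extension of a tuple of $R^-$ by $x_q:=x_p$ is forced to equal the tuple of $R$ that projected to it. The remaining work is just the routine bookkeeping with the degenerate cases $n\le 1$ and $R=\varnothing$, handled up front.
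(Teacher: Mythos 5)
Your proof is correct and follows essentially the same route as the paper's: apply Lemma~\ref{PolIsFull} to the non-full projection to extract a forced equality $x_p=x_q$ on $R$, then drop one of the two coordinates to obtain a lower-arity relation with the same polymorphism clone, contradicting minimality. The extra care you take with the degenerate cases and with verifying $\Pol(R^-)=\Pol(R)$ via the quantifier-free re-definition is exactly the bookkeeping the paper leaves implicit.
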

\begin{proof}
Let $R$ be of an arity $k$. Assume that $R' = \proj_{I}(R)$ is not full for some set $I\subsetneq [k]$.
Since $\Pol(R')$ should contain all operations, 
Lemma~\ref{PolIsFull} implies that 
$R'$ can be represented as a conjunction of the equality relations.
Thus, for some $i,j\in I$ 
the relation 
$\proj_{i,j}(R)$ is the equality relation.
Hence, 
$\proj_{[k]\setminus\{i\}}(R)$ is a
relation of a smaller arity
such that $\Pol(\proj_{[k]\setminus\{i\}}(R))=\Pol(R)$. Contradiction.
\end{proof}


\begin{lem}\label{ExistsStrongSubalgebraLEM}
Suppose $\algA$ is a finite idempotent algebra, then 
\begin{enumerate}
    \item there exists a nontrivial congruence on $\algA$, or
    \item there exists a central relation $R\le \algA^{2}$, or
    \item $\algA$ is polynomially complete, or 
    \item $\algA$ is $p$-affine.
\end{enumerate}
\end{lem}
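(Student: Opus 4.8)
The plan is to dispatch the statement to the relational lemmas already established in this subsection. Assume $\algA$ is not polynomially complete, for otherwise (3) holds. Then the clone $C$ generated by the basic operations of $\algA$ together with all constant operations is properly contained in the clone $\mathcal{O}_A$ of all operations on $A$. I would pick a relation $R$ of minimal arity with $R\in\Inv(C)$ and $\Pol(R)\neq\mathcal{O}_A$; by Lemma~\ref{FullProjective}, $R$ is full-projective. Since $R$ is preserved by every constant operation and is nonempty (the empty relation is preserved by everything), $R$ contains all constant tuples; and since $\Pol(R)\neq\mathcal{O}_A$, Lemma~\ref{PolIsFull} shows $R$ is not full. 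A $1$-ary relation containing all constant tuples equals $A$, so $R$ has arity $n\geq 2$.

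Next I split on uniqueness of coordinates. If some coordinate of $R$ is not uniquely-determined, then $R\lneq\algA^n$ is full-projective with a non-uniquely-determined coordinate, so Lemma~\ref{CoolSubdirectImplies} (all factors equal to $\algA$) yields either a central relation $\rho\le\algA^2$, which is case (2), or a nontrivial equivalence relation on $A$, i.e.\ a nontrivial congruence, which is case (1). If instead every coordinate of $R$ is uniquely-determined, then $n\geq 3$: a reflexive, uniquely-determined binary relation is the equality relation, and then $\Pol(R)=\mathcal{O}_A$, a contradiction. Hence Lemma~\ref{FromStronglyRich} applies to $R$ and gives either a central relation $\rho\le\algA^2$ (case (2)) or that $\algA$ is a linear algebra.

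It remains to treat a linear $\algA$. By Lemma~\ref{GetpAffineFromAffine}, either there is a nontrivial equivalence relation on $A$ — case (1) — or $\algA$ is $p$-linear for some prime $p$. If $\algA$ has a Maltsev term operation, then by Lemma~\ref{MaltsevImpliesAffine} it is affine, hence (being $p$-linear) $p$-affine, which is case (4). If $\algA$ has no Maltsev term, Lemma~\ref{NoMaltsevImplies2} produces either a central relation $\rho\le\algA^2$ (case (2)) or a proper subuniverse $B$ with $|B|\geq 2$. In the last situation I would extract a congruence from $B$ via the reflexive symmetric invariant relation $\theta_B(x,y)=\exists b\,\exists b'\,[B(b)\wedge B(b')\wedge (x\oplus b=y\oplus b')]$ (where $\oplus$ witnesses $p$-linearity), which differs from the equality relation because $|B|\geq 2$: if $\theta_B\neq A^2$, its first coordinate is not uniquely-determined, so Lemma~\ref{CoolSubdirectImplies} again gives a central relation or a nontrivial congruence, and quotienting by the subgroup of $(A;\oplus)$ generated by $\{b\ominus b'\}$ handles $\theta_B$ when that subgroup is proper.

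The step I expect to be the main obstacle is exactly this last reconciliation in the $p$-linear-without-Maltsev case, where $\theta_B=A^2$ (equivalently, the differences of $B$ already generate the whole group): one must still produce a nontrivial congruence or a central relation, and this needs the extra information that we arrive here only when $\algA$ carries a uniquely-determined full-projective relation of arity $\geq 3$, which pins down $\Clo(\algA)$ enough (via the endomorphism description of term operations and their preservation of $B$) to force one of the remaining outcomes. Everything preceding this is a fairly mechanical routing through Lemmas~\ref{FullProjective}, \ref{PolIsFull}, \ref{CoolSubdirectImplies}, \ref{FromStronglyRich}, \ref{GetpAffineFromAffine}, \ref{MaltsevImpliesAffine} and \ref{NoMaltsevImplies2}; the delicacy is concentrated in controlling linear algebras that fail to be affine.
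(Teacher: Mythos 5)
Your proposal tracks the paper's proof step for step until the very last case, and the gap you flag is real: your argument does not close. In the $p$-linear, no-Maltsev-term situation where Lemma~\ref{NoMaltsevImplies2} hands you a proper subuniverse $B$ with $|B|\ge 2$, the relation $\theta_B(x,y)=\exists b\,\exists b'\,[B(b)\wedge B(b')\wedge(x\oplus b=y\oplus b')]$ only helps when it (or its transitive closure, which is an invariant congruence) is proper. When the differences $B\ominus B$ generate all of $(A;\oplus)$ — e.g.\ $A=\mathbb Z_p$ and $B=\{0,1\}$ — the transitive closure is $A^2$ and you are left with nothing; your appeal to ``the extra information that... pins down $\Clo(\algA)$ enough'' is not an argument. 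So as written the proof is incomplete at exactly the point you identify.

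The paper closes this case by going back to the full-projective, uniquely-determined relation $R\le\algA^n$ of arity $n\ge 3$ that you already constructed, and setting $R'(x_1,\dots,x_{n-1})=\exists x_n\,R(x_1,\dots,x_n)\wedge B(x_n)$. Because $R$ is uniquely-determined and full-projective it has exactly $|A|^{n-1}$ tuples, so $R'$ has $|B|\cdot|A|^{n-2}$ tuples: since $1<|B|<|A|$ this makes $R'$ a proper, full-projective relation of arity $n-1\ge 2$ that is \emph{not} uniquely-determined. Feeding $R'$ back into Lemma~\ref{CoolSubdirectImplies} then yields a central relation or a nontrivial congruence, with no case left over. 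The lesson is that the subuniverse $B$ should be combined with $R$ rather than with the group structure; everything before this point in your write-up matches the paper and is fine.
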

\begin{proof}
Consider the clone $C$ generated from the operations of $\algA$ 
and all constant operations. 
If $C$ is the clone of all operations then 
$\algA$ is polynomially complete (case 3).

Otherwise, choose a relation $R\in \Inv(C)$ of the minimal arity
such that $\Pol(R)$ is not the clone of all operations.
By Lemma~\ref{FullProjective}, $R$ is full-projective.
Since $R$ is preserved by all constants, it cannot be unary.
By Lemma~\ref{CoolSubdirectImplies} 
$\Inv(\algA)$ contains a nontrivial equivalence relation (case 1) or a central relation (case 2),
or the relation $R$ is uniquely-determined, which is the only remaining case. 
Since $R$ is preserved by constants, it should contain all constant tuples. 
If $R$ is binary, then $R$ is the equality relation, 
which contradicts the fact that 
$\Pol(R)$ is not the clone of all operations.
Thus, 
$R$ is of arity at least 3. Then by 
Lemma~\ref{FromStronglyRich}
we either get a central relation (case 2), or
$\algA$ is a linear algebra.
Then, by Lemma~\ref{GetpAffineFromAffine}, we either get a nontrivial equivalence congruence (case 1), or 
$\algA$ is $p$-linear.
If $\algA$ has a Maltsev term, then by Lemma~\ref{MaltsevImpliesAffine}, 
$\algA$ is $p$-affine (case 4).
Otherwise, by Lemma~\ref{NoMaltsevImplies2}
there exists a central relation $R\le \algA^{2}$ (case 2),
or there exists 
$B\le \algA$ with $1<|B|<|A|$.
In the later case let
$R'(x_1,\dots,x_{n-1}) = 
\exists x_{n} \; R(x_{1},\dots,x_{n})\wedge B(x_{n})$.
Since $R$ is full-projective and uniquely-determined, $R'$ 
contains $|B|\cdot |A|^{n-2}$ tuples.
Therefore, it is not uniquely-determined, but it is still full-projective.
Then by Lemma~\ref{CoolSubdirectImplies} 
$\Inv(\algA)$ contains a central relation (case 2) or a nontrivial equivalence relation (case 1).
\end{proof}

\begin{THMExistsStrongSubalgebraTHM}
Every finite idempotent algebra $\algA$ of size at least 2 has
\begin{enumerate}
\item[(1)] a nontrivial binary absorbing subuniverse, or
\item[(2)] a nontrivial central subuniverse, or
\item[(3)] a nontrivial PC subuniverse, or
\item[(4)] a congruence $\sigma$ such that 
$\algA/\sigma$ is $p$-affine, or
\item[(5)] a nontrivial projective subuniverse.
\end{enumerate}
\end{THMExistsStrongSubalgebraTHM}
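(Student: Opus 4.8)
The plan is to prove the statement by induction on $|A|$, feeding the algebra into Lemma~\ref{ExistsStrongSubalgebraLEM} and then, in each of its four cases, either reading off one of the five conclusions directly or pulling a witness back from a proper quotient.

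I would first handle the two ``structural'' cases of Lemma~\ref{ExistsStrongSubalgebraLEM}. If $\algA$ is $p$-affine, then the equality congruence $\sigma$ satisfies $\algA/\sigma\cong\algA$, which is $p$-affine, so conclusion~(4) holds. If $\algA$ is polynomially complete, I distinguish two subcases: if $\algA$ has a nontrivial binary absorbing, central, or projective subuniverse we are immediately in case~(1), (2), or~(5); otherwise $\algA$ is a PC algebra without BACP, and then, taking $s=1$ and $\alg D_{1}=\algA$, the algebra $\algA\cong\algA/{=}$ is a product of PC algebras without BACP, so every singleton $\{a\}$ (for $a\in A$), being a block of the equality congruence, is a PC subuniverse; since $|A|\ge 2$ it is nontrivial, giving~(3).

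Next, the central-relation case: if $\Inv(\algA)$ contains a central relation $R\le\algA^{2}$, I would apply Theorem~\ref{CenterImpliesTHM} to $\alg R\sd\algA\times\algA$ with $C=\{c\in A\mid\{c\}\times A\subseteq R\}$. By the definition of a central relation $\varnothing\neq C\subsetneq A$, so the first alternative of Theorem~\ref{CenterImpliesTHM} gives a nontrivial central subuniverse of $\algA$ (case~(2)), while the other two alternatives give a nontrivial binary absorbing subuniverse (case~(1)) or a nontrivial projective subuniverse (case~(5)) of $\algA$.

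Finally, the inductive case: if $\algA$ has a nontrivial congruence $\sigma$, then $2\le|\algA/\sigma|<|A|$, so the induction hypothesis applies to $\algA/\sigma$ and yields one of~(1)--(5) for it; in each case I lift the witness to $\algA$. A nontrivial binary absorbing, central, or projective subuniverse $B$ of $\algA/\sigma$ lifts to the nontrivial subuniverse $\bigcup_{E\in B}E$ of $\algA$ of the same type by Corollary~\ref{AbsorptionQuotient}, Corollary~\ref{CenterQuotientTwo}, and Lemma~\ref{ProjectiveQuotient}, respectively. A congruence of $\algA/\sigma$ with $p$-affine quotient, or a nontrivial PC subuniverse $B$ of $\algA/\sigma$ (a block of a congruence $\tau$ with $(\algA/\sigma)/\tau$ a product of PC algebras without BACP), lifts through the correspondence theorem for congruences: the congruence $\tau'$ of $\algA$ with $\tau'/\sigma=\tau$ satisfies $\algA/\tau'\cong(\algA/\sigma)/\tau$, and the $\tau'$-block $\bigcup_{E\in B}E$ is a nontrivial PC subuniverse of $\algA$, giving~(4) or~(3). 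I expect the main obstacle to be purely a matter of bookkeeping rather than of mathematical depth (the heavy lifting sits in Lemma~\ref{ExistsStrongSubalgebraLEM} and Theorem~\ref{CenterImpliesTHM}): one must check at every lifting step that the resulting subuniverse is genuinely \emph{nontrivial}, and treat the polynomially complete case with care, both for the singleton argument and for the pull-back of PC subuniverses through $\sigma$.
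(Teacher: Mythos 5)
Your proposal is correct and follows essentially the same route as the paper: induction combined with Lemma~\ref{ExistsStrongSubalgebraLEM}, Theorem~\ref{CenterImpliesTHM}, and the quotient-lifting lemmas. The only cosmetic difference is that the paper picks a \emph{maximal} congruence $\delta$ (so that in the PC and $p$-affine cases the quotient is already simple), whereas you take an arbitrary nontrivial congruence and pull congruences back via the correspondence theorem; both work, and your explicit treatment of the polynomially complete subcase is if anything more careful than the paper's.
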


\begin{proof}
We prove by induction on the size of $\algA$.

Assume that there exists a nontrivial congruence on $\algA$.
Let $\delta$ be a maximal congruence on $\algA$.
By the inductive assumption $\algA/\delta$ satisfies one of the five conditions of the theorem.
In cases (1), (2) and (5) we have a nontrivial BA/central/projective subuniverse $B$ of $\algA/\delta$.
Then by Corollaries~\ref{AbsorptionQuotient}, \ref{CenterQuotientTwo}
and Lemma~\ref{ProjectiveQuotient}, 
$\bigcup_{E\in B} E$ 
is a BA/central/projective subuniverse of $\algA$.
In case (3), $\algA/\delta$ should be a PC algebra
and any block of $\delta$ is a nontrivial PC subuniverse of $\algA$.
In case (4) the congruence $\sigma$ should be trivial, and therefore
$\algA/\delta$ is a $p$-affine algebra.

Assume that $\algA$ has no nontrivial congruences.
By Lemma~\ref{ExistsStrongSubalgebraLEM} 
one of the cases 2-4 of this lemma 
should hold.
In case 2, Theorem ~\ref{CenterImpliesTHM} implies 
that
$\algA$ has a nontrivial binary absorbing subuniverse, or 
a nontrivial central subuniverse, or a nontrivial projective subuniverse, that is, cases (1), (2), or (5).
In cases 3 and 4 we obtain cases (3) and (4) of this theorem.
\end{proof}

\bibliography{refs}

\begin{thebibliography}{10}

\bibitem{DecidingAbsorption}
Libor Barto and Alexandr Kazda.
\newblock (2016).
\newblock Deciding absorption.
\newblock {\em International Journal of Algebra and Computation},
  26(05):1033--1060.

\bibitem{cyclicterms}
Libor Barto and Marcin Kozik.
\newblock (February 2012).
\newblock {Absorbing Subalgebras, Cyclic Terms, and the Constraint Satisfaction
  Problem}.
\newblock {\em {Logical Methods in Computer Science}}, {Volume 8, Issue 1}.

\bibitem{bartokozikboundedwidth}
Libor Barto and Marcin Kozik.
\newblock (2014).
\newblock Constraint satisfaction problems solvable by local consistency
  methods.
\newblock {\em Journal of the ACM (JACM)}, 61(1):1--19.

\bibitem{barto2017absorption}
Libor Barto and Marcin Kozik.
\newblock (2017).
\newblock Absorption in universal algebra and csp.

\bibitem{bartopolymorphisms}
Libor Barto, Andrei Krokhin, and Ross Willard.
\newblock (2017).
\newblock Polymorphisms, and how to use them.
\newblock Preprint.

\bibitem{bergman2011universal}
Clifford Bergman.
\newblock (2011).
\newblock {\em Universal algebra: Fundamentals and selected topics}.
\newblock CRC Press.

\bibitem{bond}
V.~G. Bodnarchuk, L.~A. Kaluzhnin, V.~N. Kotov, and B.~A. Romov.
\newblock (1969).
\newblock Galois theory for post algebras {parts I and II}.
\newblock {\em Cybernetics}, (5):243--252, 531--539.

\bibitem{bulatov2009bounded}
Andrei Bulatov.
\newblock (2009).
\newblock Bounded relational width.
\newblock {\em Unpublished manuscript}.

\bibitem{CSPconjecture}
Andrei Bulatov, Peter Jeavons, and Andrei Krokhin.
\newblock (March 2005).
\newblock Classifying the complexity of constraints using finite algebras.
\newblock {\em SIAM J. Comput.}, 34(3):720--742.

\bibitem{BulatovProofCSP}
Andrei~A. Bulatov.
\newblock (2017).
\newblock A dichotomy theorem for nonuniform csps.
\newblock {\em CoRR}, abs/1703.03021.

\bibitem{BulatovAboutCSP}
Andrei~A. Bulatov and Matthew~A. Valeriote.
\newblock (2008).
\newblock Recent results on the algebraic approach to the csp.
\newblock In Nadia Creignou, PhokionG. Kolaitis, and Heribert Vollmer, editors,
  {\em Complexity of Constraints}, volume 5250 of {\em Lecture Notes in
  Computer Science}, pages 68--92. Springer Berlin Heidelberg.

\bibitem{Num4}
Martin~C. Cooper.
\newblock (1994).
\newblock Characterising tractable constraints.
\newblock {\em Artificial Intelligence}, 65(2):347--361.

\bibitem{FederVardi}
Tom\'{a}s Feder and Moshe~Y. Vardi.
\newblock (February 1999).
\newblock The computational structure of monotone monadic snp and constraint
  satisfaction: A study through datalog and group theory.
\newblock {\em SIAM J. Comput.}, 28(1):57--104.

\bibitem{freese1987commutator}
Ralph Freese and Ralph McKenzie.
\newblock (1987).
\newblock {\em Commutator theory for congruence modular varieties}, volume 125.
\newblock CUP Archive.

\bibitem{geiger1968closed}
David Geiger.
\newblock (1968).
\newblock Closed systems of functions and predicates.
\newblock {\em Pacific journal of mathematics}, 27(1):95--100.

\bibitem{istinger1979characterization}
M~Istinger and HK~Kaiser.
\newblock (1979).
\newblock A characterization of polynomially complete algebras.
\newblock {\em Journal of Algebra}, 56(1):103--110.

\bibitem{jeavons1998algebraic}
Peter Jeavons.
\newblock (1998).
\newblock On the algebraic structure of combinatorial problems.
\newblock {\em Theoretical Computer Science}, 200(1-2):185--204.

\bibitem{Num20}
Peter Jeavons, David Cohen, and Marc Gyssens.
\newblock (July 1997).
\newblock Closure properties of constraints.
\newblock {\em J. ACM}, 44(4):527--548.

\bibitem{Num22}
Peter~G. Jeavons and Martin~C. Cooper.
\newblock (1995).
\newblock Tractable constraints on ordered domains.
\newblock {\em Artificial Intelligence}, 79(2):327--339.

\bibitem{Num23}
Lefteris~M. Kirousis.
\newblock (1993).
\newblock Fast parallel constraint satisfaction.
\newblock {\em Artificial Intelligence}, 64(1):147--160.

\bibitem{kozik2016weak}
Marcin Kozik.
\newblock (2016).
\newblock Weak consistency notions for all the csps of bounded width.
\newblock In {\em 2016 31st Annual ACM/IEEE Symposium on Logic in Computer
  Science (LICS)}, pages 1--9. IEEE.

\bibitem{lau}
D.~Lau.
\newblock (2006).
\newblock {\em Function algebras on finite sets}.
\newblock Springer.

\bibitem{lausch2000algebra}
Hans Lausch and Wilfred Nobauer.
\newblock (2000).
\newblock {\em Algebra of polynomials}, volume~5.
\newblock Elsevier.

\bibitem{miklos}
M.~Mar{\'o}ti and R.~Mckenzie.
\newblock (2008).
\newblock Existence theorems for weakly symmetric operations.
\newblock {\em Algebra universalis}, 59(3--4):463--489.

\bibitem{Post}
E.~L. Post.
\newblock (1941).
\newblock {\em The {T}wo-{V}alued {I}terative {S}ystems of {M}athematical
  {L}ogic}.
\newblock Annals of Mathematics Studies, no. 5. Princeton University Press,
  Princeton, N. J.

\bibitem{rosmax}
I.~Rosenberg.
\newblock (1970).
\newblock \"uber die funktionale vollst\"andigkeit in den mehrwertigen logiken.
\newblock {\em Rozpravy \v{C}eskoslovenske Akad. V\v{e}d., Ser. Math. Nat.
  Sci.}, 80:3--93.

\bibitem{Schaefer}
Thomas~J. Schaefer.
\newblock (1978).
\newblock The complexity of satisfiability problems.
\newblock In {\em Proceedings of the Tenth Annual ACM Symposium on Theory of
  Computing}, STOC '78, pages 216--226, New York, NY, USA. ACM.

\bibitem{KeyRelations}
Dmitriy Zhuk.
\newblock (2017).
\newblock Key (critical) relations preserved by a weak near-unanimity function.
\newblock {\em Algebra Universalis}, 77(2):191--235.

\bibitem{MyProofCSP}
Dmitriy Zhuk.
\newblock (2017).
\newblock A proof of csp dichotomy conjecture.
\newblock {\em CoRR}, abs/1704.01914.

\bibitem{ZhukFOCSCSPPaper}
Dmitriy Zhuk.
\newblock (2017).
\newblock A proof of {CSP} dichotomy conjecture.
\newblock In {\em 58th {IEEE} Annual Symposium on Foundations of Computer
  Science, {FOCS} 2017, Berkeley, CA, USA, October 15-17, 2017}, pages
  331--342.

\end{thebibliography}

\appendix

\end{document}